\numberwithin{equation}{section}
\theoremstyle{plain}
\newtheorem{lemma}{Lemma}[section]
\newtheorem{theorem}[lemma]{Theorem}
\newtheorem{corollary}[lemma]{Corollary}
\newtheorem{problem}[lemma]{Problem}
\newtheorem{prop}[lemma]{Proposition}
\newtheorem{conc}[lemma]{Conclusion}
\newtheorem{OP}[lemma]{OP}
\newtheorem{step}{Step}
\theoremstyle{definition}
\newtheorem{definition}{Definition}[section]
\newtheorem{remark}{Remark}[section]
      \newcommand{\N}{{\mathbb N}}
      \newcommand{\R}{{\mathbb R}}
      \newcommand{\C}{{\mathbb C}}
\newcommand{\lra}{\longrightarrow}
\newcommand{\id}{\operatorname{id}}
\newcommand{\viol}{\operatorname{viol}}
\newcommand{\iviol}{\operatorname{iviol}}
\newcommand{\al}{\alpha}
\newcommand{\la}{\lambda}
\newcommand{\si}{\sigma}
\newcommand{\eps}{\epsilon}
\newcommand{\pl }{\,}
\newcommand{\lel }{\, =\, }
\newcommand{\kl}{\, \le \, }
\newcommand{\gl}{\, \ge \, }
\newcommand{\ten}{\otimes}
\begin{document}

\baselineskip=17pt

\title[Large violation of Bell inequalities with low entanglement]{Large violation of Bell inequalities with low entanglement}

\author{M. Junge}
\email{junge@math.uiuc.edu}
\address{Department of Mathematics, University of Illinois, Urbana, IL 61801, USA}

\author{C. Palazuelos}
\email{cpalazue@illinois.edu}
\address{Department of Mathematics, University of Illinois, Urbana, IL 61801, USA}

\thanks{The authors are partially supported by National Science Foundation grant  DMS-0901457}

\maketitle

\begin{abstract}
In this paper we obtain violations of general bipartite Bell inequalities of order $\frac{\sqrt{n}}{\log n}$ with $n$ inputs,
$n$ outputs and $n$-dimensional Hilbert spaces. Moreover, we construct explicitly, up to a random choice of signs, all the
elements involved in such violations: the coefficients of the Bell inequalities, POVMs measurements and quantum states.
Analyzing this construction we find that, even though entanglement is necessary to obtain violation of Bell inequalities, the Entropy
of entanglement of the underlying state is essentially irrelevant in obtaining large violation. We also indicate why the maximally
entangled state is a rather poor candidate in producing large violations with arbitrary coefficients. However, we also show that
for Bell inequalities with positive coefficients (in particular, games) the maximally entangled state achieves the largest
violation up to a logarithmic factor.
\end{abstract}

\section{Introduction and main results}\label{State of the Main Results}

The study of quantum nonlocality dates back to the famous work of
Einstein, Podolsky and Rosen (EPR) in 1935. They presented an
argument which questioned the validity of quantum mechanics as a
complete theory of Nature (\cite{EPR}). However, it took almost 30
years to understand that the apparently dilemma presented in
\cite{EPR} could be formulated in terms of assumptions which
naturally lead to a refutable prediction (\cite{WernerWolf}). Bell
showed that the assumption of a local hidden variable model
implies some inequalities on the set of probabilities, since then
called \emph{Bell inequalities}, which are violated by certain
quantum probabilities produced with an \emph{entangled state}
(\cite{Bell}). For a long time after this, entanglement and
violation of Bell inequalities were thought to be parts of the
same concept. This changed in the late 1980s with a number of
surprising results (see \cite{Werner}, \cite{Popescu},
\cite{Gisin}) which showed that, although entanglement is
necessary for the violation of Bell inequalities, the converse is
not true. On the other hand, up to our knowledge, violation of
Bell inequalities is the only way to detect entanglement
experimentally without additional hypothesis on the experiment.

\

Nowadays, Bell inequalities is a fundamental subject in Quantum
Information Theory (QIT). Apart from the theoretical interest,
Bell inequalities have found applications in many areas of QIT:
quantum cryptography, where it opens the possibility of getting
unconditionally secure quantum key distribution
(\cite{Acin1,Acin,Mas2,Mas}), complexity theory, where it enriches
the theory of multipartite interactive proof systems
(\cite{Ben-Or,Cleve,Cleve2,Jain,DLTW,KRT,KKMTV}), communication
complexity (see the recent review \cite{Buhrman}); Estimates for
the dimension of the underlying  Hilbert space
(\cite{Briet,Brunner2,PWJPV,Vertesi,Wehner}), entangled games
(\cite{KRT,KR}), etc.

\

Bell inequalities and their connection to quantum entanglement
have remained quite mysterious despite the recent research on
this topic. In the few last years, the application of techniques
from different areas of mathematics has started to clarify the
situation. This includes the previous works of the authors, which
are based on operator space techniques. Indeed, in the consecutive
works \cite{JPPVW} and \cite{PWJPV}, the authors have shown the
\emph{operator space theory} as a natural framework for the study
of Bell inequalities (see also \cite{JPPVW2}). Using this
connection the authors proved in \cite{JPPVW} the existence of
unbounded violations of tripartite correlation Bell inequalities,
answering an old question stated by Tsirelson (\cite{Tsirelson}).
Moreover, in \cite{PWJPV} the authors used operator spaces
techniques to get unbounded violations of general bipartite Bell
inequalities.

\

In the present paper we improve the main results of \cite{PWJPV}.
In fact, we obtain violations of general bipartite Bell
inequalities of order $\frac{\sqrt{n}}{\log n}$ with $N=n$ inputs,
$K=n$ outputs and $d=n$-dimensional Hilbert spaces. We also
provide upper bounds for general Bell inequalities of order ${\rm
O}(N)$, ${\rm O}(K)$ and ${\rm O}(d)$. In addition of being almost
optimal in all the parameters of the problem ($\sqrt{n}$ instead
of $n$) our estimates are also very concrete. Indeed, we construct
explicitly, up to a random choice of signs, all the elements
involved in the violation. That is, the coefficients of the Bell
inequalities, the quantum state and the POVM's. We hope these
constructions can be used for further applications. Moreover, we
connect our estimates with the entropy of entanglement of the
underlying pure states. To our own surprise, violation and entropy
of entanglement appear to be almost independent. Also, the
maximally entangled state is only of very limited use in producing
violation. Moreover, we show that this limitation is not longer
true when considering Bell inequalities with positive coefficients
(in particular, games), where the maximal entangled state always
gives the largest violation up to a logarithmic factor.

\

Let us now state the results more explicitly. A standard scenario
to study quantum nonlocality consists on two spatially separated
and non communicating parties, usually called Alice and Bob. Each
of them can choose among different observables, labelled by
$x=1,\cdots , N$ in the case of Alice and $y=1,\cdots , N$ in the
case of Bob. The possible outcomes of this measurements are
labelled by $a=1,\cdots , K$ in the case of Alice and $b=1,\cdots
, K$ in the case of Bob. Following the standard notation, we will
refer the observables $x$ and $y$ as \emph{inputs} and call $a$
and $b$ \emph{outputs}. We are considering the same number of
inputs (resp. outputs) for Alice and Bob just for simplicity. For
fixed $x,y$, we will consider the probability distribution
$(P(a,b|x,y))_{a,b=1}^K$ of positive real number satisfying
 \[ \sum_{a,b=1}^KP(ab|xy)\, =\, 1  \]
for all $x,y$.  The collection $P=(P(a,b|x,y))_{x,y; a,b=1}^{N,K}$
are called \emph{probability distributions}.

\begin{center}
\begin{figure}[h]
  \includegraphics[width=14cm]{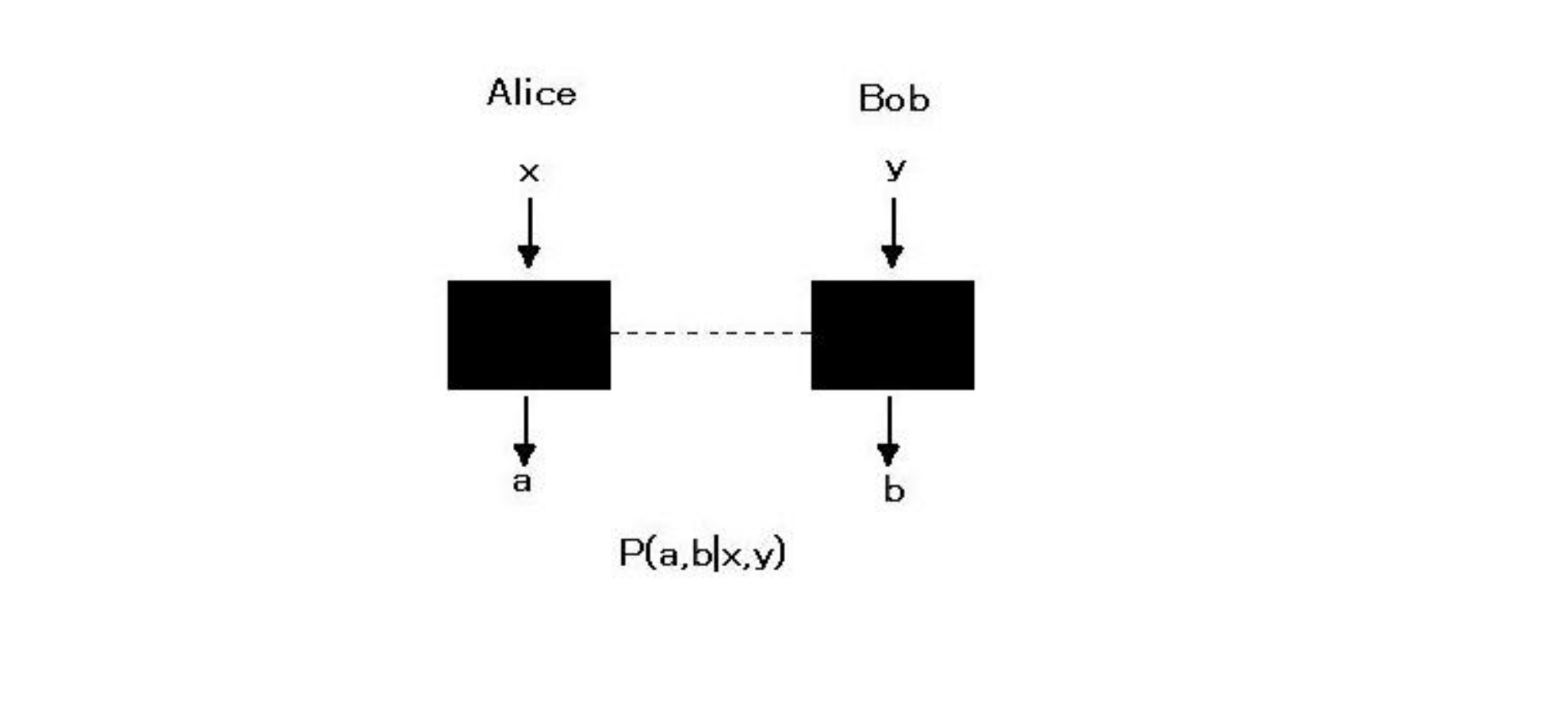}\\
  \caption{$\{P(a,b|x,y)\}_{a,b,x,y}$  is the probability distribution of the measurement outcomes $a,b$, when Alice and Bob choose the observables labeled by $x$ and $y$ respectively.}\label{figure1}
\end{figure}
\end{center}

Given a probability distribution $P=(P(a,b|x,y))_{x,y; a,b=1}^{N,K}$, we will say that $P$ is

\begin{itemize}

\item [a)] \emph{Non-signalling} if
\begin{align*}
\sum_{a=1}^K P(a,b|x,y)&=P(b|y) \text{ is independent of } x, \\
\sum_{b=1}^K P(a,b|x,y)&=P(a|x) \text{ is independent of } y.
\end{align*}
This condition means that Alice choice of inputs does not affect Bob's marginal probability distribution and viceversa. This is physically motivated by the principle of \emph{Einstein locality} which implies non-signalling if we assume that Alice and Bob are space-like separated. We denote the set of non-signalling probability distributions by $\mathcal{C}$.

We must point out that the elements in $\mathcal{C}$ were initially called \emph{behaviors} (see \cite{Tsirelson}). However, following the more recent literature (see \cite{DKLR} and \cite{JPPVW}), we will not use that terminology.

\item [b)] \emph{LHV (Local Hidden Variable)} if
\begin{equation*}\label{classical}
P(a,b|x,y)=\int_\Omega P_\omega(a|x)Q_\omega(b|y)d\mathbb{P}(\omega)
\end{equation*}
for every $x,y,a,b$, where $(\Omega,\Sigma,\mathbb{P})$ is a probability
space, $P_\omega(a|x)\ge 0$ for all $a,x,\omega$, $\sum_a
P_\omega(a|x)=1$ for all $x,\omega$ and the analogous conditions
for $Q_\omega(b|y)$. We denote the set of LHV probability
distributions by $\mathcal{L}$.

\item [c)] \emph{Quantum} if there exist two Hilbert spaces $H_1$,
$H_2$ such that
\begin{equation*}\label{quantum}
P(a,b|x,y)=tr(E_x^a\otimes F_y^b \rho)
\end{equation*}
for every $x,y,a,b$, where  $\rho\in B(H_1\otimes H_2)$ is  a density
operator and $(E_x^a)_{x,a}\subset B(H_1)$, $(F_y^b)_{y,b}\subset B(H_2)$ are two sets
of operators representing POVM
measurements on Alice and Bob systems. That is, $E_x^a\geq 0$ for every  $x,a$, $\sum
_{a}E_x^a=\id$ for every $x$, $F_y^b\geq 0$ for every $y,b$
and $\sum _{b}F_y^b=\id$ for every $y$. We denote the set of
quantum probability distributions by $\mathcal{Q}$.

\end{itemize}

It is well known (see \cite{Tsirelson}) that $\mathcal{L}\varsubsetneq\mathcal{Q}\varsubsetneq  \mathcal{C}\subset \R^{ N^2K^2}$.

\

We want to understand the ``distance between $\mathcal{L}$ and
$\mathcal{Q}$'' quantitatively. Following \cite{JPPVW}, we define
\emph{the largest Bell violation that a given $P\in \mathcal{C}$
may attain} as
\begin{equation*}\label{nu}
\nu(P)= \sup_M\frac{|\langle M,P\rangle|}{\sup_{P'\in\mathcal{L}} |\langle M,P'\rangle|}  ,
\end{equation*}where
$M=\{M_{x,y}^{a,b}\}_{x,y=1,a,b=1}^{N,K}$ is the \emph{Bell
inequality}
acting on $P$ by duality in the natural way: $\langle M,P
\rangle=\sum_{x,y;a,b=1}^{N,K}M_{x,y}^{a,b}P(a,b|x,y).$

Thus, in order to measure how far the elements in $\mathcal{Q}$ can be from $\mathcal{L}$, we are interested in computing the maximal possible
Bell violation

\begin{equation}\label{distanceviolation}
\sup_{P\in\mathcal{Q}}\nu(P).
\end{equation}

Beyond the theoretical interest of $\sup_{P\in\mathcal{Q}}\nu(P)$
as a measure of nonlocality, this term turns out to be a useful
measure regarding the applications in different contexts. Indeed,
in \cite{JPPVW2} (see also \cite{JPPVW}, \cite{DKLR}) the authors
showed its immediate application to dimension witness,
communication complexity or entangled games. Moreover, this term can be
used to measure nonlocality in the presence of noise or/and
detector inefficiencies. This is the key point in the search of a
loophole free Bell test (see \cite{JPPVW2}, \cite{JPPVW} for
details). The main result of this paper can be stated as follows:

\begin{theorem}\label{Theorem 1}
For every $n\in \N$ there exists a quantum probability distribution $P$ with $n$ inputs, $n$
outputs and Hilbert spaces of dimension $n$ such that
\begin{align*}
\nu(P)\succeq \frac{\sqrt{n}}{\log n}.
\end{align*}
Here, we use $\succeq$ to denote inequality up to a universal
constant independent of $n\in \mathbb{N}$.
\end{theorem}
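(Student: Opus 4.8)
The plan is to realize the violation through the operator space framework linking Bell inequalities to tensor norms on the predual of $\ell_\infty^N(S_\infty^K)$-type spaces, and then to make the construction explicit up to a random choice of signs. Concretely, I would encode a Bell inequality $M$ with $N$ inputs and $K$ outputs as an element of $\ell_\infty^N(\ell_1^K)\otimes\ell_\infty^N(\ell_1^K)$; the classical value $\sup_{P'\in\mathcal{L}}|\langle M,P'\rangle|$ is comparable to the injective (minimal) tensor norm of $M$ in $\ell_\infty^N(\ell_\infty^K)\otimes_\varepsilon\ell_\infty^N(\ell_\infty^K)$ (this is essentially Grothendieck's inequality, giving a constant loss), while the quantum value corresponds to the norm of $M$ in the appropriate operator-space minimal tensor product of $\ell_\infty^N(S_1^K)$ with itself. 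So the ratio $\nu(P)$ is, up to universal constants, the ratio between an operator-space minimal tensor norm and a Banach-space injective tensor norm of the same tensor. The task is thus to exhibit one tensor $M$ for which this ratio is $\succeq \sqrt n/\log n$ when $N=K=d=n$, together with the state $\rho$ and POVMs $\{E_x^a\},\{F_y^b\}$ achieving the quantum value.

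For the construction I would take $M$ to be a random tensor: let $M_{x,y}^{a,b}=\varepsilon_{x,y}^{a,b}$ be independent $\pm 1$ (or Gaussian) signs, viewed inside $\ell_\infty^n(\ell_1^n)\otimes\ell_\infty^n(\ell_1^n)$. The upper bound on the classical value is the routine direction: a standard union bound / net argument over the extreme points of $\mathcal{L}$ (products of $\{0,1\}$-deterministic strategies) gives $\sup_{P'\in\mathcal{L}}|\langle M,P'\rangle|\preceq n^{3/2}\sqrt{\log n}$ with high probability. The lower bound on the quantum value is where the real work lies: I would use the maximally entangled-type state in dimension $n$ together with POVMs built from the random signs — i.e., set $E_x^a$ and $F_y^b$ to be (rescaled) rank-one or low-rank projections whose matrix entries carry the signs $\varepsilon$, so that $\mathrm{tr}(E_x^a\otimes F_y^b\,\rho)$ reproduces, after normalization, the quadratic form $\sum \varepsilon_{x,y}^{a,b}(\cdots)$. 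Plugging these into $\langle M,P\rangle$ produces a sum of $n^4$ squared signs of typical size, giving a quantum value $\succeq n^2$ after normalization, hence $\nu(P)\succeq n^2/(n^{3/2}\sqrt{\log n})=\sqrt n/\sqrt{\log n}$; a more careful choice of POVMs (using vectors chosen to be nearly orthonormal, controlled again by random matrix estimates) removes a $\sqrt{\log n}$ and yields the stated $\sqrt n/\log n$.

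\textbf{Main obstacle.} The delicate point is making the POVM conditions $E_x^a\ge 0$, $\sum_a E_x^a=\id$ hold exactly while keeping the quantum value large: random sign matrices are neither positive nor normalized, so one must pass through a genuine construction — e.g. writing $E_x^a=\frac{1}{n}\,|u_x^a\rangle\langle u_x^a|$-type terms for suitable vectors, or using a small perturbation/averaging trick — and then control, via concentration inequalities for the smallest singular value of Gaussian or $\pm1$ matrices, that the normalization costs at most a logarithmic factor rather than a power of $n$. Balancing this normalization loss against the $n^{1/2}$ gain is exactly what forces the $\log n$ in the denominator, and verifying the relevant random-matrix tail bounds (together with the operator-space identification of the quantum bilinear form with a cb-norm, so that the state can genuinely be taken of dimension $n$) is the technical heart of the argument. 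The classical upper bound, by contrast, is a clean Chernoff-union-bound computation.
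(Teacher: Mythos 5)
Your general framework is the right one and matches the paper's: encode the Bell functional in $\ell_1^n(\ell_\infty^n)\otimes\ell_1^n(\ell_\infty^n)$, bound the classical value by the $\epsilon$-norm (this comparison is elementary, by the way, not "essentially Grothendieck"), and exhibit a quantum strategy beating it; the passage to honest POVMs by adding an extra output and the dimension bookkeeping are also as in the paper. The genuine gap is in your choice of coefficients and the ensuing quantum lower bound. You take $M_{x,y}^{a,b}=\varepsilon_{x,y}^{a,b}$ with signs i.i.d.\ over \emph{all four} indices and assert that POVMs "whose matrix entries carry the signs" together with a maximally entangled state produce "a sum of $n^4$ squared signs", hence quantum value $\succeq n^2$. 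But $E_x^a$ may depend only on $(x,a)$ and $F_y^b$ only on $(y,b)$, while each sign depends jointly on $(x,y,a,b)$; no product strategy can correlate with such coefficients, so there is no square to be had. Note also that $n^2$ is exactly the non-signalling maximum of your functional: reaching it (even up to a $\log$) would mean entangled players win a game with a uniformly random predicate with constant bias per question pair, whereas for any fixed strategy the value of $\sum_{x,y,a,b}\varepsilon_{x,y}^{a,b}P(a,b|x,y)$ has mean zero and fluctuations $O(n)$, far below even your classical bound $n^{3/2}\sqrt{\log n}$. Nothing in your sketch supplies the mechanism that would close this gap, and the normalization/positivity issue you flag as the "main obstacle" is not the real obstruction.

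The "sum of squared signs" mechanism does work, but only for coefficients that factorize through a common index, and that is precisely the paper's construction: $\tilde M_{x,y}^{a,b}=\frac{1}{n^2}\sum_{k=1}^n\varepsilon_{x,a}^k\varepsilon_{y,b}^k$, i.e.\ signs attached to Alice's pair $(x,a)$ and Bob's pair $(y,b)$ separately, summed over a shared index $k$. Operator-space-wise this is the image of $\sum_k e_k\otimes e_k$ under $V\otimes V$, where $V$ comes from a $C\sqrt{\log n}$-complemented copy of $\ell_2^{\delta n}$ inside $\ell_1^n(\ell_\infty^n)$ (Theorem \ref{complemented}); Chevet's inequality then gives $\|\tilde M\|_\epsilon\preceq\log n$ (so the classical value is $O(\log n)$, a completely different normalization from yours), while POVMs built from the rows $(\varepsilon_{x,a}^k)_k$ give quantum value $\succeq\frac{2}{K^2}\alpha_1\sum_{i\ge2}\alpha_i$ for the state $\sum_i\alpha_i|ii\rangle$ — of order $\sqrt n$ for $\alpha_1\simeq\frac1{\sqrt2}$, $\alpha_i\simeq\frac1{\sqrt{2n}}$. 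This also corrects your second structural choice: the maximally entangled state only yields $O(1)$ in that lower bound, and Section \ref{Violation vs GHZ} of the paper shows it is in fact a poor state for sign-type inequalities of this kind, so building the lower bound on $|\psi_n\rangle$ points in the wrong direction. In short, to make your plan work you must replace the four-index i.i.d.\ tensor by the factorized rank-$n$ tensor (or an equivalent complementation argument) and abandon the maximally entangled state as the witness.
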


The first unbounded violation of Bell inequalities dates back to
the Raz parallel repetition theorem (\cite{Raz}). Indeed, applying
this result to the repetition of the magic square game (or any
pseudo-telepathy game (\cite{Brassard-review})), one can deduce
the existence of an $x> 0$ such that for every $n$ we have quantum
probability distributions $P$ with $n$ inputs, $n$ outputs and
dimension $n$ such that $\nu(P)\succeq n^x.$ However, regarding
the sharpest estimates on the parallel repetition theorem
(\cite{Holenstein}, \cite{Rao}, \cite{Raz2}), the best known value
for the previous $x$ doesn't seem to be much better that
$10^{-5}$. In \cite{KRT} the authors made a great improvement of
the previous results. Via a highly non trivial construction of
Khot and Visnoi in the context of Complexity Theory (\cite{KV}),
the main result in \cite{KRT} shows the existence of a quantum
probability distribution $P$ with $n$ outputs and $\frac{2^n}{n}$
inputs, which verifies that $\nu(P)\succeq n^{\frac{1}{54}}$
\footnote{Actually, one can obtain $n^{\frac{1}{24}}$ up to terms
of lower order via a claim in (\cite{CMM}, pag. 3).}. The prize in
that estimate is a large number of inputs and no control in terms
of the dimension of the underlying Hilbert space. In the
recent paper \cite{JPPVW}, the authors showed the existence
of a quantum probability $P$ constructed with
$[2^\frac{\log^2n}{2}]^n$ inputs, $n$ outputs and Hilbert spaces
dimension $n$ which verifies $\nu(P)\succeq
\frac{\sqrt{n}}{\log^2n}$. This result highly improved the
previous ones, almost closing the gap to the known upper bounds in
the number of outputs and in the dimension of the Hilbert spaces. Finally, in the very recent work
\cite{BGW} the authors improved the previous estimate obtaining a quantum probability distribution $P$ verifying $\nu(P)\succeq
\frac{\sqrt{n}}{\log n}$ with $2^n$ inputs, $n$ outputs and $n$ dimensional Hilbert spaces.
As before these results required a large number of inputs.

\

Therefore, Theorem \ref{Theorem 1} significantly improves the
previously known results about unbounded violations of Bell
inequalities. It almost closes the gap to the known upper bounds
(see Section \ref{geometric}) in all the involved parameters of
the problem. Although the proof of the result relies on some
probabilistic estimates, all the ingredients are constructed
explicitly. Indeed, we  consider a fixed number of $\pm 1$ signs
$\eps_{x,a}^k$ with $x,a,k=1,\cdots ,n$. For a constant $K$ we
define
\begin{enumerate}
\item[a)] \emph{Bell inequality coefficients}:
   \[    \tilde{M}_{x,y}^{a,b}\, = \,
   \begin{cases}  \frac{1}{n^2}\sum_{k=1}^n\epsilon_{x,a}^k\epsilon_{y,b}^k &  x,y,a,b=1,\cdots, n \cr
   0 & a=n+1, \mbox{and } x,y,b=1,...,n. \cr
   0 & b=n+1, \mbox{and } x,y,a=1,...,n  \, .
   \end{cases}  \]
\item[b)] \emph{POVMs measurements}: $\{E_x^a\}_{x,a=1}^{n,n+1}$ in $M_{n+1}$ as
   \[
    E_x^a \, = \,
    \begin{cases}   \frac{1}{nK}\left( \begin{array}{ccccccc}
  1 &  \epsilon_{x,a}^1 & \cdots & \epsilon_{x,a}^n \\
  \epsilon_{x,a}^1 &  1  & \cdots & \epsilon_{x,a}^1\epsilon_{x,a}^n\\
  \vdots & \vdots & \vdots  & \vdots \\
  \epsilon_{x,a}^n &  \epsilon_{x,a}^n\epsilon_{x,a}^1  & \cdots & 1\\ \end{array}\right)
  & \mbox{ for }  a=1,\cdots ,n\, , \cr
  1-\sum_{a=1}^nE_x^a & \mbox{ for }  a=n+1\,
  \end{cases} \]
  for $x=1,\cdots ,n.$
\item[c)] \emph{States}: Let $(\al_i)_{i=1}^{n+1}$ be a decreasing and positive
sequence and
   \begin{align*}
 |\varphi_\alpha\rangle= \sum_{i=1}^{n+1} \al_i |ii\rangle.
 \end{align*}
\end{enumerate}

\begin{theorem}\label{Theorem-construction}
There exist universal constants $C$ and $K$ such that for every natural number $n$ there exists a choice of signs $\{\epsilon_{x,a}^k\}_{x,a,k=1}^{n}$ verifying that $\{E_x^a\}_{x,a=1}^{n,n+1}$ define POVMs measurements,
\begin{equation}\label{violation-constructive1}
\sup\left\{|\sum_{x,y; a,b=1}^{n,n+1}\tilde{M}_{x,y}^{a,b}P(a,b|x,y)|: P\in \mathcal L\right\}\leq C\log n
\end{equation} and
\begin{equation}\label{violation-constructive2}
\sum_{x,y; a,b=1}^{n,n+1}\tilde{M}_{x,y}^{a,b}\langle
 \varphi_\alpha|E_x^a\otimes E_y^b |\varphi_\alpha\rangle\geq
 \frac{2}{K^2} \,  \al_1\sum_{i=2}^{n+1} \al_i.
\end{equation}
Moreover, the probability of the elements (choices of signs) verifying this tends to $1$ exponentially fast as $n\rightarrow \infty$.
\end{theorem}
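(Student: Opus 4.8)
The statement bundles three assertions about one random draw of the signs $\{\epsilon_{x,a}^k\}$, and the plan is to establish each separately and then intersect the good events. The organizing remark is that \eqref{violation-constructive2} is in fact a \emph{deterministic} identity --- it holds for every sign pattern --- so all the probabilistic work is confined to checking that the $E_x^a$ are genuine POVMs and that the LHV value of $\tilde M$ is $\mathrm{O}(\log n)$.

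For the quantum value I would first record that, for $a\le n$, $E_x^a=\tfrac1{nK}\,|v_{x,a}\rangle\langle v_{x,a}|$ with $v_{x,a}=(1,\epsilon_{x,a}^1,\dots,\epsilon_{x,a}^n)\in\R^{n+1}$, so, writing $\beta_0=\alpha_1$, $\beta_k=\alpha_{k+1}$ and $\epsilon_{x,a}^0:=1$, one gets $\langle\varphi_\alpha|E_x^a\otimes E_y^b|\varphi_\alpha\rangle=\tfrac1{n^2K^2}\bigl(\sum_{k=0}^n\beta_k\epsilon_{x,a}^k\epsilon_{y,b}^k\bigr)^2$. Substituting this together with $\tilde M_{x,y}^{a,b}=\tfrac1{n^2}\sum_{k=1}^n\epsilon_{x,a}^k\epsilon_{y,b}^k$ into the left side of \eqref{violation-constructive2} (the outputs $n+1$ drop out since $\tilde M$ vanishes there), expanding the square and separating the $(x,a)$- and $(y,b)$-sums --- which are equal because Alice and Bob use the same sign family --- the whole expression collapses to $\tfrac1{n^4K^2}\sum_{k=1}^n\sum_{l,m=0}^n\beta_l\beta_m\,T_{klm}^2$, where $T_{klm}=\sum_{x,a=1}^n\epsilon_{x,a}^k\epsilon_{x,a}^l\epsilon_{x,a}^m$. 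Every summand is nonnegative (here I use only $\alpha_i>0$), so discarding all but the terms with $(l,m)\in\{(k,0),(0,k)\}$, for which $T_{klm}=\sum_{x,a}1=n^2$, already gives $\tfrac1{n^4K^2}\cdot 2n^4\beta_0\sum_{k=1}^n\beta_k=\tfrac2{K^2}\alpha_1\sum_{i=2}^{n+1}\alpha_i$, which is \eqref{violation-constructive2} with no randomness used.

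Next the POVM condition. For $a\le n$ each $E_x^a$ is a positive multiple of a rank-one projection, hence $\ge 0$, and $\sum_{a=1}^{n+1}E_x^a=\id$ by the very definition of $E_x^{n+1}$; the only real point is $E_x^{n+1}\ge 0$, i.e. $\|\sum_{a=1}^n|v_{x,a}\rangle\langle v_{x,a}|\|\le nK$. That operator is $V_xV_x^*$, where $V_x$ is the $(n+1)\times n$ matrix with columns $v_{x,a}$; deleting its deterministic all-ones first row leaves an $n\times n$ matrix of i.i.d.\ signs, whose operator norm is $\le C_0\sqrt n$ outside an event of probability $e^{-c_0 n}$ by the standard net argument for the norm of a random sign matrix. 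Hence $\|V_xV_x^*\|\le (C_0+1)^2 n$, and I would fix the universal constant $K:=(C_0+1)^2$; a union bound over $x=1,\dots,n$ then makes all the $\{E_x^a\}_a$ POVMs outside an event of probability at most $n e^{-c_0 n}$.

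Finally the LHV upper bound, which I expect to be the main obstacle. The supremum in \eqref{violation-constructive1} over the polytope $\mathcal L$ is attained at a product deterministic strategy $P(a,b|x,y)=\delta_{a,f(x)}\delta_{b,g(y)}$, and since $\tilde M$ vanishes whenever an output equals $n+1$ (put $\epsilon^k_{x,n+1}:=0$), one has $\langle\tilde M,P\rangle=\tfrac1{n^2}\langle\xi^f,\xi^g\rangle$ with $\xi^f=\bigl(\sum_x\epsilon_{x,f(x)}^k\bigr)_{k=1}^n$; by Cauchy--Schwarz it is enough to show $\|\xi^f\|_2^2\le C'n^2\log n$ simultaneously for all $(n+1)^n$ functions $f$. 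For a fixed $f$, $\|\xi^f\|_2^2=\sum_{k=1}^n\bigl(\sum_x\epsilon_{x,f(x)}^k\bigr)^2$ is a sum over $k$ of \emph{independent} subexponential variables of mean at most $n$, so Bernstein's inequality gives $\Pr[\|\xi^f\|_2^2>C'n^2\log n]\le e^{-cC'n\log n}$ for an absolute $c>0$; choosing $C'$ large enough that $cC'>1$ and union-bounding over the $(n+1)^n=e^{(1+o(1))n\log n}$ choices of $f$ proves the bound for all $f$ outside an event of probability $e^{-\Omega(n\log n)}$, which is \eqref{violation-constructive1} with $C=C'$. Intersecting with the event of the previous step, all three conclusions hold outside an event of probability at most $ne^{-c_0 n}+e^{-\Omega(n\log n)}=e^{-\Omega(n)}$, giving the claimed exponential convergence. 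The delicate point is precisely this last step: the union bound ranges over $e^{\Theta(n\log n)}$ local strategies, while $\|\xi^f\|_2^2$ concentrates only at a subexponential (not subgaussian) rate, and making Bernstein's exponent beat $n\log n$ is exactly what costs the extra $\log n$ and forces $C$ to be a large absolute constant.
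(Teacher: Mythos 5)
Your proposal is correct, and its skeleton (deterministic lower bound for the quantum value, operator-norm control of the sign matrices for the POVM property, probabilistic $O(\log n)$ bound for the classical value, then intersect the good events) is the same as the paper's three steps; in particular your observation that \eqref{violation-constructive2} holds for \emph{every} sign pattern is exactly the paper's remark after its Step 3, and your expansion into the nonnegative quantities $\beta_l\beta_m T_{klm}^2$ is the same sums-of-squares computation, just organized more symmetrically. Where you genuinely diverge is in the other two steps. For the POVM check the paper runs a Paulsen-style $2\times 2$ block-matrix argument, writing $\tilde E_x^a=\alpha_a^x\beta_a^x$ and using $\|(\epsilon_{x,a}^k)_{a,k}\|\preceq\sqrt n$ (its Lemma \ref{second}, transferred from Gaussians to signs by the contraction principle); your Gram-matrix identity $\sum_{a\le n}E_x^a=\frac1{nK}V_xV_x^*$ plus the standard net bound rests on the same random event and is, if anything, more direct. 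For \eqref{violation-constructive1} the paper does not touch deterministic strategies at all: it bounds the classical value by the injective tensor norm $\|\tilde M\|_\epsilon=\|M\|_\epsilon\le\frac1{n^2}\|\mathcal E\|^2$ and controls $\|\mathcal E:\ell_2^n\to\ell_1^n(\ell_\infty^n)\|$ by Chevet's inequality (Lemma \ref{first}), which yields the $\log n$ with the same Gaussian machinery used elsewhere and, importantly, gives the stronger $\epsilon$-norm estimate that the paper reuses in later sections (Theorem \ref{Theorem min-epsilon}, the $\gamma_2^*$ discussion). Your route -- reduce to the $(n+1)^n$ deterministic strategies, Cauchy--Schwarz to $\|\xi^f\|_2^2$, Bernstein for the independent subexponential variables $(\sum_x\epsilon^k_{x,f(x)})^2$, and a union bound whose $e^{\Theta(n\log n)}$ cardinality is beaten precisely because the deviation is taken at scale $n^2\log n$ -- is more elementary and self-contained, proves exactly what the theorem states (with failure probability $e^{-\Omega(n\log n)}$ for that step, hence overall exponential as required), but delivers only the classical bound rather than the $\epsilon$-norm bound; for this theorem that trade-off is harmless.
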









This explicit construction allows us to study the connection
between two concepts, violation of Bell inequalities and quantum
entanglement, which are at the heart of Quantum Information
Theory. Indeed, for bipartite pure states there exists a universal
measure of entanglement, the so called \emph{entropy of
entanglement}:

\begin{align*}
\mathcal{E}(|\psi\rangle)=S((|\psi\rangle\langle\psi|)_A),
\end{align*}
where $S$ denotes the usual von Neumann entropy (see \cite{DHR}).
It is easy to see that $\mathcal{E}(|\psi\rangle)\geq 0$ for every
state $|\psi\rangle$ and that the maximally entangled state in
dimension $n$ is
\begin{align*}
|\psi_{n}\rangle=\frac{1}{\sqrt{n}}\sum_{i=1}^{n}|ii\rangle,
\end{align*}
verifying $\mathcal{E}(|\psi_{n}\rangle)=\log_2(n)$. For a given
bipartite pure state $|\varphi\rangle$ in dimension $n$ and
$\delta> 0$, we will say that $|\varphi\rangle$ is
\emph{$\delta$-maximally entangled} (resp. \emph{$\delta$- non
entangled}) if $\log_2(n)-\mathcal{E}(|\varphi\rangle)< \delta$
(resp. $\mathcal{E}(|\varphi\rangle)< \delta$). As a consequence
of Theorem \ref{Theorem-construction} we have:

\begin{corollary}\label{delta-entangled}

For any $\delta> 0$ we can find a $\delta$-maximally entangled state (resp. $\delta$- non entangled state) $|\psi_\delta\rangle$ in a high enough dimension $n$, a Bell inequality $(M_{x,y}^{a,b})_{x,y,a,b=1}^n$ and POVMs measurement $\{E_x^a\}_{x,a=1}^n$  such that
\begin{align*}
\frac{|\langle M,Q_{|\psi_\delta\rangle}\rangle|}{\sup_{P\in\mathcal{L}} |\langle M,P\rangle|}\succeq \frac{\sqrt{n}}{(\log n)^2},
\end{align*}
where $Q_{|\psi_\delta\rangle}(a,b|x,y)=\langle \psi_\delta|E_x^a\otimes E_y^b|\psi_\delta\rangle$ for every $x,y,a,b=1,\cdots, n$.
\end{corollary}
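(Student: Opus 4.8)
The plan is to apply Theorem~\ref{Theorem-construction} with a carefully chosen sequence $(\alpha_i)_{i=1}^{n+1}$ and then to estimate by hand the entropy of entanglement of the resulting state $|\varphi_\alpha\rangle$. Since the Schmidt coefficients of $|\varphi_\alpha\rangle=\sum_{i=1}^{n+1}\alpha_i|ii\rangle$ are precisely the numbers $\alpha_i$, we have $\mathcal E(|\varphi_\alpha\rangle)=-\sum_{i=1}^{n+1}\alpha_i^2\log_2\alpha_i^2$. On the other hand, dividing \eqref{violation-constructive2} by \eqref{violation-constructive1} shows that, for a good choice of signs, the Bell violation attained by $|\varphi_\alpha\rangle$ with $M=\tilde M$ and the POVMs $\{E_x^a\}$ is bounded below by $\tfrac{2}{K^2C}\cdot\tfrac{\alpha_1\sum_{i=2}^{n+1}\alpha_i}{\log n}$. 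Thus the whole task reduces to producing, for each fixed $\delta>0$ and each large enough $n$, a decreasing positive sequence $(\alpha_i)_{i=1}^{n+1}$ with $\sum_i\alpha_i^2=1$ such that $\alpha_1\sum_{i=2}^{n+1}\alpha_i$ is at least of order $\tfrac{\sqrt n}{\log n}$ while $\mathcal E(|\varphi_\alpha\rangle)$ stays within $\delta$ of its minimum $0$ (non entangled case) or of its maximum $\log_2(n+1)$ (maximally entangled case). Taking $|\psi_\delta\rangle=|\varphi_\alpha\rangle$ together with $M=\tilde M$ and $\{E_x^a\}$ then yields the corollary; the harmless mismatch between $n$ and $n+1$ in the number of inputs/outputs and in the dimension disappears into the universal constant hidden in $\succeq$.

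The main idea is that one recipe handles both cases: set $q:=\tfrac1{(\log_2 n)^2}$ and put squared mass $1-q$ (resp. $q$) on the first Schmidt coefficient, spreading the rest uniformly. Explicitly, I would take $\alpha_1^2=1-q$ and $\alpha_i^2=\tfrac qn$ for $2\le i\le n+1$ in the $\delta$-non entangled case, and $\alpha_1^2=q$ and $\alpha_i^2=\tfrac{1-q}{n}$ for $2\le i\le n+1$ in the $\delta$-maximally entangled case. In both cases $\sum_i\alpha_i^2=1$ and the sequence is decreasing once $n$ is large (in the second case because $q\ge\tfrac{1-q}{n}$ eventually), and in both cases
\[
\alpha_1\sum_{i=2}^{n+1}\alpha_i=\sqrt{q\,n(1-q)}\ \ge\ \tfrac12\,\frac{\sqrt n}{\log_2 n}
\]
for $n$ large. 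Feeding this into the lower bound above gives a Bell violation $\succeq\tfrac1{\log n}\cdot\tfrac{\sqrt n}{\log n}=\tfrac{\sqrt n}{(\log n)^2}$, as required.

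It remains to control the entropy. Writing $h(t)=-t\log_2 t-(1-t)\log_2(1-t)$ for the binary entropy, a one-line computation gives $\mathcal E(|\varphi_\alpha\rangle)=h(q)+q\log_2 n$ in the non entangled case and $\log_2(n+1)-\mathcal E(|\varphi_\alpha\rangle)=\log_2\!\bigl(1+\tfrac1n\bigr)+q\log_2 n-h(q)$ in the maximally entangled case. Since $q\log_2 n=\tfrac1{\log_2 n}\to0$ and $h(q)\to0$, in both cases the relevant quantity tends to $0$ as $n\to\infty$, hence is $<\delta$ for $n$ large enough, which is exactly the $\delta$-maximally entangled (resp. $\delta$-non entangled) condition. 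The only genuinely delicate point — the one I expect to be the obstacle — is precisely this entropy estimate in the maximally entangled case: one must notice that relocating a $q$-fraction of the total squared mass onto a single Schmidt coefficient costs only about $q\log_2 n$ bits of entropy. This is what permits $q$ to be taken as large as $\tfrac1{(\log_2 n)^2}$, which is exactly the threshold at which $\alpha_1\sum_{i\ge2}\alpha_i$ reaches the order $\tfrac{\sqrt n}{\log n}$ demanded by Theorem~\ref{Theorem-construction}.
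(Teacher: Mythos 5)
Your proposal is correct and follows essentially the same route as the paper: the paper also feeds Theorem~\ref{Theorem-construction} states of the form ``one distinguished Schmidt coefficient plus uniformly spread remainder'' (its family $|\varphi_\alpha\rangle=\alpha|11\rangle+\frac{\sqrt{1-\alpha^2}}{\sqrt n}\sum_{i\ge 2}|ii\rangle$), divides \eqref{violation-constructive2} by the $C\log n$ classical bound, and controls the entropy via the same computation (Lemma~\ref{ealpha}). The only difference is cosmetic: the paper parametrizes by $\alpha\in[\mu_n,\nu_n]$ and invokes the intermediate value theorem to cover a whole interval of entropies, while you simply plug in the two endpoint choices $q=1/(\log_2 n)^2$, which suffices for the corollary.
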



The previous corollary shows that \emph{even though quantum
entanglement is needed to obtain violation of Bell inequalities,
the amount of entanglement is essentially irrelevant for large
violation}. Indeed, we can find states with entropy of
entanglement close to either $0$ or $\log_2 (n+1)$ and this only
decreases violation by a logarithmic factor.

\

It is interesting to note that the previous construction doesn't
say anything about the extremal cases: entanglement $0$ (which is
trivial) and maximal entanglement. This leads us to the following
result:

\begin{theorem}\label{Theorem 2}
There exist a Bell inequalities $\tilde{M}$ with $2^{n^2}$ inputs and $n+1$ outputs and POVMs $\{\tilde{E}_x^a\}_{x,a=1}^{2^{n^2},n+1}$ acting on $\ell_2^{n+1}$ with the following properties:

\begin{enumerate}
\item[a)] $\tilde{M}$ and $\{\tilde{E}_x^a\}_{x,a=1}^{2^{n^2},n+1}$ verify equations (\ref{violation-constructive1}) and (\ref{violation-constructive2}) in Theorem \ref{Theorem-construction} for every state $|\varphi_\alpha\rangle= \sum_{i=1}^{n+1} \al_i |ii\rangle$.

\

\item[b)] $\sup \{|\langle \tilde{M},Q_{max}\rangle|\}\preceq 1$, where this $\sup$ runs over all quantum probability distributions $Q_{max}$ constructed with the maximally entangled state in any dimension.

\end{enumerate}

\end{theorem}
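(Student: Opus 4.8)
\emph{Proposal.} The plan is to realize $\tilde M$ and $\{\tilde E_x^a\}$ as the ``universal'' version of the construction preceding Theorem~\ref{Theorem-construction}, letting the inputs index the sign configurations themselves. Take as common input set of Alice and Bob the set of those sign matrices $\nu=(\nu_a^k)_{a,k=1}^n\in\{\pm1\}^{n\times n}$ whose operator norm $\|\nu\|$ is at most $3\sqrt n$; a random sign matrix satisfies this with probability tending to $1$ exponentially fast, so this set has cardinality $(1-o(1))2^{n^2}$ and, after relabelling, we may speak of $2^{n^2}$ inputs. To an input $\nu$ attach $\tilde E_\nu^a=\frac1{nK}|v_\nu^a\rangle\langle v_\nu^a|$ for $a=1,\dots,n$, with $v_\nu^a=(1,\nu_a^1,\dots,\nu_a^n)\in\ell_2^{n+1}$, and $\tilde E_\nu^{n+1}=\id-\sum_{a=1}^n\tilde E_\nu^a$, exactly as in b) above; the admissibility bound gives $\|\sum_{a=1}^n\tilde E_\nu^a\|=\frac1{nK}\|[\mathbf 1\,|\,\nu][\mathbf 1\,|\,\nu]^*\|\le\frac{(\sqrt n+\|\nu\|)^2}{nK}\le\frac{16}{K}$, so for a universal $K$ these are genuine POVMs simultaneously for every admissible $\nu$. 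Finally set $\tilde M_{\nu,\mu}^{a,b}=\frac1{2^{2n^2}}\sum_{k=1}^n\nu_a^k\mu_b^k$ for $a,b\le n$ and $\tilde M_{\nu,\mu}^{a,b}=0$ if $a=n+1$ or $b=n+1$, the factor $2^{-2n^2}$ being the natural normalisation for $2^{n^2}$ inputs.

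For part a) the two estimates follow by transferring the corresponding facts of Theorem~\ref{Theorem-construction} to the full cube of inputs. Passing to deterministic local strategies $a(\cdot),b(\cdot)$, the classical value (\ref{violation-constructive1}) is $\le\frac1{2^{2n^2}}\big(\sup_{a(\cdot)}\|\sum_\nu u_\nu^{a(\nu)}\|_{\ell_2^n}\big)^2\cdot n^2$ with $u_\nu^a=\frac1n(\nu_a^1,\dots,\nu_a^n)$; since $\sum_\nu\langle u_\nu^{a(\nu)},c\rangle$ is, for fixed unit $c$, a sum of $2^{n^2}$ bounded independent contributions concentrating around $2^{n^2}\,\E[\max_a\langle u_\nu^a,c\rangle]=O\!\big(2^{n^2}\sqrt{\log n}/n\big)$, a union bound over an $e^{O(n)}$-net of the sphere of $\ell_2^n$ gives $\sup_{a(\cdot)}\|\sum_\nu u_\nu^{a(\nu)}\|\le C\,2^{n^2}\sqrt{\log n}/n$, hence a classical value $\le C\log n$. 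For the quantum bound (\ref{violation-constructive2}) I would compute the left-hand side directly: since $\langle\varphi_\alpha|\tilde E_\nu^a\otimes\tilde E_\mu^b|\varphi_\alpha\rangle=\frac1{n^2K^2}\big(\alpha_1+\sum_k\alpha_{k+1}\nu_a^k\mu_b^k\big)^2$, the double sum $\sum_{\nu,\mu}\sum_{a,b\le n}\tilde M_{\nu,\mu}^{a,b}\langle\varphi_\alpha|\cdots|\varphi_\alpha\rangle$ collapses, by orthogonality of the characters of $\{\pm1\}^n$, to exactly $\frac2{K^2}\alpha_1\sum_{i=2}^{n+1}\alpha_i$ for every $\alpha$; restricting to admissible $\nu$ only changes this by a factor $1-o(1)$, absorbed into $K$, and removing inputs can only decrease the classical value, so a) is unaffected.

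Part b) is the heart of the statement. Writing the maximally entangled state in dimension $d$ and arbitrary POVMs $\{G_\nu^a\},\{H_\mu^b\}$ on $\ell_2^d$, the Gram form of the coefficients gives
\begin{equation*}
\langle\tilde M,Q_{max}\rangle=\frac1{2^{2n^2}d}\sum_{k=1}^n\operatorname{tr}\!\big(A_k B_k^{T}\big),\qquad A_k:=\sum_{\nu}\sum_{a=1}^n\nu_a^k\,G_\nu^a,\quad B_k:=\sum_{\mu}\sum_{b=1}^n\mu_b^k\,H_\mu^b,
\end{equation*}
so by Cauchy--Schwarz in the Hilbert--Schmidt class it suffices to prove the dimension-free estimate $\sum_{k=1}^n\|A_k\|_{HS}^2\preceq 2^{2n^2}d$ (and the same for the $B_k$), which then yields $|\langle\tilde M,Q_{max}\rangle|\preceq 1$ uniformly in $d$ and in the POVMs. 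This is where the choice of the \emph{full} cube of inputs is used twice: the POVM normalisation gives the per-input operator bound $\|\sum_{a\le n}\nu_a^kG_\nu^a\|\le 1$, while summing over the whole cube flattens the Gram matrix of the coefficient vectors, $\sum_{\nu,a\le n}\nu_a^k\nu_a^{k'}=n2^{n^2}\delta_{kk'}$, i.e. the family $\{\nu_a^\bullet\}_{\nu,a\le n}$ is a tight frame in $\ell_2^n$; these two must be combined in a non-Hilbertian way to control the cross-input contributions.

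I expect this last estimate to be the main obstacle. A naive treatment — either the triangle inequality over $\nu$, or the global Hilbert--Schmidt bound $\sum_{\nu,a}\operatorname{tr}((G_\nu^a)^2)\le 2^{n^2}d$ together with $\|\tilde M'\|_{\mathrm{op}}=n\,2^{-n^2}$ for the flattened coefficient matrix $\tilde M'$ — only yields $\sum_k\|A_k\|_{HS}^2=O(n\,2^{2n^2}d)$, hence $\langle\tilde M,Q_{max}\rangle=O(n)$. Closing the remaining factor of $n$ requires genuinely exploiting that $\{G_\nu^a\}_a$ is a POVM at \emph{every one} of the exponentially many inputs (the constraints $\sum_{a\le n+1}G_\nu^a=\id$ for each $\nu$), not merely the aggregate $S_2$ bound; equivalently, it is the point at which one must invoke the operator-space description of the maximally entangled value rather than plain Hilbert-space estimates. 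This contrast is exactly what makes the maximally entangled state ``a rather poor candidate'': the same $\tilde M$ and $\{\tilde E_\nu^a\}$ that produce violation of order $\sqrt n/\log n$ against the low-entropy states $|\varphi_\alpha\rangle$ of part a) produce only bounded violation against \emph{any} maximally entangled state, in any dimension.
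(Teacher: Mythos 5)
Your construction is essentially the one the paper uses (it coincides with the explicit Remark following Theorem \ref{completely complemented}: inputs indexed by sign configurations, coefficients $2^{-2n^2}\sum_k\nu_a^k\mu_b^k$, rank-one-based POVMs killed on the bad set of signs), and your treatment of part a) is sound in outline: over the full cube the classical value is a deterministic computation with a sub-gaussian maximal inequality, and the quantum lower bound follows from character orthogonality exactly as in Step \ref{step3}. So part a) is fine.

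Part b), however, is not proved, and you say so yourself: your Hilbert--Schmidt Cauchy--Schwarz reduction leaves you needing $\sum_k\|A_k\|_{HS}^2\preceq 2^{2n^2}d$, which you cannot establish, and your fallback estimates only give $|\langle \tilde M,Q_{max}\rangle|=O(n)$. This is precisely the heart of the theorem, and the missing ingredient is concrete. First, the reduction should not go through the normalized trace alone: the paper's Lemma \ref{maxentanglement} shows that the maximally entangled value is controlled by the \emph{operator-norm} square functions $\|\sum_k X_kX_k^*\|$ and $\|\sum_k X_k^*X_k\|$, where $X_k$ is the image of $e_k$ under the map determined by Bob's (resp.\ Alice's) POVMs; i.e.\ one needs $\|M\|_{\psi-\min}\le 4\|M\|_{\epsilon}$ on $(R_n\cap C_n)\otimes(R_n\cap C_n)$, not an aggregate $S_2$ bound. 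Second, to transfer this to the actual coefficients one needs that the map $u_{RAD}:R_n\cap C_n\to L_1(\Omega;\ell_\infty^n)$, $u_{RAD}(e_k)=\sum_j\epsilon_{k,j}\otimes e_j$, is completely bounded with $\|u_{RAD}\|_{cb}\preceq\sqrt{\log n}$ (Lemma \ref{random}); this is exactly where the exponentially many inputs and the POVM constraint at \emph{every} input are exploited, and its proof is not elementary: it uses the tensorization Lemma \ref{copy} together with the noncommutative Khintchine inequality through the interpolation spaces $RC_q^n$ with $q=\log n$. With these two facts one gets $\|M\|_{\psi-\min}\le 4\|u_{RAD}\|_{cb}^2\preceq\log n$ (equation \eqref{mtan}), and the bound $\preceq1$ of part b) is then obtained for the inequality normalized by its classical value $C\log n$, via Lemma \ref{connection:maxentanglement}. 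Note also that your proposed target without a logarithm is too strong: already a deterministic family $G_\nu^a$ selecting for each $\nu$ the row best aligned with a fixed unit vector forces $\sum_k\|A_k\|_{HS}^2\succeq 2^{2n^2}d\log n$, so the correct dimension-free statement is the $\log n$ bound on the unnormalized $\psi$-min norm, not the constant bound you aimed at.
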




In particular, Theorem \ref{Theorem 2} shows the existence of quantum probability distributions $P$ which can not be written as a quantum probability distribution by using the maximally entangled state, even when the dimension of the Hilbert spaces is not restricted (note the difference with the case of quantum correlations matrices, \cite{Tsirelson}). However, we will show that every $n$ dimensional diagonal state can be written, up to a $\sqrt{\log n}$ factor, as a \emph{superposition} of maximally entangled states in the same dimension. A very interesting consequence of this superposition result, is that Theorem \ref{Theorem 2} is not longer true if we restrict to Bell inequalities with positive coefficients (in particular, games), because in that case the maximal entangled state always gives the largest violation up to a $\log$ factor in the dimension of the Hilbert space (see Theorem \ref{positive coefficients}).



\


The paper is organized as follows. Section \ref{Mathematical Tools} is devoted to introduce the basic tools. In the first part
we will give a brief introduction to operator space theory. In the second part of this section we will summarize the connections
between operator spaces and Bell inequalities from \cite{JPPVW}. Furthermore, we will explain some new connections which will be
used in this work. In Section \ref{Main result} we will prove Theorem \ref{Theorem 1}. However, we will first give a direct less
explicit proof. This proof serves as a guideline for the strategy used throughout the paper. In the last part of the section, we
will discuss the optimality of our result. In Section \ref{The construction} we will present the proof of Theorem \ref{Theorem-construction} and we will investigate  the connection between the amount of violation and the entropy of entanglement of our states leading to Corollary \ref{delta-entangled}. In section \ref{Violation vs GHZ} we study the maximally entangled state. In the first part of the section we will prove Theorem \ref{Theorem 2}. Motivated by this result, we will clarify the role of the maximally entangled state in the context of violation of Bell
inequalities. In Section \ref{geometric} we will discuss the geometric meaning of violation of Bell inequalities. We will
show that the ``distance'' introduced in equation (\ref{distanceviolation}) and its dual version $LV$ (see Equation
(\ref{violation})) are not only the right ones regarding the applications of violation of Bell inequalities to different
contexts of QIT, but they are also the natural ones from a geometric point of view. As a consequence of the results
developed in this section we will obtain upper bounds for the largest violation of Bell inequalities. Finally, in Section \ref{section-gamma} we will study the role of the $\gamma_2^*$ tensor norm in the context of violation of Bell inequalities. The main motivation is the study of this norm as a relaxation of the problem of computing the classical and the quantum value of Bell inequalities. Actually, we will show that this relaxation is related to some well known SDP relaxations already used in the study of some problems of Complexity Theory. Throughout the section, we will give some optimal results for this norm.


\section{Basic tools}\label{Mathematical Tools}

\subsection{Operator spaces}

We will recall some basic facts from operator spaces theory. We recommend \cite{EffrosRuan} and \cite{Pisierbook} for further information and more detailed definitions. We will denote by $M_n$ (resp $M_{m,n}$) the space of complex $n\times n$ (resp $m\times n$) matrices.

\

The theory of operator spaces came to life through the work of Effros and Ruan in the 80's (see \cite{EffrosRuan,Pisierbook}). They provided an axiomatic characterization of closed subspaces of $B(H)$, the space of bounded linear operators on a Hilbert space, where the objects are Banach spaces $E$ combined with a tail of matrix norms on $M_n(E)$ attached to it. More formally, an operator space is a complex vector space $E$ and a sequence of
norms $\|\cdot\|_n$ in the space of $E$-valued matrices
$M_n(E)=M_n\otimes E$, verifying Ruan's axioms
\begin{enumerate}

\item For every $n,m\in\N$, $x\in M_m(E)$, $a\in M_{n,m}$ and $b\in M_{m,n}$ we have that
     \[  \|axb\|_n\le \|a\|\|x\|_m\|b\| \,  . \]

\item For every $n,m\in \N$, $x\in M_n(E)$, $y\in M_m(E)$, we have that
     \[ \left\|\left(\begin{array}{cc}
  x & 0 \\
  0 & y \\
 \end{array} \right)\right\|_{n+m}= \max\{\|x\|_n,\|y\|_m\} \, .  \]
  \end{enumerate}

In particular, every $C^*$-algebra $\mathcal{A}$ has a natural operator space structure induced by a faithful  embedding $j:\mathcal{A}\hookrightarrow B(H)$. Indeed, it is enough to consider the sequence of norms on $M_n\otimes \mathcal{A}$ defined by the embedding $id\otimes j:M_n\otimes \mathcal{A}\hookrightarrow M_n\otimes B(H)=B(\ell_2^n\otimes H)$. In
particular, $\ell_{\infty}^k$ has a natural operator space structure. Let us describe this  explicitly. We
embed $\ell_{\infty}^k$ as diagonal maps in $M_k$. Let $x=\sum_i A_i \otimes e_i\in M_n(\ell_{\infty}^k)=M_n\otimes \ell_{\infty}^k$. Then  we have
\begin{equation}\label{eq:op-space-infty}
\|x\|_n=\left\|\sum_{i} A_i \otimes
|i\rangle\langle i|\right\|_{M_{nk}}=\max_i\|A_i\|_{M_n}.
\end{equation}

The category of Banach spaces and the category of operator spaces essentially deal with the same objects, closed subspaces  of $B(H)$, but they differ through their morphisms. The morphisms in the category of operator spaces are those which
allow a uniform control of all matrix norms, so called
{\em completely bounded
maps}.  A  linear map $u:E\lra F$ between operator spaces is called completely bounded if all the amplifications
$u_n=\id_n \otimes u :M_n\otimes E=
M_n(E)\lra M_n\otimes F = M_n(F)$ remain uniformly  bounded. The cb-norm of $u$ is
then defined as $\|u\|_{cb}=\sup_n\|u_n\|$. We will call $CB(E,F)$
the resulting normed space. It has a natural operator space
structure given by $M_n(CB(E,F))=CB(E, M_n(F))$. We can
analogously define the notion of a complete isomorphism/isometry (see
\cite{EffrosRuan,Pisierbook}).

\

The minimal tensor product of two operator spaces
$E\subset B(H)$ and $F\subset B(K)$ is defined as the operator
space $E\otimes_{\min} F$ with the structure inherited from the
induced embedding $E\otimes F\subset B(H\otimes K).$ In
particular, $M_n(E)=M_n\otimes_{\min} E$ holds for every operator space
$E$. The tensor norm $\min$ in the category of operator spaces
will play the role of the so called $\epsilon$ norm in the
classical theory of tensor norms in Banach spaces \cite{Def}. In particular
$\min$ is injective, in the sense that if $E\subset X$ and
$F\subset Y$ completely isometric (isomorphic), then
$E\otimes_{\min} F\subset X\otimes_{\min} Y$ holds completely
isometrically (isomorphically). The analogue  of the largest
tensor norm $\pi$ for Banach spaces in operator spaces theory
is given by the {\em operator space projective norm $\wedge$}. The norm is defined as
 \[  \|u\|_{M_n(E\otimes_{\wedge} F)}=\inf\{\|\alpha\|_{M_{n,lm}}\|x\|_{M_l(E)}\|y\|_{M_m(F)}\|\beta\|_{M_{lm,n}}:u=\alpha(x\otimes
 y)\beta\}\, ,\]
where $u=\alpha(x\otimes y)\beta$ means the matrix
product  $$u=\sum_{rsijpq}\alpha_{r,ip} \beta_{jq,s}
 |r\rangle\langle s| \otimes x_{ij}\otimes y_{pq}  \in M_n\otimes
 E\otimes F.$$
Both tensor norms, ${\wedge}$ and $\min$, are associative and
commutative and they share the duality relations  given by ${\pi}$ and $\epsilon$. This means that for finite dimensional operator spaces we have the
natural completely isometric identifications
\begin{equation}\label{eq:dual-op-space}
(E\otimes_{\wedge} F)^*=CB^2(E,F; \C)=CB(E,F^*)=E^*\otimes_{\min}
F^*\, .
\end{equation}
Here the matrix norms of the dual operator space $E^*$ of an operator space $E$ are given by
$M_n(E^*)=CB(E,M_n)$.

\

A Banach space $X$ carries many different
operator space structures. This means that there are different isometric inclusions in $B(H)$ with different tail of matrix norms.
Fundamental examples are the\emph{ row} and \emph{column} structures, defined on a Hilbert space $\ell_2^n$. For the row operator spaces $R_n$, we embed $\ell_2^n$ into $M_n$ as a row
 \[  R_n =\{ \sum_k \alpha_k |0\rangle \langle k| :\alpha_k\in
 \mathbb{C} \} \]
and similarly we define the column operator space $C_n$ via
  \[ C_n =\{ \sum_k \alpha_k |k\rangle \langle 0| :\alpha_k\in
 \mathbb{C} \}\, . \]
It can be seen that
\begin{align*}\|\sum_i A_i\otimes e_i\|_{M_m\otimes_{\min} R_n}=\|\sum_i
 A_iA_i^{\dagger}\|^{\frac{1}{2}}  \text{     and     }
 \|\sum_i A_i\otimes e_i \|_{M_m\otimes_{\min} C_n}=\|\sum_i
 A_i^{\dagger}A_i\|^{\frac{1}{2}}.
\end{align*}

We may also need the \emph{intersection} of two operator spaces. Assume that $X$ and $Y$ are injectively embedded in larger topological vector space $V$. Then we may define the norm on $M_n(X\cap Y)=M_n(X)\cap M_n(Y)$  by
 \[ \|(x_{ij})\|_{M_n(X\cap Y)} \, = \, \max\{\|(x_{ij})\|_{M_n(X)},\|(x_{ij})\|_{M_n(Y)}\} \,  .\]
It is easy to see that this definition satisfies Ruan axioms, and moreover for $X\subset B(H)$, $Y\subset B(K)$
 \[ X\cap Y\subset X\oplus Y \subset B(H)\oplus B(K) \subset B(H\oplus K),\]
where the last inclusion is given by diagonal operators, and the first inclusion by identifying elements which are considered equal in the ambient space $V$. Specifically, if we consider $R_n\cap C_n$, we obtain a new operator space structure on $\ell_2^n$, described by
\begin{align*}
\|\sum_i A_i\otimes e_i\|_{M_m\otimes_{\min} R_n\cap C_n}=\max\{\|\sum_i
 A_iA_i^{\dagger}\|^{\frac{1}{2}}, \|\sum_i
 A_i^{\dagger}A_i\|^{\frac{1}{2}}\}.
\end{align*}

In this work we will also use Pisier's operator space $OH$ as a technical tool. We refer to its definition as complex interpolation space $OH=(R,C)_{\frac12}$ and further properties to \cite[Chapter 7]{Pisierbook} and \cite{Pisierbook5}.

\

The operator space $\ell_1^n$ carries a natural operator space structure as the dual of $\ell_{\infty}^n$, i.e. $\ell_1^n=(\ell_{\infty}^n)^*$. Note that for any operator space $X$  the natural operator space structure on $\ell_1(X)\subset (c_0\otimes_{\min}X^*)^*$ is given by the norm  closure of $\ell_1\otimes X$. We will write $\ell_1^n(X)$ for the space given by $n$-tuples of elements in $X$ and observe that by definition \[  (\ell_{\infty}^n\otimes_{\min} X)^*= (\ell^n_\infty(X))^*=\ell^n_1(X^*) = \ell_1^n\otimes_{\wedge} X^*
  \quad \text{  holds completely isometrically. } \]

\

Let  $E_0$ and $E_1$ be two operator spaces so that
$(E_0,E_1)$ is a compatible couple of Banach spaces.  This means that $E_0$ and $E_1$ are injectively embedded in a topological vector space $V$.  On the complex interpolation space  $E_\theta=(E_0,E_1)_\theta$ we have a natural operator space structure given by the formula
 \[ M_n(E_\theta)=(M_n(E_0),M_n(E_1))_\theta \, . \]
This turns $E_\theta$ into an operator spaces (see \cite{Pisierbook}, Chapter 2). As an application we observe that \[  \ell_2^n(\ell_\infty)=(\ell_1^n(\ell_\infty), \ell_\infty^n(\ell_\infty))_\frac{1}{2} \]
carries a natural operator spaces structure. Using standard
interpolation theory (a clever application of the three line lemma), this implies
\begin{align} \label{intp1}
\|id:\ell_\infty^n(\ell_\infty)\rightarrow \ell_2^n(\ell_\infty)\|_{cb}=\sqrt{n} \text{   and    } \|\id:\ell_2^n(\ell_\infty)\rightarrow \ell_1^n(\ell_\infty)\|_{cb}= \sqrt{n}.
\end{align}

\smallskip

\subsection{Connections to the physical problem}

As it was shown in \cite{JPPVW}, we understand a Bell inequality (or more precisely the coefficients of a potential Bell inequality) as an element
 \[ M=\sum_{x,y; a,b=1}^{N,K}M_{x,y}^{a,b}(e_x\otimes e_a)\otimes (e_y\otimes e_b)\in \ell_1^N(\ell_\infty^K)\otimes \ell_1^N(\ell_\infty^K).\]
Looking at violations, we study the rate
 \begin{equation}\label{quotient min-epsilon}
  \viol(M) = \frac{\|M\|_{\min}}{\|M\|_{\epsilon}}.
 \end{equation}

Let us recall the following result.

\begin{theorem}\cite[Corollary 4 + Lemma 1]{JPPVW}\label{connection:min-epsilon}

Given an element $M=\sum_{x,y; a,b=1}^{N,K}M_{x,y}^{a,b}(e_x\otimes e_a)\otimes (e_y\otimes e_b)\in \ell_1^N(\ell_\infty^K)\otimes \ell_1^N(\ell_\infty^K)$ such that $\frac{\|M\|_{min}}{\|M\|_{\epsilon}}\geq C$, we can define a Bell inequality $\hat{M}$ with $N$ inputs and $K+1$ outputs (just completing with zeros) and verifying

\begin{equation}\label{violation}
LV(\hat{M})=\sup_{Q\in \mathcal{Q}}\frac{|\langle \hat{M},Q\rangle|}{\sup_{P\in\mathcal{L}} |\langle \hat{M},P\rangle|}\succeq \frac{C}{16}.
\end{equation}

Furthermore, if the Hilbert space dimension required in Equation (\ref{quotient min-epsilon}) is $n$, Equation (\ref{violation}) can be obtained with a Hilbert space dimension lower or equal than $2n$.

\end{theorem}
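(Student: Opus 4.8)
The plan is to read the two tensor norms occurring in \eqref{quotient min-epsilon} as, respectively, the classical and the quantum value of the padded inequality $\hat M$, up to universal constants, and then to take the quotient. Since only the lower bound on $LV(\hat M)$ is asserted, it suffices to establish the two one-sided ``dictionary'' inequalities
\[ \sup_{P\in\mathcal L}|\langle\hat M,P\rangle|\ \le\ \|M\|_{\epsilon},\qquad \sup_{Q\in\mathcal Q}|\langle\hat M,Q\rangle|\ \ge\ \tfrac{1}{16}\,\|M\|_{\min}, \]
because dividing them gives $LV(\hat M)\ge\frac{1}{16}\frac{\|M\|_{\min}}{\|M\|_{\epsilon}}=\frac{1}{16}\viol(M)\ge\frac{C}{16}$, while the Hilbert space bookkeeping inside the second inequality will only double the dimension.

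For the classical inequality I would use the Banach space duality $(\ell_1^N(\ell_\infty^K))^{*}=\ell_\infty^N(\ell_1^K)$ together with the fact that the injective tensor norm is the supremum of $|\langle M,\phi\otimes\psi\rangle|$ over $\phi,\psi$ in the unit ball of $\ell_\infty^N(\ell_1^K)$. By definition every $P\in\mathcal L$ is an average $\int_\Omega P_\omega(a|x)Q_\omega(b|y)\,d\mathbb P(\omega)$ of products of local strategies, and for each $\omega$ the tuple $(P_\omega(\cdot\,|x))_{x=1}^N$ has $\ell_\infty^N(\ell_1^K)$-norm exactly one, since every conditional distribution is an $\ell_1$-probability vector; the same holds for Bob's tuple. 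Hence $|\langle M,P_\omega\otimes Q_\omega\rangle|\le\|M\|_\epsilon$ for every $\omega$, and averaging yields $\sup_{P\in\mathcal L}|\langle M,P\rangle|\le\|M\|_\epsilon$. Adjoining an all-zero $(K+1)$-st output changes neither side, so the same bound holds for $\hat M$. This is the easy half, with constant one.

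The substantive step is the quantum inequality. First I would unfold the operator space minimal tensor norm as
\[ \|M\|_{\min}\ =\ \sup\ \Big\|\ \sum_{x,y;a,b=1}^{N,K}M_{x,y}^{a,b}\,A_x^a\otimes B_y^b\ \Big\|_{B(H_1\otimes H_2)}, \]
the supremum running over all families $(A_x^a)_a$ and $(B_y^b)_b$ which, for each fixed input, are complete contractions out of $\ell_\infty^K$ into $B(H_1)$, resp.\ $B(H_2)$; this is exactly the structure forced by $\ell_1^N(\ell_\infty^K)=(\ell_\infty^N(\ell_1^K))^{*}$, via $M_n(E^{*})=CB(E,M_n)$ and the description \eqref{eq:op-space-infty} of the operator space $\ell_\infty^K$. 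Next, Paulsen's off-diagonal-corner trick turns a complete contraction $\ell_\infty^K\to B(H_1)$, $e_a\mapsto A_x^a$, into a completely positive map into $M_2(B(H_1))=B(H_1\oplus H_1)$; after rescaling by $\tfrac12$ and completing with the defect, this becomes a genuine POVM $(E_x^a)_{a=1}^{K+1}$ on a Hilbert space of dimension $2\dim H_1$, with $A_x^a$ a fixed scalar multiple of the off-diagonal corner of $E_x^a$ (and likewise on Bob's side). Consequently the Bell operator $\tilde S=\sum_{x,y;a,b\le K}M_{x,y}^{a,b}E_x^a\otimes E_y^b$ dominates, as an operator, a fixed multiple of $\sum M_{x,y}^{a,b}A_x^a\otimes B_y^b$, hence for near-optimal $A,B$ has operator norm at least $\|M\|_{\min}$ up to a universal constant. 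Finally, $\tilde S$ is self-adjoint because the coefficients $M_{x,y}^{a,b}$ are real and the $E_x^a$ positive, so $\|\tilde S\|=\sup_{\rho}|\tr(\tilde S\rho)|$ over density operators $\rho$ on the doubled bipartite space; picking the optimal $\rho$ produces $Q(a,b|x,y)=\tr(E_x^a\otimes E_y^b\rho)\in\mathcal Q$ with $|\langle\hat M,Q\rangle|=|\tr(\tilde S\rho)|$ comparable to $\|M\|_{\min}$ (only outputs $\le K$ survive the pairing) and realised on Hilbert spaces of dimension $\le 2n$. Tracking the losses along the way — the $\tfrac12$-rescalings in the Paulsen/sub-POVM step on both sides, and the passage between complex and real scalars in the norm identifications — one checks they multiply to at most $16$.

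The main obstacle is precisely this quantum direction: replacing an \emph{abstract} complete contraction by \emph{honest} POVMs with only a universal-constant loss and at the price of merely doubling the dimension, and then keeping every multiplicative factor under control so that their product does not exceed $16$. By contrast, the classical half, the unfolding of $\|M\|_{\min}$, and the final quotient are routine once the operator space dictionary \eqref{eq:dual-op-space}–\eqref{eq:op-space-infty} is in place.
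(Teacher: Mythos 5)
Your proposal is correct and follows essentially the same route as the cited source: the paper does not reprove Theorem \ref{connection:min-epsilon} but quotes it from \cite{JPPVW}, where the argument is precisely this dictionary --- the $\epsilon$-norm dominates the LHV value of the padded functional, while the $\min$-norm is recovered, up to the factor $16$ and a doubling of the Hilbert space, by turning the complete contractions on $\ell_1^N(\ell_\infty^K)$ into genuine $(K+1)$-outcome POVMs via the Paulsen $2\times 2$ off-diagonal corner trick and then passing to a state by a polarization/self-adjointness argument. The same ingredients reappear explicitly in Steps \ref{step} and \ref{step2} of Section \ref{The construction} of this paper, so no gap needs to be flagged.
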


\begin{remark}
The reader should  note that the meaning of $LV(M)$ here is different to the one in \cite{JPPVW}. In the previous work, we used $LV(M)$ to denote the large violation of $M$ over incomplete probability distributions (see \cite[Definition 1]{JPPVW}). Here $LV(M)$ represents the violation of $M$ over the (complete) probability distributions. We will not deal with the incomplete probabilities here.

\end{remark}

In this paper we will also consider the problem of studying violation for a fixed state. This motivates the following definition

\begin{definition}\label{violation_M_over_rho}
Let $k$ be a natural number and let $\rho$ be a state acting on $\ell_2^k\otimes_2 \ell_2^k$. Given a Bell inequality $M=(M_{x,y}^{a,b})_{x,y; a,b=1}^{N,K}$, we define \emph{the largest violation of $M$ over $\rho$} as

\begin{equation}\label{violation-rho}
LV_\rho(M)=\sup_{Q\in \mathcal{Q}_\rho}\frac{|\langle M,Q\rangle|}{\sup_{P\in\mathcal{L}} |\langle M,P\rangle|},
\end{equation}
where $\mathcal{Q}_\rho=\{(tr(E_x^a\otimes F_y^b\rho)_{x,y;a,b=1}^{N,K}: \{E_x^a\}_{x,a=1}^{N,K}, \{F_y^b\}_{y,b=1}^{N,K} \text{  POVM's   }  \}.$

\end{definition}

We will be mainly interested in the particular case where our state is the maximal entangled state $\rho=|\psi_k\rangle\langle\psi_k|$, where $|\psi_k\rangle=\frac{1}{\sqrt{k}}\sum_{i=1}^k|ii\rangle$.  We will abuse the notation by writing  $LV_{|\psi\rangle}(M)$ instead of  $LV_{|\psi\rangle \langle \psi|}(M)$ for any pure state. Furthermore, we will be interested in identifying Bell inequalities where the dimension free version
\begin{equation}\label{GHZ}
LV_{|\psi\rangle}(M)=\sup_kLV_{|\psi_k\rangle}(M)
\end{equation}
remains bounded. In fact similar objects have been studied in
operator space theory (see \cite{JuPi}). Given two operator spaces $X$ and $Y$ and $a\in X\otimes Y$, we may define a modified $\min$-norm
 \[ \|a\|_{\psi-\min}=\sup \{ \langle\psi_k|(u\otimes v)(a)|\psi_k\rangle\},\]
where the $\sup$ runs over all $k\in\mathbb{N}$ and all completely contractions $u: X\rightarrow M_k$ and $v: Y\rightarrow M_k.$ The following connection between this modified min norm and violation follows easily from the fact
that for every POVM $\{E_x^a\}_{x,a=1}^{N,K}$ in $M_k$, the application $u:\ell_1^N(\ell_\infty^K)\rightarrow M_k$ given by $u(e_x\otimes e_a)=E_x^a$ for every $x,a$, is a completely contraction (see \cite[Section 8]{JPPVW}).

\begin{lemma}\label{connection:maxentanglement}
Given an element $M=\sum_{x,y; a,b=1}^{N,K}M_{x,y}^{a,b}(e_x\otimes e_a)\otimes (e_y\otimes e_b)\in \ell_1^N(\ell_\infty^K)\otimes \ell_1^N(\ell_\infty^K)$, such that $\|M\|_{\psi-\min}\leq C$, then $$\sup_k\sup_{\mathcal{Q}_{|\psi_k\rangle}}|\langle M,Q\rangle|\leq C.$$
\end{lemma}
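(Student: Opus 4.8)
The plan is to unwind both definitions and observe that the quantum value of $M$ on the maximally entangled state $|\psi_k\rangle$ is literally a special instance of the functional defining $\|M\|_{\psi-\min}$, the only subtlety being the passage to absolute values. First I would fix $k\in\N$ and a point $Q\in\mathcal{Q}_{|\psi_k\rangle}$, say $Q(a,b|x,y)=\langle\psi_k|E_x^a\otimes F_y^b|\psi_k\rangle$ for POVMs $\{E_x^a\}_{x,a=1}^{N,K}$ and $\{F_y^b\}_{y,b=1}^{N,K}$ in $M_k$. By the fact recalled just before the statement (from \cite[Section 8]{JPPVW}), the linear maps $u,v:\ell_1^N(\ell_\infty^K)\to M_k$ determined by $u(e_x\otimes e_a)=E_x^a$ and $v(e_y\otimes e_b)=F_y^b$ are complete contractions.

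Expanding $M$ in the canonical basis gives $(u\otimes v)(M)=\sum_{x,y;a,b}M_{x,y}^{a,b}\,E_x^a\otimes F_y^b\in M_k\otimes M_k$, so that
\[
\langle\psi_k|(u\otimes v)(M)|\psi_k\rangle=\sum_{x,y;a,b}M_{x,y}^{a,b}\,\langle\psi_k|E_x^a\otimes F_y^b|\psi_k\rangle=\langle M,Q\rangle .
\]
To handle the modulus, I would note that for any unimodular $\la\in\C$ the map $\la u$ is again a complete contraction (multiplication by a scalar of modulus one preserves every matrix norm), and $((\la u)\otimes v)(M)=\la\,(u\otimes v)(M)$ by bilinearity. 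Choosing $\la$ with $\la\langle M,Q\rangle=|\langle M,Q\rangle|$ yields
\[
|\langle M,Q\rangle|=\langle\psi_k|\big((\la u)\otimes v\big)(M)|\psi_k\rangle\le\|M\|_{\psi-\min}\le C .
\]
Since $k$ and $Q\in\mathcal{Q}_{|\psi_k\rangle}$ were arbitrary, taking the supremum over both gives $\sup_k\sup_{\mathcal{Q}_{|\psi_k\rangle}}|\langle M,Q\rangle|\le C$.

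The argument is essentially a matter of matching definitions, so I do not expect a serious obstacle; the only genuine inputs are (i) the complete contractivity of the POVM-induced maps, which is quoted from \cite{JPPVW}, and (ii) the scalar-rotation step, which is needed because $\|\cdot\|_{\psi-\min}$ is defined without an absolute value. It is worth emphasizing that (ii) uses in an essential way that the supremum in the definition of $\|\cdot\|_{\psi-\min}$ ranges over \emph{all} complete contractions $u,v$ into $M_k$, not only those arising from POVMs, since $\la u$ need not correspond to a POVM.
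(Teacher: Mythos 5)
Your argument is correct and is exactly the proof the paper has in mind: the paper gives no separate proof of this lemma, stating only that it "follows easily from" the fact that POVM-induced maps $e_x\otimes e_a\mapsto E_x^a$ are complete contractions, which is precisely the computation you carry out. The scalar-rotation step you add to handle the modulus is a fine (and harmless) way to make the absolute value explicit.
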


The following lemma will be very useful in Section \ref{Violation vs GHZ}.

\begin{lemma}\label{maxentanglement}
For every $a\in (R_n\cap C_n)\otimes (R_n\cap C_n)$ we have $\|a\|_{\psi-\min}\leq 4\|a\|_\epsilon.$

\end{lemma}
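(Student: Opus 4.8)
The plan is to use the duality $(R_n\cap C_n)^*=R_n+C_n$ to split every completely contractive map appearing in the definition of $\|a\|_{\psi-\min}$ into a ``row summand'' plus a ``column summand'', and then to estimate the four resulting bilinear forms one by one by $\|a\|_\epsilon$. The constant $4$ will simply be the number of cross terms.

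First I would fix $k\in\N$ and complete contractions $u,v\colon R_n\cap C_n\to M_k$, and set $A_i=u(e_i)$, $B_j=v(e_j)$. By the very definition of the intersection operator space the diagonal inclusion $R_n\cap C_n\hookrightarrow R_n\oplus_\infty C_n$ is a complete isometry, and since $M_k$ is injective (Arveson--Wittstock), $u$ extends to a complete contraction $\tilde u\colon R_n\oplus_\infty C_n\to M_k$. Restricting $\tilde u$ to the two summands (each completely isometrically embedded) yields complete contractions $u_R\colon R_n\to M_k$ and $u_C\colon C_n\to M_k$ with $A_i=u_R(e_i)+u_C(e_i)=:R_i+C_i$. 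Since the cb-norm of a map $R_n\to M_k$ (resp. $C_n\to M_k$) sending $e_i\mapsto D_i$ equals $\|\sum_i D_i^\dagger D_i\|^{1/2}$ (resp. $\|\sum_i D_iD_i^\dagger\|^{1/2}$) -- a standard fact dual to the norm formulas for $M_k\otimes_{\min}R_n$ and $M_k\otimes_{\min}C_n$ recalled in Section~\ref{Mathematical Tools} -- we obtain $\|\sum_i R_i^\dagger R_i\|\le1$ and $\|\sum_i C_iC_i^\dagger\|\le1$; likewise $B_j=R'_j+C'_j$ with the analogous bounds.

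Next I would use the elementary identity $\langle\psi_k|X\otimes Y|\psi_k\rangle=\frac1k\sum_{l,m}X_{lm}Y_{lm}$ for $X,Y\in M_k$, immediate from $|\psi_k\rangle=k^{-1/2}\sum_l|ll\rangle$. Writing $a=\sum_{i,j}a_{ij}\,e_i\otimes e_j$ and, for matrix families $(X_i)_i,(Y_j)_j\subset M_k$, setting $x^{lm}=((X_i)_{lm})_i$ and $y^{lm}=((Y_j)_{lm})_j$ in $\ell_2^n$, one gets
\[
\Big|\sum_{i,j}a_{ij}\langle\psi_k|X_i\otimes Y_j|\psi_k\rangle\Big|\le\frac{\|a\|_\epsilon}{k}\sum_{l,m}\|x^{lm}\|_2\,\|y^{lm}\|_2\le\frac{\|a\|_\epsilon}{k}\Big(\sum_{l,m}\|x^{lm}\|_2^2\Big)^{1/2}\Big(\sum_{l,m}\|y^{lm}\|_2^2\Big)^{1/2},
\]
using that, as a Banach space, $R_n\cap C_n=\ell_2^n$, so $\|a\|_\epsilon=\sup\{|\sum_{i,j}a_{ij}\phi_i\psi_j|:\|\phi\|_2,\|\psi\|_2\le1\}$, and then Cauchy--Schwarz. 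Since $\sum_{l,m}\|x^{lm}\|_2^2=\Tr(\sum_i X_i^\dagger X_i)$ and $\Tr(S)\le k\|S\|$ for positive $S\in M_k$, the remark that $\Tr(X^\dagger X)=\Tr(XX^\dagger)$ shows this quantity is $\le k$ whenever $(X_i)_i$ is either $(R_i)_i$ (use $\|\sum_i R_i^\dagger R_i\|\le1$) or $(C_i)_i$ (use $\|\sum_i C_iC_i^\dagger\|\le1$), and similarly for $(Y_j)_j$. Hence each such term is $\le\|a\|_\epsilon$. Expanding $(u\otimes v)(a)=\sum_{i,j}a_{ij}(R_i+C_i)\otimes(R'_j+C'_j)$ into its four parts and applying this to each gives $|\langle\psi_k|(u\otimes v)(a)|\psi_k\rangle|\le4\|a\|_\epsilon$; taking the supremum over $k$ and over all $u,v$ yields $\|a\|_{\psi-\min}\le4\|a\|_\epsilon$.

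I expect the only genuinely structural point to be the row/column splitting in the second paragraph, i.e. combining the injectivity of $M_k$ with the cb-norm identities for $R_n$ and $C_n$ to produce the decomposition $A_i=R_i+C_i$ with simultaneous control of $\|\sum_i R_i^\dagger R_i\|$ and $\|\sum_i C_iC_i^\dagger\|$; the remaining steps are the elementary estimate above, and since the factor $4$ is exactly the number of cross terms no further loss is incurred.
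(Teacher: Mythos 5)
Your proof is correct and follows essentially the same route as the paper's: the decomposition of each complete contraction on $R_n\cap C_n$ into a row part and a column part via Wittstock/injectivity of $M_k$, followed by Cauchy--Schwarz and the bound $\frac{1}{k}\Tr\bigl(\sum_i X_i^\dagger X_i\bigr)\le\min\bigl\{\|\sum_i X_i^\dagger X_i\|,\|\sum_i X_iX_i^\dagger\|\bigr\}$, which is exactly the paper's argument. The only (harmless) difference is that you treat a general $a=\sum_{ij}a_{ij}e_i\otimes e_j$ directly, using $\|a\|_\epsilon$ as the bilinear-form norm on $\ell_2^n\times\ell_2^n$ entrywise in $(l,m)$, whereas the paper first reduces to $a=\id$ by absorbing the Hilbert--Schmidt decomposition into the contractions.
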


\begin{proof}
First of all note that it is enough to consider $a=\id=\sum_{i=1}^n e_i\otimes e_i$. Indeed, for every $a$ we can consider the Hilbert Schmidt decomposition and write $a=\sum_{i=1}^n\alpha_ie_i\otimes U(e_i)$ for some unitary operator $U:\ell_2^n\rightarrow \ell_2^n$ and coefficients $(\alpha_i)_{i=1}^n$ verifying $|\alpha_i|\leq \|a\|_\epsilon$ for every $i=1,\cdots ,n$. But it is easy to see that we can consider the coefficients and the unitary operator as a part of the completely contractions in the definition of $\|a\|_{|\psi\rangle}$.

Let $u:R_n\cap C_n\to M_k$ be a complete contraction. According to the Wittstock extension theorem (due independently to Haagerup, Paulsen and Wittstock see \cite{Paulsen}) and the definition of $R_n\cap C_n\subset R_n\oplus C_n$,  we can extend $u:R_n\cap C_n\to M_k$ to $R_n\oplus C_n$ and find a decomposition $u=u_c+u_r$, where $u_c:C_n\to M_k$ and $u_r:R_n\to M_k$ are complete contractions. Therefore  it is enough to consider the norm $\|id \|_{\psi-\min}$ of the four spaces $R_n\otimes_{\psi-\min} C_n$, $R_n\otimes_{\psi-\min} R_n$, $C_n\otimes_{\psi-\min} R_n$, $C_n \otimes_{\psi-\min} C_n$. For this, let $(X_i)_{i=1}^n, (Y_i)_{i=1}^n\subset M_k$ be operators.
Then, we deduce from the Cauchy-Schwartz inequality that
\begin{align*}
 &|\langle\psi_k|\sum_{i=1}^nX_i\otimes Y_i |\psi_k\rangle|=\frac{1}{k}|\sum_{r,s=1}^k\sum_{i=1}^nX_i(r,s)Y_i(r,s)|\\
 &\leq |\frac{1}{k}\sum_{r,s=1}^k\sum_{i=1}^nX_i(r,s)\overline{X_i(r,s)}|^\frac{1}{2}\, \, |\frac{1}{k}\sum_{r,s=1}^k\sum_{i=1}^nY_i(r,s)\overline{Y_i(r,s)}|^\frac{1}{2}.
\end{align*}
On the other hand, for every  $(X_i)_{i=1}^n\subset M_k$,
 \[ \frac{tr}{k}(\sum_{i=1}^nX_iX_i^*)=\frac{tr}{k}(\sum_{i=1}^nX_i^*X_i)=
\frac{tr}{k}\sum_{r,s=1}^k\sum_{i=1}^nX_i(r,s)\overline{X_i(r,s)}.\] The statement follows from the trivial inequality
 \[  |\frac{tr}{k}(\sum_{i=1}^nX_iX_i^*)|\,\le \, \min\{\|\sum_{i=1}^n X_i^*X_i\|,\|\sum_{i=1}^n X_iX_i^*\|\} \,. \]
Hence for all four possibilities we obtain $\|id\|_{\psi-\min}\le 1$. \end{proof}


\section{Main result}\label{Main result}

\subsection{Proof of the main result}

In this section we will present a first proof of Theorem \ref{Theorem 1}. The key point of the proof is the construction of a complemented copy of $\ell_2^n$ into $\ell_1^n(\ell_\infty^n)$ (see Theorem \ref{complemented}). This result will be also very useful in Sections \ref{The construction} and \ref{Violation vs GHZ}. At the end of this section, we will discuss the optimality of our result and some interesting open questions.

We refer to Section \ref{The construction} for more information regarding the constants in Theorem \ref{Theorem 1}. Note that according to Theorem \ref{connection:min-epsilon}, Theorem \ref{Theorem 1} follows from the next result.


\begin{theorem}\label{Theorem min-epsilon}
For every $n\in \mathbb{N}$, there exists an element $M\in \ell_1^n(\ell_\infty^n)\otimes \ell_1^n(\ell_\infty^n)$ such that
 \[ \frac{\|M\|_{\min}}{\|M\|_\epsilon}\succeq \frac{\sqrt{n}}{\log n}\, .\]
Furthermore, this can be achieved with a Hilbert space of dimensions $n$.
\end{theorem}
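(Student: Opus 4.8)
\emph{Strategy.} The plan is to transplant the ``identity tensor'' $\theta_n:=\sum_{i=1}^{n}e_i\otimes e_i$ from $\ell_2^n\otimes\ell_2^n$ into $\ell_1^n(\ell_\infty^n)\otimes\ell_1^n(\ell_\infty^n)$ through a complemented embedding, exploiting that $\theta_n$ is \emph{small} for the injective Banach norm $\epsilon$ but \emph{large} for the operator space norm $\min$ once $\ell_2^n$ carries the $R_n\cap C_n$ structure. Precisely, from Theorem~\ref{complemented} I take --- for a good choice of the signs $\epsilon_{x,a}^k$ --- a linear embedding $J\colon\ell_2^n\to\ell_1^n(\ell_\infty^n)$ and a projection $P\colon\ell_1^n(\ell_\infty^n)\to\ell_2^n$ with $PJ=\id_{\ell_2^n}$, such that $\|J\colon\ell_2^n\to\ell_1^n(\ell_\infty^n)\|\preceq\sqrt{\log n}$ \emph{as Banach spaces}, while $P\colon\ell_1^n(\ell_\infty^n)\to R_n\cap C_n$ is \emph{completely} bounded with $\|P\|_{cb}\preceq 1$ (any split with $\|J\|^2\|P\|_{cb}^2\preceq\log n$ would do). Then I set $M=(J\otimes J)(\theta_n)$.

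\emph{Upper bound on $\|M\|_\epsilon$.} The injective tensor norm has the metric mapping property, so
\[ \|M\|_\epsilon=\|(J\otimes J)(\theta_n)\|_\epsilon\le\|J\|^2\,\|\theta_n\|_{\ell_2^n\otimes_\epsilon\ell_2^n}=\|J\|^2\,\|\id_{\ell_2^n}\|=\|J\|^2\preceq\log n, \]
where $\|\theta_n\|_{\ell_2^n\otimes_\epsilon\ell_2^n}=1$ is just Cauchy--Schwarz.

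\emph{Lower bound on $\|M\|_{\min}$.} I first claim $\|\theta_n\|_{(R_n\cap C_n)\otimes_{\min}(R_n\cap C_n)}=\sqrt n$. Indeed $x\mapsto(x,x)$ embeds $R_n\cap C_n$ completely isometrically into $R_n\oplus C_n$, so by injectivity of $\min$ and the fact that $\min$ splits over $\ell_\infty$-sums this norm equals the maximum of $\|\theta_n\|$ over the four corners $R_n\otimes_{\min}R_n$, $R_n\otimes_{\min}C_n$, $C_n\otimes_{\min}R_n$, $C_n\otimes_{\min}C_n$; by the row/column formulas recalled in Section~\ref{Mathematical Tools} the mixed corners give $1$, while via $R_n\hookrightarrow M_n$ the element $\theta_n$ becomes the rank-one operator $|e_1\otimes e_1\rangle\langle\theta_n|$ on $\ell_2^n\otimes_2\ell_2^n$, of norm $\|\theta_n\|_{\ell_2^n\otimes_2\ell_2^n}=\sqrt n$ (and likewise $\sqrt n$ in the $C_n\otimes_{\min}C_n$ corner). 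Since $P\otimes P$ maps $M$ to $\theta_n$ and $\min$ is functorial, $\sqrt n=\|\theta_n\|_{(R_n\cap C_n)\otimes_{\min}(R_n\cap C_n)}\le\|P\|_{cb}^2\,\|M\|_{\min}$, hence $\|M\|_{\min}\succeq\sqrt n$; moreover this bound is witnessed by the complete contractions $\iota\circ P$ with $\iota\colon R_n\cap C_n\hookrightarrow R_n\subset M_n$, so a Hilbert space of dimension $n$ suffices. Combining the two estimates gives $\|M\|_{\min}/\|M\|_\epsilon\succeq\sqrt n/\log n$, which is Theorem~\ref{Theorem min-epsilon} (and Theorem~\ref{Theorem 1} then follows from Theorem~\ref{connection:min-epsilon}).

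\emph{The main obstacle.} Everything above is soft functoriality; the real content is Theorem~\ref{complemented} --- producing the complemented copy with a \emph{Banach} norm $\|J\|\preceq\sqrt{\log n}$ but a \emph{cb} norm $\|P\|_{cb}\preceq 1$ \emph{into} $R_n\cap C_n$. The natural candidate is $J(e_k)=\tfrac1n\sum_{x,a}\epsilon_{x,a}^k\,e_x\otimes e_a$ with $P$ its renormalized adjoint $P(e_x\otimes e_a)=\tfrac1n\sum_k\epsilon_{x,a}^k e_k$: one has $\|J(e_k)\|_{\ell_1^n(\ell_\infty^n)}=1$, a subgaussian estimate for $\max_a|\sum_k c_k\epsilon_{x,a}^k|$ together with a net over the sphere of $\ell_2^n$ gives $\|J\|\preceq\sqrt{\log n}$, and $PJ$ differs from $\id$ by an operator of norm $O(n^{-1/2})$ (Rademacher cancellation off the diagonal), which is invertible and absorbed. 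The genuinely hard step is $\|P\colon\ell_1^n(\ell_\infty^n)\to R_n\cap C_n\|_{cb}\preceq 1$, which demands \emph{simultaneous} row- and column-type control of the random array $(\epsilon_{x,a}^k)$ --- a noncommutative Chevet/Khintchine-type estimate --- and holds with probability tending to $1$ exponentially fast. That probabilistic bound, not the transference argument, is where the difficulty lies.
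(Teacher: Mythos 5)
Your transference scheme is exactly the paper's: set $M=(V\otimes V)(\theta)$ for a complemented copy of a Hilbert space inside $\ell_1^n(\ell_\infty^n)$, bound $\|M\|_\epsilon\preceq\log n$ by the metric mapping property, and bound $\|M\|_{\min}\succeq\sqrt n$ by pushing $M$ back with $P\otimes P$ into a row/column structure, where the identity tensor has $\min$-norm $\sqrt n$ (the paper uses only the $R_n\otimes_{\min}R_n$ corner; your computation of the corners of $R_n\cap C_n$ is correct, and the dimension-$n$ claim via $M_n$ matches the paper).

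The gap is in the one estimate you do not prove: $\|P:\ell_1^n(\ell_\infty^n)\to R_n\cap C_n\|_{cb}\preceq 1$. You attribute it to Theorem \ref{complemented}, but that theorem only asserts Banach-norm statements ($\|V\|\preceq\sqrt{\log n}$, $\|V^*\|\le 1$, $V^*V=\id$); the cb bound is not part of its statement, and your closing paragraph explicitly leaves it as an unproven ``noncommutative Chevet/Khintchine-type'' probabilistic fact. In the paper this step is not probabilistic at all: Lemma \ref{Grothendieck} (the little Grothendieck theorem) shows that \emph{every} bounded operator $T:\ell_1^n(\ell_\infty^k)\to\ell_2$ is automatically completely bounded into $R_n\cap C_n$ with $\|T\|_{cb}\le K_L\|T\|$, so the cb control of $V^*$ comes for free from the ordinary norm bound $\|V^*\|\le 1$ of Lemma \ref{second}; no simultaneous row/column control of the random array beyond the scalar Chevet estimates is needed. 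With that lemma inserted, your argument closes (your ad hoc construction of $J,P$ with $PJ=\id+O(n^{-1/2})$ is also fine and essentially plays the role of the paper's Lemma \ref{gaussian}, which instead restricts to a $\delta n$-dimensional subspace, costing only the harmless factor $\sqrt{\delta}$ in the lower bound). As written, however, the proposal defers precisely the ingredient that makes the proof work, so it is incomplete rather than wrong.
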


\begin{remark}
Actually, the proof of the previous theorem guarantees that we can get Theorem \ref{Theorem 1} with a Hilbert space of dimension $2n$. However, we will see in Section \ref{The construction} that  $n+1$ dimensions suffices.
\end{remark}

The key point to prove Theorem \ref{Theorem min-epsilon} is the following result.

\begin{theorem}\label{complemented}
There exist $\delta\in (0, \frac{1}{2})$ and a universal constant $C$, such that, for every $n$, we have a Hilbert space $H_n$ of dimension $\delta n$ and applications $V: H_n\rightarrow \ell_1^n(\ell_\infty^n)$ and $V^*:\ell_1^n(\ell_\infty^n)\rightarrow H_n$ such that $\|V\|\leq C \sqrt{\log n}$, $\|V^*\|\leq 1$ and $V^*V=\id_{H_n}$. In particular, there exists a $C\sqrt{\log n}$- complemented copy of $\ell_2^{\delta n}$ into $\ell_1^n(\ell_\infty^n)$.
\end{theorem}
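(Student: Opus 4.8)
We need to construct a Hilbert space $H_n$ of dimension $\delta n$ and maps $V\colon H_n \to \ell_1^n(\ell_\infty^n)$, $V^*\colon \ell_1^n(\ell_\infty^n)\to H_n$ with $\|V\|\le C\sqrt{\log n}$, $\|V^*\|\le 1$, and $V^*V = \id$. The natural approach is a random construction: pick $\pm 1$ signs $\eps_{x,a}^k$ for $x,a = 1,\dots, n$ and $k = 1,\dots, \delta n$, and define
\[
 V(e_k) = \frac{1}{n}\sum_{x,a=1}^n \eps_{x,a}^k \, e_x\otimes e_a \in \ell_1^n(\ell_\infty^n),
\]
with $V^*$ the adjoint with respect to the natural duality $\ell_1^n(\ell_\infty^n)^* = \ell_\infty^n(\ell_1^n)$ (or, more precisely, $V^*$ acting on $\ell_1^n(\ell_\infty^n)$ via a suitable normalization; one works with $V\colon \ell_2^{\delta n}\to \ell_1^n(\ell_\infty^n)$ and its formal transpose into $\ell_2^{\delta n}$). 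So the plan has three ingredients to verify, each holding with probability tending to $1$ as $n\to\infty$.

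\emph{Step 1: $V^*V \approx \id$.} Compute $\langle V^* V e_k, e_l\rangle = \frac{1}{n^2}\sum_{x,a} \eps_{x,a}^k \eps_{x,a}^l$. For $k=l$ this equals $1$; for $k\ne l$ it is $\frac{1}{n^2}$ times a sum of $n^2$ independent Rademacher variables, which by Hoeffding/Chernoff is $O(n^{-1}\sqrt{\log n})$ with high probability, uniformly over the $(\delta n)^2$ pairs. Hence $V^*V = \id + E$ with $\|E\|\le \delta n \cdot O(n^{-1}\sqrt{\log n}) = o(1)$, and by a small perturbation (replacing $V$ by $V(V^*V)^{-1/2}$, which changes the norms by a factor $1+o(1)$) we may assume $V^*V = \id$ exactly. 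One must also check $\|V^*\|\le 1$: since $\|e_x\otimes e_a\|_{\ell_\infty^n(\ell_1^n)} = 1$ in the dual and the matrix of $V$ has entries $\pm 1/n$, the operator norm of $V^*\colon \ell_1^n(\ell_\infty^n)\to \ell_2^{\delta n}$ is controlled by the largest singular value of the $\frac1n\eps$ matrix, which is $O(1)$ with high probability; absorbing this $O(1)$ into the normalization gives $\|V^*\|\le 1$.

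\emph{Step 2 (the main obstacle): $\|V\|\le C\sqrt{\log n}$.} This is the nontrivial estimate. We need $\big\|\sum_k c_k V(e_k)\big\|_{\ell_1^n(\ell_\infty^n)} \le C\sqrt{\log n}\,\|c\|_2$ for all scalars $c$. Unwinding the $\ell_1^n(\ell_\infty^n)$-norm, this is
\[
 \frac{1}{n}\sum_{x=1}^n \max_{a=1,\dots,n}\Big|\sum_{k=1}^{\delta n} c_k \eps_{x,a}^k\Big| \;\le\; C\sqrt{\log n}\;\Big(\sum_k c_k^2\Big)^{1/2}.
\]
For fixed $x,a$, the inner sum $\sum_k c_k\eps_{x,a}^k$ is a Rademacher sum, subgaussian with parameter $\|c\|_2$; the maximum over $n$ values of $a$ is therefore $O(\sqrt{\log n})\,\|c\|_2$ with high probability, and averaging over $x$ preserves this. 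The delicate part is getting a \emph{uniform} bound over \emph{all} $c$ on the unit sphere of $\ell_2^{\delta n}$ simultaneously; this is handled by a standard $\eps$-net argument on the sphere (net size $e^{O(\delta n)}$), combined with the union bound over the $n^2$ pairs $(x,a)$ and a Gaussian/Rademacher tail bound $\Pr(|\sum c_k\eps^k| > t\|c\|_2)\le 2e^{-t^2/2}$ with $t\sim \sqrt{\log n}$; choosing $\delta$ small enough makes the net factor $e^{O(\delta n)}$ beaten by the tail probability $e^{-\Omega(n\log n)}$ coming from requiring the bound on all $n^2$ coordinates simultaneously. This simultaneous control, and the correct bookkeeping of which randomness (net point vs.\ which $(x,a)$) is being union-bounded, is where the real work lies; one essentially shows $\|V\colon \ell_2^{\delta n}\to \ell_1^n(\ell_\infty^n)\|$ concentrates, which is a matrix-deviation / Chevet-type inequality in disguise.

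\emph{Step 3: conclusion.} Once Steps 1--2 hold with positive probability (in fact probability $\to 1$ exponentially), we fix one good choice of signs. Setting $H_n = \ell_2^{\delta n}$, the maps $V, V^*$ witness that $\ell_2^{\delta n}$ is $C\sqrt{\log n}$-complemented in $\ell_1^n(\ell_\infty^n)$: the projection $VV^*$ has norm $\le \|V\|\|V^*\|\le C\sqrt{\log n}$ and range isomorphic to $\ell_2^{\delta n}$ with $\|V\|\|V^{-1}\|\le C\sqrt{\log n}$. Finally, to feed this into Theorem \ref{Theorem min-epsilon} one takes $M = (V\otimes V)(\id_{H_n})$ where $\id_{H_n}\in H_n\otimes H_n$ is the identity tensor; then $\|M\|_\eps \le \|V\|^2_{\text{(into }\ell_\infty^n(\ell_\infty^n)\text{)}}$ is small (of order $\log n$ after the right normalization, using \eqref{intp1}-type estimates), while $\|M\|_{\min}$ is bounded below by pairing against $V^*\otimes V^*$ and using $\|\id_{H_n}\|_{\min(OH,OH)}\sim \sqrt{\dim H_n}=\sqrt{\delta n}$ — yielding the ratio $\succeq \sqrt n/\log n$. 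The dimension bookkeeping ($\delta n \to n$ by rescaling $n$) is routine.
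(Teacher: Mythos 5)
Your overall strategy coincides with the paper's: a random sign/Gaussian matrix defines $V$ and its formal adjoint, the estimate $\|V\|\preceq \sqrt{\log n}$ is a Chevet-type bound (the paper obtains it from Chevet's inequality in Lemma \ref{first}), the bound on the adjoint is another Chevet-type bound (Lemma \ref{second}), and one then corrects $V^*V$ to the exact identity --- the paper keeps all $n$ columns and invokes \cite[Lemma 4]{JPPVW} (Lemma \ref{gaussian}) to restrict to a subspace of dimension $\delta n$, whereas you take $\delta n$ columns and perturb by $(V^*V)^{-1/2}$. That difference is cosmetic. However, two of the quantitative justifications in your Step 1 are wrong as written and are genuine gaps.

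First, your estimate $\|E\|\le \delta n\cdot O(n^{-1}\sqrt{\log n})$ equals $O(\delta\sqrt{\log n})$, which tends to infinity, not to $0$; so $\|V^*V-\id\|=o(1)$ does not follow from the entrywise bound plus dimension. The claim itself is true, because the $n^2\times\delta n$ random sign matrix has all singular values in $n\bigl(1\pm O(\sqrt{\delta/n})\bigr)$ with high probability, but this needs an extreme-singular-value/concentration argument (a net on the sphere, or standard rectangular random matrix bounds) --- exactly the content the paper imports through Lemma \ref{gaussian}; an entrywise bound cannot deliver it. Second, the justification of $\|V^*\|\le 1$ via ``the largest singular value of the $\frac1n\eps$ matrix'' is insufficient: the largest singular value controls $V^*$ on $\ell_2^{n^2}$, and since one only has $\|z\|_{\ell_2^{n^2}}\le\sqrt n\,\|z\|_{\ell_1^n(\ell_\infty^n)}$ (sharp for $z=e_x\otimes(1,\dots,1)$), this route yields only $\|V^*:\ell_1^n(\ell_\infty^n)\to\ell_2^{\delta n}\|\preceq\sqrt n$. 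What is actually needed is the blockwise estimate $\max_x\|\frac1n R_x:\ell_\infty^n\to\ell_2^{\delta n}\|\preceq 1$ with $R_x=(\eps_{x,a}^k)_{a,k}$, which is again a Chevet-type bound together with a union bound over $x$ (this is the paper's Lemma \ref{second}: bound $G^*:\ell_1^n(\ell_2^n)\to\ell_2^n$ by $\sqrt n$ and pay the factor $\sqrt n$ for $\id:\ell_\infty^n\to\ell_2^n$ inside each block). Your Step 2 is fine as a plan (it is Chevet plus Gaussian/Rademacher concentration, as you note); once the two points above are repaired with the same technology, your argument reproduces the paper's proof.
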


\begin{remark} {\rm
Theorem \ref{complemented} has a very interesting physical interpretation. Recall that the natural space hosting joint probability distributions $(P(a,b|x,y))_{x,y;a,b=1}^{N,K}$ of Alice and Bob is the tensor product $\ell_\infty^N(\ell_1^K)\otimes \ell_\infty^N(\ell_1^K)$. Following this idea different descriptions of Nature are expressed through different tensor norms on this space. Roughly speaking  $\epsilon$ corresponds to a local model of Nature and $\min$ to a quantum description (see Subsection \ref{NSG} for the rigorous formalization of this idea). However, it is conceivable that other models of nature are associated to other tensor norms (see Section \ref{section-gamma} for such an interpretation).
Our complementation  result shows that on the subspace constructed in Theorem \ref{complemented} the norms of the corresponding probabilities can be identified, up to a logarithmic term, by the $\alpha$-norm on the Hilbert space tensor product $\ell_2^n\otimes_{\alpha}\ell_2^n$. In general, tensor norms on Hilbert spaces are much easier to calculate.
}\end{remark}

We will recall Chevet's inequality, which will be frequently used in this paper (see e.g. \cite{LedouxTalagrand}). For its formulation, we should recall the notation  of the weak-$\ell_2$ norm which is defined as
  $$w_2((x_s)_s;
 X)=\sup \left\{ \left(\sum_s|x^*(x_s)|^2\right)^\frac{1}{2}:x^*\in X^*, \|x^*\|\leq 1\right\} $$
for every  sequence $(x_s)_s$ in a Banach space $X$.

\begin{theorem}[Chevet's inequality]\label{chevet}
There exists a universal constant $b$ such that for every Banach spaces $E, F$ and every sequence $(g_{s,t})_{s,t}$ of independent normalized gaussian random variables
 \begin{align*}
 \|\sum_{s,t} g_{s,t}
 x_s\otimes y_t\|_{E\otimes_{\varepsilon}F} \le bw_2((x_s)_s;
 E)\|\sum_{t} g_ty_t\|_F+bw_2((y_t)_t; F)\|\sum_s g_s x_s\|_E.
 \end{align*}
Here  $b=1$ for real Banach spaces and  $b=4$ for complex spaces.
\end{theorem}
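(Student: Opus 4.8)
The plan is to read the left-hand side as the expected supremum of a Gaussian process and to bound it by a Slepian--Fernique comparison with a second Gaussian process whose supremum factorises. (As usual, both sides of the stated inequality are to be understood in expectation.)

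First I would reduce to the real, finite-dimensional case: the $\varepsilon$-norm is injective, and by Hahn--Banach the quantity $w_2((x_s)_s;E)$ is unchanged when $E$ is replaced by $\operatorname{span}\{x_s\}$ (and likewise for $F$), so we may take $E,F$ finite-dimensional; viewing a complex space as a real space of twice the dimension — the source of the worse constant in the complex case — we may also assume $E,F$ are real. On the compact symmetric index set $T=B_{E^*}\times B_{F^*}$ set $X_{\phi,\psi}=\sum_{s,t}g_{s,t}\,\phi(x_s)\psi(y_t)$. This is a centred Gaussian process, and by the definition of the injective tensor norm together with the symmetry $X_{-\phi,\psi}=-X_{\phi,\psi}$ one gets $\bigl\|\sum_{s,t}g_{s,t}\,x_s\otimes y_t\bigr\|_{E\otimes_\varepsilon F}=\sup_{T}X_{\phi,\psi}$, hence the left-hand side equals $\E\sup_{T}X$.

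For the comparison process, let $(\tilde g_s)_s$ and $(\hat g_t)_t$ be independent families of independent $N(0,1)$ variables, also independent of each other, and put $A=w_2((x_s)_s;E)$, $B=w_2((y_t)_t;F)$ and
\[
Y_{\phi,\psi}=\sqrt2\,B\sum_s \tilde g_s\,\phi(x_s)+\sqrt2\,A\sum_t \hat g_t\,\psi(y_t).
\]
Writing $a_s=\phi(x_s)$, $b_t=\psi(y_t)$ and primed versions for $(\phi',\psi')$, the splitting $a_sb_t-a_s'b_t'=(a_s-a_s')b_t+a_s'(b_t-b_t')$ together with $(p+q)^2\le 2p^2+2q^2$ gives
\[
\E\,|X_{\phi,\psi}-X_{\phi',\psi'}|^2=\sum_{s,t}(a_sb_t-a_s'b_t')^2\le 2\Bigl(\sum_t b_t^2\Bigr)\sum_s(a_s-a_s')^2+2\Bigl(\sum_s (a_s')^2\Bigr)\sum_t(b_t-b_t')^2,
\]
and since $\|\psi\|\le1$, $\|\phi'\|\le1$ we have $\sum_t b_t^2\le B^2$ and $\sum_s (a_s')^2\le A^2$, so the right-hand side is at most $\E\,|Y_{\phi,\psi}-Y_{\phi',\psi'}|^2$. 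By the Sudakov--Fernique (Slepian) comparison lemma, $\E\sup_{T}X\le\E\sup_{T}Y$.

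Finally, since $T$ is a product and $Y_{\phi,\psi}$ is a sum of a function of $\phi$ and a function of $\psi$, the supremum splits, and using symmetry of the balls once more,
\[
\E\sup_{T}Y=\sqrt2\,B\,\E\Bigl\|\sum_s \tilde g_s\,x_s\Bigr\|_E+\sqrt2\,A\,\E\Bigl\|\sum_t \hat g_t\,y_t\Bigr\|_F,
\]
which is the claimed inequality with a universal constant ($b=\sqrt2$ by this argument); pinning down the sharp $b=1$ (real) and $b=4$ (complex) requires replacing the crude $(p+q)^2\le 2p^2+2q^2$ step by the more delicate, structure-sensitive increment bound from Chevet's original proof and tracking the realification. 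The one genuinely substantive step is the increment estimate above: choosing the comparison process so that $w_2((x_s)_s;E)$ and $w_2((y_t)_t;F)$ land in the correct places and verifying $\E|X-X'|^2\le\E|Y-Y'|^2$; the reduction to finite dimensions, the symmetrisation that removes the absolute values, and the factorisation of $\sup_T Y$ are all routine once that process is fixed.
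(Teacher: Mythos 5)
The paper does not prove this statement at all: Chevet's inequality is quoted as a classical result with a pointer to Ledoux--Talagrand, so there is no in-paper argument to compare yours against. Your proof is the standard Sudakov--Fernique comparison argument and it is correct as written: the reduction to finite-dimensional real spaces, the identification of the $\varepsilon$-norm with $\sup_T X$ via the symmetry of the dual balls, the increment estimate $\E|X_{\phi,\psi}-X_{\phi',\psi'}|^2\le 2B^2\sum_s(a_s-a_s')^2+2A^2\sum_t(b_t-b_t')^2$, the comparison $\E\sup X\le\E\sup Y$, and the splitting of $\sup_T Y$ over the product index set are all sound. The one thing you do not prove is the stated sharp constants $b=1$ (real) and $b=4$ (complex); your argument yields $b=\sqrt2$ in the real case, and you are right that this is not a cosmetic issue -- one can check that the comparison process $B\sum_s\tilde g_s\phi(x_s)+A\sum_t\hat g_t\psi(y_t)$ \emph{without} the factor $\sqrt2$ fails the increment domination (take $\phi,\phi'$ and $\psi,\psi'$ both close to norming functionals), so obtaining $b=1$ genuinely requires Gordon's minimax comparison or Chevet's original argument rather than a sharper elementary inequality in your decomposition. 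Since every application of the theorem in this paper is up to unspecified universal constants ($\preceq$, $\succeq$), your version with $b=\sqrt2$ suffices for all uses here, and your explicit acknowledgment of the gap in the constant is appropriate.
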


To prove Theorem \ref{complemented}, we will use the following three lemmas.

\begin{lemma}\cite[Lemma 4]{JPPVW}\label{gaussian}
There exits $\delta\in (0,1/2)$ with the following property:
Given natural numbers $n\leq m$ and a family of normalized real gaussian random variables $(g_{ij})_{i,j=1}^{n,m}$, let $G=\sum_{i,j=1}^{n,m}g_{ij}e_i\otimes e_j$ be an operator from $\ell_2^n$ to $\ell_2^m$. Then, ``with highly probability''\footnote{High probability means here that the probability tends to $1$ exponentially fast as $m\rightarrow \infty$ (see Theorem 4.7 in \cite{Pisierbook3}).}, there exists an operator $v_n:H_n\lra \ell_2^n$ such that $v_n^*\frac{1}{m}G^*Gv_n=\id_{H_n}$ and $\|v_n\|\leq 2$, where we denote $H_n=\ell_2^{[\delta n]+1}.$ Here $[x]$ denotes the entire part of real number $x$.
\end{lemma}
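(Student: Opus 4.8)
The plan is to produce, with high probability, a coordinate subspace of $\ell_2^n$ of dimension $d:=[\delta n]+1$ on which the positive operator $A:=\frac1m G^*G$ is bounded below by $\frac14\id$, and then to take $v_n$ to be (essentially) the inverse square root of the compression of $A$ to that subspace. First I would record the deterministic part: suppose $V\subset\ell_2^n$ is a subspace with $\dim V=d$ and $P_VAP_V\ge\tfrac14 P_V$, where $P_V$ is the orthogonal projection onto $V$. Let $A_V$ denote the restriction of $P_VAP_V$ to $V$; then $A_V$ is invertible with $\|A_V^{-1/2}\|\le 2$. Fixing any isometry $w:H_n\to V$ (recall $H_n=\ell_2^d$) and setting $v_n:=A_V^{-1/2}w:H_n\to V\subset\ell_2^n$, one gets $\|v_n\|\le 2$, and since the range of $v_n$ lies in $V$, $v_n^*Av_n=w^*A_V^{-1/2}\,(P_VA\,\iota_V)\,A_V^{-1/2}w=w^*A_V^{-1/2}A_VA_V^{-1/2}w=\id_{H_n}$, which is exactly $v_n^*\frac1m G^*Gv_n=\id_{H_n}$.

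It then remains to find such a $V$ with high probability, and here I would simply take $V=\operatorname{span}\{e_1,\dots,e_d\}$. For a unit vector $x\in V$ one has $\langle Ax,x\rangle=\frac1m\|Gx\|^2=\frac1m\|G_T\hat x\|^2$, where $G_T$ is the $m\times d$ Gaussian matrix obtained by keeping the first $d$ columns of $G$ and $\hat x\in\ell_2^d$ carries the corresponding coordinates of $x$. Hence $P_VAP_V\ge\tfrac14 P_V$ holds as soon as $s_{\min}(G_T)\ge\tfrac12\sqrt m$, i.e. once the least singular value of an $m\times d$ standard Gaussian matrix is at least $\tfrac12\sqrt m$. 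By Gordon's theorem, $\E\, s_{\min}(G_T)\ge\sqrt m-\sqrt d$, and since $d\le\delta n+1\le\delta m+1$ (using $n\le m$), this is at least $(1-\sqrt\delta)\sqrt m-1$. Choosing the universal constant $\delta\in(0,\tfrac12)$ small enough (say $\delta\le\tfrac1{64}$) makes this $\ge\tfrac34\sqrt m$ for all $m$ beyond an absolute threshold. As $\Gamma\mapsto s_{\min}(\Gamma)$ is $1$-Lipschitz for the Hilbert--Schmidt norm, Gaussian concentration of measure gives $\mathbb{P}\big(s_{\min}(G_T)<\tfrac12\sqrt m\big)\le\mathbb{P}\big(s_{\min}(G_T)<\E\,s_{\min}(G_T)-\tfrac14\sqrt m\big)\le e^{-m/32}$, which decays exponentially in $m$ exactly as in the footnote (cf. \cite{Pisierbook3}). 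For the finitely many small values of $n$ not covered by the threshold one argues separately, or just notes that for fixed $n$ and $m\to\infty$ one has $A\to\id_n$.

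Combining the two steps finishes the proof. The hard part — or rather the only point needing care — is the second step: one must commit to $\delta$ as a genuine universal constant so that the $\sqrt d$ deficit in Gordon's bound, which is the price of having passed to a $d$-dimensional coordinate subspace, is absorbed by a fixed fraction of $\sqrt m$, while still leaving the clean lower bound $\tfrac14$ (hence $\|v_n\|\le2$). I would also emphasize that this argument takes no union bound over subspaces — a single fixed coordinate subspace is used — so the failure probability is literally one Gaussian tail, which is what makes the exponential decay immediate.
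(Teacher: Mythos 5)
Your proof is correct. A remark on provenance: the paper does not reprove this lemma (it is quoted from \cite[Lemma 4]{JPPVW}), and the argument there — essentially replayed later in this paper in the proof of Theorem \ref{completely complemented} — is an eigenvalue-counting one: with high probability $\|\tfrac{1}{\sqrt m}G\|$ is bounded by an absolute constant (Chevet plus concentration) while $\tr(\tfrac1m G^*G)$ is of order $n$, so at least $[\delta n]+1$ eigenvalues of $\tfrac1m G^*G$ exceed $\tfrac14$, and $v_n$ is built from the corresponding eigenvectors rescaled by $\lambda_i^{-1/2}\le 2$. You instead fix the single coordinate subspace $V=\operatorname{span}\{e_1,\dots,e_d\}$ and lower-bound the least singular value of the $m\times d$ Gaussian block via Gordon's minoration $\E\, s_{\min}\ge \sqrt m-\sqrt d$ together with Gaussian concentration for the $1$-Lipschitz function $s_{\min}$; your deterministic step (invert $A_V^{1/2}$ on $V$ and precompose with an isometry) is the same in spirit as inverting on the eigenspace. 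Both routes give failure probability $e^{-cm}$, matching the footnote. The trade-offs: the counting argument needs only an upper bound on $\|G\|$ and a trace estimate, yields a subspace adapted to $G$ (hence no constraint forcing $\delta$ to be tiny), while yours produces an explicit, non-random subspace and a one-line tail bound, at the price of choosing $\delta$ small (e.g. $1/64$) so that the $\sqrt d$ deficit is absorbed — which you do correctly, and the finitely many pairs with $m$ below your absolute threshold do not affect the asymptotic high-probability claim. One cosmetic point: your closing remark about "small values of $n$" should refer to small $m$; for fixed small $n$ and $m$ above your threshold the main estimate already applies.
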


\begin{lemma}\label{first}
Let $(g_{i,j}^k)_{i,j,k=1}^n$ be a family of independent and normalized real gaussian variables and the map $G$ defined by
 \[  G(e_k)=\sum_{i,j=1}^ng_{i,j}^ke_i\otimes e_j\]
for every $k=1,\cdots ,n$. Then, there exists a universal constant $C_1> 0$ such that
\begin{align*}
 \mathbb{E}\|G:\ell_2^n\rightarrow \ell_2^n(\ell_\infty^n)\|\leq C_1 \sqrt{n \log n}.
 \end{align*}
In particular, $\mathbb{E}\|G:\ell_2^n\rightarrow \ell_1^n(\ell_\infty^n)\|\leq C_1 n\sqrt{\log n}$.

\end{lemma}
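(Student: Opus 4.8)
The plan is to estimate $\mathbb{E}\|G:\ell_2^n\rightarrow \ell_2^n(\ell_\infty^n)\|$ by viewing $G$ as an element of the tensor product $(\ell_2^n)^*\otimes \ell_2^n(\ell_\infty^n)$ and applying Chevet's inequality (Theorem \ref{chevet}) with $E=(\ell_2^n)^*=\ell_2^n$ and $F=\ell_2^n(\ell_\infty^n)$. Write $G=\sum_{k}e_k\otimes y_k$ where $y_k=\sum_{i,j}g_{i,j}^k\, e_i\otimes e_j\in\ell_2^n(\ell_\infty^n)$; but to put it in the exact form of Chevet's inequality one should expand all three indices: $G=\sum_{k}\sum_{(i,j)} g_{i,j}^k\, e_k\otimes (e_i\otimes e_j)$, so the "$x_s$" are the basis vectors $e_k\in\ell_2^n$ and the "$y_t$" are the basis vectors $e_i\otimes e_j\in\ell_2^n(\ell_\infty^n)$, indexed by $t=(i,j)$. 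The operator norm $\|G:\ell_2^n\to \ell_2^n(\ell_\infty^n)\|$ equals $\|G\|_{(\ell_2^n)^*\otimes_\varepsilon \ell_2^n(\ell_\infty^n)}$ by the standard identification of the injective tensor norm with the operator norm.

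Chevet's inequality then bounds $\mathbb{E}\|G\|$ by
\[
b\, w_2\big((e_k)_k;\ell_2^n\big)\,\mathbb{E}\Big\|\sum_{(i,j)}g_{i,j}\, e_i\otimes e_j\Big\|_{\ell_2^n(\ell_\infty^n)}
+ b\, w_2\big((e_i\otimes e_j)_{(i,j)};\ell_2^n(\ell_\infty^n)\big)\,\mathbb{E}\Big\|\sum_k g_k e_k\Big\|_{\ell_2^n}.
\]
Now I compute the four quantities. First, $w_2((e_k)_k;\ell_2^n)=1$ since the $e_k$ form an orthonormal basis. Second, $\mathbb{E}\|\sum_k g_k e_k\|_{\ell_2^n}\le (\mathbb{E}\|\sum_k g_k e_k\|_2^2)^{1/2}=\sqrt n$. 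Third, for the weak-$\ell_2$ norm of the family $(e_i\otimes e_j)_{(i,j)}$ in $\ell_2^n(\ell_\infty^n)$: the dual space is $\ell_2^n(\ell_1^n)$, and for a norm-one functional $\xi=(\xi_i)_i$ with $\sum_i\|\xi_i\|_{\ell_1^n}^2\le 1$ one has $\sum_{(i,j)}|\langle \xi,e_i\otimes e_j\rangle|^2=\sum_{i,j}|\xi_i(j)|^2\le \sum_i\|\xi_i\|_{\ell_1^n}^2\le 1$, so $w_2\le 1$. The only genuinely substantive term is $\mathbb{E}\|\sum_{i,j}g_{i,j}e_i\otimes e_j\|_{\ell_2^n(\ell_\infty^n)}$: this norm is $(\sum_i \max_j |g_{i,j}|^2)^{1/2}$, and since for each $i$ one has $\mathbb{E}\max_{j\le n}|g_{i,j}|^2\lesssim \log n$ (the standard maximal inequality for $n$ i.i.d.\ Gaussians), summing over $i$ and taking square roots gives $\mathbb{E}(\cdots)\le \big(\mathbb{E}\sum_i\max_j|g_{i,j}|^2\big)^{1/2}\lesssim \sqrt{n\log n}$. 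Combining, $\mathbb{E}\|G\|\lesssim 1\cdot\sqrt{n\log n}+1\cdot\sqrt n\lesssim \sqrt{n\log n}$, which is the claimed bound with a universal constant $C_1$.

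For the "in particular" statement, I use the identity inclusion $\id:\ell_2^n(\ell_\infty^n)\to \ell_1^n(\ell_\infty^n)$, which has norm $\le \sqrt n$ (this is the classical, not the cb, version of the first estimate in \eqref{intp1}, and follows from Cauchy--Schwarz: $\|(z_i)_i\|_{\ell_1^n}\le \sqrt n\,\|(z_i)_i\|_{\ell_2^n}$ applied coordinatewise in $\ell_\infty^n$). Composing, $\mathbb{E}\|G:\ell_2^n\to \ell_1^n(\ell_\infty^n)\|\le \sqrt n\,\mathbb{E}\|G:\ell_2^n\to \ell_2^n(\ell_\infty^n)\|\le C_1 n\sqrt{\log n}$. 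The main obstacle is really just bookkeeping: correctly identifying the dual of $\ell_2^n(\ell_\infty^n)$ to evaluate the weak-$\ell_2$ norm of the tensor basis, and invoking the Gaussian maximal inequality $\mathbb{E}\max_{j\le n}|g_j|^2\lesssim \log n$ with the right constant; there is no deep difficulty beyond setting up Chevet's inequality in the correct tensor framework.
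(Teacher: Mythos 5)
Your proof is correct and follows essentially the same route as the paper: Chevet's inequality with $E=\ell_2^n$, $F=\ell_2^n(\ell_\infty^n)$, the weak-$\ell_2$ norms both equal to $1$, and the Gaussian term of order $\sqrt{n\log n}$, followed by the factor $\sqrt{n}$ from $\|\id:\ell_2^n(\ell_\infty^n)\to\ell_1^n(\ell_\infty^n)\|$. The only (harmless) difference is in the substantive term: you compute $\|\sum_{i,j}g_{i,j}e_i\otimes e_j\|_{\ell_2^n(\ell_\infty^n)}=(\sum_i\max_j|g_{i,j}|^2)^{1/2}$ directly via the Gaussian maximal inequality, whereas the paper bounds it by $\sqrt{n}$ times the sup over all $n^2$ Gaussians; both give the same order.
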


\begin{proof}
Chevet's inequality implies that
\begin{align*}
 \mathbb{E} \|G\|
  \le   w_2((e_i)_i;\ell_2^n )\,\,  \mathbb{E} \|\sum_{i,j=1}^ng_{i,j}e_i\otimes e_j\|_{\ell_2^n(\ell_\infty^n)}
  +\omega_2((e_i\otimes e_j)_{i,j};\ell_2^n(\ell_\infty^n))\,\,  \mathbb{E} \|\sum_{i=1}^ng_ie_i\|_{\ell_2^n}.
\end{align*}
It is well known that $w_2((e_i)_i;\ell_2^n)=1$ and $\mathbb{E} \|\sum_{i=1}^ng_ie_i\|_{\ell_2^n}\le \sqrt{n}$. On the other hand, it is immediate to check that $\omega_2((e_i\otimes e_j)_{i,j}; \ell_2^n(\ell_\infty^n))=1 $. Then it suffices to show
 \[  \mathbb{E} \|\sum_{i,j=1}^ng_{i,j}e_i\otimes e_j\|_{\ell_2^n(\ell_\infty^n)} \preceq \sqrt{n\log n}\, .\]
Indeed, using the  well-known estimate
$\mathbb{E}\|\sum_{i=1}^ng_ie_i\|_{\ell_\infty^n}\preceq\sqrt{\log n}$ (see e.g.  \cite[Page 15]{Tomczak}), we have
 \begin{align*}
 \mathbb{E}\|\sum_{i,j=1}^ng_{i,j}e_i\otimes e_j\|_{\ell_2^n(\ell_\infty^n)} &\le \sqrt{n}\,
 \mathbb{E}\|\sum_{i,j=1}^ng_{i,j}e_i\otimes e_j\|_{\ell_{\infty}^n(\ell_\infty^n)} \preceq \sqrt{n} \sqrt{\log n^2}\, .
 \end{align*}
The second assertion follows from  $\|id:\ell_2^n(\ell_\infty^n)\rightarrow \ell_1^n(\ell_\infty^n)\|\leq \sqrt{n}$.\end{proof}

\begin{lemma}\label{second}
Let $(g_{i,j}^k)_{i,j,k=1}^n$ be a family of independent and normalized real gaussian variables and let the application defined by $G^*(e_i\otimes e_j)=\sum_{k=1}^ng_{i,j}^ke_k$ for every $i,j=1,\cdots ,n$. Then,
\begin{align*}
\mathbb{E}\|G^*:\ell_1^n(\ell_2^n)\rightarrow \ell_2^n\|\leq C_2 \sqrt{n}
\end{align*}
holds with a universal constant $C_2$. In particular
$\mathbb{E}\|G^*:\ell_1^n(\ell_\infty^n)\rightarrow \ell_2^n\|\leq C_2 n$.
\end{lemma}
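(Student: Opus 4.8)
The plan is to reduce the estimate to a uniform bound on $n$ independent square Gaussian matrices, using the standard fact that an operator out of an $\ell_1$-sum has norm equal to the supremum of the norms of its restrictions to the summands. Concretely, I would write a generic $z\in\ell_1^n(\ell_2^n)$ as $z=(z_1,\dots,z_n)$ with $z_i\in\ell_2^n$ and $\|z\|=\sum_{i=1}^n\|z_i\|_2$, and observe that $G^*(z)=\sum_{i=1}^n\Gamma_i(z_i)$, where $\Gamma_i:\ell_2^n\to\ell_2^n$ is the operator with matrix $(g_{i,j}^k)_{k,j=1}^n$, i.e. $\Gamma_i(e_j)=\sum_{k=1}^ng_{i,j}^ke_k$. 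The triangle inequality applied to the $\ell_1$-sum decomposition then gives at once
\[
\|G^*:\ell_1^n(\ell_2^n)\to\ell_2^n\|=\max_{1\le i\le n}\|\Gamma_i\|_{\ell_2^n\to\ell_2^n}.
\]

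Next I would control $\mathbb{E}\max_i\|\Gamma_i\|$. Each $\Gamma_i$ is an $n\times n$ matrix with i.i.d.\ normalized real Gaussian entries, and the matrices $\Gamma_1,\dots,\Gamma_n$ are mutually independent. Applying Chevet's inequality (Theorem \ref{chevet}) with $E=F=\ell_2^n$ — and using that $\ell_2^n\otimes_\varepsilon\ell_2^n$ is the space of $n\times n$ matrices with the operator norm, that $w_2((e_i)_i;\ell_2^n)=1$, and that $\mathbb{E}\|\sum_i g_ie_i\|_{\ell_2^n}\le\sqrt n$ — one obtains $\mathbb{E}\|\Gamma_i\|\le 2\sqrt n$. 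Combining this with Gaussian concentration (the map sending the entries of $\Gamma_i$ to $\|\Gamma_i\|$ is $1$-Lipschitz for the Euclidean norm, so $\|\Gamma_i\|-\mathbb{E}\|\Gamma_i\|$ is $1$-subgaussian) and the usual bound $\mathbb{E}\max_{i\le n}Y_i\le\sqrt{2\log n}$ for centered $1$-subgaussian variables $Y_i$, I get
\[
\mathbb{E}\|G^*:\ell_1^n(\ell_2^n)\to\ell_2^n\|=\mathbb{E}\max_{i\le n}\|\Gamma_i\|\le 2\sqrt n+\sqrt{2\log n}\le C_2\sqrt n,
\]
for a universal constant $C_2$, which is the first assertion.

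For the ``in particular'' statement I would simply factor $G^*:\ell_1^n(\ell_\infty^n)\to\ell_2^n$ through $\ell_1^n(\ell_2^n)$ via the formal identity, whose norm is $\|\id:\ell_1^n(\ell_\infty^n)\to\ell_1^n(\ell_2^n)\|=\|\id:\ell_\infty^n\to\ell_2^n\|=\sqrt n$; multiplying the two bounds gives $\mathbb{E}\|G^*:\ell_1^n(\ell_\infty^n)\to\ell_2^n\|\le\sqrt n\cdot C_2\sqrt n=C_2 n$.

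I do not expect a genuine obstacle here: the only point needing a little care is the first displayed identity (the reduction to $\max_i\|\Gamma_i\|$ through the $\ell_1$-sum structure), and everything after it is a routine Gaussian-matrix estimate. In particular sharp constants are irrelevant, since a $\sqrt{\log n}$ term is harmless against $\sqrt n$; one could even use the crude bound $\mathbb{E}\max_i\|\Gamma_i\|\le\mathbb{E}\|\Gamma_1\|+\mathbb{E}\max_{i\le n}(\|\Gamma_i\|-\mathbb{E}\|\Gamma_i\|)$.
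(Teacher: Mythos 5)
Your proof is correct, but it is organized differently from the paper's. The paper bounds $\mathbb{E}\|G^*\|$ by a single application of Chevet's inequality in the mixed-norm space $\ell_\infty^n(\ell_2^n)\otimes_\epsilon\ell_2^n$, using $w_2((e_i\otimes e_j)_{i,j};\ell_\infty^n(\ell_2^n))\le 1$ and then a second application of Chevet to estimate $\mathbb{E}\|\sum_{i,j}g_{i,j}e_i\otimes e_j\|_{\ell_\infty^n(\ell_2^n)}\preceq\sqrt n$; the outer $\ell_1^n$-structure is thus absorbed into the weak-$\ell_2$ computation rather than treated by hand. You instead exploit the $\ell_1$-sum structure directly, reducing to $\|G^*\|=\max_{i\le n}\|\Gamma_i\|$ for $n$ independent $n\times n$ Gaussian matrices (a correct and standard identity), then use Chevet only for the single-matrix bound $\mathbb{E}\|\Gamma_i\|\le 2\sqrt n$ and handle the maximum via Gaussian concentration plus the subgaussian max bound, picking up a harmless $\sqrt{2\log n}$. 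Both arguments bottom out in the same square Gaussian matrix estimate; the paper's version is a two-line computation that stays entirely inside the Chevet/weak-$\ell_2$ framework used throughout (and transfers verbatim to Bernoulli signs via the contraction principle, which the paper needs later), while yours is more explicitly probabilistic and makes the block structure transparent at the cost of invoking concentration of measure. Your treatment of the ``in particular'' statement, factoring through $\mathrm{id}:\ell_1^n(\ell_\infty^n)\to\ell_1^n(\ell_2^n)$ of norm $\sqrt n$, is exactly the paper's.
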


\begin{proof} According to  Chevet's inequality, we have
\begin{align*}
 \mathbb{E} \|G^*\| \leq  \omega_2((e_i\otimes e_j)_{i,j}; \ell_\infty^n(\ell_2^n))\, \mathbb{E} \|\sum_{i=1}^ng_ie_i\|_{\ell_2^n} + \omega_2((e_i)_i;\ell_2^n)\, \mathbb{E} \|\sum_{i,j=1}^ng_{i,j}e_i\otimes e_j\|_{\ell_\infty^n(\ell_2^n)} .
\end{align*}
Using the simple estimates mentioned in the proof of the previous Lemma, it is enough to see that $\omega_2((e_i\otimes e_j)_{i,j}; \ell_\infty^n(\ell_2^n))\leq 1$  and   $\mathbb{E}\|\sum_{i,j=1}^ng_{i,j}e_i\otimes e_j\|_{\ell_\infty^n(\ell_2^n)}\preceq \sqrt{n}.$ Indeed, for the first one just note that $\|id: \ell_2^n(\ell_2^n)\rightarrow \ell_\infty^n(\ell_2^n)\|\leq 1$. The other inequality
 \[  \mathbb{E}\|\sum_{i,j=1}^ng_{i,j}e_i\otimes e_j\|_{\ell_\infty^n(\ell_2^n)}  =
 \mathbb{E}\|\sum_{i,j=1}^ng_{i,j}e_i\otimes e_j\|_{\ell_\infty^n\otimes_{\epsilon} \ell_2^n}
 \preceq n\]
follows easily from a further application of Chevet's inequality.  For the last assertion just note that $\|id:\ell_1^n(\ell_\infty^n)\rightarrow \ell_1^n(\ell_2^n)\|\leq \sqrt{n}$.
\end{proof}

Now we have all the required ingredients for the proof of  Theorem \ref{complemented}.

\begin{proof}[Proof of Theorem \ref{complemented}]
According to Chebyshev's inequality, we may find a random matrix  $(g(\omega)_{i,j}^k)_{i,j,k=1}$ verifying Lemmas \ref{gaussian}, \ref{first} and \ref{second} simultaneously with a slight modification in the constants for the expectation. Then, we define $W:\ell_2^n\rightarrow \ell_1^n(\ell_\infty^n)$ and $W^*:\ell_1^n(\ell_\infty^n)\rightarrow \ell_2^n$ respectively, as
  \begin{align*}
  W(e_k)& =\frac{1}{n}\sum_{i,j=1}^ng(\omega)_{i,j}^ke_i\otimes e_j    \quad \mbox{and} \quad
    W^*(e_i\otimes e_j) =\frac{1}{n}\sum_{l=1}^ng(\omega)_{i,j}^le_l
 \end{align*}
for $k=1,...,n$ and $1\le i,j\le n$. According to
Lemma \ref{first} and Lemma \ref{second} we know that $\|W\|\preceq \log n$ and $\|W^*\|\preceq 1$. On the other hand, by Lemma \ref{gaussian} applied to $m=n^2$, we obtain a subspace $H_n$ of $\ell_2^n$ and an operator $v_n:H_n\rightarrow \ell_2^n$ with $\|v_n\|\leq 2$ such that $v_n^*(V^*V)v_n=\id_{H_n}$. Thus, defining $V=Wv_n$ and $V^*=v_n^*W^*$, we obtain the assertion.
\end{proof}

In order to work with the operator space minimal tensor norm on $\ell_1^n(\ell_\infty^n)\otimes \ell_1^n(\ell_\infty^n)$, we will have to estimate the cb-norm of $V$ and $V^*$. The following lemma allows us to compute the cb-norm of an operator $T:\ell_1^n(\ell_\infty^n)\rightarrow R\cap C$.

\begin{lemma}\label{Grothendieck}
There exists a universal constant $C_3> 0$, such that for every measure space $(\Omega,\mu)$, for every natural numbers $k\in \mathbb{N}$ and every operator $T:L_1(\Omega,\ell_\infty^k)\rightarrow \ell_2$,
\begin{align*}
\|T:L_1(\Omega,\ell_\infty^k)\rightarrow R\cap C\|_{cb}\leq C_3 \|T\|.
\end{align*}
Furthermore, $C_3=K_L$ is the constant in little Grothendieck theorem.
\end{lemma}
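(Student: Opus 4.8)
The plan is to strip off the $L_1$-factor and reduce everything to a statement about bounded maps $\ell_\infty^k\to\ell_2$, which will then be a direct instance of the positive semidefinite form of the little Grothendieck inequality. First I would record the operator space identities $L_1(\Omega,\ell_\infty^k)=L_1(\Omega)\otimes_\wedge\ell_\infty^k$ for the natural structures (the discrete case $\ell_1^n(\ell_\infty^k)=\ell_1^n\otimes_\wedge\ell_\infty^k$ being in Section~\ref{Mathematical Tools}), together with the fact that $L_1(\Omega)$, being the operator space dual of the minimal $L_\infty(\Omega)$, carries the maximal operator space structure, so that $CB(L_1(\Omega),\cdot)=B(L_1(\Omega),\cdot)$. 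By the universal property of $\otimes_\wedge$ this gives, for any operator space $Z$,
\[
CB\big(L_1(\Omega,\ell_\infty^k),Z\big)=CB\big(L_1(\Omega),CB(\ell_\infty^k,Z)\big)=B\big(L_1(\Omega),CB(\ell_\infty^k,Z)\big),
\]
while on the Banach level $B(L_1(\Omega,\ell_\infty^k),\ell_2)=B(L_1(\Omega),B(\ell_\infty^k,\ell_2))$ since $L_1(\Omega;\ell_\infty^k)=L_1(\Omega)\otimes_\pi\ell_\infty^k$. Thus it is enough to prove that \emph{every bounded $v\colon\ell_\infty^k\to\ell_2$ is completely bounded into $R\cap C$ with $\|v\|_{CB(\ell_\infty^k,R\cap C)}\le K_L\|v\|$}: the given $T$, viewed under the identification above as an $L_1(\Omega)$-family $f\mapsto T_f$ of such maps, then has $\|T\|_{CB(L_1(\Omega,\ell_\infty^k),R\cap C)}=\sup_{\|f\|\le1}\|T_f\|_{CB(\ell_\infty^k,R\cap C)}\le K_L\sup_{\|f\|\le1}\|T_f\|=K_L\|T\|$, which is the lemma with $C_3=K_L$.

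Next I would compute the relevant $cb$-norm. By the intersection formula for $R\cap C$ recalled in Section~\ref{Mathematical Tools}, $\|v\|_{CB(\ell_\infty^k,R\cap C)}=\max\{\|v\|_{CB(\ell_\infty^k,R)},\|v\|_{CB(\ell_\infty^k,C)}\}$, so it suffices to treat $C$ (the row case is identical, or follows by transposition). Writing $v(e_i)=c_i\in\ell_2$, the amplification $v_m(A_1,\dots,A_k)=\sum_iA_i\otimes c_i$ has squared $M_m(C)$-norm $\big\|\sum_{i,j}\langle c_i,c_j\rangle A_i^{*}A_j\big\|_{M_m}$; evaluating this positive operator on a unit vector and letting the $A_i$ run over all contractions (so that $A_i$ applied to that vector runs over all $k$-tuples in the unit ball of a Hilbert space $H$), one gets
\[
\|v\|_{CB(\ell_\infty^k,C)}^{2}=\sup\Big\{\sum_{i,j=1}^{k}\langle c_i,c_j\rangle\,\langle u_i,u_j\rangle:\ u_1,\dots,u_k\in B_H,\ H\ \text{Hilbert}\Big\}.
\]
On the other hand, expanding $\|v(\lambda)\|_2^{2}$ and using that it is a positive Hermitian quadratic form in $\lambda$, hence maximised at an extreme point of the polydisc,
\[
\|v\|^{2}=\|v\colon\ell_\infty^k\to\ell_2\|^{2}=\max_{|\lambda_1|=\dots=|\lambda_k|=1}\ \sum_{i,j=1}^{k}\langle c_i,c_j\rangle\,\lambda_i\overline{\lambda_j}.
\]

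Finally, the Gram matrix $\Gamma=(\langle c_i,c_j\rangle)_{i,j}$ is positive semidefinite, so the two right-hand sides are related by the little Grothendieck inequality in its positive semidefinite (Nesterov) form: for $\Gamma\succeq 0$,
\[
\sup_{u_i\in B_H}\sum_{i,j}\Gamma_{ij}\langle u_i,u_j\rangle\ \le\ K_L^{2}\ \max_{|\lambda_i|=1}\sum_{i,j}\Gamma_{ij}\lambda_i\overline{\lambda_j},
\]
the sharp constant being $\pi/2$ over $\R$ and $4/\pi$ over $\C$, i.e. exactly $K_L^{2}$. Hence $\|v\|_{CB(\ell_\infty^k,C)}\le K_L\|v\|$, and likewise for $R$, so $\|v\|_{CB(\ell_\infty^k,R\cap C)}\le K_L\|v\|$; combined with the first paragraph this proves the lemma with $C_3=K_L$. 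I expect the only mildly delicate point to be the identification of $\|v\|_{CB(\ell_\infty^k,C)}$ with the semidefinite quadratic form above — checking that passing to $M_m$-amplifications and to finite-dimensional $H$ loses nothing — while the substantive estimate, the passage from the semidefinite value to the unimodular value, is precisely the content of the little Grothendieck theorem and is what produces the constant $K_L$.
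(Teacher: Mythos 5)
Your proof is correct, but it reaches the single-map estimate by a different mechanism than the paper. The paper reduces by approximation to $\ell_1^n(\ell_\infty^k)$, uses $\|T\|_{cb}=\sup_i\|T_i\|_{cb}$ for the coordinate slices $T_i:\ell_\infty^k\to\ell_2$, and then handles each $T_i$ via the little Grothendieck theorem in its $2$-summing form followed by a Pietsch-type factorization $T_i=u_iD_{\sigma_i}$ through a diagonal map: one checks by hand that $\|D_\sigma:\ell_\infty^k\to C\|_{cb}\le\|\sigma\|_2$ and uses that bounded maps between column (resp.\ row) Hilbert spaces are automatically completely bounded. You instead compute $\|v\|_{CB(\ell_\infty^k,C)}$ exactly as the Gram-form/SDP quantity $\sup_{u_i\in B_H}\sum_{i,j}\langle c_i,c_j\rangle\langle u_i,u_j\rangle$ (your verification of this identity, via evaluating the positive operator $\sum_{i,j}\langle c_i,c_j\rangle A_i^*A_j$ at unit vectors and realizing arbitrary $u_i$ as $A_i\xi$ with $A_i=u_i\xi^*$, is sound), identify $\|v\|^2$ with the unimodular maximum of the same form, and quote the positive semidefinite (Nesterov $\pi/2$, complex $4/\pi$) form of the little Grothendieck inequality; this SDP quantity is in fact $\pi_2(v)^2$, so your route and the paper's are both little Grothendieck at heart, but yours avoids the explicit diagonal factorization at the price of invoking the psd reformulation as a known result. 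Your reduction over the $L_1$-variable is also different from the paper's approximation argument: it is valid, but note the small imprecision that $L_1(\Omega)$ is the \emph{predual}, not the dual, of $L_\infty(\Omega)$; its natural operator space structure is nevertheless maximal (e.g.\ because $\max(X)$ sits completely isometrically in $\max(X)^{**}=\min(X^*)^*$, or simply by reducing to $\ell_1^n(\ell_\infty^k)$ by density of simple functions, as the paper does), so the identity $CB(L_1(\Omega),W)=B(L_1(\Omega),W)$ you use does hold. Both approaches yield the constant $C_3=K_L$.
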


\begin{proof} By approximation it suffices to prove the assertion  for $\ell_1^n(\ell_\infty^k)$ and arbitrary natural number $n$. Now, given an operator $T:\ell_1^n(\ell_\infty^k)\rightarrow \ell_2$, it is immediate that $\|T\|=\sup_i\|T_i\|$ and $\|T\|_{cb}=\sup_i\|T_i\|_{cb}$ where $T_i$ is the associated operator $T_i:\ell_\infty^k\rightarrow \ell_2$ defined by $T_i(e_j)=T(e_i\otimes e_j)$ for every $j=1,\cdots ,k$ and $i=1,..,n$. Now, according to the little Grothendieck theorem,
the $2$-summing norm of $T_i$ is bounded by $K\|T_i\|$, and hence  $T_i=u_iD_{\si_i}$
factors through a diagonal map $D_{\si_i}(e_j)=\si_i(j)e_j$ (see \cite{Pisierbook4}) with  $\|u_i\|\le 1$ and  $\|\si_i\|_2\le K\|T_i\|$. Let $a_j\in M_k$. Then we have
 \begin{align*}
  \|\sum_j |\si_i(j)|^2 a_j^*a_j\|_{M_k}
  &\le \sum_j |\si_i(j)|^2 \sup_j \|a_j\|^2 \, .
  \end{align*}
Hence we have $\|D_{\si_i}:\ell_{\infty}^n\to C_n\|_{cb}\le \|\si_i\|_2$. On the other hand, it is well known (see \cite{Pisierbook}) that $\|u:\ell_2^n\rightarrow \ell_2^n\|=\|u:C_n\rightarrow C_n\|_{cb}$ for every operator $u$. Thus, $\|T_i:\ell_{\infty}^n\rightarrow C_n\|_{cb}\leq K\|T_i\|$ for every $i=1,..,n$. The estimate for $R_n$ is similar. The result follows by the definition of $R_n\cap C_n$.
\end{proof}

\begin{remark}\label{Grothendieck2}
The same proof works if we replace $R_n\cap C_n$ by  $OH_n$.
\end{remark}

We prove now Theorem \ref{Theorem min-epsilon}:

\begin{proof}[Proof of Theorem \ref{Theorem min-epsilon}]
Let us consider the element $$M=(V\otimes V)(a)\in \ell_1^n(\ell_\infty^n)\otimes \ell_1^n(\ell_\infty^n),$$ where $a=\id_{\ell_2^{\delta n}\otimes \ell_2^{\delta n}}$. It is enough to show:
\begin{enumerate}
\item[a)] $\|M\|_{\ell_1^n(\ell_\infty^n)\otimes_\epsilon \ell_1^n(\ell_\infty^n)}\preceq \log n$ and
\item[b)] $\|M\|_{\ell_1^n(\ell_\infty^n)\otimes_{\min} \ell_1^n(\ell_\infty^n)}\succeq \sqrt{n}$.
\end{enumerate}
Observe that  a) follows from Lemma \ref{second}. Indeed,
\begin{align*}
\|M\|_{\ell_1^n(\ell_\infty^n)\otimes_\epsilon \ell_1^n(\ell_\infty^n)}= \|(V\otimes V)(a) \|_{\ell_1^n(\ell_\infty^n)\otimes_\epsilon \ell_1^n(\ell_\infty^n)}\leq \|V\|^2 \|a\|_{\ell_2^{\delta n}\otimes_\epsilon \ell_2^{\delta n}}\preceq \log n.
\end{align*}
For b) we recall that by definition
\begin{align*}
\|M\|_{\ell_1^n(\ell_\infty^n)\otimes_{\min} \ell_1^n(\ell_\infty^n)}=\sup \{\|(u\otimes v)(M)\|_{B(H)\otimes_{min}B(H)}\},
\end{align*}
where the $\sup$
runs over Hilbert spaces $H$ and completely contractions ${u,\! v\!:\!\ell_1^n(\ell_\infty^n)\!\rightarrow \! B(H)}$. Lemma \ref{first} and Lemma \ref{Grothendieck} tell us that $V^*:\ell_1^n(\ell_\infty^n)\rightarrow R_n\subseteq M_n$ verifies that $\|V\|_{cb}\preceq 1$. This implies
 \[  \|M\|_{\ell_1^n(\ell_\infty^n)\otimes_{\min} \ell_1^n(\ell_\infty^n)}\succeq \|(V^*\otimes V^*)(M)\|_{R_n\otimes_{\min} R_n} \,  . \]
But then
 \begin{align*}
 \|(V^*\otimes V^*)(V\otimes V) (a)\|_{R_n\otimes_{\min} R_n}= \|a\|_{R_{\delta n}\otimes_{\min} R_{\delta n}}=\|a\|_{\ell_2^{\delta n}\otimes_2 \ell_2^{\delta n}}=\sqrt {\delta n}\succeq \sqrt {n}.
\end{align*}
Thus, we conclude that $\|M\|_{\ell_1^n(\ell_\infty^n)\otimes_{\min} \ell_1^n(\ell_\infty^n)}\succeq \sqrt{n}$. \end{proof}

\begin{problem}\label{problem1}
Theorem \ref{Theorem min-epsilon} exactly says that:
\begin{align*}
\|\id\otimes \id:\ell_1^n(\ell_\infty^n)\otimes_\epsilon \ell_1^n(\ell_\infty^n)\rightarrow\ell_1^n(\ell_\infty^n)\otimes_{\min} \ell_1^n(\ell_\infty^n)\|\succeq \frac{\sqrt n}{\log n}.
\end{align*} We don't know whether it's possible to improve this order to $\frac{n}{\log^{\al}n}$. According to Theorem \ref{optimality-SDP rankII} we cannot remove a log term for elements $M$ of rank $n$.
\end{problem}

However, for our specific $M$  we cannot improve the violation estimate.

\begin{prop}\label{optimality}
The element $M=(V\otimes V)(a)\in \ell_1^n(\ell_\infty^n)\otimes \ell_1^n(\ell_\infty^n)$  in Theorem \ref{Theorem min-epsilon} verifies
$$\|M\|_{\ell_1^n(\ell_\infty^n)\otimes_{\min} \ell_1^n(\ell_\infty^n)}\preceq \log n \sqrt{n}.$$
\end{prop}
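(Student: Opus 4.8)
The plan is to exploit the very concrete description $M=(V\otimes V)(a)$ coming from the proof of Theorem \ref{Theorem min-epsilon}, where $a=\id_{H_n}$ with $\dim H_n=\delta n$ and, by Theorem \ref{complemented}, $\|V\colon H_n\to\ell_1^n(\ell_\infty^n)\|\leq C\sqrt{\log n}$. The idea is that an upper bound for the operator space minimal tensor norm of $M$ is obtained by choosing a convenient operator space structure on the Hilbert space $H_n$: for \emph{any} operator space structure $Z$ on $\ell_2^{\delta n}$ one has, by functoriality of $\otimes_{\min}$ (i.e.\ $\|u\otimes v\|_{cb}\leq\|u\|_{cb}\|v\|_{cb}$),
$$\|M\|_{\ell_1^n(\ell_\infty^n)\otimes_{\min}\ell_1^n(\ell_\infty^n)}=\|(V\otimes V)(a)\|_{\min}\leq\|V\colon Z\to\ell_1^n(\ell_\infty^n)\|_{cb}^{\,2}\;\|a\|_{Z\otimes_{\min}Z}.$$

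First I would take $Z=\max(\ell_2^{\delta n})$, the maximal operator space structure on $H_n$. By the universal property of $\max$, every bounded map out of $\max(\ell_2^{\delta n})$ is automatically completely bounded with the same norm, so $\|V\colon\max(\ell_2^{\delta n})\to\ell_1^n(\ell_\infty^n)\|_{cb}=\|V\colon H_n\to\ell_1^n(\ell_\infty^n)\|\leq C\sqrt{\log n}$. It then remains to control $\|a\|_{\max(\ell_2^{\delta n})\otimes_{\min}\max(\ell_2^{\delta n})}$. Using the completely isometric identification $E\otimes_{\min}F=CB(E^{*},F)$ for finite dimensional operator spaces, together with $\max(\ell_2^{m})^{*}=\min(\ell_2^{m})$, the element $a=\id$ corresponds exactly to the identity map, so (up to the harmless conjugation in the self-duality of $\ell_2^{m}$)
$$\|a\|_{\max(\ell_2^{m})\otimes_{\min}\max(\ell_2^{m})}=\|\id\colon\min(\ell_2^{m})\to\max(\ell_2^{m})\|_{cb},$$
and this cb-norm is of order $\sqrt m$: factoring through Pisier's space $OH_m$ one has $\|\id\colon\min(\ell_2^{m})\to OH_m\|_{cb}\preceq m^{1/4}$ and $\|\id\colon OH_m\to\max(\ell_2^{m})\|_{cb}\preceq m^{1/4}$ (see \cite{Pisierbook}, \cite{Pisierbook5}). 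Taking $m=\delta n$ and combining the two displays gives $\|M\|_{\min}\preceq\log n\cdot\sqrt{\delta n}\preceq\sqrt n\,\log n$, which is the assertion.

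The only genuinely nontrivial ingredient is the estimate $\|\id\colon\min(\ell_2^{m})\to\max(\ell_2^{m})\|_{cb}\preceq\sqrt m$ — equivalently $\|\id_m\|_{\max(\ell_2^m)\otimes_{\min}\max(\ell_2^m)}\preceq\sqrt m$; it is here that one has to invoke the finer operator space theory of $OH$ and of the $\min/\max$ quantizations of $\ell_2$, while everything else (the universal property of $\max$, functoriality of $\otimes_{\min}$, and the finite dimensional duality $E\otimes_{\min}F=CB(E^{*},F)$) is soft. One could alternatively run the argument with $Z=OH_{\delta n}$, where the clean value $\|\id_{\delta n}\|_{OH_{\delta n}\otimes_{\min}OH_{\delta n}}=1$ is available; but then one pays by having to estimate $\|V\colon OH_{\delta n}\to\ell_1^n(\ell_\infty^n)\|_{cb}$, which requires a completely bounded version of Chevet's inequality (or an application of Lemma \ref{Grothendieck}/Remark \ref{Grothendieck2} to $G^{*}$), so the route through $\max$ seems the most economical.
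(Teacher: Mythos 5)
Your reduction $\|M\|_{\min}\le\|V:Z\to\ell_1^n(\ell_\infty^n)\|_{cb}^2\,\|a\|_{Z\otimes_{\min}Z}$ and the identification $\|a\|_{\max(\ell_2^m)\otimes_{\min}\max(\ell_2^m)}=\|\id:\min(\ell_2^m)\to\max(\ell_2^m)\|_{cb}$ are fine, but the quantitative input you feed into them is false, and this is fatal. Neither of the two estimates you invoke holds: $\|\id:\min(\ell_2^m)\to OH_m\|_{cb}$ is of order $\sqrt m$, not $m^{1/4}$ (the exponent $1/4$ is correct for $R_m$ or $C_m$ versus $OH_m$, which is presumably the source of the confusion), and consequently $\|\id:\min(\ell_2^m)\to\max(\ell_2^m)\|_{cb}$ is of order $m$, not $\sqrt m$. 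A concrete witness: take $k=m$ and let $u_1,\dots,u_m\in M_m$ be independent Haar unitaries (or normalized Gaussian matrices); by a Chevet/Marcus--Pisier type estimate --- the same probabilistic input used throughout the paper --- one has, with high probability, $\sup_{\|\alpha\|_2\le1}\|\sum_i\alpha_iu_i\|\le C\sqrt m$, so that $x_i=u_i/(C\sqrt m)$ and $y_i=\overline{u_i}/(C\sqrt m)$ define complete contractions on $\max(\ell_2^m)$, while testing against the maximally entangled vector gives $\|\sum_i x_i\otimes y_i\|\ge\langle\psi_m|\sum_i x_i\otimes y_i|\psi_m\rangle=\frac{1}{C^2m}\sum_i\frac1m\mathrm{tr}(u_iu_i^*)=\frac{m}{C^2}$. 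Hence $\|\id_m\|_{\max\otimes_{\min}\max}\succeq m$ (and, since this norm is at most $\|\id:\min(\ell_2^m)\to OH_m\|_{cb}^2$, also $\|\id:\min(\ell_2^m)\to OH_m\|_{cb}\succeq\sqrt m$). With the correct values your route through $Z=\max(\ell_2^{\delta n})$ can only yield $\|M\|_{\min}\preceq\|V\|^2\cdot\delta n\preceq n\log n$: any argument that uses only the Banach norm of $V$ is structurally incapable of reaching $\sqrt n\log n$, precisely because complete contractions out of $\max$ are just contractions.

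The route you dismiss as uneconomical is in fact the paper's proof, and it does not require any completely bounded version of Chevet's inequality. The paper takes $Z=OH_{\delta n}$, for which $\|a\|_{OH_{\delta n}\otimes_{\min}OH_{\delta n}}=\|a\|_{\ell_2^{\delta n}\otimes_\epsilon\ell_2^{\delta n}}=1$, and controls $\|V:OH_{\delta n}\to\ell_1^n(\ell_\infty^n)\|_{cb}$ softly: Lemma \ref{lemma-optimality} b) (proved from the interpolation estimate \eqref{intp1} and the identity $\|T\overline{T^*}\|_{cb}=\|T\|_{cb}^2$ for maps defined on $OH$) converts the ordinary norm bound of Lemma \ref{first}, namely $\|V:\ell_2^{\delta n}\to\ell_2^n(\ell_\infty^n)\|\preceq\sqrt{\log n}/\sqrt n$, into $\|V:OH_{\delta n}\to\ell_2^n(\ell_\infty^n)\|_{cb}\preceq n^{1/4}\sqrt{\log n}/\sqrt n$, and then composing with $\|\id:\ell_2^n(\ell_\infty^n)\to\ell_1^n(\ell_\infty^n)\|_{cb}\le\sqrt n$ gives $\|V:OH_{\delta n}\to\ell_1^n(\ell_\infty^n)\|_{cb}\preceq n^{1/4}\sqrt{\log n}$, whence $\|M\|_{\min}\preceq\sqrt n\log n$. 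So the cb-estimate on $V$ (over $OH$, or over a similarly small Hilbertian structure) is exactly the price one must pay; it cannot be bypassed by passing to the maximal quantization.
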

To prove Proposition \ref{optimality} we need the following technical lemma:

\pagebreak[1]

\begin{samepage}
\begin{lemma}\label{lemma-optimality}${\atop}$

{\rm a)} Let $X$ be an operator space and $T:X\rightarrow \ell_2^n(\ell_\infty^n)$ a linear map. Then $$\|T\|_{cb}\leq \sqrt{n}\, \|T\|.$$

{\rm b)} Let  $T:OH\rightarrow \ell_2^n(\ell_\infty^n)$ be a linear map. Then  $\|T\|_{cb}\leq n^\frac{1}{4}\|T\|$.
\end{lemma}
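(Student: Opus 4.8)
The plan is to prove both inequalities by interpolation, using the known completely bounded norms of the identity maps between $\ell_p^n(\ell_\infty^n)$ spaces recorded in equation (\ref{intp1}) and the description $OH=(R,C)_{1/2}$ (equivalently $\ell_2^n(\ell_\infty^n)$ arising as an interpolation scale).

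For part a), I would argue that $\ell_2^n(\ell_\infty^n)$ sits in the interpolation scale between $\ell_\infty^n(\ell_\infty^n)$ and $\ell_1^n(\ell_\infty^n)$ at $\theta=\tfrac12$, i.e. $\ell_2^n(\ell_\infty^n)=(\ell_\infty^n(\ell_\infty^n),\ell_1^n(\ell_\infty^n))_{1/2}$ completely isometrically. Now take any operator $T:X\to\ell_2^n(\ell_\infty^n)$ with $\|T\|\le1$. On one hand, composing with $\id:\ell_2^n(\ell_\infty^n)\to\ell_1^n(\ell_\infty^n)$, which has $\|\id\|\le\sqrt n$ (the non-cb statement is elementary, and in fact (\ref{intp1}) gives the cb-bound $\sqrt n$), we get an operator landing in $\ell_1^n(\ell_\infty^n)$; on the other hand, $\id:\ell_2^n(\ell_\infty^n)\to\ell_\infty^n(\ell_\infty^n)$ has norm $\le\sqrt n$ as well. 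The cleanest route: since $\ell_\infty^n(\ell_\infty^n)=\ell_\infty^{n^2}$ is a commutative $C^*$-algebra, any bounded map $S:X\to\ell_\infty^{n^2}$ is automatically completely bounded with $\|S\|_{cb}=\|S\|$ (maps \emph{into} a commutative $C^*$-algebra are automatically cb, see \cite{Pisierbook}). Writing $T$, viewed as a map into $\ell_\infty^n(\ell_\infty^n)$, as $\iota\circ T$ where $\iota=\id:\ell_2^n(\ell_\infty^n)\to\ell_\infty^n(\ell_\infty^n)$, we have $\|\iota T:X\to\ell_\infty^n(\ell_\infty^n)\|_{cb}=\|\iota T\|\le\|\iota:\ell_2^n(\ell_\infty^n)\to\ell_\infty^n(\ell_\infty^n)\|\,\|T\|\le\sqrt n$. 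But $\iota$ is just the formal identity, so this says nothing directly about $\|T\|_{cb}$ into $\ell_2^n(\ell_\infty^n)$. So instead I would use the interpolation identity directly: $M_m(\ell_2^n(\ell_\infty^n))=(M_m(\ell_\infty^n(\ell_\infty^n)),M_m(\ell_1^n(\ell_\infty^n)))_{1/2}$, apply the three-lines lemma to the analytic family $z\mapsto T_m$ viewed with values in these spaces, bounding $\|T_m\|$ by $\|T\|$ on the $\ell_\infty$-side (trivially, as a map into $\ell_\infty^{mn^2}$) and by $\sqrt n\|T\|$ on the $\ell_1$-side via $\|\id:\ell_\infty^n(\ell_\infty^n)\to\ell_1^n(\ell_\infty^n)\|\le n$ is wrong — rather I bound the $\ell_1$-side using $\|\id_{cb}:\ell_2^n(\ell_\infty^n)\to\ell_1^n(\ell_\infty^n)\|=\sqrt n$ from (\ref{intp1}), giving $\|T_m:M_m(X)\to M_m(\ell_1^n(\ell_\infty^n))\|\le\sqrt n\|T\|$. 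Interpolating at $\theta=\tfrac12$ yields $\|T_m\|\le(\|T\|)^{1/2}(\sqrt n\|T\|)^{1/2}=n^{1/4}\sqrt{\|T\|}\cdot\ldots$; this gives $n^{1/4}$, not $\sqrt n$. Hence the correct pair of endpoints must be $\ell_2^n(\ell_\infty^n)$ between $\ell_1^n(\ell_\infty^n)$ and itself with a twist — I would instead simply observe $\|\id:\ell_\infty^n(\ell_\infty^n)\to\ell_2^n(\ell_\infty^n)\|_{cb}=1$ (contraction, since $\ell_\infty\hookrightarrow\ell_2$ coordinatewise shrinks), so write $T=\id\circ\tilde T$ where $\tilde T:X\to\ell_\infty^n(\ell_\infty^n)$ is the same map with $\|\tilde T\|\le\sqrt n\|T\|$ (because $\|\id:\ell_2^n\to\ell_\infty^n\|=1$ so actually $\|\tilde T\|=\|T\|$) — this still fails. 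The honest statement is: $\|\id:\ell_2^n(\ell_\infty^n)\to\ell_\infty^n(\ell_\infty^n)\|=1$ and $\|\id:\ell_\infty^n(\ell_\infty^n)\to\ell_2^n(\ell_\infty^n)\|=\sqrt n$; combining with the automatic cb-ness of maps into $\ell_\infty^{n^2}$: given $T:X\to\ell_2^n(\ell_\infty^n)$ of norm $1$, the composite $\id\circ T:X\to\ell_\infty^n(\ell_\infty^n)$ has cb-norm equal to its norm, which is $\le1$; then $T=(\id:\ell_\infty^n(\ell_\infty^n)\to\ell_2^n(\ell_\infty^n))\circ(\id\circ T)$, so $\|T\|_{cb}\le\|\id:\ell_\infty^n(\ell_\infty^n)\to\ell_2^n(\ell_\infty^n)\|_{cb}\cdot1\le\sqrt n$, using $\|\id:\ell_\infty^n(\ell_\infty^n)\to\ell_2^n(\ell_\infty^n)\|_{cb}=\sqrt n$ from (\ref{intp1}). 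That is the argument for a).

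For part b), I would run the same factorization but replace the first step with the sharper cb-estimate available for maps out of $OH$. Given $T:OH\to\ell_2^n(\ell_\infty^n)$ of norm $1$, consider $\id\circ T:OH\to\ell_\infty^n(\ell_\infty^n)=\ell_\infty^{n^2}$; since this is a map into a commutative $C^*$-algebra it is completely bounded with cb-norm $\le1$. Wait — the improvement to $n^{1/4}$ must come from $OH$ being self-dual and the interpolation $OH=(R,C)_{1/2}$: the point is that an operator $OH\to\ell_\infty^{n^2}$ which is bounded has $\|\cdot\|_{cb}\le\|\cdot\|$ automatically (so cb-norm $\le1$), and then one uses $\|\id:\ell_\infty^n(\ell_\infty^n)\to\ell_2^n(\ell_\infty^n)\|_{cb}=\sqrt n$ — giving again $\sqrt n$. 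To get $n^{1/4}$ I would instead interpolate: view $T:OH\to\ell_2^n(\ell_\infty^n)$ and factor through the scale. Since $OH=(R,C)_{1/2}$, it suffices to bound $\|T:R\to\ell_2^n(\ell_\infty^n)\|_{cb}$ and $\|T:C\to\ell_2^n(\ell_\infty^n)\|_{cb}$ by $\sqrt n\|T\|$ each (by part a), as $R,C$ are operator spaces), and then interpolate at $\theta=\tfrac12$ — but the target is fixed, so interpolation on the \emph{domain} gives $\|T:OH\to\ell_2^n(\ell_\infty^n)\|_{cb}\le(\sqrt n)^{1/2}(\sqrt n)^{1/2}\|T\|=\sqrt n\|T\|$, no gain. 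The genuine gain requires interpolating the \emph{target}: write $\ell_2^n(\ell_\infty^n)$ with its $OH$-like structure. I would instead appeal to the fact (from Remark \ref{Grothendieck2}, $OH_n$ version of Lemma \ref{Grothendieck}, or directly from \cite[Ch. 7]{Pisierbook}) that $\|\id:\ell_\infty^n(\ell_\infty^n)\to OH_{n}(\ell_\infty^n)\|_{cb}=n^{1/4}$ and $\|\id:OH_n(\ell_\infty^n)\to\ell_2^n(\ell_\infty^n)\|_{cb}\le1$ (or similar), so that a norm-one $T:OH\to\ell_2^n(\ell_\infty^n)$, composed with $\id$ into $\ell_\infty^{n^2}$ (cb-norm $\le1$ by commutativity of the target), then pushed by $\id:\ell_\infty^n(\ell_\infty^n)\to\ell_2^n(\ell_\infty^n)$ — and now I use that this last map, as a map \emph{from} $\ell_\infty^n(\ell_\infty^n)$ restricted/refactored through $OH$, picks up only $n^{1/4}$ because $OH$ interpolates halfway: concretely $\|\id:\ell_\infty^n(\ell_\infty^n)\to\ell_2^n(\ell_\infty^n)\|_{cb}$ factors as (bounded map into $\ell_2^{n^2}$ with operator space structure between $OH$ and column) times an $n^{1/4}$ interpolation factor when the source is $OH$. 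I expect the clean write-up to be: since $OH=(R,C)_{1/2}$ and $\ell_2^n(\ell_\infty^n)=(\ell_\infty^n(\ell_\infty^n),\ell_1^n(\ell_\infty^n))_{1/2}$ with the corresponding dual-pairing interpolation making $\ell_2^n$ behave like $OH$, one interpolates the bilinear estimate and the two $\sqrt n$ endpoint bounds combine multiplicatively with the halving to $n^{1/4}$.

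The main obstacle I anticipate is getting the interpolation bookkeeping exactly right — specifically, identifying the correct compatible couple so that $\ell_2^n(\ell_\infty^n)$ appears at the midpoint with an operator space structure for which the endpoint maps have the stated $\sqrt n$ (resp. $1$) cb-norms, and then confirming that the resulting midpoint cb-norm is $\sqrt n$ for a) and $n^{1/4}$ for b) rather than some other power. In practice I would lean on the identities already quoted in (\ref{intp1}) for a), and for b) on $OH=(R,C)_{1/2}$ together with the standard computation (as in Remark \ref{Grothendieck2}) that replacing $R\cap C$ by $OH$ in Lemma \ref{Grothendieck}-type statements costs a factor $n^{1/4}$; the rest is a routine application of the three-lines lemma at the level of the amplified maps $T_m$ on $M_m(X)$.
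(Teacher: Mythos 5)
Your part a) is fine: after the detours, the argument you settle on is exactly the paper's, namely factor $T$ as $(\id:\ell_\infty^n(\ell_\infty^n)\to\ell_2^n(\ell_\infty^n))\circ(\id\circ T)$, use that a bounded map into the commutative $C^*$-algebra $\ell_\infty^n(\ell_\infty^n)=\ell_\infty^{n^2}$ is automatically completely bounded with the same norm, and then invoke $\|\id:\ell_\infty^n(\ell_\infty^n)\to\ell_2^n(\ell_\infty^n)\|_{cb}=\sqrt n$ from \eqref{intp1}.

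For part b) there is a genuine gap: you never produce an argument that actually yields $n^{1/4}$. As you note yourself, interpolating on the domain via $OH=(R,C)_{1/2}$ only reproduces $\sqrt n$, and interpolating on the target $\ell_2^n(\ell_\infty^n)=(\ell_\infty^n(\ell_\infty^n),\ell_1^n(\ell_\infty^n))_{1/2}$ with the available endpoint cb-bounds (constant $1$ into $\ell_\infty^{n^2}$, and at best $n$ into $\ell_1^n(\ell_\infty^n)$, obtained by composing part a) with $\|\id:\ell_2^n(\ell_\infty^n)\to\ell_1^n(\ell_\infty^n)\|_{cb}=\sqrt n$) again gives only $(1\cdot n)^{1/2}=\sqrt n$ at the midpoint; the intermediate claims about $OH_n(\ell_\infty^n)$-type structures are left as expectations, not proofs. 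The missing idea is the self-duality trick for $OH$: by \cite[Proposition 7.2]{Pisierbook}, for a map $T$ defined on $OH$ the composition $T\overline{T^*}:\overline{\ell_2^n(\ell_1^n)}\to\overline{OH^*}\simeq OH\to\ell_2^n(\ell_\infty^n)$ satisfies $\|T\overline{T^*}\|_{cb}=\|T\|_{cb}^2$. Since $T\overline{T^*}$ again takes values in $\ell_2^n(\ell_\infty^n)$, part a) applies to it and gives $\|T\|_{cb}^2=\|T\overline{T^*}\|_{cb}\le\sqrt n\,\|T\overline{T^*}\|\le\sqrt n\,\|T\|^2$, hence $\|T\|_{cb}\le n^{1/4}\|T\|$. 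It is this quadratic bootstrapping through $\overline{T^*}$, not an interpolation of the target, that converts the $\sqrt n$ of part a) into $n^{1/4}$; without it (or an equally precise bilinear interpolation argument, which you do not supply) the claimed bound in b) is not established.
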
\end{samepage}

\begin{proof} For the proof  of a), we consider the factorization
\begin{align*}
\id\circ \id\circ T:X\rightarrow \ell_2^n(\ell_\infty^n)\stackrel{id}{\rightarrow} \ell_\infty^n(\ell_\infty^n) \stackrel{id}{\rightarrow}\ell_2^n(\ell_\infty^n).
\end{align*}
Obviously,
\begin{align*}
 \|T\|_{cb}&\leq \|\id\circ T:X\rightarrow \ell_\infty^n(\ell_\infty^n)\|_{cb}\|id:\ell_\infty^n(\ell_\infty^n) \rightarrow\ell_2^n(\ell_\infty^n)\|_{cb} \\
 &\le \|\id\circ T:X\rightarrow \ell_\infty^n(\ell_\infty^n)\| \, \sqrt{n} \, \le \,
 \|T:X\rightarrow \ell_2^n(\ell_\infty^n)\| \, \sqrt{n}  \, .
 \end{align*}
Here we used  that $\ell_\infty^n(\ell_\infty^n)=\ell_{\infty}^{n^2}$ is a commutative $C^*$-algebra and estimate \eqref{intp1} from Section \ref{Mathematical Tools}.  For the proof of
b), we consider a map  $T:OH\rightarrow \ell_2^n(\ell_\infty^n)$. Then, the map
\begin{align*}
T\overline{T^*}:\overline{\ell_2^n(\ell_1^n)}\rightarrow \overline{OH^*}\simeq OH\rightarrow \ell_2^n(\ell_\infty^n)
\end{align*}
verifies  $$\|T\overline{T^*}\|_{cb}=\|T\|_{cb}^2$$
according to \cite[Proposition 7.2]{Pisierbook}. Applying a), we deduce
 \begin{align*}
 \|T\|_{cb}^2=\|T\overline{T^*}\|_{cb}\leq \sqrt{n}\|T\overline{T^*}\|=\sqrt{n}\|T\|^2.
 \end{align*}
Hence $\|T\|_{cb}\leq n^\frac{1}{4}\|T\|$.\end{proof}

Now, we can show the optimality of our result when we consider the element $M$.

\begin{proof}[Proof of Proposition \ref{optimality}]
According to Lemma \ref{first} and Lemma \ref{lemma-optimality},
\begin{align*}
 \|V:OH_n\rightarrow \ell_1^n(\ell_\infty^n)\|_{cb}& \leq\|V:OH_n\rightarrow \ell_2^n(\ell_\infty^n)\|_{cb}\|\id:\ell_2^n(\ell_\infty^n)\rightarrow \ell_1^n(\ell_\infty^n)\|_{cb}\\
 &\preceq n^{\frac14} \frac{\sqrt{\log n}}{\sqrt{n}}
  \, n^{\frac{1}{2}}=n^\frac{1}{4}\sqrt{\log n}.
\end{align*}
Therefore, we obtain
\begin{align*}
 \|M\|_{\ell_1^n(\ell_\infty^n)\otimes_{\min} \ell_1^n(\ell_\infty^n)}& =\|(V\otimes V)(a)\|_{\ell_1^n(\ell_\infty^n)\otimes_{\min} \ell_1^n(\ell_\infty^n)}\preceq \sqrt{n}\log n\|a\|_{OH_{\delta n}\otimes_{\min}OH_{\delta n}}\\
 & =\sqrt{n}\log n \|a\|_{\ell_2^{\delta n}\otimes_{\epsilon}\ell_2^{\delta n}}\preceq \sqrt{n}\log n. \qedhere
\end{align*}
\end{proof}




\section{Explicit form of the violation}\label{The construction}

In this section we will prove Theorem \ref{Theorem-construction}. It turns out that very little knowledge on a given state is required to  have violation and that, in a certain sense, entanglement and violation are rather independent.

\subsection{Constructing violation}\label{Constructing violations}

In our approach violation of Bell inequalities is obtained by constructing Bell inequalities and the corresponding positive operator valued measurements simultaneously. The explicit form is derived from an explicit form of Wittstock/Paulsen extension theorem. Another key ingredient is the factorization structure of the coefficients. Before proving the result, consider the following two remarks.

\begin{remark}(\emph{Concerning the universal constants in Theorem \ref{Theorem-construction}})
Although we are not going to give an explicit value of the constants $C$ and $K$,  we would like to point out that for all the constant we are going to use there are explicit bounds in the literature.
\end{remark}

\begin{remark}\label{Different Random variables}(\emph{Random variables})
Although we have proved Theorem \ref{Theorem 1} (via Theorem \ref{Theorem min-epsilon}) using gaussian variables, it is well known that the same estimations work for Bernoulli variables (\cite{Tomczak})  and random unitaries (\cite{MarcusPisier}, \cite{MariusHabi}) (in this last case one has to normalize by a factor $\sqrt{n}$). Thus Theorem \ref{Theorem-construction} can be stated using Bernoulli variables $(\epsilon_{x,a}^k)_{x,a,k}$ (as it is stated), gaussian variables $(g_{x,a}^k)_{x,a,k}$ and random unitaries $(U_k(x,a))_{x,a,k}$, where in this last case the probability is defined by the Haar measure on $\prod_{i=1}^k\mathbb{U}_n$. It turns out that Bernoulli lead to a slight simplification in the proof of \eqref{violation-constructive2}.
\end{remark}

We will divide the proof of Theorem \ref{Theorem-construction} into three steps.

\begin{step}\label{step}
$\sup\left\{|\sum_{x,y;a,b=1}^{n,n+1}\tilde{M}_{x,y}^{a,b}P(a,b|x,y)|:
P\in \mathcal L\right\}\leq K_1^2\log n.$
\end{step}

\begin{proof} Indeed, in \cite{JPPVW}, Proposition 4), we gave an elementary proof of
\begin{align*}
\sup\left\{|\sum_{x,y;a,b=1}^{n,n+1}\tilde{M}_{x,y}^{a,b}P(a,b|x,y)|: P\in \mathcal L\right\}\leq \|\tilde{M}\|_{\ell_1^n(\ell_\infty^{n+1})\otimes_\epsilon \ell_1^n(\ell_\infty^{n+1})}.
\end{align*}
Moreover,
\begin{align*}
\|\tilde{M}\|_{\ell_1^n(\ell_\infty^{n+1})\otimes_\epsilon \ell_1^n(\ell_\infty^{n+1})}=\|M\|_{\ell_1^n(\ell_\infty^n)\otimes_\epsilon \ell_1^n(\ell_\infty^n)}\leq K_1^2\log n.
\end{align*}
follows immediately from the injectivity of the $\eps$-norm for the first equality and  from  Lemma \ref{first}. Indeed, we refer to \cite{PiVar} for the simple fact that
 \[ \mathbb{E}  \|\sum_{x,a,k} \eps_{x,a}^k e_k\otimes (e_{x}\otimes e_a)\|_{\ell_2^n\otimes_{\eps} \ell_1^n(\ell_{\infty}^n)}
 \le \sqrt{\frac{\pi}{2}}\mathbb{E}
 \|\sum_{x,a,k} g_{x,a}^k e_k\otimes (e_{x}\otimes e_a)\|_{\ell_2^n\otimes_{\eps} \ell_2^n(\ell_{\infty}^n)} \, .\]
Let  $\mathcal{E}:\ell_2^n\to \ell_1^n(\ell_{\infty}^n)$ denotes
the  map $\mathcal{E}(e_k)=\eps_{x,a}^k e_{x}\otimes e_a$. Then
our estimate of order $K_1^2\log n $ follows from
$\|\mathcal{E}^*\mathcal{E}\|\le \|  \mathcal{E}\|^2\le \pi/2
\|G\|^2$.
\end{proof}

\begin{step}\label{step2} Moreover, the operators $\{E_x^a\}_{a,x=1}^{n+1,n}$ in the statement of Theorem \ref{Theorem-construction} define POVM's measurements in $M_{n+1}$.
\end{step}

\begin{proof} By the comments in Step 1, we know that the estimate of Lemma \ref{second} also works for Bernoulli variables replacing the constant $C_2$ with $\sqrt{\frac{\pi}{2}}C_2$. Furthermore, according to Chebyshev's inequality, we may find a random matrix  $(\epsilon_{x,a}^k)_{x,a,k=1}^n$ verifying Lemmas \ref{first} and \ref{second} simultaneously (replacing the constants $c$ for the expectation by $2c$). Let's call $K_1$ and $K_2$ the corresponding final constants.

Thanks to our definition of  $E_x^{n+1}$, it certainly suffices to show  $\sum_{a=1}^{n}E_x^a\leq 1$.
Our proof is motivated by Paulsen description of cb-maps from $\ell_{\infty}^n\to M_k$. Let us fix $x$ and consider the operators
\begin{align*}\tilde{E}_x^a = \frac{1}{nK_2}\sum_{k=1}^n\epsilon_{x,a}^ke_{1,k}=\frac{1}{nK_2}\left( \begin{array}{cccccc}
  \epsilon_{x,a}^1 & \epsilon_{x,a}^2 &\cdots & \epsilon_{x,a}^n \\
  0 & 0 &\cdots & 0 \\
    \vdots & \vdots & \vdots    \\
  0 & 0 &\cdots & 0 \\
\end{array}\right)
\end{align*} for every $a=1,\cdots, n$. We define  operators $\alpha_a^x=\frac{1}{\sqrt{n}}e_{1,1}$ and  $\beta_a^x=\frac{1}{\sqrt{n}}\sum_{k=1}^n\epsilon_{x,a}^ke_{1,k}$ and note that
 \[ \tilde{E}_x^a=\alpha_a^x\beta_a^x  \]
holds for every $a=1,\cdots, n$. Then we observe that
 \[ \sum_{a=1}^n \alpha_a^x(\alpha_a^x)^{\dag}  \,=\, e_{11}\le 1\, .\]
Let $R=(\frac{\eps_{x,a}^k}{\sqrt{n}})_{a,k=1}^n$. According to Lemma \ref{second}, we have $\|R\|\le K_2$ and hence
 \[
  \sum_{a=1}^n  (\beta_a^x)^{\dag} \beta_a^x
 \,=\, R^*R\le K_2^2. \, \]
Then, we define the positive operators
\begin{align*}\hat{E}_x^a&=\left( \begin{array}{ccccccc}
  \alpha_a^x & 0 \\
  (\beta_{a}^x)^{\dag} &  0\\
\end{array}\right)\left( \begin{array}{ccccccc}
  (\alpha_a^x)^{\dag} & \beta_{a}^x \\
  0 &  0\\
\end{array}\right) =\left( \begin{array}{ccccccc}
  \alpha_a^x(\alpha_a^x)^{\dag} & \alpha_a^x\beta_{a}^x \\
  (\alpha_a^x\beta_{a}^x)^{\dag} &  (\beta_{a}^x)^{\dag}\beta_{a}^x\\
\end{array}\right)\\
&=\frac{1}{n}\left( \begin{array}{ccccccc}
  1 &  0 & \cdots & 0 & \epsilon_{x,a}^1 & \cdots & \epsilon_{x,a}^n \\
  0 &  0 & \cdots & 0  & 0 & \cdots & 0\\
  \vdots & \vdots & \vdots  & \vdots & \vdots & \vdots  & \vdots \\
  0 &  0 &\cdots & 0  & 0 & \cdots & 0\\
  \epsilon_{x,a}^1&  0 & \cdots & 0 & 1  & \cdots & \epsilon_{x,a}^1\epsilon_{x,a}^n\\
  \vdots & \vdots & \vdots  & \vdots & \vdots & \vdots  & \vdots \\
  \epsilon_{x,a}^n&  0 & \cdots & 0 & \epsilon_{x,a}^n\epsilon_{x,a}^1  & \cdots & 1\\
\end{array}\right).
\end{align*}
Note that these operator are in $M_{2n}$ for every $a=1,\cdots ,n$. However, we may erase columns and rows and obtain the positive operators $E_x^a$ in $M_{n+1}$ stated in our assertion using the constant $K=2K_2^2$. Thus it remains to show $\sum_{a=1}^n \hat{E}_x^a\le 2$.  We recall that for a positive matrix
 \[ A\, = \, \left(\begin{array}{cc} a&b\\
  b^*&c \end{array}\right) \]
we have $b^*b+bb^*\le a+c$. This implies $A\, \le\,  2
\left(\begin{array}{cc} a& 0 \\ 0 & c \end{array}\right)$, and
hence
 \begin{align*}
 \sum_{a=1}^n \hat{E}_x^a &\le 2
 \left(\begin{array}{cc} \sum_a \al_a^x(\al_a^x)^{\dag} & 0 \\
  0 &  \sum_a (\beta_a^x)^{\dag}\beta_a^x \end{array} \right)
  \, \le \, 2 \, . \qedhere
  \end{align*}
\end{proof}


\begin{step} \label{step3} $\sum_{x,y;a,b=1}^{n,n+1}\tilde{M}_{x,y}^{a,b}\langle
\varphi_\alpha|E_x^a\otimes E_y^b |\varphi_\alpha\rangle\geq
\frac{2}{K^2} \al_1(\sum_{i=2}^{n+1} \al_i)$.
\end{step}

\begin{proof} Let us denote by $E_x^a=(E_x^a(k,l))_{k,l=1}^{n+1}$
the matrix coefficients of our operators from step 2. Using the
fact that the $E_x^a$ are selfadjoint, we deduce
\begin{align*}
 &\sum_{x,y;a,b=1}^{n,n+1}\tilde{M}_{x,y}^{a,b}\langle \varphi|E_x^a\otimes E_y^b
 |\varphi\rangle\\
 &=\al_1^2 \sum_{x,y,a,b=1}^nM_{x,y}^{a,b}E_x^a(1,1)E_y^b(1,1)
 + 2\al_1 \sum_{i=2}^{n+1} \al_i
 \sum_{x,y,a,b=1}^nM_{x,y}^{a,b}E_x^a(1,i)E_y^b(1,i)\\
 &\quad +\sum_{i,j=2}^{n+1}\al_i\al_j \sum_{x,y,a,b=1}^nM_{x,y}^{a,b}
 E_x^a(i,j)E_y^b(i,j)  \\
 &={\rm I} + {\rm II} + {\rm III} \, .
\end{align*}
Let us start with the main term
\begin{align*}
 {\rm II} &= 2\al_1 \sum_{i=2}^{n+1} \al_i
 \sum_{x,y,a,b=1}^nM_{x,y}^{a,b}E_x^a(1,i)E_y^b(1,i) \,=\,
  \frac{2}{K^2n^2} \al_1 \sum_{i=2}^{n+1} \al_i
 \sum_{x,y,a,b=1}^nM_{x,y}^{a,b}\eps_{x,a}^{i-1}\eps_{y,b}^{i-1} \\
 &= \frac{2}{K^2n^4} \al_1 \sum_{i=2}^{n+1} \al_i  \sum_{x,y,a,b=1}^n
 \sum_{k=1}^n
 \eps_{x,a}^k\eps_{y,b}^k \eps_{x,a}^{i-1}\eps_{y,b}^{i-1} \\
 &=
  \frac{2}{K^2n^4} \al_1 \sum_{k=1}^n \sum_{i=2}^{n+1} \al_i
 \sum_{x,y,a,b=1}^n \eps_{x,a}^k\eps_{y,b}^k
 \eps_{x,a}^{i-1}\eps_{y,b}^{i-1}\\
 &=  \frac{2}{K^2n^4} \al_1 \sum_{k=1}^n \sum_{i=2}^{n+1} \al_i
 (\sum_{x,a}\eps_{x,a}^k\eps_{x,a}^{i-1})^2 \\
  &\ge  \frac{2}{K^2n^4} \al_1 \sum_{i=2}^{n+1} \al_i
 (\sum_{x,a}\eps_{x,a}^k\eps_{x,a}^k)^2 \,=\,
 \frac{2}{K^2} \al_1 \sum_{i=1}^{n+1}\al_i \, .
 \end{align*}
To conclude the proof it remains to show that the other two terms
are positive. Indeed for the first term we have
 \begin{align*}
 {\rm II} &= \alpha_1^2\sum_{x,y,a,b=1}^nM_{x,y}^{a,b}E_x^a(1,1)E_y^b(1,1)
 \, =\, \frac{\alpha_1^2}{K^2n^4}\sum_{x,y,a,b,k=1}^n\epsilon_{x,a}^k\epsilon_{y,b}^k\\
 &= \frac{\alpha_1^2}{K^2n^4}\sum_{k=1}^n (\sum_{x,a} \epsilon_{x,a}^k)^2 \, \ge \, 0 \, .
 \end{align*}
For the third and last term we argue similarly,
\begin{align*}
  \,{\rm III}=\,   \sum_{i,j=2}^{n+1} \al_i\al_j
\sum_{x,y,a,b=1}^nM_{x,y}^{a,b}E_x^a(i,j)E_y^b(i,j) \,=\,
 \frac{1}{K^2n^2} \sum_{i,j=2}^{n+1} \al_i\al_j \sum_{x,y,a,b=1}^n M_{x,y}^{a,b}
 \eps_{x,a}^{i-1}k\eps_{y,b}^{j-1}\\
 = \frac{1}{K^2n^3} \sum_{k=1}^n \sum_{i,j=2}^{n+1} \al_i\al_j \!\!\!
 \sum_{x,y,a,b=1}^n\!\!\!
 \eps_{x,a}^k\eps_{y,b}^k \eps_{x,a}^{i-1}\eps_{y,b}^{j-1}
 =
 \frac{1}{K^2n^3} \sum_{k=1}^n \sum_{i,j=2}^{n+1} \al_i\al_j (\sum_{x,a}^n
 \eps_{x,a}^k\eps_{x,a}^{i-1})^2  \ge  0. \,
 \end{align*}

\end{proof}

\begin{remark}
Note that the lower estimate from Step 3 holds for every choice of signs. The exponential estimate for the $\eps$-norm of the Bell inequality in Step 1 and for the POVMs measurements in Step 2 for the Bernouilli variables  follows from the contraction principle and the corresponding exponential estimate for gaussian, an immediate consequence of the deviations inequalities (see \cite{PiVar} and \cite{Kw-Woj}).
\end{remark}


\subsection{Entanglement and quantum nonlocality}\label{Entanglement}

We will now exhibit a family of states $(|\varphi_{\al}\rangle)_{0<\al<1}$ on $\ell_2^{n+1}\otimes
\ell_2^{n+1}$ and show that in certain range $\mu_n\le \al \le \nu_n$ we find coefficients of a Bell inequalities $M\in
\ell_1^{n}(\ell_{\infty}^{n+1})\otimes \ell_1^{n}(\ell_{\infty}^{n+1})$ such that  \[ \frac{\sqrt{n}}{\log^{2}n} \preceq   LV_{|\psi_{\al}\rangle} (M) \,. \] We will study the entropy of entanglement of these states. We refer to Definition \ref{violation_M_over_rho} for the precise
definition of $LV_{|\psi_{\al}\rangle} (M)$ as the maximal violation for a given state $|\psi\rangle$ and propose a measure of
nonlocality, namely \emph{the largest Bell violation that $\rho$ may attain}:
\begin{equation*}\label{violation_rho}
 LV_{|\psi\rangle} = \sup_M LV_\rho(M) .
\end{equation*}
We were very surprised when comparing this measure of violation
with the entropy of entanglement of these states. Let us recall
that for a pure state $|\psi\rangle\in H_A\ten H_B$ the corresponding
density  is defined as
 \[ \rho_A \,=\,  tr_B(|\psi\rangle \langle \psi|) \, .
 \]
Here $tr_B=id\ten tr_{H_B}$ is the partial trace onto the first
Hilbert space $H_A$. For the convenience of the reader let us
recall that for $|\psi\rangle=\sum_{i=1}^{n+1}\al_i |ii\rangle$ we
have
 \[ \rho_{|\psi\rangle} \lel \sum_{i=1}^{n+1} \al_i^2 \pl |i\rangle
 \langle i| \, .\]
The entropy of entanglement is given by the von Neumann entropy of
$\rho_A$, i.e.
 \[ \mathcal{E}(\psi) \lel H((\rho_{\psi})_A) \lel -tr(\rho_A
 \log_2(\rho_A)) \pl .\]
The following Lemma is completely elementary.
\begin{lemma}\label{ealpha} Let
    \[ |\varphi_\alpha\rangle= \alpha |11\rangle+\frac{\sqrt{1-\alpha^2}}{\sqrt{n}}\sum_{i=2}^{n+1}|ii\rangle\in
\ell_2^{n+1}\otimes_2\ell_2^{n+1}\pl .\] Then the function
$f(\al)=\mathcal{E}(\varphi_{\al})$ satisfies
 \begin{enumerate}
  \item[i)] $f(\al)=-\alpha^2\log_2(\alpha^2)-\sum_{i=2}^{n+1}\frac{1-\alpha^2}{n}\log_2(\frac{1-\alpha^2}{n})
  =\alpha^2\log_2(\frac{1}{\alpha^2})+(1-\alpha^2)\log_2(\frac{n}{1-\alpha^2})$;
  \item[ii)] $f_n(0)=\log_2(n)$, $f_n(1)=0$;
  \item[iii)] the function $f$ has a unique maximum
  $f_n(\alpha_n)=\log_2(n+1)$ for the maximally entagled state $\varphi_{\al_n}$, $\alpha_n=\frac{1}{\sqrt{n+1}}$.
  \item[iv)] Let $\eps,\delta >0$, $n\geq 2$, $p\gl 1$ and $\eps\le \log_2(n)$, $2\delta\le
  \log_2^p(n)$. Set  $\mu_n^2=\frac{\eps}{\log_2(n)}$ and
   $1-\nu_n^2=\frac{\delta}{\log_2^p(n)}$. Then
  \[  \log_2(n+1)-\eps-\frac{1}{n\ln 2} \le \log_2(n)-\eps\le
  f(\mu_n) \quad \mbox{and}\quad
 f(\nu_n)\kl 4 (\delta
 \log_2^{1-p}n+\sqrt{\delta}\log_2^{-p/2}n)\pl .\]
 \end{enumerate}
\end{lemma}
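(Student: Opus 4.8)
The claims are all direct computations with the explicit formula in (i), so the plan is essentially "differentiate and estimate." First I would verify (i): since $\varphi_\alpha$ has Schmidt coefficients $\alpha$ (once) and $\sqrt{1-\alpha^2}/\sqrt n$ (with multiplicity $n$), the reduced density $\rho_A$ has eigenvalues $\alpha^2$ and $(1-\alpha^2)/n$ ($n$ times), and plugging into $-\sum \lambda_j\log_2\lambda_j$ gives both expressions in (i) after splitting the logarithm. Parts (ii) and (iii) then follow by evaluating at $\alpha=0,1$ and by a one-variable calculus argument: writing $f(\alpha)=g(\alpha^2)$ with $g(t)=t\log_2(1/t)+(1-t)\log_2(n/(1-t))$, one checks $g'(t)=\log_2\!\big(\tfrac{n(1-t)}{t}\big)/\,$(up to the $\ln 2$ normalization), which vanishes exactly at $t=1/(n+1)$, is positive before and negative after, so $f$ has a unique maximum at $\alpha_n^2=1/(n+1)$; substituting this value of $t$ into $g$ collapses the two terms to $\log_2(n+1)$.

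For (iv) the left-hand inequality is the easy half: set $t=\mu_n^2=\varepsilon/\log_2 n$. From (i), $f(\mu_n)=t\log_2(1/t)+(1-t)\log_2(n/(1-t))\ge t\log_2(1/t)+(1-t)\log_2 n = \log_2 n - t\log_2 n + t\log_2(1/t)$. Since $t\log_2(1/t)\ge 0$ and $t\log_2 n=\varepsilon$, this is $\ge \log_2 n-\varepsilon$, giving the middle inequality; the first inequality $\log_2(n+1)-\varepsilon-\tfrac{1}{n\ln 2}\le \log_2 n-\varepsilon$ is just the elementary bound $\log_2(n+1)-\log_2 n=\log_2(1+1/n)\le \tfrac{1}{n\ln 2}$.

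The genuinely fiddly part — the main obstacle, though still routine — is the upper bound $f(\nu_n)\le 4(\delta\log_2^{1-p}n+\sqrt\delta\log_2^{-p/2}n)$. Here I would set $s=1-\nu_n^2=\delta/\log_2^p n$, so that from (i), $f(\nu_n)=(1-s)\log_2\!\big(\tfrac{1}{1-s}\big)+s\log_2\!\big(\tfrac{n}{s}\big)$. The first summand is $O(s)$ since $\log_2(1/(1-s))\le \tfrac{s}{(1-s)\ln 2}\le 2s$ for $s\le 1/2$ (and $s\le \tfrac12\log_2^{-p}n\cdot\log_2^p n$... one uses $2\delta\le\log_2^p n$). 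The second summand splits as $s\log_2 n + s\log_2(1/s)$; the term $s\log_2 n=\delta\log_2^{1-p}n$ accounts for the first term in the bound, while $s\log_2(1/s)=s\log_2(\log_2^p n/\delta)\le s\cdot p\log_2\log_2 n + s\log_2(1/\delta)$ must be absorbed into $\sqrt\delta\log_2^{-p/2}n$. The point is that $s=\delta\log_2^{-p}n$ is small enough that $s\log_2(1/s)$, which behaves like $s\cdot p\log_2\log_2 n$, is dominated by $\sqrt\delta\log_2^{-p/2}n=\sqrt{s}$ for $n$ large; more carefully one uses $s\log_2(1/s)\le 2\sqrt s$ (valid for all $s\in(0,1)$, since $u\log_2(1/u)\le \tfrac{2}{e\ln 2}\sqrt u\le 2\sqrt u$), and $\sqrt s=\sqrt\delta\log_2^{-p/2}n$. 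Collecting the pieces, $f(\nu_n)\le 2s + \delta\log_2^{1-p}n + 2\sqrt s = 2\delta\log_2^{-p}n+\delta\log_2^{1-p}n+2\sqrt\delta\log_2^{-p/2}n$, and since $\log_2^{-p}n\le \log_2^{1-p}n$ for $n\ge 2$ and constants combine, this is $\le 4(\delta\log_2^{1-p}n+\sqrt\delta\log_2^{-p/2}n)$ as claimed. The only care needed is to keep track of which regime ($s\le 1/2$, $\varepsilon\le\log_2 n$) makes each elementary inequality valid, which is exactly what the hypotheses in (iv) guarantee.
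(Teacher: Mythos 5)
Your proof is correct: the paper gives no argument for this lemma (it simply declares it ``completely elementary''), and your computation --- reading off the Schmidt spectrum for (i), evaluating endpoints and doing one-variable calculus for (ii)--(iii), the crude bound $f(\mu_n)\ge(1-t)\log_2 n$ plus $\log_2(1+1/n)\le\frac{1}{n\ln 2}$ for the left half of (iv), and the splitting $s\log_2(n/s)=s\log_2 n+s\log_2(1/s)$ together with $u\log_2(1/u)\le\frac{2}{e\ln 2}\sqrt u$ for the right half --- is exactly the routine verification intended. Two small slips to fix, neither affecting the conclusions: the derivative in (iii) should read $g'(t)=\log_2\frac{1-t}{nt}$ (your formula has the $n$ misplaced, although the critical point $t=1/(n+1)$ you state is the correct one), and the intermediate inequality $\log_2\frac{1}{1-s}\le 2s$ can fail for $s$ near $1/2$ --- keep the prefactor, i.e. $(1-s)\log_2\frac{1}{1-s}\le s/\ln 2\le 2s$, which is all your final collection of terms uses.
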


Corollary \ref{delta-entangled} follows from the next theorem.

\begin{theorem} Let $n\geq 2$ and $0<\eps< \frac{1}{2}$. Then there exists a
family of pure states $(|\varphi_{\al}\rangle)_{0<\al<1}$ on $\ell_2^{n+1}\otimes
\ell_2^{n+1}$ such that
 \[ LV_{|\varphi_{\al}\rangle} \gl c  \eps  \frac{\sqrt{n}}{\log^2 n} \]
for all $\al$ and
 \[ \{\mathcal{E}(|\varphi_{\al}\rangle)\} \supset
 [ \frac{8\eps}{\log_2(n)}, \log_2(n)-\eps] \pl. \]

\end{theorem}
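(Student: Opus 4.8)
The plan is to feed the one-parameter family of Lemma~\ref{ealpha} into the construction of Theorem~\ref{Theorem-construction} and then calibrate the two endpoints of the parameter interval so that the violation bound and the prescribed spread of entropies come out together. Fix $n\ge2$ and $\eps\in(0,\tfrac12)$, and let $\tilde M\in\ell_1^n(\ell_\infty^{n+1})\otimes\ell_1^n(\ell_\infty^{n+1})$ and the POVMs $\{E_x^a\}_{x,a=1}^{n,n+1}$ be those provided by Theorem~\ref{Theorem-construction} for a suitable (random) choice of signs, so that \eqref{violation-constructive1} and \eqref{violation-constructive2} hold. One first observes that the lower bound \eqref{violation-constructive2} is in fact valid for \emph{every} positive sequence $(\al_i)_{i=1}^{n+1}$, not just the decreasing ones: Step~3 of the proof of Theorem~\ref{Theorem-construction} uses only $\al_i\ge0$ and never the ordering. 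Specializing to $|\varphi_\al\rangle=\al|11\rangle+\frac{\sqrt{1-\al^2}}{\sqrt n}\sum_{i=2}^{n+1}|ii\rangle$ of Lemma~\ref{ealpha}, we have $\al_1=\al$ and $\sum_{i=2}^{n+1}\al_i=\sqrt n\,\sqrt{1-\al^2}$, so \eqref{violation-constructive1}, \eqref{violation-constructive2} and Definition~\ref{violation_M_over_rho} give
\[
 LV_{|\varphi_\al\rangle}\gl LV_{|\varphi_\al\rangle}(\tilde M)\gl \frac{2\,\al\,\sqrt n\,\sqrt{1-\al^2}}{K^2\,C\,\log n}=c_0\,\al\sqrt{1-\al^2}\,\frac{\sqrt n}{\log n},
\]
with $c_0=\tfrac{2}{K^2C}$ universal. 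Thus it remains to lower bound $g(\al)=\al\sqrt{1-\al^2}$ and to control $f(\al)=\mathcal{E}(|\varphi_\al\rangle)$ on a suitable $\al$-range.

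I take $\mu_n^2=\frac{\eps}{\log_2 n}$ and $1-\nu_n^2=\frac{\eps^2}{\log_2^2 n}$ --- precisely the values in Lemma~\ref{ealpha}(iv) with $p=2$, $\delta=\eps^2$ --- and index the family by $\al\in[\mu_n,\nu_n]$ (reparametrizing $(0,1)$ onto this interval if the literal index set is wanted). Since $\eps<\tfrac12\le\log_2 n$ we have $\mu_n^2<\tfrac12<\nu_n^2$, so $[\mu_n,\nu_n]$ is a genuine subinterval of $(0,1)$ containing $\tfrac1{\sqrt2}$. As $g$ increases on $(0,\tfrac1{\sqrt2}]$ and decreases on $[\tfrac1{\sqrt2},1)$, its minimum on $[\mu_n,\nu_n]$ sits at an endpoint, and an elementary estimate gives $g(\mu_n)\gl\tfrac1{\sqrt2}\sqrt{\tfrac{\eps}{\log_2 n}}\gl\tfrac{\eps}{\log_2 n}$ and $g(\nu_n)=\nu_n\cdot\tfrac{\eps}{\log_2 n}\gl\tfrac{\eps}{\sqrt2\,\log_2 n}$. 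Hence $g(\al)\gl\frac{\eps}{\sqrt2\,\log_2 n}$ for all $\al\in[\mu_n,\nu_n]$, and plugging this into the display yields $LV_{|\varphi_\al\rangle}\gl c\,\eps\,\frac{\sqrt n}{\log^2 n}$ for a universal constant $c$ (the $\log$ versus $\log_2$ base change being absorbed into $c$), uniformly in $\al\in[\mu_n,\nu_n]$, $n\ge2$ and $\eps\in(0,\tfrac12)$.

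For the entropy statement, Lemma~\ref{ealpha}(iv) is tailored to this choice: its hypotheses $\eps\le\log_2 n$ and $2\delta=2\eps^2\le\log_2^2 n$ hold because $\eps<\tfrac12$ and $\log_2 n\ge1$, and it gives $f(\mu_n)\gl\log_2 n-\eps$ and $f(\nu_n)\kl 4\bigl(\eps^2\log_2^{-1}n+\eps\log_2^{-1}n\bigr)\kl\frac{8\eps}{\log_2 n}$. Since $f$ is continuous on $[\mu_n,\nu_n]$, the intermediate value theorem gives $\{\mathcal{E}(|\varphi_\al\rangle):\al\in[\mu_n,\nu_n]\}\supseteq[f(\nu_n),f(\mu_n)]\supseteq[\tfrac{8\eps}{\log_2 n},\log_2 n-\eps]$, which is the second assertion (vacuously so for the finitely many small $n$ for which the target interval is empty).

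The genuinely delicate point is the calibration of $\nu_n$: it must be close enough to $1$ that $f(\nu_n)$ drops to order $\eps/\log_2 n$, yet $1-\nu_n^2$ must remain large enough that $\sum_{i\ge2}\al_i=\sqrt n\sqrt{1-\al^2}$ --- and hence the violation bound $g(\nu_n)\,\sqrt n/\log n$ --- does not collapse; note in this direction that the bound already degenerates to order $1/\log n$ at the maximally entangled endpoint $\al=\al_n=1/\sqrt{n+1}$ of the family, so one cannot instead let $\mu_n$ slide down to $\al_n$. The scalings $\mu_n^2\asymp\eps/\log_2 n$ and $1-\nu_n^2\asymp\eps^2/\log_2^2 n$ are exactly what make Lemma~\ref{ealpha}(iv) and the violation estimate come out with the same powers of $\eps$ and $\log n$, and with the constant $c$ independent of $\eps$.
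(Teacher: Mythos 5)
Your proof is correct and follows essentially the same route as the paper: the same choices $p=2$, $\delta=\eps^2$, the range $\al\in[\mu_n,\nu_n]$, the lower bound from \eqref{violation-constructive2} normalized by the $C\log n$ bound of \eqref{violation-constructive1}, and the intermediate value theorem together with Lemma \ref{ealpha}(iv). Your extra remark that Step 3 of Theorem \ref{Theorem-construction} only needs positivity (not monotonicity) of the $\al_i$ is a correct and welcome clarification of a point the paper passes over silently.
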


\begin{proof} Let $|\varphi_{\al}\rangle$ be defined as above, $0<\eps< \frac{1}{2}$,
$p=2$ and $\delta=\eps^2$. Since $n\geq 2$, we trivially have $2\delta\leq \log_2^2(n)$. We consider $\al$ in the range $[ \mu_n,\nu_n]$ and deduce from Theorem \ref{Theorem-construction} the existence of $Q\in \mathcal{Q}_{|\varphi_{\al}\rangle}$ and $\tilde{M}$ such that
 \begin{align*}
  \langle Q,\tilde{M}\rangle & \gl \frac{2}{K^2} \al_1
 \sum_{i=2}^{n+1} \al_i
 \gl \frac{\al \sqrt{1-\al^2}}{K^2} \sqrt{n} \\
 &\gl \frac{\sqrt{n}}{K^2}  \min\{\mu_n\sqrt{1-\mu_n^2},
 \nu_n\sqrt{1-\nu_n^2}\}
 \gl \frac{\sqrt{n}}{K^2} \min\{
 \frac{\sqrt{\eps}}{\sqrt{\log_2n}},
 \frac{\sqrt{\delta}}{\sqrt{\log_2^2n}}\} \pl .
\end{align*}
However, we should also normalize with the $\eps$-norm of
$\tilde{M}$ and have to consider $\tilde{M}=\frac{\tilde{M}}{C \log n}$
instead. This yields
 \[ LV_{|\varphi_{\al}\rangle} \gl c \eps \frac{\sqrt{n}}{\log_2^2n}
 \pl .\]
The intermediate value theorem concludes the proof thanks to (Lemma
\ref{ealpha}, iv)).\end{proof}

Theorem \ref{Theorem-construction} allows us to go a little
further and introduce the following indicator of violation of a
pure state $|\psi\rangle$ given by
 \[ \iviol(|\psi\rangle) \lel \||\psi\rangle\|_{\infty}\||\psi\rangle\|_1 \pl ,\]
where $\||\psi\rangle\|_p=(tr(||\psi\rangle|^p)^{1/p}$ is the $p$-norm of the
Hilbert-Schmidt operator associated with $|\psi\rangle$. Let us note that
for a state $\||\psi\rangle\|_2=1$ and hence, the well-known application
of H\"{o}lder's inequality
 \[ 1\lel \||\psi\rangle\|_2\kl \||\psi\rangle\|_1^{1/2} \||\psi\rangle\|_{\infty}^{1/2}
 \]
shows that $\iviol(\psi)\gl 1$. We may reformulate
Theorem \ref{Theorem-construction}  as follows:

\begin{corollary} Let $n\gl 2$ and  $|\psi\rangle \in \ell_2^n\ten \ell_2^n$ with
$\iviol(|\psi\rangle)\gl 2$. Then
 \[ LV_{|\varphi_{\al}\rangle} \gl \frac{c}{\log n} \iviol(|\psi\rangle) \]
holds for an absolute constant $c$.
\end{corollary}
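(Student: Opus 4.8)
The plan is to read the bound off Theorem \ref{Theorem-construction} after bringing the state into a convenient form. First I would note that $LV_{|\psi\rangle}$ is invariant under local unitaries: if $u,v$ are unitaries on $\ell_2^n$ and $\{E_x^a\},\{F_y^b\}$ are POVMs, then $\{u^*E_x^au\},\{v^*F_y^bv\}$ are POVMs realizing the same probability distribution on $(u\otimes v)|\psi\rangle$, so $\mathcal{Q}_{|\psi\rangle}=\mathcal{Q}_{(u\otimes v)|\psi\rangle}$ and hence $LV_{|\psi\rangle}=LV_{(u\otimes v)|\psi\rangle}$. Using the Schmidt decomposition I may therefore assume $|\psi\rangle=\sum_{i=1}^n\alpha_i|ii\rangle$ with $\alpha_1\geq\cdots\geq\alpha_n\geq 0$ and $\sum_i\alpha_i^2=1$. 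Setting $\alpha_{n+1}=0$, I view $|\psi\rangle$ as a state $|\varphi_\alpha\rangle=\sum_{i=1}^{n+1}\alpha_i|ii\rangle$ of exactly the shape used in part c) of the construction; the sequence is nonnegative and decreasing, and if one insists on strict positivity one replaces $\alpha_{n+1}$ by a small $\eta>0$ and lets $\eta\to 0$, as every quantity below is continuous in $\alpha$. Observe that here $\||\psi\rangle\|_\infty=\alpha_1$ and $\||\psi\rangle\|_1=\sum_{i=1}^n\alpha_i$, so $\iviol(|\psi\rangle)=\alpha_1\sum_{i=1}^n\alpha_i$. (The left-hand side of the statement should of course read $LV_{|\psi\rangle}$.)

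Next I would invoke Theorem \ref{Theorem-construction} for this value of $n$: fix a choice of signs $\{\epsilon_{x,a}^k\}$ for which the operators $\{E_x^a\}_{x,a=1}^{n,n+1}$ form POVMs and \eqref{violation-constructive1} holds, so that the associated Bell inequality $\tilde{M}$ (with $n$ inputs and $n+1$ outputs) satisfies $\sup_{P\in\mathcal{L}}|\langle \tilde{M},P\rangle|\leq C\log n$. Since Step \ref{step3} — the only part of the construction that involves the state — is valid for an arbitrary nonnegative sequence $(\alpha_i)$, inequality \eqref{violation-constructive2} applied to $|\varphi_\alpha\rangle$ produces $Q\in\mathcal{Q}_{|\psi\rangle}$ with $\langle\tilde{M},Q\rangle\geq \frac{2}{K^2}\alpha_1\sum_{i=2}^{n+1}\alpha_i=\frac{2}{K^2}\alpha_1\sum_{i=2}^{n}\alpha_i$. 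Dividing, this already gives
\[
LV_{|\psi\rangle}\;\geq\;LV_{|\psi\rangle}(\tilde{M})\;\geq\;\frac{2\,\alpha_1\sum_{i=2}^{n}\alpha_i}{K^2\,C\log n}\,.
\]

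The last step is an elementary manipulation that uses the hypothesis $\iviol(|\psi\rangle)\geq 2$. From $\sum_i\alpha_i^2=1$ we get $\alpha_1\leq 1=\||\psi\rangle\|_2\leq\||\psi\rangle\|_1$, hence $\||\psi\rangle\|_1=\iviol(|\psi\rangle)/\alpha_1\geq\iviol(|\psi\rangle)\geq 2$, and therefore $\alpha_1^2\leq\alpha_1\leq\tfrac12\,\alpha_1\||\psi\rangle\|_1=\tfrac12\iviol(|\psi\rangle)$. Consequently
\[
\alpha_1\sum_{i=2}^{n}\alpha_i=\alpha_1\||\psi\rangle\|_1-\alpha_1^2=\iviol(|\psi\rangle)-\alpha_1^2\geq\tfrac12\iviol(|\psi\rangle),
\]
so substituting into the displayed estimate yields $LV_{|\psi\rangle}\geq\frac{c}{\log n}\iviol(|\psi\rangle)$ with $c=1/(CK^2)$, where $C,K$ are the universal constants of Theorem \ref{Theorem-construction}. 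I do not expect a genuine obstacle here: the statement is essentially a repackaging of Theorem \ref{Theorem-construction}, and the only points needing care are the reduction to Schmidt form (so that $|\psi\rangle$ has the diagonal shape required by the construction, up to the harmless padding or perturbation of the $(n+1)$-st coordinate) and the observation that Step \ref{step3} never used positivity or normalization of the coefficients $(\alpha_i)$.
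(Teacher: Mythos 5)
Your proposal is correct and follows essentially the same route as the paper: reduce to Schmidt-diagonal form by local unitaries, use $\iviol(|\psi\rangle)\ge 2$ together with $\alpha_1^2\le 1$ to get $\iviol(|\psi\rangle)\le 2\alpha_1\sum_{i\ge 2}\alpha_i$, and then read off the violation from Theorem \ref{Theorem-construction}. The only cosmetic difference is that the paper applies the construction with parameter $n-1$ (so the Hilbert space dimension is exactly $n$), whereas you apply it at $n$ and pad the state with a zero (or $\eta\to 0$) coordinate, which is harmless.
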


\begin{proof} Let $\la_i$ be the singular values of $|\psi\rangle$. Then
we deduce from our assumption  that
 \begin{align*}
  \iviol(|\psi\rangle) &= \la_1 (\sum_{i=1}^n \la_i)
  \lel \la_1 (\sum_{i=2}^n \la_i) + \la_1^2
  \kl \la_1 (\sum_{i=2}^n \la_i) + 1 \kl
  \la_1 (\sum_{i=2}^n \la_i) + \frac{1}{2}  \iviol(|\psi\rangle) \pl .
  \end{align*}
Thus we find $\iviol(|\psi\rangle)\le 2\la_1 (\sum_{i=2}^n \la_i)$. We may
diagonalize $|\psi\rangle$ with local operations (i. e. unitary operations
in $u\ten 1$ and $1\ten v$ in $\ell_2^n\ten \ell_2^n$) and assume
that $|\psi\rangle=\la_1|11\rangle+\sum_{i=2}^{n} \la_i |ii\rangle$. Then
Theorem \ref{Theorem-construction} applied for $n-1$ yields the
assertion. \end{proof}

\begin{conc} Large violation of order $\sqrt{n}$ can occur independently of the entropy of entanglement of a pure state, as
long as the state is not too local $(rank(|\psi\rangle)\leq k)$ or too maximally entangled $(\max_i \{\la_i\}\leq \frac{k}{\sqrt{n}})$.
\end{conc}


\section{Nonlocality and the maximally entangled state}\label{Violation vs GHZ}

In this section we will investigate conditions for Bell
inequalities which either avoid violation in the maximal entangled
state or enforce violation to occur on the maximal entangled
state.

\subsection{Unbounded violation away from the maximally entangled state}

The fact that the maximally entangled state is not optimal in
terms of violation is not new. Indeed, there are many examples in
the context of quantum nonlocality where the maximally entangled
state has been shown not to be the most \emph{nonlocal} one. We
can find some of these \emph{anomalies} (\cite{MV}) in the study
of Bell inequalities (\cite{ADGL}), detection loophole
(\cite{Eber}), extractable secrete key (\cite{SGBMPA}), K-L
distance (\cite{AGG}), etc. Here we will show that there are Bell
inequalities which avoid violation of the maximally entangled
state in high dimension. The examples are closely related to
Theorem \ref{Theorem-construction}, but we need more inputs. From
an operator space perspective, we may say that we use a $C\log n$
completely complemented copy of $R_n\cap C_n$.

\begin{theorem}\label{completely complemented}
There exist universal constants $C,D>0$ with the following
property: For $n\in \mathbb{N}$ there are linear maps $S:R_n\cap
C_n\to \ell_{1}^k(\ell_{\infty}^{Dn})\to R_n\cap C_n$ and
$S^*:\ell_1^{k}(\ell_{\infty}^{Dn})\to $ such that
 \[ S^*S=id_{\ell_2^n} \quad \mbox{and}\quad \|S_{cb}^*\|\le C \quad \mbox{and}\quad
   \|S\|_{cb}  \kl  C \sqrt{\log n}  \pl .\]
Moreover, $k\le 2^{D^2n^2}$.
\end{theorem}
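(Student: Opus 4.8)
The plan is to upgrade Theorem~\ref{complemented} from a \emph{bounded} to a \emph{completely bounded} complementation, the price being an exponential blow-up in the number of inputs. Concretely, for a suitable universal $D$ we must produce linear maps $S\colon R_n\cap C_n\to\ell_1^{k}(\ell_\infty^{Dn})$ and $S^*\colon\ell_1^{k}(\ell_\infty^{Dn})\to R_n\cap C_n$ with $S^*S=\id_{\ell_2^n}$, $\|S^*\|_{cb}\le C$ and $\|S\|_{cb}\le C\sqrt{\log n}$, and with $k\le 2^{D^2n^2}$. The first observation is that the projection side is essentially free: for \emph{any} bounded $S^*$, Lemma~\ref{Grothendieck} applied with target $R_n\cap C_n$ (and $\Omega$ the finite set of $k$ atoms underlying $\ell_1^k$) gives $\|S^*\|_{cb}\le K_L\,\|S^*\|$ automatically, so it suffices to arrange $\|S^*\|\le C$ in the ordinary Banach-space sense. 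This is done exactly as in the proof of Theorem~\ref{complemented}: we build $S^*$ from random Bernoulli (or Gaussian) signs as an adjoint-type map whose ordinary norm is $\preceq 1$ by Lemma~\ref{second}.

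The heart of the matter is the embedding $S$, and this is where the exponentially many inputs are forced upon us: the $n$-input construction of Theorems~\ref{Theorem min-epsilon}--\ref{complemented} only yields a $cb$-bound of order $n^{1/4}\sqrt{\log n}$ for $V$ (this is exactly the content of Lemma~\ref{lemma-optimality} and Proposition~\ref{optimality}), so one genuinely has to enlarge the index set. The idea is to let the inputs be indexed by a $\tfrac12$-net $\mathcal N$, in an appropriate norm, of the unit ball of an $O(n^2)$-dimensional space attached to $R_n\cap C_n$ (morally $M_{Dn}$, resp.\ the unit ball of $(R_n\cap C_n)^*=R_n+C_n$ sitting diagonally inside $M_{Dn}\oplus M_{Dn}$); a standard volumetric estimate gives $|\mathcal N|=k\le 2^{D^2n^2}$ after adjusting $D$. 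To each $x\in\mathcal N$ one attaches, in the style of Theorem~\ref{Theorem-construction}, a block of signs $(\epsilon^{j}_{x,a})_{a=1,\dots,Dn;\,j=1,\dots,n}$ and defines $S$ on the canonical basis of $\ell_2^n$ by the corresponding normalized vectors in $\ell_1^{\mathcal N}(\ell_\infty^{Dn})$. The key claim is that with probability tending to $1$ one has $\|S\colon R_n\cap C_n\to\ell_1^{\mathcal N}(\ell_\infty^{Dn})\|_{cb}\le C\sqrt{\log n}$: because the net is $\tfrac12$-fine, testing the $cb$-norm of a map \emph{into} $\ell_1^{\mathcal N}(\ell_\infty^{Dn})$ reduces, up to a factor $2$, to finitely many scalar quantities, each of which is controlled by the single-random-matrix estimate $\mathbb E\|G\colon\ell_2^n\to\ell_\infty^{n}(\ell_\infty^{n})\|\preceq\sqrt{\log n}$ from the proof of Lemma~\ref{first} together with Chevet's inequality (Theorem~\ref{chevet}). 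To pass from the $r=1$ norm to all matrix amplifications $M_r(R_n\cap C_n)$ one uses the Wittstock splitting $u=u_r+u_c$ of a complete contraction out of $R_n\cap C_n$, as in the proof of Lemma~\ref{maxentanglement}, handling the row and column pieces separately.

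The identity $S^*S=\id_{\ell_2^n}$ is then obtained exactly as in Theorem~\ref{complemented}: the relevant random Gram-type operator $\frac1{|\mathcal N|}(\text{sum over }x\in\mathcal N)$ is, with high probability, bounded below on a subspace of dimension $[\delta\cdot Dn]+1\ge n$ by Lemma~\ref{gaussian} — this is precisely why we work with $Dn\approx n/\delta$ outputs rather than $n$ — and one restricts to an $n$-dimensional subspace and rescales. A simultaneous application of Chebyshev's inequality, as in the proof of Theorem~\ref{complemented}, makes all the probabilistic requirements (Lemmas~\ref{gaussian}, \ref{first}, \ref{second} and the $cb$-estimate on $S$) hold at once; the exponential decay of the failure probability, hence the final ``moreover'' on $k$, follows from the deviation inequalities for Gaussian/Bernoulli sums exactly as in the remark after Step~\ref{step3}. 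The attached Bell inequality and POVMs, their $\epsilon$-norm bound $\preceq\log n$, and positivity are inherited verbatim from Theorems~\ref{Theorem-construction} and \ref{complemented}.

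The main obstacle I expect is the $cb$-estimate on $S$ in the second paragraph: one has to identify the correct finite-dimensional space whose $\tfrac12$-net indexes the inputs so that the net genuinely controls every matrix amplification of the embedding — not just its operator norm — and then verify that the net-indexed random construction has the asserted $cb$-norm with overwhelming probability. Once the right netting space is pinned down, the remaining estimates are routine combinations of Chevet's inequality with the scalar bounds already established in Lemmas~\ref{first}--\ref{Grothendieck}.
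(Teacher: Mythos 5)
There is a genuine gap, and it sits exactly where you flag it: the completely bounded estimate $\|S\|_{cb}\preceq\sqrt{\log n}$. Your plan reduces this to ``finitely many scalar quantities'' indexed by a $\tfrac12$-net of an $O(n^2)$-dimensional ball, each controlled by Chevet/Lemma \ref{first}. But those tools only give Banach-space norm bounds, and no mechanism is offered by which evaluating on a net in a fixed finite-dimensional set controls \emph{all} matrix amplifications of a map into $\ell_1^{\mathcal N}(\ell_\infty^{Dn})$ (Smith-type reductions concern the target being $M_k$, not the domain, and here the target has dimension $\sim 2^{n^2}$). The appeal to the Wittstock splitting is circular in this direction: splitting $u=u_r+u_c$ is available for maps out of $R_n\cap C_n$ that are \emph{already known} to be complete contractions (that is how it is used in Lemma \ref{maxentanglement}, on the measurement side), so it cannot be used to upgrade a first-level bound on $S$ to a cb-bound. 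That this difficulty is not removable by soft scalar-level arguments is precisely the content of Problem \ref{problem2}: for the $n$-input map $V$ the analogous cb-estimate is open, and Lemma \ref{lemma-optimality}/Proposition \ref{optimality} show the generic loss $n^{1/4}$ at the cb-level. Your ``main obstacle'' paragraph concedes that the decisive step is not established, and it is the whole theorem.

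The paper's actual route is different and does not use a net at all. The inputs are indexed by the \emph{entire} Bernoulli cube: one takes $\Omega=\{-1,1\}^{n^2}$, so that $\ell_1^{k}(\ell_\infty^{n})=L_1(\Omega;\ell_\infty^n)$ with $k=2^{n^2}$ (this, not a covering estimate, is where $k\le 2^{D^2n^2}$ comes from), and the embedding is the canonical map $u_{RAD}(e_i)=\sum_j\epsilon_{i,j}\otimes e_j$ built from the coordinate functions. The cb-estimate $\|u_{RAD}\|_{cb}\preceq\sqrt{\log n}$ (Lemma \ref{random}) is proved by combining the noncommutative Khintchine inequality, in the form $\|id:RC^n_q\to{\rm span}\{g_i\}\subset L_q\|_{cb}\le C\sqrt q$, with the independent-copies Lemma \ref{copy}, which embeds $L_q$ into $L_1(\ell_\infty^n)$ at cost $n^{1/q}$; optimizing $q=\log n$ gives $\sqrt q\, n^{1/q}\simeq\sqrt{\log n}$. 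You do correctly identify the projection side: $v$ is built by integrating against the signs over a large set $A_\delta$, its first-level norm is $\preceq 1$ by Chevet, and Lemma \ref{Grothendieck} upgrades this to a cb-bound. For $S^*S=id$, however, the paper does not invoke Lemma \ref{gaussian}: it compares $\|v\circ u\|_{M_n}\le C(\delta)^2$ with $|tr(v\circ u)|\ge n\mu(A_\delta)>n/2$ to get a singular-value lower bound $s_{[\eta n]}(v\circ u)\ge\frac14$, restricts to that $[\eta n]$-dimensional subspace via the polar decomposition, and rescales (this is also where the constant $D$ and the ratio between outputs and the final dimension come from). So beyond the unproved cb-step, the invertibility argument you sketch would also need to be reworked, since Lemma \ref{gaussian} as stated concerns i.i.d.\ Gaussian matrices rather than the net-indexed Gram operator of your construction.
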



Before proving Theorem \ref{completely complemented}, let's see
how to obtain Theorem \ref{Theorem 2}:

\begin{proof}[Proof of Theorem \ref{Theorem 2}]
Following the same steps as in Theorem \ref{Theorem
min-epsilon}, we can prove that the element $G= (S\otimes S)(\sum_k e_k\ten e_k)\in
\ell_1^k(\ell_\infty^{Dn})\otimes \ell_1^k(\ell_\infty^{Dn})$ satisfies
 \[  \|G\|_\epsilon\preceq \log n  \text{       }\text{    and    }
 \text{      }  \|G\|_{\min}\succeq \sqrt{n} \pl .\]
According to Lemma \ref{maxentanglement} and the cb-estimate from Theorem \ref{completely complemented}, we obtain
 \[ \|M\|_{\psi-\min}= \|(S\otimes S)(\sum_k e_k\ten e_k)\|_{\psi-\min}\leq
 \|S\|_{cb}^2\|a\|_{\psi-\min}
 \leq  C \|S\|_{cb}^2 \leq C \log n \pl .\]
Taking $M=\frac{M}{\log n}$ and following Lemma
\ref{connection:min-epsilon}, we obtain $\tilde{M}\in
\ell_1^k(\ell_\infty^{Dn+1})\otimes \ell_1^k(\ell_\infty^{Dn+1})$
such that $LV(\tilde{M})\succeq \frac{\sqrt{n}}{\log n}$. Furthermore, adding $0$'s  does not change the $\|\pl \|_{\psi-\min}$-norm.
\end{proof}

We need the following well-known fact:

\begin{lemma}\label{copy}
Let $f_j$ be identically distributed independent copies of a
random matrix $f$ on a probability space and $1\le q\le \infty$.
Then
 \begin{equation}\label{q1}
  \|\sum_{l=1}^n f_l\ten e_l\|_{L_1(S_1^m(\ell_{\infty}^n))}
  \kl n^{1/q} \|f\|_{S_1^m(L_q)} \pl .
  \end{equation}
\end{lemma}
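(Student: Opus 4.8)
The plan is to obtain the estimate by complex interpolation between the two endpoints $q=\infty$ and $q=1$. Here $S_1^m(L_q)$ is read as Pisier's non-commutative vector-valued space $S_1^m[L_q(\Omega)]$, and one realises the $f_l$ as $f\circ\pi_l$, where $\pi_l$ is the $l$-th coordinate projection of the product probability space on which the i.i.d.\ copies live. Then $T(f)=\sum_{l=1}^n f_l\otimes e_l$ is a fixed linear map, defined on $S_1^m\otimes(\text{simple functions})$ independently of $q$, from $S_1^m[L_q(\Omega)]$ into the \emph{fixed} Banach space $F=L_1(\Omega^n;S_1^m(\ell_\infty^n))$. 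It therefore suffices to bound $\|T\|$ when the source is $S_1^m[L_1(\Omega)]$ and when it is $S_1^m[L_\infty(\Omega)]$; since $(S_1^m[L_1],S_1^m[L_\infty])_\theta=S_1^m[L_q]$ with $1-\theta=1/q$, interpolating the operator norm of $T$ (with the common target $F$) gives $\|T\colon S_1^m[L_q]\to F\|\le n^{1-\theta}=n^{1/q}$, which is exactly the claim.

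For $q=1$ I would just use the triangle inequality in $S_1^m(\ell_\infty^n)$ pointwise. For any operator space $E$ and any norm-one $v\in E$, both the map $\C\to E$, $1\mapsto v$, and some norm-one functional $\varphi$ with $\varphi(v)=1$ are completely contractive, whence $\|a\otimes v\|_{S_1^m(E)}=\|a\|_{S_1^m}$ for all $a\in S_1^m$; applying this with $v=e_l\in\ell_\infty^n$ gives $\|\sum_l f_l(\omega)\otimes e_l\|_{S_1^m(\ell_\infty^n)}\le\sum_l\|f_l(\omega)\|_{S_1^m}$, and integrating, using $S_1^m[L_1(\Omega)]=L_1(\Omega;S_1^m)$ and that the copies are identically distributed, yields $\|T(f)\|_F\le\sum_l\E\|f_l\|_{S_1^m}=n\,\E\|f\|_{S_1^m}=n\,\|f\|_{S_1^m(L_1)}$.

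For $q=\infty$ I would use Pisier's factorisation $S_1^m[X]=S_2^m\cdot M_m[X]\cdot S_2^m$: write $f=(\alpha\otimes 1)\,b\,(\beta\otimes 1)$ with $\alpha,\beta\in S_2^m$ of norm $\le\|f\|_{S_1^m[L_\infty]}^{1/2}$ and $b\in M_m[L_\infty(\Omega)]=M_m\otimes_{\min}L_\infty(\Omega)=L_\infty(\Omega;M_m)$ of norm $\le 1$. Taking copies, $T(f)=(\alpha\otimes 1)\big(\sum_l b_l\otimes e_l\big)(\beta\otimes 1)$, where at a.e.\ point $\sum_l b_l\otimes e_l\in M_m\otimes_{\min}\ell_\infty^n$ has norm $\max_l\|b_l(\cdot)\|_{M_m}\le 1$ by \eqref{eq:op-space-infty}; combining this with the factorisation inequality $\|(\alpha\otimes 1)h(\beta\otimes 1)\|_{S_1^m(\ell_\infty^n)}\le\|\alpha\|_{S_2^m}\|h\|_{M_m(\ell_\infty^n)}\|\beta\|_{S_2^m}$ gives $\|\sum_l f_l(\omega)\otimes e_l\|_{S_1^m(\ell_\infty^n)}\le\|f\|_{S_1^m[L_\infty]}$ for a.e.\ $\omega$, hence $\|T(f)\|_F\le\|f\|_{S_1^m(L_\infty)}$. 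Note that independence plays no role here; it only guarantees that the product-space picture above is the natural one.

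The part needing care is purely the upkeep of Pisier's vector-valued Schatten machinery: the functoriality $\|\id_{S_1^m}\otimes u\|\le\|u\|_{cb}$, the identity $M_m[X]=M_m\otimes_{\min}X$ (so that $M_m[\ell_\infty^n]=\ell_\infty^n(M_m)$ and \eqref{eq:op-space-infty} applies), the Fubini identity $S_1^m[L_1(\Omega)]=L_1(\Omega;S_1^m)$, the factorisation $S_1^m[X]=S_2^m\cdot M_m[X]\cdot S_2^m$, and the interpolation identity $S_1^m[(X_0,X_1)_\theta]=(S_1^m[X_0],S_1^m[X_1])_\theta$; all of these are standard and are found in Pisier's memoir on non-commutative vector-valued $L_p$-spaces. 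A minor secondary point is to verify that $T$ really is a single linear map of the interpolation couple $(S_1^m[L_1],S_1^m[L_\infty])$ with one fixed Banach-space target $F$, which holds because its definition on $S_1^m\otimes(\text{simple functions})$ does not refer to $q$.
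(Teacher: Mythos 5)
Your proof is correct and follows essentially the same route as the paper's: the $q=1$ endpoint by the triangle inequality, the $q=\infty$ endpoint via Pisier's factorization of $S_1^m[L_\infty(\Omega)]$ (the paper reduces to positive $f$ and writes $f=a^*Fa$, while you use the general two-sided factorization $S_1^m[X]=S_2^m\cdot M_m[X]\cdot S_2^m$), and complex interpolation of the fixed operator $T$ into the common target $L_1(\Omega^n;S_1^m(\ell_\infty^n))$ to get the factor $n^{1/q}$. The extra detail you supply (well-definedness of $T$ on the couple and avoiding the positivity reduction) only tightens the same argument, so no further comparison is needed.
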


\begin{proof} Let us prove this for $q=1$. Then clearly the
triangle inequality implies the assertion. For $q=\infty$ we may
assume that $f$ is positive and it can be written as
 \[ f  \lel a^* F a \quad a\in S_2^m \pl ,\pl F\in L_{\infty}(M_m)
 \]
such that $\|a\|_2\le 1$ and $\|F\|_{\infty}\le
\|f\|_{S_1^m(L_{\infty})}$. Let $F_j$ be independent copies. Then
we find
 \[ f_l \lel a^* F_l a \quad \mbox{and}\quad \sup_l \|F_l\| \kl
 \|F\|_{\infty} \pl .\]
Since the underlying measure space is a probability space, we
obtain the assertion in the case $q=\infty$. For $1\le q\le
\infty$, this follows from a complex interpolation argument as in
\cite{JuPa}. Just note that in operator space jargon, the Lemma is
an immediate consequence of the fact that $id:\ell_q^n\to
\ell_{\infty}^n$ and $id:L_q(\Omega;X)\to L_1(\Omega;X)$ are
complete contractions. \end{proof}

Although the techniques to prove Theorem \ref{completely
complemented} are similar to the ones used before, we have to use a
random embedding of $\ell_2^n$ into $L_1(\Omega,\ell_\infty^n)$
contained in the following Lemma:
\begin{lemma}\label{random}
Let $(g_{i,j})_{i,j=1}^n$ be a normalized family of independent
and identically distributed real gaussian variables. The
application $u:R_n\cap C_n\rightarrow L_1(\Omega,\ell_\infty^n)$
defined by
 \[  u(e_i)\lel \sum_{j=1}^ng_{i,j}\otimes e_j  \text{   for
 every   } i=1,\cdots ,n\pl ,\]
satisfies  $\|u\|_{cb}\preceq \sqrt{\log n}$. Similarly, the map
$u_{RAD}(e_i)=\sum_{j=1}^n\epsilon_{i,j}\otimes e_j$ for every $i=1,\cdots ,n$ satisfies
$\|u_{RAD}\|_{cb}\le C\sqrt{\log n}$.
\end{lemma}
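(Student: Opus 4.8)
The plan is to reduce the statement to a Gaussian estimate and to a careful use of the operator space structure on $L_1(\Omega,\ell_\infty^n)$. Since $u$ and $u_{RAD}$ differ only in the choice of the underlying symmetric random variables, the contraction principle gives $\mathbb{E}_\epsilon\max_j\|\sum_i\epsilon_{i,j}A_i\|\le\sqrt{\pi/2}\,\mathbb{E}_g\max_j\|\sum_i g_{i,j}A_i\|$ for all matrices $A_i$, hence $\|u_{RAD}\|_{cb}\le\sqrt{\pi/2}\,\|u\|_{cb}$ (compare Remark~\ref{Different Random variables}); so it is enough to bound $\|u\|_{cb}$ for the Gaussian map $u$. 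The bounded norm is elementary: on the underlying Hilbert space $\ell_2^n$ one has $\|u(\sum_i a_ie_i)\|_{L_1(\Omega,\ell_\infty^n)}=\mathbb{E}\max_{1\le j\le n}|\sum_i a_ig_{i,j}|\preceq\sqrt{\log n}\,\|a\|_2$, using that the $n$ variables $\sum_i a_ig_{i,j}$ are independent centred Gaussians of variance $\|a\|_2^2$ together with the estimate $\mathbb{E}\|\sum_i g_ie_i\|_{\ell_\infty^n}\preceq\sqrt{\log n}$ already used in Lemma~\ref{first}. The whole content of the lemma is that this persists at the level of complete boundedness.

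To estimate $\|u\|_{cb}$ I would use the identification $M_k(L_1(\Omega,\ell_\infty^n))=L_1\bigl(\Omega;\ell_\infty^n(M_k)\bigr)$ ($L_1$ of a probability space commutes with matrix amplification; cf.\ \cite{JuPa} and the interpolation description of vector valued $L_p$ in Section~\ref{Mathematical Tools}). For $x=\sum_{i=1}^n A_i\otimes e_i\in M_k(R_n\cap C_n)$ this gives
\[ \bigl\|(\id_{M_k}\otimes u)(x)\bigr\|_{M_k(L_1(\Omega,\ell_\infty^n))}=\mathbb{E}\,\max_{1\le j\le n}\Bigl\|\sum_{i=1}^n g_{i,j}A_i\Bigr\|_{M_k}, \]
while $\|x\|_{M_k(R_n\cap C_n)}=\max\bigl\{\|\sum_i A_iA_i^*\|^{1/2},\|\sum_i A_i^*A_i\|^{1/2}\bigr\}$. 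Thus the lemma amounts to the Gaussian inequality
\[ \mathbb{E}\,\max_{1\le j\le n}\Bigl\|\sum_{i=1}^n g_{i,j}A_i\Bigr\|_{M_k}\ \preceq\ \sqrt{\log n}\;\max\Bigl\{\Bigl\|\sum_i A_iA_i^*\Bigr\|^{1/2},\Bigl\|\sum_i A_i^*A_i\Bigr\|^{1/2}\Bigr\} \]
uniformly in $k$. This is a (by now standard) strengthening of the noncommutative Khintchine inequality: for each fixed column index $j$ one controls $\|\sum_i g_{i,j}A_i\|_{M_k}$, in mean and in deviation, by $\max\{\|\sum_i A_iA_i^*\|^{1/2},\|\sum_i A_i^*A_i\|^{1/2}\}$, and then the maximum over the $n$ independent columns is absorbed into a single factor $\sqrt{\log n}$. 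One genuinely needs the intersection norm here: for the pure row or pure column operator space structure the corresponding cb-norm of $u$ is already of order $\sqrt n$ (take $A_i=e_{i,1}$, respectively $A_i=e_{1,i}$).

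The main obstacle is precisely this point. The crude argument --- treating $\sum_i g_{i,j}A_i$ as an arbitrary element of $M_k$, applying the noncommutative Khintchine inequality in $S_p^k$ with $p\sim\log(nk)$, and taking a union bound over $j$ --- produces the factor $\sqrt{\log(nk)}$, which is useless because $\|u\|_{cb}$ is a supremum over all amplification dimensions $k$. One must exploit that the Gaussian matrices $\sum_i g_{i,j}A_i$ all lie in the common subspace $\operatorname{span}\{A_1,\dots,A_n\}\subset M_k$, of dimension at most $n$ (equivalently, the correct $k$-free form of the noncommutative Khintchine inequality, of the type developed in \cite{JuPa} and exploited in \cite{PiVar}), so that only $\log n$ enters. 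Once this dimension-free estimate is in hand, the displayed inequality holds uniformly in $k$, giving $\|u\|_{cb}\preceq\sqrt{\log n}$, and then $\|u_{RAD}\|_{cb}\le\sqrt{\pi/2}\,\|u\|_{cb}\preceq\sqrt{\log n}$.
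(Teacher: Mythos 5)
Your reduction breaks at the very first step: the identification $M_k(L_1(\Omega,\ell_\infty^n))=L_1(\Omega;\ell_\infty^n(M_k))$ is false. The operator space structure on $L_1$ is the dual one coming from $L_\infty$, and $L_1$ commutes with \emph{trace-class} amplifications, not with $M_k$-amplifications: what is true (and what the paper exploits) is Pisier's identity $S_1^m[L_1(\Omega;X)]=L_1(\Omega;S_1^m[X])$, so the cb-norm of a map into $L_1(\Omega,\ell_\infty^n)$ is tested against coefficients $x_i\in S_1^m$, not against $A_i\in M_k$ with the $\ell_\infty^n(M_k)$-norm. Your formula only gives the upper bound $\|(\id_{M_k}\otimes u)(x)\|_{M_k(L_1(\Omega,\ell_\infty^n))}\le \mathbb{E}\max_j\|\sum_i g_{i,j}A_i\|_{M_k}$, so you have replaced the lemma by a strictly stronger statement. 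And that stronger statement — the $k$-uniform inequality $\mathbb{E}\max_{j\le n}\|\sum_{i=1}^n g_{i,j}A_i\|_{M_k}\preceq\sqrt{\log n}\,\max\{\|\sum_iA_iA_i^*\|^{1/2},\|\sum_iA_i^*A_i\|^{1/2}\}$ — is simply false. Take $k=2^n$ and $A_i=n^{-1/2}\,\mathrm{diag}\bigl(\epsilon_i(s)\bigr)_{s\le 2^n}$, where $s\mapsto(\epsilon_i(s))_{i=1}^n$ runs over all sign patterns. Then $\sum_iA_iA_i^*=\sum_iA_i^*A_i=1$, while already a single column gives
\[ \mathbb{E}\Bigl\|\sum_{i=1}^n g_iA_i\Bigr\|_{M_k}=\frac{1}{\sqrt n}\,\mathbb{E}\max_s\Bigl|\sum_i g_i\epsilon_i(s)\Bigr|=\frac{1}{\sqrt n}\,\mathbb{E}\sum_i|g_i|\sim\sqrt n\ \gg\ \sqrt{\log n}. \]
Note these $A_i$ span an $n$-dimensional subspace of $M_k$, so your heuristic that the low dimension of $\mathrm{span}\{A_1,\dots,A_n\}$ removes the $\log k$ from the operator-norm Khintchine inequality is exactly what fails; in operator norm the constant genuinely depends on the ambient matrix size (or effective rank), not on the number of terms.

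The paper avoids this by working on the correct (predual) side throughout: it bounds $\mathbb{E}\|\sum_{i,j}g_{i,j}x_i\otimes e_j\|_{S_1^m(\ell_\infty^n)}$ by $C\sqrt{\log n}\,\|\sum_i x_i\otimes e_i\|_{S_1^m(R_n\cap C_n)}$ for $x_i\in S_1^m$. The $\max$ over the $n$ independent columns is handled by Lemma \ref{copy}, which trades $\ell_\infty^n$ for a single $L_q$-norm at the cost of $n^{1/q}$, and then the vector-valued noncommutative Khintchine inequality for $RC_q^n$, with constant $C\sqrt q$ \emph{uniform in the $S_1^m$-coefficients}, is applied; choosing $q=\log n$ gives $n^{1/q}\sqrt q\preceq\sqrt{\log n}$. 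Your treatment of $u_{RAD}$ via the contraction principle and your computation of the ordinary norm of $u$ are fine, but the cb-estimate — the actual content of the lemma — needs this $S_1^m$/$L_q$ route (or an equivalent one); as written, your argument rests on an inequality that is false.
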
\begin{proof} Let  $x_1,\cdots ,x_k\in S_1^m$. Then we have
 \begin{align*}
 \mathbb{E} \|\sum_{i,j=1}^ng_{i,j} x_i\otimes e_j\|_{S_1^m(\ell_\infty^n)}
 \kl C \sqrt{\log n}\pl  \|\sum_{k=1}^nx_i\otimes e_j\|_{S_1^m(R_n\cap C_n)}.
\end{align*}
Indeed, let $2< q<\infty$. From Lemma \ref{copy} we deduce that
 \begin{align*}
\|\sum_{i,j=1}^ng_{i,j} x_i\otimes
e_j\|_{L_1(S_1^m(\ell_\infty^n))}& =\|\sum_{j=1}^n(\pi_j\otimes
 \id)(\sum_{i=1}^ng_ix_i\otimes e_j
 )\|_{L_1(S_1^m(\ell_\infty^n))}\\\leq
n^{\frac{1}{q}}\|\sum_{i=1}^nx_i\otimes
g_i\|_{S_1^m(L_q(\Omega))},
\end{align*}
where $\pi_j:L_q(\Omega)\rightarrow L_1(\Omega^n)$ is defined by
$\pi_j(f)=\overbrace{\id\otimes\cdots \otimes\id}^{j-1}\otimes
f\otimes \overbrace{\id\otimes\cdots \otimes \id}^{n-j}.$

We need to recall the definition of $RC_q^n=[R_n\cap C_n,R_n+C_n]_{1/q}$. Then we have trivially
 \[ \|id:R_n\cap C_n \to RC_q^n\|_{cb} \le 1 \pl , \]
and the noncommutative Khintchine inequality can be reformulated (see \cite{Pisierbook})
as
\begin{align*}
\|id:RC_q^n\rightarrow {\rm span}\{g_i:1\leq i\leq n\}\subset L_q(\Omega)\|_{cb}\kl C \sqrt{q}.
\end{align*}
Hence $\|id:R_n\cap C_n\to {\rm span}\{g_i:1\leq i\leq
n\}\|_{cb}\le C\sqrt{q}$ implies

\begin{align*}
\|\sum_{i,j=1}^ng_{i,j} x_i\otimes
e_j\|_{L_1(S_1^m(\ell_\infty^n))}\leq
n^{\frac{1}{q}}\|\sum_{i=1}^nx_i\otimes
g_i\|_{S_1^m(L_q(\Omega))}\leq  C\sqrt{q}n^{\frac{1}{q}}\|\sum_{i=1}^nx_i\otimes
e_i\|_{S_1^m(R_n\cap C_n)}.
\end{align*}Taking $q=\log n$ we
obtain the first assertion. The second assertion follows
immediately from the contraction principle (see \cite{PiVar})
 \begin{align*}
 \mathbb{E}  \|\sum_{ij}  \eps_{ij}  x_{ij} \|_X &\le  \sqrt{\frac{\pi}{2}}
 \mathbb{E} \|\sum_{ij} g_{ij}  x_{ij}\|_X \pl. \qedhere
 \end{align*}
\end{proof}



\begin{proof}[Proof of Theorem \ref{completely complemented}]
Let $0<\delta<\frac{1}{2}$ and $n\in \mathbb{N}$. As a consequence
of Chevet's inequality and Chebyshev's inequality we can obtain a
constant $C(\delta)> 0$ and a set $A_\delta\subset \Omega$ such
that for every $\omega\in A_\delta$ we have
 \begin{equation}\label{norm-operator}
 \|\sum_{i,j=1}^n\epsilon_{i,j}(\omega)e_i\otimes
 e_j\|_{\ell_2^n \otimes_\epsilon \ell_2^n}\leq C(\delta)\sqrt{n},
 \end{equation}
and $\mu(A_\delta^c)< \delta$. In particular, we have
\begin{equation}\label{norm-v}
\|\sum_{i,j=1}^n\epsilon_{j,i}(\omega)e_i\otimes
e_j\|_{\ell_1^n\otimes_\epsilon \ell_2^n} \leq C(\delta)n
\end{equation}
for $\omega \in A_{\delta}$. Then, we define
$v:L_1(\Omega,\ell_\infty^n)\rightarrow R_n\cap C_n$ by
 \begin{align*}
 v(f)=\frac{1}{n}\sum_{k=1}^n(\int_{A_\delta}\sum_{j=1}^Nf_j(\omega)\epsilon_{k,j}(\omega)d\mu)e_k.
 \end{align*}
We deduce from Equation (\ref{norm-v}) that
 \begin{equation} \label{ness}
 \|v\|=\sup_{\omega\in A_\delta}\|\sum_{j,k=1}^n\epsilon_{k,j}(\omega)e_j\otimes
 e_k\|_{\ell_1^n\otimes_\epsilon \ell_2^n}\leq C(\delta)\pl .
 \end{equation}
Furthermore, by Lemma \ref{Grothendieck}, we infer that
$\|v\|_{cb}\leq C'(\delta) $. Recall that $u:R_n\cap
C_n\rightarrow L_1(\Omega,\ell_\infty^n)$ defined by
\begin{align*}
u(e_i)=\sum_{j=1}^n\epsilon_{i,j}\otimes e_j  \text{   for every }
i=1,\cdots , n;
\end{align*}
satisfies  $\|u\|_{cb}\le C \sqrt{\log n}$. Let us now show that
$v\circ u$ is invertible on a large subspace of dimension $\eta
n$. Indeed, we observe that that $(v\circ
u)(e_i)=\frac{1}{n}\sum_{j,k=1}^n(\int_{A_\delta}
\epsilon_{i,j}\epsilon_{k,j})e_k$. Then
\begin{align*}
 \|v\circ u\|_{M_n}\leq
 \frac{1}{n}(\int_{A_\delta}\|\sum_{i,k=1}^n(\sum_{j=1}^n\epsilon_{i,j}\epsilon_{k,j})e_i\otimes
 e_k\|_{\ell_2^n\otimes_\epsilon
\ell_2^n}d\mu)=\frac{1}{n}(\int_{A_\delta}\|R^tR\|_{\ell_2^n\otimes_\epsilon
 \ell_2^n}d\mu)\leq C(\delta)^2;
\end{align*}
follows from  equation (\ref{norm-operator}) for
$R=\sum_{i,j=1}^n\epsilon_{i,j} e_i\otimes e_j:\ell_2^n\rightarrow
\ell_2^n$. On the other hand,
\begin{align*}
 \|v\circ u\|_{S_1^n}\geq|tr(v\circ u)|=
 \frac{1}{n}|\int_{A_\delta}\sum_{i,j=1}^n\epsilon_{i,j}^2d\mu|=n
 \mu(A_\delta)> \frac{n}{2}.
\end{align*}
Take now $0< \eta< \frac{1}{4C(\delta)^2}$ and recall that
$s_i(v\circ u)$ denotes the $i^{th}$ singular value of $v\circ u$.
Then  we have
\begin{align*}\frac{n}{2}\leq \|v\circ u\|_{S_1^n}=\sum_{i=1}^n|s_i(v\circ u)|\leq |s_1(v\circ u)|([\eta
 n]-1)+n|s_{[\delta n]}(v\circ u)|\\
 = C(\delta)^2([\eta n]-1)+ n|s_{[\eta n]}(v\circ u)|\leq C(\delta)^2\eta n +n|s_{[\eta n]}(v\circ u)|.
 \end{align*}
We conclude that $|s_{[\eta n]}(v\circ u)|\geq \frac{1}{4}$. Let
$\tilde{n}=[\eta n]$ and  $v\circ u=o|v\circ u|$ the polar
decomposition. Let $w:\ell_2^{\tilde{n}}\to \ell_2^n$ be the
partial isometry sending the unit vectors to the eigenvalues. Then
$s_{\tilde{n}}(v\circ u)$ implies that $\|(w^*|v\circ u|w)^{-1} \|
\le s_{\tilde{n}}(v\circ u)^{-1}\le 4$. Therefore we find
 \[ w^*o^*v\circ u (w^*|v\circ u|w)^{-1}\lel
 id_{\ell_2^{\tilde{n}}} \pl .\]
On the other hand
 \[ \|u (w^*|v\circ u|w)^{-1}:R_{\tilde{n}}\cap C_{\tilde{n}}\to
 \ell_1^{2^{n^2}}(\ell_{\infty}^n)\|_{cb} \kl 4 C \sqrt{\log n} \]
and  $\|w^*o^*v\|_{cb} \le C'(\delta)$. Therefore our result is
proved for $\tilde{n}$ instead of $n$. Note that $n\le
\frac{2}{\eta}\tilde{n}$ and hence $D=2\eta^{-1}\le 8C(\delta)^2$
and $k\le 2^{D^2\tilde{n}^2}$. \end{proof}

\begin{remark}\emph{(Application to Bell inequalities)} We also obtain similar violation of Bell
inequalities as in Section \ref{The construction} with a similar
behavior for states. Indeed, the coefficients of the Bell
inequalities with $k\le 2^{n^2}$ inputs  and $n$ outputs are given
by
\begin{align*}
 M_{a,b}^{\omega,\omega'} \lel 2^{-2n^2} \sum_{k=1}^n
 \epsilon_{k,a}(\omega) \epsilon_{k,b}(\omega').
 \end{align*}
Then the estimate $\|u_{RAD}\|_{cb}\le C \sqrt{\log n}$ implies
 \[ \| \sum_{a,b,\omega,\omega'}
  M_{a,b}^{\omega,\omega'} (e_{\omega}\ten e_{a})\ten (e_{\omega'}\ten
  e_b)\|_{\ell_1^{2^{n^2}}(\ell_\infty^n)\otimes_{\eps}
  \ell_1^{2^{n^2}}(\ell_\infty^n)} \kl C^2 \log n \pl .\]
Moreover, with the correct  normalization we have
 \[ \sum_{a,b,\omega,\omega'}  M_{a,b}^{\omega,\omega'} (e_{\omega}\ten e_{a})\ten (e_{\omega'}\ten
  e_b) \lel \sum_{k=1}^n u_{RAD}(e_k)\ten u_{RAD}(e_{k}). \]
Therefore we deduce from Lemma \ref{maxentanglement} and Lemma
\ref{random} that
 \begin{equation}\label{mtan}
  \|M\|_{\psi-\min}\kl 4 \|u_{RAD}\|_{cb}^2 \preceq \log n \pl .
  \end{equation}
On the other hand, we find exactly the same
behavior in terms of states as in section \ref{The construction}.
Indeed, using \eqref{ness} and the argument from Step \ref{step2}, we
deduce that
  \begin{align*}
 E_\omega^a= \frac{1_{A_\delta}(\omega)}{2C(\delta)^2n}\left(
 \begin{array}{ccccccc}
   1 &  \epsilon_{1,a}(\omega) & \cdots & \epsilon_{n,a}(\omega) \\
  \epsilon_{1,a}(\omega) &  1 & \cdots & \epsilon_{1,a}(\omega)\epsilon_{n,a}(\omega)\\
  \vdots & \vdots & \vdots  & \vdots \\
  \epsilon_{n,a}(\omega) &  \epsilon_{n,a}(\omega)\epsilon_{1,a}(\omega)  & \cdots & 1\\
\end{array}\right)\in M_{n+1}, a=1,\cdots ,n;
\end{align*}
satisfies
 \[ \sup_{\omega} \sum_a E_\omega^a \le 1 \pl .\]
Thus we may define $E_{\omega}^{n+1}=1-\sum_a E_\omega^a$ for all
$\omega$ and obtain POVMs indexed by $\omega\in
\{-1,1\}^{2^{n^2}}$. For $\phi=\al_1|11\rangle+\sum_{i=2}^n
\al_i|ii\rangle$ we see that the same argument as in Step \ref{step3}
yields
\begin{align*}
 &\sum_{\omega,\omega'} \sum_{a,b=1}^n M_{a,b}^{\omega,\omega'}\langle
 \phi |E_\omega^a\otimes E_{\omega'}^{b}|\phi\rangle
 \gl \frac{2}{(2C(\delta)^2)^2} \al_1\sum_{i=2}^{n+1} \al_{i}
 \pl .
 \end{align*}
\emph{Hence the matrix $(\frac{1}{C \log
n}\tilde{M}_{a,b}^{\omega,\omega'})_{a,b=1,..,n+1,\omega,\omega'}$
is an example of a Bell inequality which can not
produce large violation on the maximal entangled state, but produces large
violation for all states with $\iviol(\phi)\gg  \log n$}.
\end{remark}

\begin{conc}
We have provided an example of a Bell inequality which gives violations of order $\frac{\sqrt{n}}{\log n}$ but only bounded violations can be obtained with any maximally entangled state. This example suggests that the maximally entangled state is a poor candidate to get large violations. A similar statement holds in the context of tripartite correlations, see \cite{PWJPV} and the recent generalization to diagonal states in \cite{BBLV}. However, in \cite{BGW} the authors showed the existence of a Bell inequality (constructed with $2^n$ inputs and $n$ outputs) with positive coefficients for which the maximally entangled state in dimension $n$ gives violations of order $\frac{\sqrt{n}}{\log n}$. Therefore, we can not expect to have condition b) in Theorem \ref{Theorem 2} for every Bell inequality $M$.
\end{conc}

\begin{problem}\label{problem2} {\rm
The key point to prove condition b) in Theorem \ref{Theorem 2} is
the fact that the operator $u_{RAD}:R_n\cap C_n\rightarrow
L_1(\Omega,\ell_\infty^n)$ admits a good estimate of the cb-norm.
It would be nice to know whether the operator $V:R_n\cap C_n
\rightarrow\ell_1^n(\ell_\infty^n)$, defined by
\begin{align*}
V(e_k)=\frac{1}{n}\sum_{i,j=1}^n\epsilon_{i,j}^ke_i\otimes e_j
\text{   for every   } k=1,\cdots ,n;
\end{align*} also satisfies a similar estimate  $\|V\|_{cb}\preceq \sqrt{\log n}$.
We refer to \cite[Remark 9]{JPPVW} for the interest in negative
answer, which would imply that there are violations of Bell
inequality involving POVM's only for Alice or Bob, but not both.
On the other hand, an affirmative answer would imply that Theorem
\ref{complemented} gives a $\sqrt{\log n}$- completely
complemented copy of $R^{\delta n}\cap C^{\delta n}$ into
$\ell_1^n(\ell_\infty^n)$. This would imply, in particular, that
the element $M$ in Theorem \ref{Theorem-construction} also
verifies property b) in Theorem \ref{Theorem 2}.
}\end{problem}


\subsection{The role of the maximally entangled state in violation of Bell inequalities}

Theorem \ref{Theorem 2} says that there exist violations of Bell
inequalities which can not be obtained from the maximally
entangled state. This is completely different from the case of
quantum correlations (see \cite{Tsirelson}). We will now show that
for positive Bell inequalities  and violation in dimension $n$,
the maximal entangled state plays a prominent role. Using standard
tools from interpolation theory we show that every pure state can
be written, up to a $\sqrt{\log n}$ factor, as a
\emph{superposition} of maximal entangled states in lower
dimensions.

\begin{lemma}\label{diagonal} Let $|\psi\rangle=\sum_{i=1}^n a_ie_i\in \ell_2^n$ be
such that $a_1\geq a_2\geq \cdots \geq a_n\geq 0$ and
$\sum_{i=1}^na_i^2\leq 1$. Then, there exist positive coefficients
$\beta_s\geq 0$ such that
 \[ |\psi\rangle=\sum_{s=1}^l \beta_s|\varphi_s\rangle\]
for some $l\in \mathbb{N}$ with $l\kl C n+\log n$, and some
vectors $(|\varphi_s\rangle)_{s=1}^l$ satisfying
\begin{enumerate}
 \item[a)] $\sum_{s=1}^l\beta_s\leq 2\sqrt{\log n}$;
 \item[b)] For every $s=1,\cdots ,l$,
 $|\varphi_s\rangle=\frac{1}{\sqrt{|A_s|}}\sum_{i\in A_s}|i\rangle$,
 where $A_s$ is a set contained in $\{1,\cdots , n\}.$
\end{enumerate}
\end{lemma}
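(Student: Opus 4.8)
The plan is to realize $|\psi\rangle$ as the telescoping sum coming from the ``layer‑cake'' decomposition of the decreasing sequence $(a_i)$, using only the maximally entangled states supported on the initial segments $\{1,\dots,k\}$. Concretely, I would set $a_{n+1}:=0$ and, for $k=1,\dots,n$, put $A_k:=\{1,\dots,k\}$, $|\varphi_k\rangle:=\tfrac{1}{\sqrt k}\sum_{i=1}^k|i\rangle=\tfrac{1}{\sqrt{|A_k|}}\sum_{i\in A_k}|i\rangle$, and $\beta_k:=\sqrt k\,(a_k-a_{k+1})$. Monotonicity of $(a_i)$ gives $\beta_k\ge 0$, and since we only ever use $k\le n$ of these vectors (we may of course discard the indices with $\beta_k=0$) the bound $l\le n\le Cn+\log n$ and property b) are immediate. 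The identity $|\psi\rangle=\sum_{k=1}^n\beta_k|\varphi_k\rangle$ is then checked by exchanging the order of summation: the coefficient of $|i\rangle$ on the right is $\sum_{k\ge i}\tfrac{\beta_k}{\sqrt k}=\sum_{k\ge i}(a_k-a_{k+1})=a_i$, where the last step telescopes because $a_{n+1}=0$.

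The only genuine point is a), i.e. estimating $\sum_{k=1}^n\beta_k=\sum_{k=1}^n\sqrt k\,(a_k-a_{k+1})$. Here I would apply summation by parts; since the boundary term at $k=n$ vanishes (as $a_{n+1}=0$) and $\sqrt1-\sqrt0=1$, this rewrites the sum as $\sum_{k=1}^n(\sqrt k-\sqrt{k-1})\,a_k$. Using $\sqrt k-\sqrt{k-1}=(\sqrt k+\sqrt{k-1})^{-1}\le k^{-1/2}$ and then Cauchy--Schwarz, one gets $\sum_{k=1}^n\beta_k\le\sum_{k=1}^n\tfrac{a_k}{\sqrt k}\le\big(\sum_{k=1}^n a_k^2\big)^{1/2}\big(\sum_{k=1}^n\tfrac1k\big)^{1/2}\le(1+\log n)^{1/2}$, which is $\le 2\sqrt{\log n}$ for $n\ge 2$. (If some $a_i$ vanish, say $a_i=0$ for $i>r$, one simply runs the whole argument with $r$ in place of $n$, which only improves every bound; and any discrepancy in the convention of $\log$ is absorbed into the same numerical estimate.)

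I do not expect a real obstacle: the statement is elementary, in line with the surrounding discussion. The only places that require a little care are the summation‑by‑parts identity — in particular that the $k=1$ term is accounted for correctly, contributing exactly $a_1$ and matching $\sqrt1-\sqrt0$ — and the final numerical inequality $(1+\log n)^{1/2}\le 2(\log n)^{1/2}$, which one should check holds in the relevant range $n\ge 2$ for the convention of $\log$ used in the paper. This last estimate, $\sum_k a_k k^{-1/2}\le\sqrt{H_n}\,\|(a_k)\|_2$, is exactly the (elementary) form in which interpolation/Hölder enters, and it is what produces the $\sqrt{\log n}$ loss advertised in the statement.
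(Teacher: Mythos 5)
Your proof is correct, but it takes a genuinely different route from the paper. The paper splits the index set into dyadic blocks $I_k=[2^{k-1},2^k-1]$, factors out the largest coefficient $a_{2^{k-1}}$ of each block, and then invokes Carath\'eodory's theorem to write the resulting in-block profile (entries in $[0,1]$) as a convex combination of $\{0,1\}$-vectors; this produces at most roughly $n+\log n$ vectors, each a normalized indicator of a subset of a dyadic interval, and the bound $\sum_s\beta_s\preceq\sqrt{\log n}$ comes from Cauchy--Schwarz over the $\log n$ blocks. You instead use the layer-cake/Abel decomposition along initial segments: $\beta_k=\sqrt{k}\,(a_k-a_{k+1})$, $|\varphi_k\rangle=k^{-1/2}\sum_{i\le k}|i\rangle$, with the telescoping check of the identity, and you control $\sum_k\beta_k$ by summation by parts, $\sqrt{k}-\sqrt{k-1}\le k^{-1/2}$, and Cauchy--Schwarz against the harmonic sum, giving $\sqrt{1+\log n}\le 2\sqrt{\log n}$ for $n\ge 2$ (the same $n\ge2$ caveat is implicit in the paper's final numerical step, so this is not a defect). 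Your argument is more elementary (no Carath\'eodory), uses at most $n$ terms rather than $Cn+\log n$, and has the pleasant side effect that in the application to Theorem \ref{positive coefficients} the vectors are supported on initial segments, i.e.\ they are literally the canonical maximally entangled states $|\psi_k\rangle$ rather than maximally entangled states on arbitrary subsets $A_s$ (which in the paper must be identified with $|\psi_{|A_s|}\rangle$ via a local permutation); the paper's version buys nothing extra here beyond the dyadic support structure, which is not used later.
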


\begin{proof}[Proof of Lemma \ref{diagonal}]
We use the dyadic intervals $I_k=[2^{k-1},2^k-1]$ for $k=1,\cdots,
\log (n+1)$ and note $|I_k|=2^{k-1}$ for every $k$. Then, we can
write
\begin{align*}
|\psi\rangle=\sum_{k=1}^{\log {(n+1)}}(\sum_{i\in
I_k}a_i|i\rangle)=\sum_{k=1}^{\log
(n+1)}a_{2^{k-1}}2^{\frac{k-1}{2}}(\sum_{i\in
I_k}\frac{a_i}{a_{2^{k-1}}}|i\rangle)2^{-\frac{k-1}{2}}.
\end{align*}
First, note that
 \begin{align*}
 \sum_{k=1}^{\log {(n+1)}}a_{2^{k-1}}2^{\frac{k-1}{2}}&\leq   \sqrt{\log (n+1)}(\sum_{k=1}^{\log
 n}a_{2^{k-1}}^22^{k-1})^\frac{1}{2}\\
  & \leq \sqrt{2}\sqrt{\log (n+1)} (\sum_{i=1}^na_i^2)^\frac{1}{2}\leq
 \sqrt{2}\sqrt{\log (n+1)}.
 \end{align*}
On the other hand, we see that for fixed $k$, the set
$\{\frac{a_i}{a_{2^{k-1}}}: i\in I_k\}$ is a set of $2^{k-1}$
positive numbers of value lower or equal one. By  Caratheodory's
Theorem there are  positive convex combinations
$\sum_{j=1}^{2^{k-1}+1}\alpha_j^k=1$ such that
 \[ \frac{a_i}{a_{2^{k-1}}}\lel
 \sum_{j=1}^{2^{k-1}+1}\alpha_j^kA_j^k(i)\pl .\]
Here $A_j^k(i)\in \{0,1\}$ for every $k,j$ and $i$. Thus, we
obtain
\begin{align*}
 |\psi\rangle&=\sum_{k=1}^{\log
 (n+1)}a_{2^{k-1}}2^{\frac{k-1}{2}}(\sum_{i\in I_k}\sum_{j=1}^{2^{k-1}+1}\alpha_j^kA_j^k(i)|i\rangle)2^{-\frac{k-1}{2}}
 \\
 &=\sum_{k=1}^{\log (n+1)}\sum_{j=1}^{2^{k-1}+1}\alpha_j^ka_{2^{k-1}}2^{\frac{k-1}{2}}(\sum_{i\in
 I_k}A_j^k(i)|i\rangle)2^{-\frac{k-1}{2}}\pl .
\end{align*}
Given $k$ and $j$, we denote by $|A_j^k|={\rm card}\{i\in I_k:
A_j^k(i)=1\}$ the cardinality. Obviously, we have $|A_j^k|\leq
2^{k-1}=|I_k|$. This yields
\begin{align*}
 |\psi\rangle&=\sum_{k=1}^{\log
 (n+1)}\sum_{j=1}^{2^{k-1}+1}\alpha_j^ka_{2^{k-1}}2^{\frac{k-1}{2}}\frac{\sqrt{|A_j^k|}}{\sqrt{2^{k-1}}}\left(\frac{1}{\sqrt{|A_j^k|}}\sum_{i\in
 I_k}A_j^k(i)|i\rangle\right) \pl .
\end{align*}
By construction we have
\begin{align*}
 &\sum_{k=1}^{\log
(n+1)}\sum_{j=1}^{2^{k-1}+1}\alpha_j^ka_{2^{k-1}}2^{\frac{k-1}{2}}\frac{\sqrt{|A_j^k|}}{\sqrt{2^{k-1}}}\leq
 \sum_{k=1}^{\log
(n+1)}\sum_{j=1}^{2^{k-1}+1}\alpha_j^ka_{2^{k-1}}2^{\frac{k-1}{2}}\\
 &=\sum_{k=1}^{\log (n+1)}a_{2^{k-1}}2^{\frac{k-1}{2}}\leq \sqrt{2}\sqrt{\log
 (n+1)}\leq 2\sqrt{\log n}.
\end{align*}
Thus it remains to rename the double indices by $s(k,j)$ and to
use
$\beta_{s(j,k)}=\alpha_j^ka_{2^{k-1}}2^{\frac{k-1}{2}}\frac{\sqrt{|A_j^k|}}{\sqrt{2^{k-1}}}$.
Then  $|\varphi_{s(j,k)}\rangle=\frac{1}{\sqrt{|A_j^k|}}\sum_{i\in
I_k}A_j^k(i)|i\rangle$ is a characteristic function with support in
$I_k$.\end{proof}

\begin{remark}
The general case in which the coefficients of $|\psi\rangle$ are
not necessary positive follows by decomposition in real and imaginary
and then positive and negative part.
\end{remark}

As an application of Lemma \ref{diagonal} we show that for games (even Bell inequalities satisfying a very weaker positivity condition) the largest violation is attained, up to a logarithmic factor, for a maximally entangled state.

\begin{theorem}\label{positive coefficients}

Let $M=(M_{a,b}^{x,y})_{a,b,x,y}$ be a Bell inequality with
positive coefficients. Suppose there exists an state $\rho$ acting
on a $n$-dimensional Hilbert space $H$ and verifying $\langle M,
\rho\rangle=C$. Then, there exists $k\leq n$ such that
 \begin{align}\label{posceof}
 \langle\psi_k| M |\psi_k\rangle\geq\frac{C}{4\log n},
 \end{align}
where $|\psi_k\rangle=\frac{1}{\sqrt{k}}\sum_{i=1}^k|ii\rangle$ is the maximally entangled state in dimension $k$.
\end{theorem}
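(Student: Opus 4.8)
The plan is to reduce to a pure state, to write that pure state as a superposition of low-dimensional maximally entangled states via Lemma~\ref{diagonal}, and then to exploit the positivity of the coefficients of $M$ through an operator Cauchy--Schwarz estimate.

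First I fix POVMs $\{E_x^a\}_{x,a}$ and $\{F_y^b\}_{y,b}$ on $H=H_A\otimes H_B$ realizing the value $C$, i.e. $\sum_{x,y,a,b} M_{a,b}^{x,y}\,\tr((E_x^a\otimes F_y^b)\rho)=C$. This expression is affine in $\rho$, so since $\rho$ is a convex combination of pure states there is a unit vector $|\psi\rangle\in H$ with $\sum_{x,y,a,b} M_{a,b}^{x,y}\langle\psi|E_x^a\otimes F_y^b|\psi\rangle\geq C$. Taking the Schmidt decomposition and absorbing the two local unitaries carrying the canonical bases to the Schmidt bases into the POVMs (which leaves them POVMs), I may assume $|\psi\rangle=\sum_{i=1}^n a_i|ii\rangle$ with $a_1\geq\cdots\geq a_n\geq 0$ and $\sum_i a_i^2=1$.

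Now I apply Lemma~\ref{diagonal} to $\sum_i a_i e_i\in\ell_2^n$: there are subsets $A_s\subseteq\{1,\dots,n\}$ and coefficients $\beta_s\geq 0$ with $\sum_s\beta_s\leq 2\sqrt{\log n}$ such that $a_i=\sum_s\beta_s|A_s|^{-1/2}\mathbf{1}_{A_s}(i)$; lifting this to the bipartite vector gives $|\psi\rangle=\sum_s\beta_s|\Phi_s\rangle$ with $|\Phi_s\rangle=|A_s|^{-1/2}\sum_{i\in A_s}|ii\rangle$, a maximally entangled state of dimension $k_s:=|A_s|\leq n$. For each $x,y,a,b$ the operator $X:=E_x^a\otimes F_y^b$ is positive, so writing $X=Y^*Y$ and using the triangle inequality followed by the weighted Cauchy--Schwarz inequality,
\[ \langle\psi|X|\psi\rangle=\|Y|\psi\rangle\|^2\leq\Big(\sum_s\beta_s\,\|Y|\Phi_s\rangle\|\Big)^2\leq\Big(\sum_s\beta_s\Big)\sum_s\beta_s\,\langle\Phi_s|X|\Phi_s\rangle . \]
Multiplying by $M_{a,b}^{x,y}\geq 0$, summing over $x,y,a,b$, and writing $V_s:=\sum_{x,y,a,b}M_{a,b}^{x,y}\langle\Phi_s|E_x^a\otimes F_y^b|\Phi_s\rangle$, this yields
\[ C\leq\sum_{x,y,a,b}M_{a,b}^{x,y}\langle\psi|E_x^a\otimes F_y^b|\psi\rangle\leq\Big(\sum_s\beta_s\Big)\sum_s\beta_s V_s\leq\Big(\sum_s\beta_s\Big)^2\max_s V_s\leq 4\log n\cdot\max_s V_s . \]

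Finally I pick an index $s$ with $V_s\geq C/(4\log n)$ and set $k:=k_s\leq n$. If $P_A,P_B$ are the orthogonal projections onto $\mathrm{span}\{|i\rangle:i\in A_s\}$ in the two factors, then $\{P_AE_x^aP_A\}_a$ and $\{P_BF_y^bP_B\}_b$ are POVMs on $\mathrm{ran}(P_A)$ and $\mathrm{ran}(P_B)$ (using $\sum_a P_AE_x^aP_A=P_A$), and $(P_A\otimes P_B)|\Phi_s\rangle=|\Phi_s\rangle$, so reindexing $A_s$ to $\{1,\dots,k\}$ identifies these spaces with $\ell_2^k$, the compressed POVMs with POVMs on $\ell_2^k$, and $|\Phi_s\rangle$ with $|\psi_k\rangle$. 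Hence $\langle\psi_k|M|\psi_k\rangle\geq V_s\geq C/(4\log n)$, which is (\ref{posceof}). The only real obstacle is that $\langle\psi|E_x^a\otimes F_y^b|\psi\rangle$ contains cross terms $\langle\Phi_t|E_x^a\otimes F_y^b|\Phi_s\rangle$ that positivity of the coefficients alone does not control; the operator Cauchy--Schwarz step---valid precisely because $E_x^a\otimes F_y^b\geq 0$ and $M_{a,b}^{x,y}\geq 0$---is what bypasses this.
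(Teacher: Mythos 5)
Your proof is correct and follows essentially the same route as the paper: reduce by convexity and Schmidt decomposition to a diagonal pure state, decompose it via Lemma~\ref{diagonal} as a positive superposition of maximally entangled states of total weight at most $2\sqrt{\log n}$, and control the cross terms through positivity and Cauchy--Schwarz. The only (harmless) difference is organizational -- you apply an operator Cauchy--Schwarz bound term by term and then pigeonhole over the diagonal values $V_s$, whereas the paper pigeonholes first on the off-diagonal quantity $\langle\varphi_{q_0}|\tilde M|\varphi_{p_0}\rangle$ and then applies Cauchy--Schwarz to the positive operator $\tilde M$ -- and you spell out the compression of the POVMs to the span of $A_s$, which the paper leaves implicit.
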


\begin{proof} By hypothesis there exist some POVM's $\{E_x^a\}_{x,a},
\{F_y^b\}_{y,b}$ acting on a Hilbert space $H$ of dimension $n$ and
a state $\rho:H\otimes_2 H\rightarrow H\otimes_2 H$ such that
$\sum_{x,y,a,b}M_{x,y}^{a,b}tr(E_x^a\otimes F_y^b\rho)= C$. This implies that the operator  $\tilde{M}=\sum_{x,y,a,b}M_{x,y}^{a,b} E_x^a\otimes F_y^b$ is positive (and this is the weaker positivity assumption we need).  By a convexity argument, we may
assume that $\rho$ is a  pure state $\rho=|\psi\rangle\langle\psi|$ on $H\otimes_2 H$. Using the
Hilbert-Schmidt decomposition, we may even assume
that $|\psi\rangle$ is a diagonal element with positive coefficients given by the singular values $|\psi\rangle=\sum_i \al_i |ii\rangle$. According to  Lemma \ref{diagonal} we find a decomposition
 \[  |\psi\rangle\lel  \sum_{p}\beta_p |\varphi_p\rangle\]
where $(\beta_p)_p$ are positive coefficients satisfying
$\sum_p \beta_p\leq 2\sqrt{\log n}$ and
 \[  |\varphi_p\rangle\lel \frac{1}{\sqrt{|A_p|}}\sum_{i\in
 A_p}|pp\rangle \]
holds for all $p$ and some sets $A_p\subset \{1,\cdots,
n\}$. Thus we find  $p_0$ and $q_0$ such that
\begin{equation}\label{assymetric}
\langle
\varphi_{q_0}|\tilde{M}|\varphi_{p_0}\rangle\geq\frac{C}{4\log n}.
\end{equation}
Now, we use the positivity and the  Cauchy-Schwartz inequality:
\begin{equation}\label{Cauchy-Schwartz}
\frac{C^2}{4^2(\log n)^2}\leq\langle
\varphi_{q_0}|\tilde{M}|\varphi_{p_0}\rangle^2\leq \langle
\varphi_{q_0}|\tilde{M}|\varphi_{q_0}\rangle\langle
\varphi_{p_0}|\tilde{M}|\varphi_{p_0}\rangle.
\end{equation}
Thus \eqref{posceof} must be satisfied by $\langle
\varphi_{q_0}|\tilde{M}|\varphi_{q_0}\rangle$ or $\langle
\varphi_{p_0}|\tilde{M}|\varphi_{p_0}\rangle$.
\end{proof}




The previous theorem is particularly interesting once we know that
there exist violations of Bell inequalities with positive
coefficients (actually games) of polynomial order in the
dimension (\cite{Raz}, \cite{Brassard-review}, \cite{BGW}).

\begin{remark}\label{asymmaxent} For  arbitrary $M$ we still have \eqref{assymetric}. In fact, previously we obtained large violation with $|q_0|=1$ and $|p_0|=n-1$. Furthermore, as a consequence of the Polarization identity, we find some  $k\in \{0,1,2,3\}$ such that
\begin{align*}
\langle \xi|M|\xi\rangle\geq\frac{C}{4^2\log n}
\end{align*} for the non normalized state $|\xi\rangle=|\varphi_A\rangle+i^k|\varphi_B\rangle$.
\end{remark}

\begin{remark}  It turns out that the operator
$\sum_{x,y,a,b}M_{x,y}^{a,b} E_x^a\otimes F_x^a$ is positive for
all POVMs if and only if $M_{x,y}^{a,b}$ represents a positive
element in $NSG\ten NSG$ (see Subsection \ref{NSG}).
\end{remark}


\section{Geometric interpretation of violation of Bell inequalities and the $NSG$ space}\label{geometric}

\subsection{Geometric interpretation of violation of Bell inequalities}\label{geometric_int}

We have seen in Section \ref{State of the Main Results} that for certain applications in QIT (see  \cite{DKLR}, \cite{JPPVW}, and \cite{JPPVW2}) the value  $\sup_{P\in\mathcal{Q}}\nu(P)$ is very interesting. In this section we want to provide a geometric interpretation of this value in terms of convex sets of probabilities. We begin recalling some basic notions of convex geometry.

\begin{definition}\label{Minkowski}
\

\begin{enumerate}
\item[i)] We say that a set $S\subset \R^k$ is \emph{absolutely convex} if for every $x_1, x_2\in S$ and $\lambda_1,\lambda_2\in \R$ with $|\lambda_1|+|\lambda_2|\leq 1$ we have $\lambda_1x_1+\lambda_2x_2\in S$.
\item[ii)]  Let $S\subset \mathbb{R}^k$ be a convex set. Then
     \[ \rho_S(x):=\inf\{\lambda\geq 0: x\in \lambda S\} \]
     is the Minkowski functional of $S$.
\end{enumerate}
\end{definition}

The bipolar theorem tells us that
 \begin{equation}\label{bipolar}
 \rho_S(x) \lel \sup\{ B(x): \sup_{s\in S} B(s)\le 1\} \pl .
 \end{equation}





We say that the set $S$ is \emph{absorbing} if we have $\rho_S(x)<\infty$ for every $x\in \R^k$. Clearly $S$ is always absorbing if it is restricted to the linear space $[S]$ generated by $S$. In particular, given two bounded absolutely convex sets $S\subseteq Q\subset [S]\subset \R^k$, we may consider the homothetic distance:

\begin{align}\label{distance sets}
d(S,Q):= \sup_{x\in Q}\rho_S(x)=\inf\{\lambda\geq 0: Q\subseteq \lambda S\}.
\end{align}

\begin{center}
\begin{figure}[h]
  \includegraphics[width=3cm]{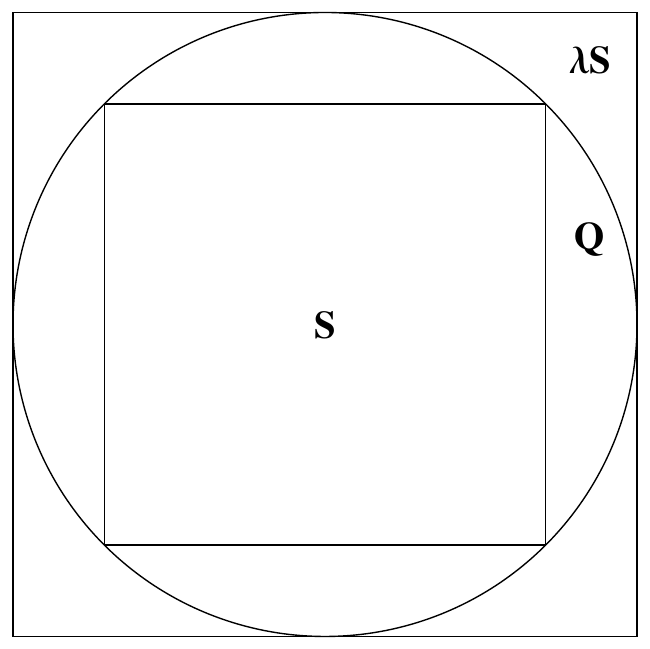}
  \caption{Geometric meaning of $d(S,Q)$.}\label{figure2}
\end{figure}
\end{center}
\vspace{-0.5cm}
Equation (\ref{Minkowski}) tells us that we can also define the previous distance by duality. Indeed, if we consider

\begin{align}\label{distance functionals}
\zeta(B)=\frac{\sup_{x\in Q} B(x)}{\sup_{y\in S}B(y)}
\end{align}
for every $B\in \R^k$, we have $\sup_{B\in\R^k}\zeta(B)=d(S,Q)$.

\begin{remark}
The fact that  $ Q\subset [S]$ guarantees that $\sup_{y\in S}B(y)=0$ implies $\sup_{x\in Q}B(x)=0$. So we just define $\frac{0}{0}=0$ in Equation (\ref{distance functionals}).
\end{remark}

It is easy to see that the distance $d$ in Equation (\ref{distance sets}) is the one considered in the case of correlation Bell inequalities (see \cite{Tsirelson}). In fact, it can be seen that the set of classical (resp. Quantum) correlation matrices  $M_C$ (resp. $M_Q$) is exactly the unit ball of $\ell_\infty^N\otimes_\pi \ell_\infty^N$ (resp. $\ell_\infty^N\otimes_{\gamma_2} \ell_\infty^N$) (see for instance \cite{Def} for the definition of such norms). Thus, in the case of correlation Bell inequalities the physical definition of violation of Bell inequalities coincides with the natural geometric definition. In particular, Equation (\ref{distance functionals}) coincides with the largest violation of the Bell inequality $B$ and the maximum of these values is the distance described in Equation (\ref{distance sets}): $d(M_C,M_Q)$.


General Bell inequalities (where we work with the whole probability distribution) presents some additional problems  because the corresponding sets are not centrally symmetric. Indeed, our sets $\mathcal{L}$ and $\mathcal{Q}$ are contained in a $d=(NK-N+1)^2-1$ dimensional affine subspace $A$ of $\R^{ N^2K^2}$ (see \cite{Tsirelson}). Actually, it is not difficult to see that $\mathcal{L}\varsubsetneq \mathcal{Q}\varsubsetneq\mathcal{C}\subset A={\rm Aff}(\mathcal{L})$ with equality in the last inclusion if we consider only those elements in ${\rm Aff}(\mathcal{L})$ that are probability distributions. Here, $${\rm Aff}(\mathcal{L})=\left\{\sum_{i=1}^N \alpha_iP_i:N\in \N, P_i\in \mathcal{L},\alpha_i\in\R, \sum_{i=1}^N\alpha_i=1\right\}$$ denotes the affine hull of the space $\mathcal{L}$. Thus, in order to define a ``good distance'' in this situation we must be more careful.

\

Consider a subset $S$ contained in an affine subspace ${\rm Aff}(S)\subset \R^k$ of dimension $d$. It is standard in convex geometry to consider the \emph{absolutely convex hull} of $S$:

\begin{align*}
\tilde{S}:=conv(S\cup -S).
\end{align*}$\tilde{S}$ is an absolutely convex set contained in a linear space of dimension $d+1$. It is easy to see that
\begin{align*}
\rho_{\tilde{S}}(x)= \inf\left\{\sum_{i=1}^N |\alpha_i|: x=\sum_{i=1}^N \alpha_is_i:N\in \N, s_i\in S,\alpha_i\in\R\right\}
\end{align*} for every $x\in [\tilde{S}]$.



Therefore, if we have two sets $S\subseteq Q\subset {\rm Aff}(S)$, we can naturally define a distance between them by using their absolutely convex hull $\tilde{S}\subseteq \tilde{Q}$ with the same geometric interpretation as in Equation (\ref{distance sets}):
\begin{align*}
d_1(S,Q)= d(\tilde{S}, \tilde{Q}).
\end{align*}
\
\
\

\vspace{-2.5cm}
\begin{center}\setlength{\unitlength}{0.5mm}
\begin{picture}(0,90)(0,0)
\put(-50,0){\vector(1,0){100}}
\put(0,-50){\vector(0,1){100}}
\drawline(20,40)(40,20)\drawline(-40,-20)(-20,-40)\drawline(-40,-20)(20,40)\drawline(40,20)(-20,-40)
\drawline(5,25)(25,5)\drawline(-25,-5)(-5,-25)\drawline(-25,-5)(5,25)\drawline(25,5)(-5,-25)
\put(16,17){\tiny$\displaystyle S$}
\put(-23.5,-21.5){\tiny $\displaystyle -S$}
\put(8.5,23){\tiny $\displaystyle Q$}
\put(-31.5,-14.5){\tiny $\displaystyle -Q$}
\put(4,6){\tiny $\displaystyle \overline{S}$}
\put(30,18){\tiny $\displaystyle \lambda\overline{S}$}
\put(-5,10){\tiny $\displaystyle \overline{Q}$}
\put(45,-12){\tiny $\displaystyle {\rm Aff}(S)$}
\dashline[3]{2}(-20,50)(50,-20)
\thicklines
\drawline(10,20)(20,10)
\drawline(-20,-10)(-10,-20)
\drawline(-20,-10)(10,20)
\drawline(20,10)(-10,-20)
\put(0,-55){\makebox(0,0){Figure 3. Geometric meaning of $d_1(S,Q)$.}}
\end{picture}
\end{center}
\vspace{2cm}

\

As before, the dual point of view defines a distance in terms of functionals. That is, for any linear functional $B$ on $\R^k$, we have

\begin{align*}
\zeta_1(B)=\frac{\sup_{q\in\tilde{Q}}B(q)}{\sup_{s\in\tilde{S}}B(s)}=\frac{\sup_{q\in Q}|B(q)|}{\sup_{s\in S}|B(s)|}.
\end{align*}

We immediately deduce that this is the distance that we have considered in Section \ref{State of the Main Results} in the particular case of  $S=\mathcal L$ and $Q=\mathcal Q$. Specifically,

\begin{align*}
\sup_{P\in\mathcal{Q}}\nu(P)=d_1(\mathcal L, \mathcal Q),
\end{align*}
and for every linear functional $M$ on $\R^{K^2N^2}$,
\begin{align*}
LV(M)=\zeta_1(M).
\end{align*}
Therefore, Theorem \ref{Theorem 1} can be stated as follows:

\begin{theorem}
With the same notation as in Section \ref{State of the Main Results}, if $N=n$, $K=n$ and $d=n$,
$$d_1(\mathcal L,\mathcal Q)\succeq \frac{\sqrt{n}}{\log n}.$$
\end{theorem}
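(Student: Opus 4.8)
The plan is to translate Theorem \ref{Theorem 1} through the geometric dictionary just established in this section. Recall that we have shown $\sup_{P\in\mathcal{Q}}\nu(P)=d_1(\mathcal{L},\mathcal{Q})$ and, dually, $LV(M)=\zeta_1(M)$ for every linear functional $M$ on $\R^{K^2N^2}$; these identifications hold for any choice of the parameters $N$, $K$ and the underlying Hilbert space dimension $d$, since the only inputs are the bipolar theorem and the description of $\rho_{\tilde S}$ as the $\ell_1$-type Minkowski functional of the absolutely convex hull $\tilde{S}=\mathrm{conv}(S\cup -S)$.

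First I would fix $n$ and apply Theorem \ref{Theorem 1} with $N=K=d=n$ to obtain a quantum probability distribution $P\in\mathcal{Q}$, realized with $n$ inputs, $n$ outputs and Hilbert spaces of dimension $n$, such that $\nu(P)\succeq \frac{\sqrt{n}}{\log n}$ with a universal implied constant. Since by the identification above $d_1(\mathcal{L},\mathcal{Q})=\sup_{P'\in\mathcal{Q}}\nu(P')\geq \nu(P)$, the desired estimate $d_1(\mathcal{L},\mathcal{Q})\succeq \frac{\sqrt{n}}{\log n}$ follows at once. Equivalently, on the dual side, one may take the Bell inequality $M$ witnessing Theorem \ref{Theorem 1} and note $d_1(\mathcal{L},\mathcal{Q})\geq \zeta_1(M)=LV(M)\succeq \frac{\sqrt{n}}{\log n}$.

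There is no genuine obstacle at this point: all of the analytic work has already been carried out in the proof of Theorem \ref{Theorem 1} (through Theorem \ref{Theorem min-epsilon} and the complementation result Theorem \ref{complemented}), and the present statement is merely a repackaging of it in the language of homothetic distances between the convex bodies $\mathcal{L}$ and $\mathcal{Q}$. The only thing one has to be slightly careful about is that $d_1$ is defined via the absolutely convex hulls $\tilde{\mathcal{L}}\subseteq\tilde{\mathcal{Q}}$ rather than the (non-centrally-symmetric) sets themselves; but this was precisely arranged in the discussion preceding the statement, where it was checked that $\zeta_1(B)=\frac{\sup_{q\in Q}|B(q)|}{\sup_{s\in S}|B(s)|}$, which for $S=\mathcal{L}$, $Q=\mathcal{Q}$ is exactly $LV(B)=\nu$-type quotient appearing in Section \ref{State of the Main Results}.
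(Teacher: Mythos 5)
Your proposal is correct and follows exactly the paper's route: the theorem is simply Theorem \ref{Theorem 1} restated through the identification $d_1(\mathcal L,\mathcal Q)=\sup_{P\in\mathcal{Q}}\nu(P)$ (equivalently, via $\zeta_1(M)=LV(M)$) established in the discussion preceding the statement. Your additional remark on the role of the absolutely convex hulls $\tilde{\mathcal L}\subseteq\tilde{\mathcal Q}$ is precisely the point the paper relies on, so nothing is missing.
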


However, from a purely geometric point of view $d_1(\mathcal L,\mathcal Q)$ presents two problems:

\begin{enumerate}

\item[a)] The sets we are using to define the distance, $\tilde{Q}$, $\tilde{S}$, are ``much'' bigger than the sets $S$ and $Q$. In particular, the previous definition involves to consider an extra dimension.

\item[b)] In order to measure distances between two affine subspaces we would like to have a measure invariant under translations. The previous one does not verify this property.
\end{enumerate}

There exists a natural way to obtain a completely convex set from an affine convex subset $S$ which solves the previous problems. Indeed, consider the set
\begin{align*}
\hat{S}=S-S.
\end{align*}
\vspace{-2.5cm}
\begin{center}\setlength{\unitlength}{0.5mm}
\begin{picture}(0,90)
\put(-50,0){\vector(1,0){100}}
\put(0,-50){\vector(0,1){100}}
\dashline[3]{2}(-20,50)(50,-20)
\dashline[3]{2}(-35,35)(35,-35)
\thicklines
\drawline(5,25)(25,5)
\drawline(-20,20)(20,-20)
\put(16,17){\tiny $\displaystyle S$}
\put(2,2){\tiny $\displaystyle \hat{S}$}
\put(45,-12){\tiny $\displaystyle \text{Aff}(S)$}
\put(30,-28){\tiny $\displaystyle (\text{Aff}(S))_{lin}$}
\thicklines
\put(0,-55){\makebox(0,0){Figure 4. Geometric meaning of $\hat{S}$.}}
\end{picture}
\end{center}
\vspace{3.5cm}

The new absolutely convex set $\hat{S}$ is contained in a $d$-dimensional linear space and it is invariant under translations of $S$. As before, given two convex sets $S\subseteq Q\subset {\rm Aff}(S)$, we can naturally define a distance between them by using the sets $\hat{S}\subseteq\hat{Q}$ with the same geometric interpretation as in Equation (\ref{distance sets}):

\begin{align*}
d_2 (S, Q)=d(\hat{S}, \hat{Q}).
\end{align*}

Furthermore, the dual formulation of $d_2$ defines a very nice distance in terms of functionals. Given an affine subset $S$ of $\R^k$, for any linear functional $\Psi$ on $\R^k$ we define the \emph{$\Psi$-width} of $S$ as
\begin{align*}
\omega_\Psi(S)=\sup_{x\in \hat{S}}\Psi(x)=\sup_{s\in S}\Psi(s)-\inf_{s'\in S}\Psi(s').
\end{align*}

\begin{center}
\setcounter{figure}{4}
\begin{figure*}[h]
  \includegraphics[width=7cm]{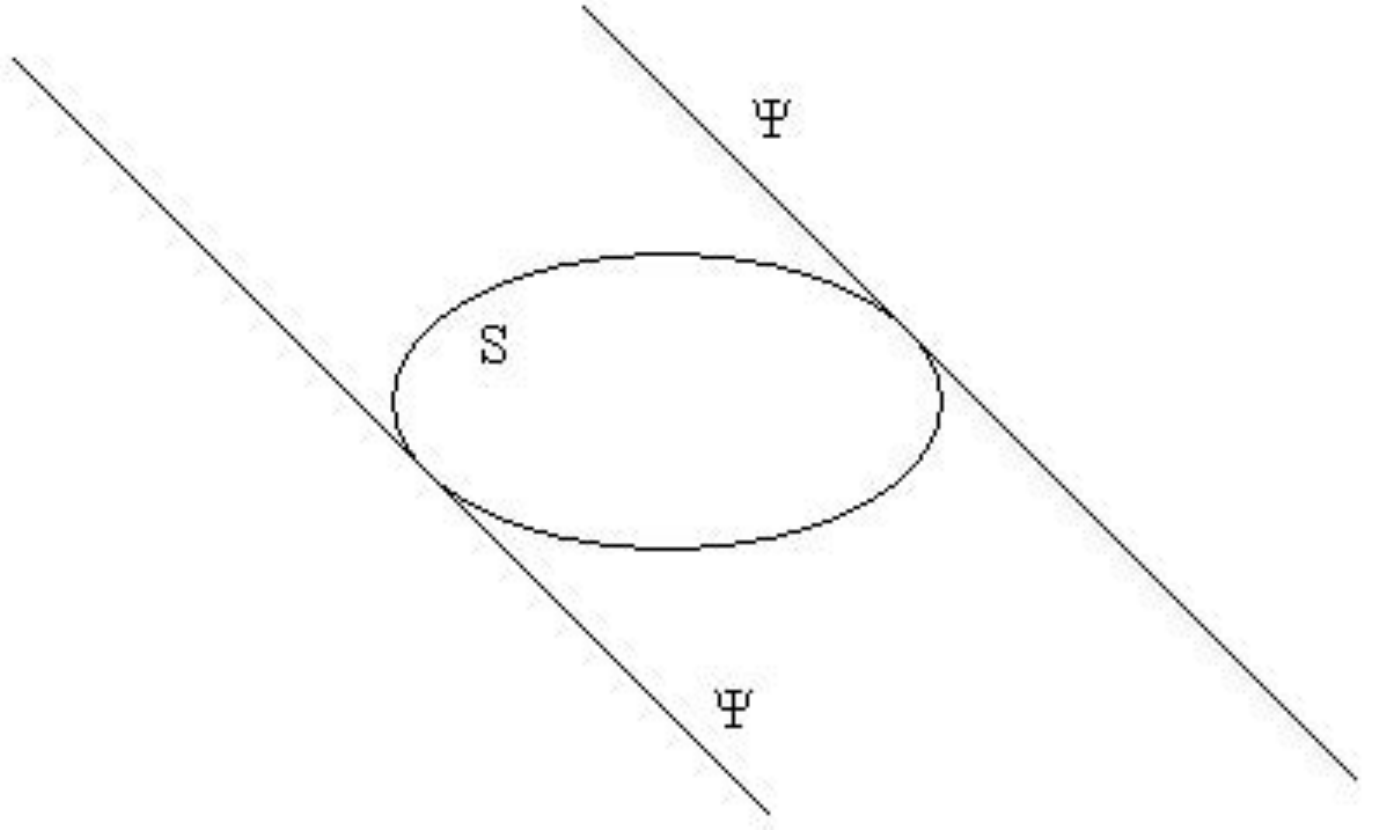}\\
  \caption{Geometric meaning of $\omega_\Psi(S)$.}
\end{figure*}
\end{center}

Then, we define the \emph{band-width distance} between $S$ and $Q$ as

\begin{align*}
d_2(S,Q)=\sup_{\Psi\in (R^k)^*}\frac{\omega_\Psi(Q)}{\omega_\Psi(S)}.
\end{align*}

\begin{center}
\setcounter{figure}{5}
\begin{figure*}[h]
  \includegraphics[width=7cm]{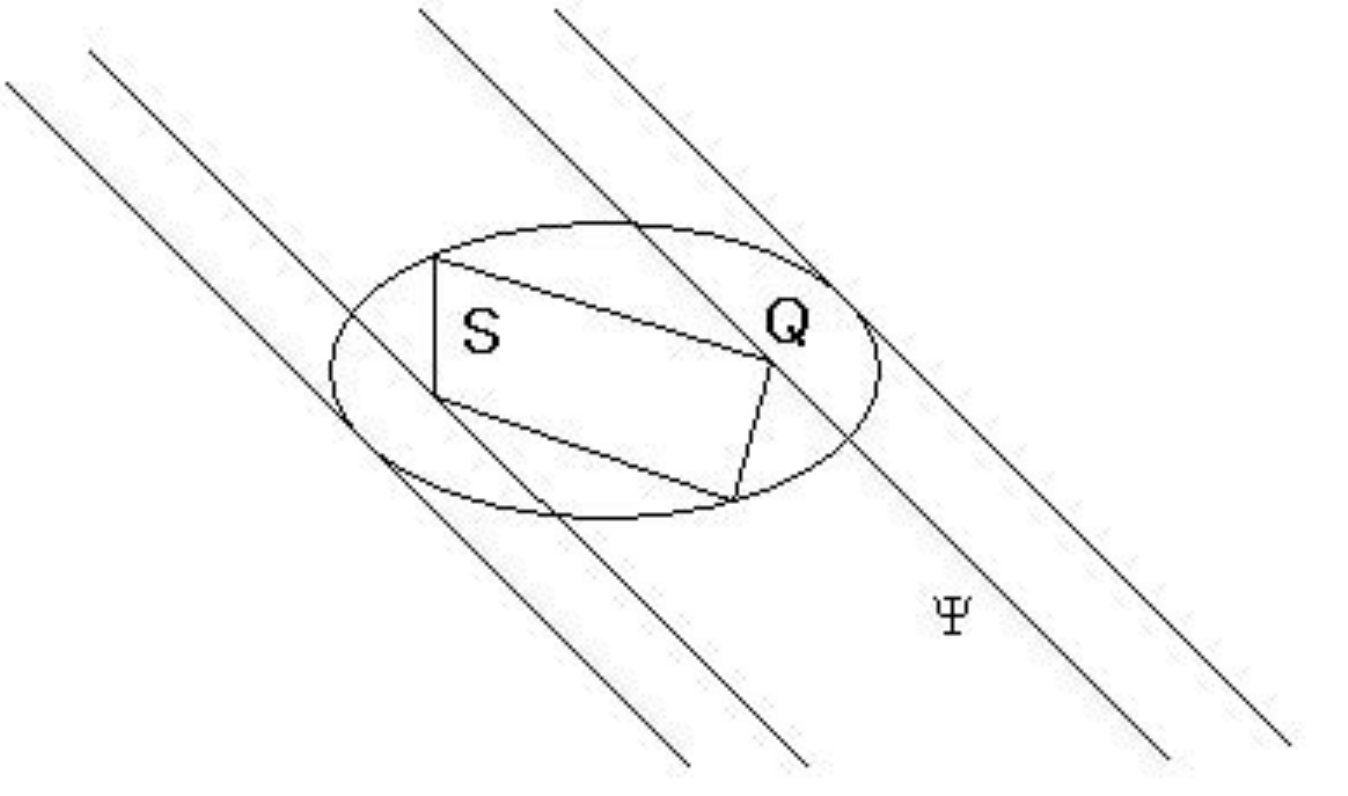}\\
  \caption{Geometrical meaning of $\frac{\omega_\Psi(Q)}{\omega_\Psi(S)}$.}
\end{figure*}
\end{center}

\begin{remark}
Again, $\omega_\Psi(S)=0$ implies $\omega_\Psi(Q)=0$, so we can define $\frac{0}{0}=0$.
\end{remark}

\begin{remark}\label{third distance}
In fact, there exists another standard way to obtain an absolutely convex set $S_h$ from $S$ in dimension $d$ (which is not, in general, invariant under translations). Consider the set
\begin{align*}
S_h=\tilde{S}\cap ({\rm Aff}(S))_{lin},
\end{align*}where $({\rm Aff}(S))_{lin}$ is the linear space associated to ${\rm Aff}(S)$ (that is, $({\rm Aff}(S))_{lin}={\rm Aff}(S)-x_0$ for any $x_0\in {\rm Aff}(S)$). However, in our particular situation both definitions are equivalent since
\begin{equation}\label{intersection}
\hat{S}= 2S_h.
\end{equation}
\end{remark}

\

Of course, we can not compare the sets $\tilde{S}$ and $\hat{S}$ because they have different dimensions. However, it is easy to see that they are comparable when we consider functionals which vanish at some point $x_0\in S$. We have:

\begin{lemma}\label{equivalence between d+1 and d}
Let $S$ be a set contained in an affine subspace ${\rm Aff}(S)$ of $\R^k$ and let $\Psi$ a linear functional on $\R^k$ such that $\Psi(x_0)=0$ for some $x_0\in S$. Then,

\begin{align}
\sup_{s\in \tilde{S}}\Psi(s)\leq \omega_\Psi(S)\leq 2\sup_{s\in \tilde{S}}\Psi(s).
\end{align}
In particular, given two sets $S\subseteq Q\subset Aff(S)\subset \R^k$ and given a linear functional $\Psi$  on $\R^k$ such that $\Psi(x_0)=0$ for some $x_0\in S$, we have

\begin{enumerate}
\item[1)] $\frac{\sup_{q\in \tilde{Q}}\Psi(q)}{\sup_{s\in \tilde{S}}\Psi(s)}\leq 2 \frac{\omega_\Psi(Q)}{\omega_\Psi(S)}$.

\item[2)] $\frac{\omega_\Psi(Q)}{\omega_\Psi(S)}\leq 2 \frac{\sup_{q\in \tilde{Q}}\Psi(q)}{\sup_{s\in \tilde{S}}\Psi(s)}$.

\end{enumerate}

\end{lemma}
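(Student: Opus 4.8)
The plan is to reduce the whole statement to a one-variable inequality between two nonnegative reals. Set $a=\sup_{s\in S}\Psi(s)$ and $b=\inf_{s\in S}\Psi(s)$. First I would observe that, since $\Psi$ is linear and the supremum of a linear functional over a convex hull agrees with its supremum over the generating set, $\sup_{s\in\tilde S}\Psi(s)=\max\{\sup_{s\in S}\Psi(s),\ \sup_{s\in -S}\Psi(s)\}=\max\{a,-b\}$, using $\sup_{s\in -S}\Psi(s)=-\inf_{s\in S}\Psi(s)=-b$. On the other hand $\omega_\Psi(S)=a-b$ directly from the definition.

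The key point is then the hypothesis that $\Psi(x_0)=0$ for some $x_0\in S$: this forces $b\le 0\le a$, so that $a$ and $-b$ are \emph{both} nonnegative. With this in hand, the target inequality $\sup_{s\in\tilde S}\Psi(s)\le\omega_\Psi(S)\le 2\sup_{s\in\tilde S}\Psi(s)$ is exactly $\max\{a,-b\}\le a+(-b)\le 2\max\{a,-b\}$: the left inequality holds because a sum of nonnegative numbers dominates each of its summands, and the right one because the sum of two numbers is at most twice their maximum. That settles the first display.

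For parts 1) and 2) I would apply this two-sided estimate separately to $S$ and to $Q$ — which is legitimate since $x_0\in S\subseteq Q$, so $\Psi$ also vanishes at a point of $Q$ — and then take quotients. For 1), $\sup_{q\in\tilde Q}\Psi(q)\le\omega_\Psi(Q)$ in the numerator together with $\sup_{s\in\tilde S}\Psi(s)\ge\tfrac12\omega_\Psi(S)$ in the denominator produces the factor $2$; for 2), the same two inequalities are used in the opposite direction, via $\omega_\Psi(Q)\le 2\sup_{q\in\tilde Q}\Psi(q)$ and $\omega_\Psi(S)\ge\sup_{s\in\tilde S}\Psi(s)$. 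The only degenerate case, $\sup_{s\in\tilde S}\Psi(s)=0$ (equivalently $a=b=0$, i.e. $\omega_\Psi(S)=0$), is absorbed by the convention $\tfrac00=0$ already adopted in the paper. I do not expect a genuine obstacle: the whole argument is elementary once one notices that the vanishing hypothesis is precisely what makes $a$ and $-b$ share the same sign.
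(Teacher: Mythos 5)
Your proof is correct and takes essentially the same route as the paper: your reduction to $\max\{a,-b\}\le a-b\le 2\max\{a,-b\}$ (using $b\le 0\le a$ from $\Psi(x_0)=0$) is just a reformulation of the paper's translation-by-$x_0$ and triangle-inequality steps, and your quotient argument for 1) and 2) is exactly what the paper dismisses as straightforward.
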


\begin{proof}
To prove the first inequality just note that
\begin{align*}
\sup_{s\in \tilde{S}}\Psi(s)=\sup_{s\in S}|\Psi(s)|=\sup_{s\in S}|\Psi(s-x_0)|\leq \sup_{t\in \hat{S}}|\Psi(t)|=\sup_{t\in \hat{S}}\Psi(t).
\end{align*}
On the other hand, the second inequality note that
\begin{align*}
\omega_\Psi(S)=\sup_{s\in S-S}|\Psi(s)|\leq \sup_{s\in S}|\Psi(s)|+ \sup_{s'\in S}|\Psi(s')|= 2\sup_{s\in S}|\Psi(s)|=2\sup_{s\in \tilde{S}}\Psi(s).
\end{align*}
The second part of the lemma follows straightforward.
\end{proof}

\begin{remark}
Note that, in particular, $\tilde{S}$ and $\hat{S}$ are comparable if $0\in S$.
\end{remark}

Now, it is trivial to check that the element $M$ given in Theorem \ref{Theorem 1}  (see construction in Section \ref{The construction}) verifies that $M(P_0)=0$ if we define $ P_0(a,b|x,y)=P(a|x)P(b|y)\in \mathcal L$ by $P(a|x)=0$ if $a=1,\cdots ,k$ and $P(k+1|x)=1$ for every $x=1,\cdots ,N$. Therefore, the previous lemma allows us to state Theorem \ref{Theorem 1} as follows:

\begin{theorem}
With the same notation as in Section \ref{State of the Main Results}, if $N=n$, $K=n$ and $d=n$,
$$d_2(\mathcal L,\mathcal Q)\succeq \frac{\sqrt{n}}{\log n}.$$
\end{theorem}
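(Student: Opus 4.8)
The plan is to obtain this statement as a direct consequence of Theorem \ref{Theorem 1} (in its dual form $LV(M)=\zeta_1(M)$) together with part 1) of Lemma \ref{equivalence between d+1 and d}. The only additional ingredient needed is that the Bell inequality witnessing the violation in Theorem \ref{Theorem 1} vanishes at a suitable local point, which is precisely the hypothesis required to invoke Lemma \ref{equivalence between d+1 and d}.

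First I would take the Bell inequality $M=\tilde M$ (appropriately normalized by its $\epsilon$-norm) produced by the construction of Section \ref{The construction}, so that by Theorem \ref{connection:min-epsilon} and Theorem \ref{Theorem-construction} it has $N=K=n$ inputs and outputs, uses Hilbert spaces of dimension $d=n$, and satisfies
\[ \zeta_1(M)=LV(M)=\sup_{Q\in\mathcal Q}\frac{|\langle M,Q\rangle|}{\sup_{P\in\mathcal L}|\langle M,P\rangle|}\succeq\frac{\sqrt n}{\log n}. \]
Next I would record that the coefficients $\tilde M_{x,y}^{a,b}$ are zero whenever $a=n+1$ or $b=n+1$; hence for the deterministic local distribution $P_0(a,b|x,y)=P(a|x)P(b|y)\in\mathcal L$ defined by $P(n+1|x)=1$ for every $x$ (and $P(a|x)=0$ for $a\le n$), one has $\langle M,P_0\rangle=0$, i.e. $M(P_0)=0$.

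With this vanishing property in hand I would apply part 1) of Lemma \ref{equivalence between d+1 and d} with $S=\mathcal L$, $Q=\mathcal Q$, $\Psi=M$ and $x_0=P_0\in\mathcal L$ (recall $\mathcal L\subset\mathcal Q\subset{\rm Aff}(\mathcal L)$), which gives
\[ \zeta_1(M)=\frac{\sup_{q\in\tilde{\mathcal Q}}M(q)}{\sup_{s\in\tilde{\mathcal L}}M(s)}\le 2\,\frac{\omega_M(\mathcal Q)}{\omega_M(\mathcal L)}, \]
so that $\frac{\omega_M(\mathcal Q)}{\omega_M(\mathcal L)}\ge\tfrac12\,\zeta_1(M)\succeq\frac{\sqrt n}{\log n}$. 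Since $d_2(\mathcal L,\mathcal Q)=\sup_{\Psi\in(\R^k)^*}\frac{\omega_\Psi(\mathcal Q)}{\omega_\Psi(\mathcal L)}\ge\frac{\omega_M(\mathcal Q)}{\omega_M(\mathcal L)}$, the claimed bound $d_2(\mathcal L,\mathcal Q)\succeq\frac{\sqrt n}{\log n}$ follows.

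I do not expect a genuine obstacle here: the statement is a reformulation of Theorem \ref{Theorem 1}, and all the substantive work lies in Theorem \ref{Theorem 1} and Lemma \ref{equivalence between d+1 and d}. The only point deserving (minimal) attention is checking $M(P_0)=0$ so that the hypothesis of Lemma \ref{equivalence between d+1 and d} is met — which is immediate from the fact that every coefficient indexed by $a=n+1$ or $b=n+1$ vanishes — and keeping the parameter bookkeeping ($N=K=d=n$) consistent with the version of $M$ being used, which is exactly the content of Theorem \ref{Theorem 1}.
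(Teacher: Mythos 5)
Your proposal is correct and follows the paper's own route exactly: the paper likewise takes the Bell functional $M$ from the construction behind Theorem \ref{Theorem 1}, observes that $M(P_0)=0$ for the deterministic local distribution concentrated on the extra output (since all coefficients with $a=n+1$ or $b=n+1$ vanish), and then applies part 1) of Lemma \ref{equivalence between d+1 and d} to pass from $\zeta_1(M)=LV(M)\succeq \sqrt{n}/\log n$ to the band-width distance $d_2(\mathcal L,\mathcal Q)$ at the cost of a factor $2$. No gaps.
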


In fact, we can get an improvement of Lemma \ref{equivalence between d+1 and d}. Indeed, in our particular situation the distances $d_1$ and $d_2$ are equivalent in the following sense:

\begin{lemma}
The following statements hold:
\begin{enumerate}
\item[a)] If $d_1(\mathcal L,\mathcal Q)\leq \lambda$, then $d_2(\mathcal L,\mathcal Q)\leq 2\lambda.$

\item[b)] If $d_2(\mathcal L,\mathcal Q)\leq \lambda$, then $d_1(\mathcal L,\mathcal Q)\leq 4\lambda +1.$
\end{enumerate}

\end{lemma}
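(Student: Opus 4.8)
The plan is to bridge the two distances through Lemma \ref{equivalence between d+1 and d}, whose hypothesis forces one to work with functionals vanishing at a point of $\mathcal L$; so the preliminary step is to see that in both dual formulations one may restrict to such functionals. Fix once and for all $x_0\in\mathcal L$ and let $\phi$ be the linear functional $\phi(P)=\sum_{a,b}P(a,b|1,1)$ on $\R^{N^2K^2}$. Since every element of $\mathcal L$ is a probability distribution and affine combinations preserve normalization, $\phi\equiv 1$ on ${\rm Aff}(\mathcal L)$; in particular $0\notin{\rm Aff}(\mathcal L)$. Given any linear functional $\Psi$ on $\R^{N^2K^2}$, set $\Psi_0=\Psi-\Psi(x_0)\phi$. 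Then $\Psi_0$ is linear, $\Psi_0(x_0)=0$, and on ${\rm Aff}(\mathcal L)\supseteq\mathcal L\cup\mathcal Q$ one has $\Psi_0=\Psi-\Psi(x_0)$, so that $\omega_{\Psi_0}(\mathcal L)=\omega_\Psi(\mathcal L)$, $\omega_{\Psi_0}(\mathcal Q)=\omega_\Psi(\mathcal Q)$, while $\sup_{\tilde S}\Psi_0=\sup_{s\in S}|\Psi_0(s)|$ for $S\in\{\mathcal L,\mathcal Q\}$ because $\Psi_0$ is linear and $\tilde S=\mathrm{conv}(S\cup -S)$.

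For a), given $\Psi$ I would pass to $\Psi_0$ and apply part 2) of Lemma \ref{equivalence between d+1 and d} with this functional, $S=\mathcal L$ and $Q=\mathcal Q$, obtaining
\begin{align*}
\frac{\omega_\Psi(\mathcal Q)}{\omega_\Psi(\mathcal L)}=\frac{\omega_{\Psi_0}(\mathcal Q)}{\omega_{\Psi_0}(\mathcal L)}\le 2\,\frac{\sup_{q\in\mathcal Q}|\Psi_0(q)|}{\sup_{s\in\mathcal L}|\Psi_0(s)|}\le 2\,d_1(\mathcal L,\mathcal Q)\le 2\lambda ,
\end{align*}
where the middle inequality is the dual description of $d_1$ applied to the functional $\Psi_0$. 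Taking the supremum over $\Psi$ gives $d_2(\mathcal L,\mathcal Q)\le 2\lambda$.

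For b), given an arbitrary linear functional $M$ I would set $M_0=M-M(x_0)\phi$, so $M_0(x_0)=0$ and $M_0=M-M(x_0)$ on $\mathcal L\cup\mathcal Q$. Part 1) of Lemma \ref{equivalence between d+1 and d} applied to $M_0$, together with $\omega_{M_0}=\omega_M$ on subsets of ${\rm Aff}(\mathcal L)$, $d_2(\mathcal L,\mathcal Q)\le\lambda$, and the triangle inequality $\sup_{s\in\mathcal L}|M_0(s)|\le\sup_{s\in\mathcal L}|M(s)|+|M(x_0)|$, yields
\begin{align*}
\sup_{q\in\mathcal Q}|M_0(q)|\le 2\,\frac{\omega_{M}(\mathcal Q)}{\omega_{M}(\mathcal L)}\,\sup_{s\in\mathcal L}|M_0(s)|\le 2\lambda\bigl(\sup_{s\in\mathcal L}|M(s)|+|M(x_0)|\bigr)\le 4\lambda\sup_{s\in\mathcal L}|M(s)| ,
\end{align*}
using $|M(x_0)|\le\sup_{s\in\mathcal L}|M(s)|$ at the last step. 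Since $M=M_0+M(x_0)$ on $\mathcal Q\subseteq{\rm Aff}(\mathcal L)$, this gives
\begin{align*}
\sup_{q\in\mathcal Q}|M(q)|\le |M(x_0)|+\sup_{q\in\mathcal Q}|M_0(q)|\le(1+4\lambda)\sup_{s\in\mathcal L}|M(s)| ,
\end{align*}
and taking the supremum over $M$ gives $d_1(\mathcal L,\mathcal Q)\le 4\lambda+1$.

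The only point that is not pure bookkeeping is the reduction to functionals vanishing at $x_0$: this is exactly where one uses that $\mathcal L$ lies off the origin, equivalently the existence of the normalization functional $\phi\equiv 1$ on ${\rm Aff}(\mathcal L)$, and it is precisely what makes $\tilde S$ and $\hat S$ comparable in the sense of Lemma \ref{equivalence between d+1 and d}. Everything else is the triangle inequality; indeed, bounding $\omega_M(\mathcal L)\le 2\sup_{s\in\mathcal L}|M(s)|$ directly and using $\omega_M(\mathcal Q)\le\lambda\,\omega_M(\mathcal L)$ without routing $M_0$ through the lemma already gives the sharper estimate $d_1(\mathcal L,\mathcal Q)\le 2\lambda+1$ in b).
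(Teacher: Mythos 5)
Your proof is correct. For part b) it is essentially the paper's own argument: the paper likewise splits an arbitrary functional as $\psi=\psi_1+\psi(P_0)\varphi_0$ with $\psi_1$ vanishing at a classical point (there $P_0$ is the uniform distribution and $\varphi_0$ the normalization functional equal to $1$ on all probability distributions) and then applies Lemma \ref{equivalence between d+1 and d}, arriving at the same constant $4\lambda+1$; your choice of an arbitrary $x_0\in\mathcal L$ and of $\phi(P)=\sum_{a,b}P(a,b|1,1)$ is only cosmetically different. Where you genuinely diverge is part a): the paper argues geometrically, using the identity $\hat S=2S_h$ of Remark \ref{third distance} (Equation (\ref{intersection})) to pass from $\tilde{\mathcal Q}\subseteq\lambda\tilde{\mathcal L}$ to $\mathcal Q_h\subseteq\lambda\mathcal L_h$ by intersecting with $({\rm Aff}(\mathcal L))_{lin}$ and hence to $\hat{\mathcal Q}\subseteq2\lambda\hat{\mathcal L}$, whereas you stay entirely on the dual side, reducing to functionals vanishing at $x_0$ and invoking part 2) of Lemma \ref{equivalence between d+1 and d} together with the dual description of $d_1$. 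Both give $2\lambda$; the paper's route for a) is a one-line set inclusion, while yours has the advantage of treating a) and b) by one uniform mechanism, and your closing observation that a direct width estimate gives the sharper $d_1\le 2\lambda+1$ in b) is also correct, since for $q\in\mathcal Q$ and $s_0\in\mathcal L$ one has $q-s_0\in\hat{\mathcal Q}$, hence $|M(q)|\le|M(s_0)|+\omega_M(\mathcal Q)\le(1+2\lambda)\sup_{P\in\mathcal L}|M(P)|$. The only points you gloss over, the $\frac{0}{0}$ convention when $\Psi_0$ or $M_0$ happens to be constant on $\mathcal L$, are treated the same way in the paper and cause no trouble, because a linear functional constant on $\mathcal L$ is constant on ${\rm Aff}(\mathcal L)\supseteq\mathcal Q$.
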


\begin{proof}
Part a) follows by Equation (\ref{intersection}) in Remark \ref{third distance}. Indeed,

\begin{align*}
d_1(\mathcal L,\mathcal Q)\leq \lambda \Leftrightarrow \tilde{\mathcal Q}\subseteq \lambda \tilde{\mathcal L}\Rightarrow \mathcal Q_h\subseteq \lambda \mathcal L_h\Rightarrow \hat {\mathcal Q}\subseteq 2\lambda \hat {\mathcal L}\Leftrightarrow d_2(\mathcal L,\mathcal Q)\leq 2\lambda.
\end{align*}

To see part b), let's define

\begin{enumerate}
\item[] $P_0=P_0(a,b|x,y)=\frac{1}{K^2}$ for every $x,y=1,\cdots ,N; a,b=1,\cdots ,K$ and
\item[]$\varphi_0=\varphi_0(a,b|x,y)=\frac{1}{N^2}$ for every $x,y=1,\cdots ,N; a,b=1,\cdots ,K.$
\end{enumerate}
Then, we obtain an element $P_0\in \mathcal L$ and a linear functional $\varphi_0$ on $\R^{N^2K^2}$ such that $\varphi_0(P_0)=1$ and $\sup_{Q\in \mathcal Q}\varphi_0(Q)=1$.

Now, for any linear functional $\psi$ on $\R^{N^2K^2}$, we can write $\psi=\psi_1+\psi(P_0)\varphi_0$ where $\psi_1=\psi-\psi(P_0)\varphi_0$. Note that $\psi_1$ is a linear functional on $\R^{N^2K^2}$ such that $\psi_1(P_0)=0$. Therefore, according to Lemma \ref{equivalence between d+1 and d}, if $d_2(\mathcal L,\mathcal Q)\leq \lambda$ we have
\begin{align*}
 &\sup_{Q\in \tilde{\mathcal Q}}\psi(Q)=\sup_{Q\in \tilde{\mathcal Q}}(\psi_1(Q)+\psi(P_0)\varphi_0(Q))\leq
   \omega_{\Psi_1}(\mathcal Q)+\psi(P_0) \\
 &\leq \lambda\omega_{\Psi_1}(\mathcal L)+\psi(P_0)\leq 2\lambda \sup_{P\in \tilde{\mathcal L}}\psi_1(P)+ \psi(P_0)\leq (4\lambda +1)\sup_{P\in \tilde{\mathcal L}}\psi(P). \qedhere
\end{align*}

\end{proof}


\subsection[The $NSG$ space and upper bounds]{The $NSG$ space and upper bounds}\label{NSG}

In this section we will show that there exits a canonical identification between the problems of computing the largest violation of a Bell inequality $LV(M)$  and the quotient $$\frac{\|M\|_{\ell_1^n(\ell_\infty^n)\otimes_{\min} \ell_1^n(\ell_\infty^n)}}{\|M\|_{\ell_1^n(\ell_\infty^n)\otimes_\epsilon \ell_1^n(\ell_\infty^n)}}.$$This means an improvement of \cite[Lemma 1]{JPPVW} (see Theorem \ref{connection:min-epsilon}), where just the implication

\begin{align*}
\frac{\|M\|_{\ell_1^n(\ell_\infty^n)\otimes_{\min} \ell_1^n(\ell_\infty^n)}}{\|M\|_{\ell_1^n(\ell_\infty^n)\otimes_\epsilon \ell_1^n(\ell_\infty^n)}}\preceq LV(\tilde{M})
\end{align*} is shown. As a consequence of this, we will provide upper bounds for the violation of a Bell inequality as a function of the number of inputs, the number of outputs and the dimension of the Hilbert space. This will show that our Theorem \ref{Theorem-construction} is almost optimal in all the parameters of the problem.


According to the previous subsection, the problem of computing the  largest violation of Bell inequalities is exactly the same as computing the distance $d(\tilde{\mathcal L}, \tilde{\mathcal Q})$, where we denote $\tilde{\mathcal L}=conv(\mathcal L \cup -\mathcal L)$ and $\tilde{\mathcal Q}=conv(\mathcal Q \cup -\mathcal Q)$.

Let's start defining the complex linear space

\begin{align*}
NSG(N,K)=\{\{R(a|x)\}_{x,a=1}^{N,K}\in \C^{NK}: \sum_{a=1}^KR(a|x)={\rm constant}\in \C  \text{   for every   } x\}.
\end{align*}
It is not difficult to see that $dim(NSG(N,K))=NK-N+1$. We will identify the algebraic dual space $NSG(N,K)^*$ with $\C^{NK}/NSG(N,K)^ \perp$, where

\begin{align*}
NSG(N,K)^ \perp=\{B\in \C^{NK}:B(R)=0 \text{   for every   } R\in NSG(N,K)\}
\end{align*}is the orthogonal space of $NSG(N,K)$.

On the other hand, we will consider the family
\begin{align*}
 I=\{\{E_x^a\}_{x,a=1}^{N,K}\big|E_x^a \gl 0 , \text{  and  } \sum_{a=1}^KE_x^a=1  \text{ for every } x\}.
\end{align*}Then, it is not difficult to see that the map $$J:NSG(N,K)^*\rightarrow \bigoplus_{\{E_x^ a\}\in I}B(H_{\{E_x^ a\}})$$ given by $$\{\rho(x|a)\}_{x,a}\mapsto \big(\sum_{x,a=1}^{N,K}\rho(x|a)E_x^ a\big)_{\{E_x^a\}\in I}$$ is well defined and it defines an operator system structure on $NSG(N,K)^*$ (see \cite{Paulsen} for the definition of operator system). Then, as we explained in Section \ref{Mathematical Tools}, $NSG(N,K)$ has a natural operator space structure as dual space of  $NSG(N,K)^*$.

\begin{remark}
Duality in the category of operator system is in general a tricky point and we will disregard this problem here.
\end{remark}

\begin{remark}\label{FreeProduct}
In \cite{JNPPSW} the authors show that the map $$\iota:NSG(N,K)^*\rightarrow \star_{i=1}^N\ell_\infty^K$$defined by $\iota(e_{x,a})=\pi_x(e_a)$, where $e_a$ is the $a$-th canonical vector in $\ell_\infty^N$ and $\pi_x:\ell_\infty^N\hookrightarrow \star_{i=1}^N\ell_\infty^K$ is the canonical embedding of $\ell_\infty^N$ into the $x$-th position of the free product, is a completely isometric embedding. Furthermore, the operator system structure on $NSG(N,K)^*$ is exactly the one defined by this embedding.
\end{remark}

The following theorem shows that the operator space  $NSG(N,K)$ is, actually, just a little distortion of $ \ell_\infty^N(\ell_1^{K-1})$.

\begin{theorem}\label{NSG isomorphism}
The map $$T:NSG(N,K)\rightarrow \ell_\infty^N(\ell_1^{K-1})\oplus_\infty \C$$ defined as
\begin{align*}
T(\{R(x|a)\}_{x=1,a=1}^{N,K})=\big(\{R(x|a)\}_{x=1,a=1}^{N;K-1}, \sum_{a=1}^{K}R(x|a)\big)
\end{align*}for every $\{R(x|a)\}_{x=1,a=1}^{N;K}\in NSG(N,K)$ is a completely isomorphism with $\|T\|_{cb}\leq 1$ and $\|T^{-1}\|_{cb}\leq 9$. Here, $T^{-1}$  is defined as
\begin{align*}
T^{-1}\big((\{R(x|a)\}_{x=1,a=1}^{N;K-1}, R)\big)=\big\{\{R(x|a)\}_{a=1}^{K-1}, R-\sum_{a=1}^{K-1}R(x|a)\big\}_{x=1}^{N}.
\end{align*}
\end{theorem}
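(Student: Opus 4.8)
The plan is to work throughout with the adjoint map. Using the completely isometric identifications $(\ell_\infty^N(\ell_1^{K-1})\oplus_\infty\C)^*=\ell_1^N(\ell_\infty^{K-1})\oplus_1\C$ and $NSG(N,K)=(NSG(N,K)^*)^*$, one has $\|T\|_{cb}=\|T^*\|_{cb}$ and $\|T^{-1}\|_{cb}=\|(T^*)^{-1}\|_{cb}$, so it suffices to estimate $T^*\colon\ell_1^N(\ell_\infty^{K-1})\oplus_1\C\to NSG(N,K)^*$. Computing the pairing, $T^*(\{S(x|a)\}_{a\le K-1},s)$ is the class in $NSG(N,K)^*=\C^{NK}/NSG(N,K)^\perp$ of the vector equal to $S(x|a)+\tfrac1N s$ for $a\le K-1$ and to $\tfrac1N s$ for $a=K$; inverting, $(T^*)^{-1}$ sends the class of $\rho$ to $\bigl(\{\rho(x|a)-\rho(x|K)\}_{x,\,a\le K-1},\ \sum_x\rho(x|K)\bigr)$, which is well defined modulo $NSG(N,K)^\perp$.

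To obtain $\|T\|_{cb}\le1$, recall that $\oplus_1$ is the operator space coproduct, so $\|T^*\|_{cb}=\max\{\|T^*|_{\ell_1^N(\ell_\infty^{K-1})}\|_{cb},\ \|T^*|_\C\|_{cb}\}$. The restriction to $\C$ sends $1$ to the class of $(\tfrac1N)_{x,a}$, which the defining embedding $J$ carries to $\bigl(\tfrac1N\sum_{x,a}E_x^a\bigr)_{\{E_x^a\}}=(\id)_{\{E_x^a\}}$, i.e. to the unit of the operator system $NSG(N,K)^*$; a unital $*$-homomorphism $\C\to NSG(N,K)^*$ is completely isometric, so this restriction is completely contractive. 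The restriction to $\ell_1^N(\ell_\infty^{K-1})$ factors as $\ell_1^N(\ell_\infty^{K-1})\hookrightarrow\ell_1^N(\ell_\infty^{K})\xrightarrow{\,q\,}NSG(N,K)^*$, where the first map amplifies the coordinate embedding $\ell_\infty^{K-1}\hookrightarrow\ell_\infty^K$ (a unital $*$-homomorphism, hence completely contractive) and $q(e_x\otimes e_a)$ is the class of $e_{x,a}$. Composing $q$ with $J$ and evaluating at any POVM $\{E_x^a\}$ produces exactly the map $\ell_1^N(\ell_\infty^K)\to B(H)$, $e_x\otimes e_a\mapsto E_x^a$, which is completely contractive by the fact recalled just before Lemma~\ref{connection:maxentanglement}. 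Hence $\|q\|_{cb}\le1$ and $\|T\|_{cb}=\|T^*\|_{cb}\le1$; taking adjoints, $q^*\colon NSG(N,K)\to\ell_\infty^N(\ell_1^K)$ is then a complete contraction which is just the inclusion of $NSG(N,K)$ as the subspace of vectors with constant outcome-sums.

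For $\|T^{-1}\|_{cb}$ the plan is to read off the factorization $T^{-1}=P\circ j$, where $j\colon\ell_\infty^N(\ell_1^{K-1})\oplus_\infty\C\to\ell_\infty^N(\ell_1^K)$ is the ``fill in the last outcome'' map $j(\{S(x|a)\},s)=\{S(x|1),\dots,S(x|K-1),\,s-\sum_{a<K}S(x|a)\}_x$ (whose values already lie in $NSG(N,K)$), and $P\colon\ell_\infty^N(\ell_1^K)\to NSG(N,K)$ is the conditional-expectation-type projection $P(R)(x|a)=R(x|a)-\tfrac1K\sum_{a'}R(x|a')+\tfrac1{NK}\sum_{x',a'}R(x'|a')$. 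One checks that $P$ is indeed a projection onto $NSG(N,K)$, so $P\circ j=T^{-1}$. Writing $j$ and $P$ as the identity plus explicit rank-one corrections built from summation functionals and all-ones vectors, elementary cb-estimates (for $j$, splitting the map out of the $\oplus_\infty$-sum) give $\|j\|_{cb}\le 2+1=3$ and $\|P\|_{cb}\le 1+1+1=3$ as maps between the concrete spaces $\ell_\infty^N(\ell_1^{K-1})\oplus_\infty\C$ and $\ell_\infty^N(\ell_1^K)$. Combining with the observation ending the previous paragraph — namely that the intrinsic (dual-of-operator-system) operator space structure on $NSG(N,K)$ agrees with the one inherited from $\ell_\infty^N(\ell_1^K)$ — this will yield $\|T^{-1}\|_{cb}\le\|P\|_{cb}\,\|j\|_{cb}\le 3\cdot 3=9$.

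The main obstacle is precisely the identification of operator space structures invoked at the end: that $q\colon\ell_1^N(\ell_\infty^K)\to NSG(N,K)^*$ is a \emph{complete metric surjection}, equivalently that $q^*$ is completely isometric onto its range. The inequality $\|q^*R\|\le\|R\|$ is automatic from $\|q\|_{cb}\le1$; the real point is the reverse, i.e. that every element of the open unit ball of $M_m(NSG(N,K)^*)$ lifts through $q$ to an element of $M_m(\ell_1^N(\ell_\infty^K))$ of norm $<1$. I would attack this either by an explicit lift of matrix-valued elements — exploiting that $\ker q$ is the ``constant-in-$a$'' subspace, which enters the POVM-defined norm on $M_m(NSG(N,K)^*)$ only through the total weights $\sum_x\rho(x|\cdot)$ — or via the free-product picture of Remark~\ref{FreeProduct}, where the claim becomes an identification of the operator space spanned by the degree-$\le1$ words of $\star_{i=1}^N\ell_\infty^K$ with the corresponding quotient of $\ell_1^N(\ell_\infty^K)$. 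Should this comparison only be available up to an absolute constant rather than isometrically, the same factorization still delivers the stated bound $9$ after absorbing that constant into the estimate for $P$.
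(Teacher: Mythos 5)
The bound $\|T\|_{cb}\le 1$ in your proposal is fine and is essentially the paper's argument in dual form: the restriction to $\C$ hits the unit of the operator system, and the restriction to $\ell_1^N(\ell_\infty^{K-1})$ is completely contractive because each POVM family induces a complete contraction $e_x\otimes e_a\mapsto E_x^a$ on $\ell_1^N(\ell_\infty^K)$.

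The problem is the inverse bound. Your factorization $T^{-1}=P\circ j$ through $\ell_\infty^N(\ell_1^K)$ only yields a cb-estimate if the intrinsic operator space structure on $NSG(N,K)$ (as the dual of the operator system $NSG(N,K)^*$) coincides, completely isometrically, with the structure it inherits as a subspace of $\ell_\infty^N(\ell_1^K)$ --- equivalently, that $q\colon\ell_1^N(\ell_\infty^K)\to NSG(N,K)^*$ is a complete metric surjection. You correctly identify this as the main obstacle, but you do not prove it, and it is not a routine verification: the matrix norms on $NSG(N,K)^*$ are defined only through \emph{positive} POVM families, whereas the matrix norms dual to $\ell_\infty^N(\ell_1^K)$ see arbitrary complete contractions out of $\ell_1^N(\ell_\infty^K)$; comparing the two is exactly the content of the theorem (it is the precise sense in which $NSG$ is a ``small distortion'' of $\ell_\infty^N(\ell_1^{K-1})$), and the comparison is only known up to a universal constant, obtained via the decomposition result of \cite[Theorem 6]{JPPVW}. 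So your argument is, in effect, circular at its key step. The paper sidesteps this entirely: to bound $\|P_2\|_{cb}$ (its analogue of your $P\circ j$ on the $\ell_\infty^N(\ell_1^{K-1})$ summand) it takes a norm-one element $\sum_{x,a\le K-1}E_x^a\otimes(e_x\otimes e_a)$, reduces to positive $E_x^a$ at the cost of a factor $4$ using \cite[Theorem 6]{JPPVW}, and then writes the induced map $\hat x\colon NSG(N,K)^*\to B(H)$ as a difference $U-V$ of two maps each implemented by a genuine POVM family (complete the $E_x^a$ by $\hat E_x^K=1-\sum_{a<K}E_x^a$, resp.\ put all mass on the $K$-th outcome), each completely contractive by the very definition of the operator system structure; this gives $4\cdot2=8$ and hence $\|T^{-1}\|_{cb}\le 1+8=9$. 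Finally, your fallback remark that a non-isometric comparison constant could be ``absorbed into the estimate for $P$'' does not rescue the stated bound: it would multiply, not absorb, and in any case you give no proof of that comparison constant either. To complete your route you would need to prove the complete-quotient property (at least up to an explicit constant), and the natural way to do that is precisely the positive-decomposition argument the paper uses.
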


\begin{proof}
Clearly, the map $S_1:NSG(N,K)\rightarrow \C$ defined by $S_1(\{R(x|a)\})=\sum_{a=1}^{K}R(x|a)$, verifies that $\|S_1\|=\|S_1\|_{cb}\leq 1$.
Therefore it suffices to study the cb-norm of the map $P_1:NSG(N,K)\rightarrow \ell_\infty^N(\ell_1^{K-1})$ defined by
\begin{align*}
P_1(\{R(x|a)\}_{x=1,a=1}^{N;K})=\{R(x|a)\}_{x=1,a=1}^{N;K-1}
\end{align*}for every $\{R(x|a)\}_{x=1,a=1}^{N;K}\in NSG(N,K)$. Note that, by definition,
\begin{align*}
\|P_1\|_{cb}=\|\sum_{x,a=1}^{N,K-1}e_{x,a}\otimes (e_x\otimes e_a)\|_{NSG(N,K)^*\otimes_{\min}\ell_\infty^N(\ell_1^{K-1})}\\=\sup_{\{E_x^ a\}_{x,a=1}^{N,K}\in I}\|\sum_{x,a=1}^{N,K-1}E_x^ a\otimes (e_x\otimes e_a)\|_{B(H_{\{E_x^ a\}})\otimes_{\min}\ell_\infty^N(\ell_1^{K-1})}\leq 1.
\end{align*}
Here the last inequality follows by the fact that for any $\{E_x^ a\}_{x,a=1}^{N,K}\in I$, the map $\ell_1^N(\ell_\infty^{K-1})\rightarrow B(H)$ defined by $e_x\otimes e_a\mapsto E_x^ a$ is a completely contraction (see \cite[Section 8]{JPPVW}).

To study $\|T^{-1}\|_{cb}$, again it is clear that the map $S_2:\C \rightarrow NSG(N,K)$ defined by $S_2(R)=(\overbrace{0,\cdots ,0}^K,R)_{x=1}^N$ verifies $\|S_2\|=\|S_2\|_{cb}\leq 1$. Therefore, it suffices to show that $\|P_2\|_{cb}\leq 8$, where
$P_2:\ell_\infty^N(\ell_1^{K-1})\rightarrow NSG(N,K)$ is defined as
\begin{align*}
P_2(\{R(x|a)\}_{x=1,a=1}^{N;K-1})=\big\{\{R(x|a)\}_{a=1}^{K-1}, -\sum_{a=1}^{K-1}R(x|a)\big\}_{x=1}^{N}.
\end{align*}
To see this, consider an element $x=\sum_{x,a=1}^{N,K-1}E_x^ a\otimes (e_x\otimes e_a)$ of norm $1$. By using the same argument as in \cite[Theorem 6]{JPPVW} we can assume, up to a constant $C=4$ in the norm, that the operators $\{E_x^ a\}_{x,a=1}^{N,K-1}$ are positive. Now, we have to check that $\|(\id\otimes P_2)(x)\|_{B(H)\otimes_{\min} NSG(N,K)}\leq 2$. Equivalently, we will show that the associated operator to $(\id\otimes P_2)(x)$,  $\hat{x}:NSG(N,K)^*\rightarrow B(H)$, defined as $$\hat{x}\big(\{\rho(x,a)\}\big)=\sum_{x=1}^N\sum_{a=1}^{K-1}E_x^ a(\rho(x|a)-\rho(x|K)),$$verifies $\|\hat{x}\|_{cb}\leqslant 2$.  To see this, consider the maps $U,V:NSG(N,K)^*\rightarrow B(H)$ defined as follows:
 \begin{align*}
 U(\{\rho(x,a)\})= \sum_{x,a=1}^{N,K}\rho(x,a)\hat{E}_x^a,
 \end{align*}
where $\hat{E}_x^a=E_x^ a$ for every $a=1,\cdots,K-1$ and $\hat{E}_x^K=1-\sum_{a=1}^{K-1}\hat{E}_x^a$ for every $x=1,\cdots,N$ and
\begin{align*}
V(\{\rho(x,a)\})=\sum_{x,a=1}^{N,K}\rho(x,a)\hat{F}_x^a,
\end{align*}where $\hat{F}_x^a=0$ for every $a=1,\cdots,K-1$ and $\hat{F}_x^K=1$ for every $x=1,\cdots,N$. It is trivial to see that $\{\hat{E}_x^a\}_{x,a}$ (reps. $\{\hat{F}_x^a\}_{x,a}$) is a family of positive operators such that $\sum_{a=1}^{K}\hat{E}_x^a=1$ (resp. $\sum_{a=1}^{K}\hat{F}_x^a=1$). Then, by the very definition of the operator space $NSG(N,K)^*$ it is clear that $\|U\|_{cb},\|V\|_{cb}\leq 1$. On the other hand, we have $\hat{x}=U-  V$. Therefore, $\|\hat{x}\|_{cb}\leq 2.$
\end{proof}

The following corollary is a direct consequence of the metric mapping property of the norms $\pi$ and $\wedge$ in their corresponding categories (see \cite{Def}, \cite{Pisierbook}).

\begin{corollary}\label{NSGTensor}
For every natural numbers $N,K$,

\begin{enumerate}

\item[a)] $\tilde{T}:=T\otimes T: NSG(N,K)\otimes_\pi NSG(N,K)\rightarrow (\ell_\infty^N(\ell_1^{K-1})\bigoplus_\infty \C)\otimes_\pi (\ell_\infty^N(\ell_1^{K-1})\bigoplus_\infty \C)$ defines an isomorphism with $\|\tilde{T}\|\|\tilde{T}^{-1}\|\leq 81$.

\item[b)] $\tilde{T}:=T\otimes T: NSG(N,K)\otimes_\wedge NSG(N,K)\rightarrow (\ell_\infty^N(\ell_1^{K-1})\bigoplus_\infty \C)\otimes_\wedge (\ell_\infty^N(\ell_1^{K-1})\bigoplus_\infty \C)$ defines a completely isomorphism with $\|\tilde{T}\|_{cb}\|\tilde{T}^{-1}\|_{cb}\leq 81.$
\end{enumerate}

\end{corollary}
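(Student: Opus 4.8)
The plan is to obtain Corollary \ref{NSGTensor} as a formal consequence of Theorem \ref{NSG isomorphism} together with the \emph{metric mapping property} of the projective tensor norms. Recall that $\pi$ is the largest reasonable cross norm on the category of Banach spaces, so that whenever $u\colon E_1\to F_1$ and $v\colon E_2\to F_2$ are bounded linear maps, the algebraic tensor $u\otimes v$ extends to a bounded map $E_1\otimes_\pi E_2\to F_1\otimes_\pi F_2$ with $\|u\otimes v\|\le \|u\|\,\|v\|$ (see \cite{Def}). The operator space projective norm $\wedge$ has the completely bounded analogue of this property: $\|u\otimes v\colon E_1\otimes_\wedge E_2\to F_1\otimes_\wedge F_2\|_{cb}\le \|u\|_{cb}\,\|v\|_{cb}$ (see \cite{Pisierbook}). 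These two facts are exactly what the corollary distills.

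First I would record that $\widetilde T=T\otimes T$ and $T^{-1}\otimes T^{-1}$ are genuinely mutually inverse on the algebraic tensor products, by functoriality: $(T^{-1}\otimes T^{-1})(T\otimes T)=(T^{-1}T)\otimes(T^{-1}T)=\id\otimes\id$, and $\id\otimes\id$ extends to the identity on each of the completed tensor products. Hence it suffices to bound $\widetilde T$ and its inverse separately. For part a), Theorem \ref{NSG isomorphism} gives $\|T\|\le\|T\|_{cb}\le 1$ and $\|T^{-1}\|\le\|T^{-1}\|_{cb}\le 9$; applying the metric mapping property of $\pi$ with $u=v=T$ yields $\|\widetilde T\|\le\|T\|^2\le 1$, and with $u=v=T^{-1}$ yields $\|\widetilde T^{-1}\|\le\|T^{-1}\|^2\le 81$, so $\|\widetilde T\|\,\|\widetilde T^{-1}\|\le 81$. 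For part b) the argument is verbatim the same, with ordinary norms replaced by cb-norms and $\pi$ replaced by $\wedge$: $\|\widetilde T\|_{cb}\le\|T\|_{cb}^2\le 1$ and $\|\widetilde T^{-1}\|_{cb}\le\|T^{-1}\|_{cb}^2\le 81$, whence $\|\widetilde T\|_{cb}\,\|\widetilde T^{-1}\|_{cb}\le 81$.

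There is essentially no hard step here; the corollary is a bookkeeping exercise. The only point that warrants a moment of care is to make sure we invoke the correct structure in part b): Theorem \ref{NSG isomorphism} asserts that $T$ is a \emph{complete} isomorphism for the operator space structure that $NSG(N,K)$ carries as the dual of the operator system $NSG(N,K)^*$, and it is precisely this completely bounded statement that licenses the use of the operator-space metric mapping property for $\wedge$. As flagged in the remark following Theorem \ref{NSG isomorphism}, we do not revisit the subtleties of operator system duality here and simply work with that structure; everything else is the formal functoriality of $\otimes_\pi$ and $\otimes_\wedge$.
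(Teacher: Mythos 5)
Your argument is correct and coincides with the paper's: the corollary is stated there precisely as a direct consequence of the metric mapping property of $\pi$ and $\wedge$ applied to $T$ and $T^{-1}$ from Theorem \ref{NSG isomorphism}, with the bounds $\|T\|_{cb}\le 1$ and $\|T^{-1}\|_{cb}\le 9$ giving $1\cdot 81=81$ exactly as you compute. Nothing is missing; your extra remarks on functoriality of the inverse and on which operator space structure is being used are consistent with the paper's treatment.
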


\begin{remark}
Actually, it is easy to see that $\|T^{-1}\|\leq 3$ in Theorem \ref{NSG isomorphism} when we restrict to real Banach spaces. In particular, in that case we obtain $\|\tilde{T}\|\|\tilde{T}^{-1}\|\leq 9$ in part a) of Corollary \ref{NSGTensor}.
\end{remark}

The following lemma shows that $NSG$ is the suitable space to describe the sets $\tilde{\mathcal L}$  and $\tilde{\mathcal Q}$. However, since $\tilde{\mathcal L}$  and $\tilde{\mathcal Q}$ are real sets we have to restrict the previous space to the real part. Taking the $\pi$ tensor product for real Banach spaces is well-defined and provides the correct dual. The correct way to understand the tensor product of the real space  $NSG(N,K)_{\R}$ is to use the operator system  $V=NSG(N,K)^*\ten_{\min}NSG(N,K)^*$, and then consider the real part $V_{sa}$. Then we may define the tensor product
of real coefficients as
 \[  NSG(N,K)_{\R}\pl \hat{\ten}_{\R} NSG(N,K)_{\R} \pl:=\pl
  V_{sa}^* \pl .\]
Using the fact that every element $\xi\in V$ can be written as
$\xi=\xi_1+i\xi_2$ with $\max\{\|\xi_1\|,\|\xi_2\|\}\le \|\xi\|$ we deduce that
 \begin{align*}
 &\frac{1}{2} B_{NSG(N,K)_{\R}\pl \hat{\ten}_{\R} NSG(N,K)_{\R}}
  \subset  B_{NSG(N,K)\ten_{\wedge}NSG(N,K)} \\
 & \lel B_{V^*} \subset B_{V_{sa}^*}+i B_{V_{sa}^*}
 \lel     B_{NSG(N,K)_{\R}\pl \hat{\ten}_{\R} NSG(N,K)_{\R}}
  + i  B_{NSG(N,K)_{\R}\pl \hat{\ten}_{\R} NSG(N,K)_{\R}} \pl .
  \end{align*}

In order to describe the set $\tilde{\mathcal L}$, it can be seen that for every element $R\in NSG(N,K)_{\R}$ we have

\begin{align*}
||R||_{NSG(N,K)}=\inf \{|\lambda|+|\mu|:R=\lambda P+\mu Q: P,Q\in S(N,K)\},
\end{align*} where

\begin{align*}
S(N,K)=\{\{P(x|a)\}_{x,a=1}^{N,K}: P(x|a)\geq 0 \text{   for every   } x,a   \text{   and   } \sum_{a=1}^KP(x|a)=1  \text{   for every   } x\}.
\end{align*}That is, the norm $||\cdot||_{NSG(N,K)}$ coincides with the Minkowski functional of the set $\tilde{S}=conv(S\cup -S)$ when we consider real coefficients.


With this at hand, we have

\begin{lemma}\label{connectionNSG}
\

\begin{enumerate}
\item[a)] $\tilde{\mathcal L}=B_{NSG(N,K)_{\R}\otimes_\pi NSG(N,K)_{\R}}\big(=B_{\big(NSG(N,K)_\R^*\otimes_{\epsilon} NSG(N,K)_\R^*\big)^*}\big)$.
\item[b)] $\tilde{\mathcal Q}=B_{NSG(N,K)_{\R} \hat{\otimes}_{\R} NSG(N,K)_{\R}}:=B_{ V_{sa}^*}$.

\end{enumerate}

\end{lemma}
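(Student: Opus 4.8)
The plan is to prove a) and b) separately, in each case by unwinding the definition of the relevant tensor norm and matching unit balls with the absolutely convex hulls of the physical sets.

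For part a) I would start from the fact recorded just above the lemma: on real coefficients the norm $\|\cdot\|_{NSG(N,K)}$ is the Minkowski functional of $\tilde S=\mathrm{conv}(S(N,K)\cup -S(N,K))$, so $B_{NSG(N,K)_\R}=\tilde S$. By the standard description of the projective norm, $B_{NSG(N,K)_\R\otimes_\pi NSG(N,K)_\R}$ is the closed absolutely convex hull of $\{P\otimes Q:P,Q\in\tilde S\}$, and since we work in finite dimensions with $\tilde S$ compact this absolutely convex hull is already closed. Using bilinearity and that $\tilde S$ is the absolutely convex hull of $S(N,K)$, one reduces this to $\mathrm{aco}\{P\otimes Q:P,Q\in S(N,K)\}$: writing $P=aP_+-bP_-$ and $Q=a'Q_+-b'Q_-$ with $P_\pm,Q_\pm\in S(N,K)$ and $a+b,a'+b'\le 1$, expanding $P\otimes Q$ exhibits it as an absolutely convex combination of the four products $P_\pm\otimes Q_\pm$. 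Finally, since $S(N,K)$ is a polytope whose extreme points are the deterministic strategies, $\mathrm{conv}\{P\otimes Q:P,Q\in S(N,K)\}$ is the convex hull of the finitely many products of deterministic strategies, which is precisely $\mathcal L$ (any $\sum_i\lambda_i P_i\otimes Q_i$ is an LHV distribution over a finite $\Omega$, and conversely). Hence $\mathrm{aco}\{P\otimes Q:P,Q\in S(N,K)\}=\mathrm{conv}(\mathcal L\cup-\mathcal L)=\tilde{\mathcal L}$, proving a); the parenthetical identity is just the finite-dimensional duality $(X^*\otimes_\epsilon Y^*)^*=X\otimes_\pi Y$ with $X=Y=NSG(N,K)_\R$.

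For part b) the point is to compute the norm on $V=NSG(N,K)^*\otimes_{\min}NSG(N,K)^*$ and then identify its unit ball. Using the operator-system embedding $J:NSG(N,K)^*\hookrightarrow\bigoplus_{\{E_x^a\}\in I}B(H_{\{E_x^a\}})$ and injectivity of $\min$, for $\xi=\sum_{x,a,y,b}\xi_{x,a}^{y,b}\,e_{x,a}\otimes e_{y,b}$ one obtains
\[
\|\xi\|_V=\sup\Big\{\big\|\sum_{x,a,y,b}\xi_{x,a}^{y,b}\,E_x^a\otimes F_y^b\big\|_{B(H_1\otimes H_2)}:\{E_x^a\}\subset B(H_1),\ \{F_y^b\}\subset B(H_2)\ \text{POVMs}\Big\}.
\]
When $\xi\in V_{sa}$ the operator $\sum\xi_{x,a}^{y,b}E_x^a\otimes F_y^b$ is self-adjoint, so its norm is $\sup_\rho|\mathrm{tr}((\sum\xi_{x,a}^{y,b}E_x^a\otimes F_y^b)\rho)|$, the supremum over pure states $\rho$, which by convexity is the same over all states $\rho$. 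Letting $\rho$, $\{E_x^a\}$ and $\{F_y^b\}$ range over all admissible choices, this equals $\sup_{Q\in\mathcal Q}|\langle\xi,Q\rangle|=\sup_{Q\in\tilde{\mathcal Q}}\langle\xi,Q\rangle$, where we use that $\mathrm{span}(\mathcal Q)\subset NSG(N,K)_\R\otimes NSG(N,K)_\R\subset\R^{N^2K^2}$ (a consequence of the non-signalling condition). Thus $\|\cdot\|_{V_{sa}}$ is the support function of $\tilde{\mathcal Q}$, i.e. $B_{V_{sa}}=\tilde{\mathcal Q}^{\circ}$; since $\tilde{\mathcal Q}$ is closed, absolutely convex and contains $0$, the bipolar theorem gives $B_{V_{sa}^*}=(B_{V_{sa}})^{\circ}=\tilde{\mathcal Q}^{\circ\circ}=\tilde{\mathcal Q}$, which is b).

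The main obstacle I expect is the bookkeeping in b): justifying rigorously that the operator-system $\min$-norm on $V$ is computed by the displayed supremum over all pairs of POVMs on arbitrary Hilbert spaces (this is where the identification of $NSG(N,K)^*$ with a subspace of the free product $\star_i\ell_\infty^K$, or directly the map $J$, is used), keeping careful track of the pairing between $V_{sa}$ and $NSG(N,K)_\R\otimes NSG(N,K)_\R\subset\R^{N^2K^2}$, and the (harmless) point that $\tilde{\mathcal Q}$ should be taken closed so the bipolar step is exact. Part a) is essentially routine once the Minkowski-functional description of $\|\cdot\|_{NSG(N,K)}$ on real coefficients is in hand.
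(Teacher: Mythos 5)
Your proposal is correct and follows essentially the same route as the paper: part a) via the Minkowski-functional description of $B_{NSG(N,K)_\R}$ together with the fact that $B_{X\otimes_\pi Y}=\operatorname{conv}(B_X\otimes B_Y)$ and the $\epsilon$--$\pi$ duality, and part b) via the (completely) isometric embedding $J\otimes J$ into $\bigoplus B(H_{\{E_x^a\}}\otimes H_{\{F_y^b\}})$ followed by duality on real (self-adjoint) elements. Your write-up merely spells out the details (deterministic strategies in a), the support-function/bipolar step in b)) that the paper's terse proof leaves implicit, so no genuine divergence or gap.
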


\begin{proof} To prove part a), let's denote, for a couple of sets $A$ and $B$, $A\otimes B=\{a\otimes b:a\in A, b\in B\}$.

Now, as we have said, $B_{NSG(N,K)_\R}=conv(S\cup -S)$. On the other hand, by definition $$\tilde{\mathcal{L}}=conv(conv(S\otimes S) \cup -conv(S\otimes S))=conv(S\otimes S \cup -S\otimes S).$$Then, it follows by the well known fact $B_{X\otimes_\pi Y}=conv(B_X\otimes B_Y)$ (see \cite{Def}) that

\begin{align*}
B_{NSG(N,K)_\R\otimes_\pi NSG(N,K)_\R}=conv(conv(S\cup -S)\otimes conv(S\cup -S))\\=conv(S\otimes S \cup -S\otimes S)=\tilde{\mathcal{L}}.
\end{align*}The second equality follows trivially by the duality between the $\epsilon$ and $\pi$ tensor norms.

In order to prove part b), recall that the map

\begin{align*}
J\otimes J: NSG(N,K)^*\otimes_{min} NSG(N,K)^*\longrightarrow \oplus_{\{E_x^a\}\in I}B(H_{\{E_x^a\}})\otimes_{min} \oplus_{\{F_y^b\}\in I}B(H_{\{F_y^ b\}})\\ \subset \oplus_{(\{E_x^a\}\times \{F_y^b\})\in I\times I}B(H_{\{E_x^a\}}\otimes H_{\{F_y^b\}})
\end{align*}is a (completely) isometry. Then, the result follows easily just reasoning by duality on real elements $(M_{x,y}^{a,b})_{x,y;a,b}$.
\end{proof}

\begin{remark}
The space $NSG(N,K)$ represents the formalization of the comments after Theorem \ref{complemented} . Actually, in \cite{JNPPSW} it is shown the utility of this space to study the set of probability distributions when we assume another very interesting model of Nature.
\end{remark}

To finish this section, we will use Corollary \ref{NSGTensor}  and Lemma \ref{connectionNSG}  to obtain upper bounds for the largest violation of Bell inequalities.

\begin{theorem}\label{upperbounds}
With the notation of subsection \ref{geometric_int}, if we consider $N$ inputs, $K$ outputs and Hilbert space dimension $d$, we have that $$d(\tilde{\mathcal L}, \tilde{\mathcal Q})\preceq \min\{N,K,d\}.$$
\end{theorem}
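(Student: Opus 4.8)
The plan is to prove each of the three bounds $d(\tilde{\mathcal L},\tilde{\mathcal Q})\preceq N$, $\preceq K$, and $\preceq d$ separately, using the dual description of the distance: by Lemma \ref{connectionNSG} we have $\tilde{\mathcal L}=B_{NSG(N,K)_\R\otimes_\pi NSG(N,K)_\R}$ and $\tilde{\mathcal Q}=B_{V_{sa}^*}$, where $V=NSG(N,K)^*\otimes_{\min}NSG(N,K)^*$, so for a Bell functional $M$ the ratio $\zeta_1(M)=\frac{\sup_{q\in\tilde{\mathcal Q}}|M(q)|}{\sup_{p\in\tilde{\mathcal L}}|M(p)|}$ equals (up to the universal constant $81$ from Corollary \ref{NSGTensor}, and passing between real and complex coefficients using the estimates displayed before Lemma \ref{connectionNSG}) the quotient $\frac{\|M\|_{NSG(N,K)\otimes_\wedge NSG(N,K)}}{\|M\|_{NSG(N,K)\otimes_{\min}NSG(N,K)}}$, which in turn is comparable to $\frac{\|M\|_{\ell_\infty^N(\ell_1^{K-1})\otimes_\wedge\ell_\infty^N(\ell_1^{K-1})}}{\|M\|_{\ell_\infty^N(\ell_1^{K-1})\otimes_{\min}\ell_\infty^N(\ell_1^{K-1})}}$ (the extra $\oplus_\infty\C$ summands only affect things by a constant). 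Dually, writing $X=\ell_1^N(\ell_\infty^{K-1})$ (so $X^*=\ell_\infty^N(\ell_1^{K-1})$), this is the norm of the identity map $X^*\otimes_{\min}X^*\to X^*\otimes_\wedge X^*$, equivalently of $\id:\ell_1^N(\ell_\infty^{K-1})\otimes_\epsilon\ell_1^N(\ell_\infty^{K-1})\to\ell_1^N(\ell_\infty^{K-1})\otimes_{\min}\ell_1^N(\ell_\infty^{K-1})$ after dualizing — matching the formulation in Problem \ref{problem1}. So it suffices to bound this last operator norm by $\min\{N,K,d\}$.

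For the bound by $N$: I would use that $\ell_1^N(\ell_\infty^{K-1})=\ell_1^N\otimes_\wedge \ell_\infty^{K-1}$ and that $\id:\ell_1^N\to\ell_\infty^N$ has $\|\id\|_{cb}\le$ (small constant) while a factorization through $\ell_\infty$-spaces makes the min and $\epsilon$ norms coincide; more precisely, on $\ell_\infty$-valued tensors min $=\epsilon$ because $\ell_\infty^N(\ell_\infty^{K-1})$ is a commutative $C^*$-algebra, so the distortion is controlled by $\|\id:\ell_1^N\to\ell_\infty^N\|$-type estimates, which cost a factor $N$ (cf. \eqref{intp1} and Lemma \ref{lemma-optimality}). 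For the bound by $K$: by symmetry of the roles of inputs/outputs in the operator space $\ell_1^N(\ell_\infty^{K-1})$, one uses $\|\id:\ell_\infty^{K-1}\to\ell_1^{K-1}\|\le K$ together with the fact that on the row/column or $\ell_1$-side the min norm is again comparable to $\epsilon$ up to the stated constant. For the bound by $d$: here I would invoke that any quantum probability distribution realizable in Hilbert space dimension $d$ factors through $M_d$, and $\|M\|_{\min}$ restricted to dimension $d$ realizations is bounded by $\|M\|_{S_1^d\text{-type norm}}$; combined with $\|\id: S_1^d\to S_\infty^d\|\le d$ (or the simpler $\|\id:M_d\to M_d\|_{cb}=1$ against the trace duality), one gets that the quantum value in dimension $d$ exceeds the classical value by at most a factor $d$.

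The main obstacle I expect is the careful bookkeeping in the \emph{first reduction}: getting the quotient of norms over $NSG\otimes NSG$ to coincide, up to universal constants, with $LV(M)=\zeta_1(M)$, i.e. establishing the missing implication beyond \cite[Lemma 1]{JPPVW}. This requires Corollary \ref{NSGTensor} (the completely isomorphic identification of $NSG(N,K)$ with $\ell_\infty^N(\ell_1^{K-1})\oplus_\infty\C$), Lemma \ref{connectionNSG} (identifying $\tilde{\mathcal L}$, $\tilde{\mathcal Q}$ as unit balls of $\pi$- and min-tensor products), and the real-vs-complex estimate displayed before Lemma \ref{connectionNSG}; one must check that the map $J\otimes J$ is a complete isometry onto its range so that the min-norm genuinely computes the supremum over all quantum realizations. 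Once that dictionary is in place, the three individual upper bounds are routine applications of the ``$\ell_\infty$-valued min $=$ $\epsilon$'' principle together with the elementary norm estimates $\|\id:\ell_1^m\to\ell_\infty^m\|$, $\|\id:\ell_\infty^m\to\ell_1^m\|\le m$, and $\|\id:S_1^d\to S_\infty^d\|\le d$, so I would present those compactly and concentrate the exposition on the reduction step.
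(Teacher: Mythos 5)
Your overall route is the paper's: use Lemma \ref{connectionNSG} and Corollary \ref{NSGTensor} (plus the real--complex comparison) to reduce $d(\tilde{\mathcal L},\tilde{\mathcal Q})$ to the norm of $\id\otimes\id:\ell_1^N(\ell_\infty^K)\otimes_\epsilon\ell_1^N(\ell_\infty^K)\to\ell_1^N(\ell_\infty^K)\otimes_{\min}\ell_1^N(\ell_\infty^K)$, quote the known $O(d)$ bound, and prove the $N$ and $K$ bounds by factorization. The reduction and the $d$ bound are fine, and your $N$ bound is in essence the paper's argument, with one caveat: the factor $N$ must come from moving \emph{one} tensor factor up to $\ell_\infty^N(\ell_\infty^K)$ (where, since a commutative $C^*$-algebra sits on that side, the $\min$ and $\epsilon$ norms are both computed entrywise, so this step is contractive) and then back down via $\|\id:\ell_\infty^N(\ell_\infty^K)\to\ell_1^N(\ell_\infty^K)\|_{cb}\le N$; moving both factors, or quoting $\|\id:\ell_1^N\to\ell_\infty^N\|$ (which is $\le 1$, not $N$), does not give the stated estimate.

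The genuine gap is in the $K$ bound. There is no ``symmetry of the roles of inputs and outputs'' in $\ell_1^N(\ell_\infty^K)$: the outer $\ell_1$ and inner $\ell_\infty$ behave very differently, which is why the paper gives two distinct arguments. Your proposal to use $\|\id:\ell_\infty^{K}\to\ell_1^{K}\|\le K$ together with ``$\min$ comparable to $\epsilon$ on the $\ell_1$-side'' runs into the following problem: the Grothendieck-type comparison $\|\cdot\|_{\min}\preceq\|\cdot\|_\epsilon$ is available on $\ell_1^{NK}\otimes\ell_1^{NK}$, i.e.\ only after \emph{both} factors have been pushed into $\ell_1^N(\ell_1^K)$, and then the formal identity costs $K$ on each factor, giving $K^2$ -- exactly the previously known bound the theorem is meant to improve; pushing only one factor leaves you needing $\min\preceq\epsilon$ on $\ell_1^N(\ell_\infty^K)\otimes\ell_1^{NK}$, which is not Grothendieck's inequality and is not established anywhere in the paper. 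The paper avoids this by not using the formal identity at all: it takes an isomorphism $u:\ell_\infty^K\to\ell_1^K$ realizing the Banach--Mazur distance, $\|u\|\le\sqrt K$ with $\|u^{-1}\|\le 1$, notes $\|u^{-1}\|_{cb}=\|u^{-1}\|\le1$ (the target is a minimal operator space) and $\|u\|_{cb}\le K_G\|u\|\le K_G\sqrt K$ by Grothendieck's inequality, and then factorizes $\id\otimes\id$ through $\ell_1^N(\ell_1^K)\otimes\ell_1^N(\ell_1^K)$ via $\tilde u=\id\otimes u$ and $\tilde u^{-1}$, so that each tensor factor costs only $O(\sqrt K)$ and the total is $O(K)$. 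This $\sqrt K$ trick is the missing idea in your sketch; without it the $K$ part of the theorem does not follow.
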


The upper bound regarding the Hilbert space dimension $d$ was proved in \cite[Proposition 2]{JPPVW}. On the other hand, Theorem \ref{upperbounds} improves the previously known $O(K^2)$ upper bounds for $d(\tilde{\mathcal L}, \tilde{\mathcal Q})$ as a function of the number of outputs (see \cite[Proposition 25]{DKLR}). To our knowledge, no nontrivial upper bound for $d(\tilde{\mathcal L}, \tilde{\mathcal Q})$
as a function of the number of inputs was known.

\begin{proof} The upper bound in terms of  $d$ was proved in \cite[Proposition 2]{JPPVW}.  It remains to show the upper bound
in terms of the outputs or the inputs.  Since we allow absolute constants we may work with complex Banach spaces. Now, according to Corollary \ref{NSGTensor} and Lemma \ref{connectionNSG} it suffices to show that
\begin{align*}
\|\id\otimes \id:\ell_1^N(\ell_\infty^K)\otimes_\epsilon \ell_1^N(\ell_\infty^K)\rightarrow \ell_1^N(\ell_\infty^K)\otimes_{\min} \ell_1^N(\ell_\infty^K)\|\preceq \min\{N,K\}.
\end{align*}
Recall that (\cite{Tomczak}) $d(\ell_1^K,\ell_{\infty}^K)\kl \sqrt{K}$.
Thus there exist $u:\ell_\infty^k\to \ell_1^k$ such that  $\|u^{-1}\|\le 1$ and
 \[  \|u:\ell_\infty^k\to \ell_1^k\|\kl \sqrt{K} \pl .\]
Moreover, we know that $\|u^{-1}\|_{cb}\lel \|u^{-1}\|\le 1$.
 According to Grothendieck's inequality we also have
  \[ \|u:\ell_{\infty}^k\to \ell_1^k\|_{cb}\le K_G\|u\| \kl K_G \sqrt{K}\pl .\]
We use the amplifications $\tilde{u}=\id\otimes u:\ell_1^N(\ell_\infty^K)\rightarrow \ell_1^N(\ell_1^K)$ and $\tilde{u}^{-1}=\id \otimes u^{-1}:\ell_1^N(\ell_1^K)\rightarrow \ell_1^N(\ell_\infty^K)$. Then,  we have a factorization
\begin{align*}
 \id\otimes \id=(\tilde{u}^{-1}\otimes\tilde{u}^ {-1})\circ (\id\otimes \id)\circ (\tilde{u}\otimes \tilde{u}):\ell_1^N(\ell_\infty^K)\otimes_\epsilon \ell_1^N(\ell_\infty^K)\rightarrow \ell_1^N(\ell_\infty^K)\otimes_{\min} \ell_1^N(\ell_\infty^K)\pl. \end{align*}
 This implies
 \begin{align*}
 \|\id\otimes \id:\ell_1^N(\ell_\infty^K)\otimes_\epsilon \ell_1^N(\ell_\infty^K)\rightarrow \ell_1^N(\ell_\infty^K)\otimes_{\min} \ell_1^N(\ell_\infty^K)\|\preceq K.
\end{align*}
To prove the upper bound as a function of the number of inputs $N$, recall (see for instance \cite{Pisierbook2}) that
 \begin{align*}
\|\id\otimes \id:\ell_\infty^N(\ell_\infty^K)\rightarrow  \ell_1^N(\ell_\infty^K)\|_{cb}\leq N.
\end{align*}
Then, the factorization
\begin{align*}
 \id\otimes \id:\ell_1^N(\ell_\infty^K)\otimes_\epsilon \ell_1^N(\ell_\infty^K)\rightarrow \ell_\infty^N(\ell_\infty^K)\otimes_{\min} \ell_1^N(\ell_\infty^K)\rightarrow \ell_1^N(\ell_\infty^K)\otimes_{\min} \ell_1^N(\ell_\infty^K)
\end{align*}
implies
\begin{align*}
 \|\id\otimes \id:\ell_1^N(\ell_\infty^K)\otimes_\epsilon \ell_1^N(\ell_\infty^K)\rightarrow \ell_1^N(\ell_\infty^K)\otimes_{\min} \ell_1^N(\ell_\infty^K)\|&\leq N. \qedhere
\end{align*}
\end{proof}


\section{A relaxation of the problem: The $\gamma_2^*$ norm}\label{section-gamma}

The problem of computing or approximating the classical and quantum value of Bell inequalities has been studied from different points of view. On the one hand, the problem of studying the quantum value of Bell inequalities (related to the study of deciding whether a given probability distribution belongs to the quantum set $\mathcal{Q}$), has captured the interest of many researchers in QIT (see e.g. \cite{LD}, \cite{DLTW}, \cite{S.Wehner}, \cite{NPA1}, \cite{NPA2}). On the other hand, since any two-prover one-round game can be seen as a Bell inequality, Game Theory and, in general Computer Science, can be considered as a very important source of results about the (mostly) classical value of Bell inequalities.

\

The study of two-prover one-round games is of great interest in Computer Science owing to the fact that many of the most important problems in Complexity Theory can be stated in terms of these kinds of games. In particular, they are extremely useful to study problems of hardness of approximation. One example of this is the so called \emph{unique game conjecture} (see \cite{K}, \cite{KV}), which has become one of the crucial problems in Complexity Theory since it implies hardness of approximation results for several important problems (MaxCut, Multicut and Sparsest Cut, Vertex Cover,...) which are difficult to obtain by standard complexity assumptions. Although the results in Complexity Theory mainly focus on the classical value of games, recently some of the most relevant problems in the field have been studied in the context of quantum physics or under the assumption of the non-signally condition. Some examples of this are the study of hardness of approximation of the quantum value (commonly called entangled value) of games (\cite{KKMTV},\cite{IKM}), the unique game conjecture in the quantum context (\cite{KRT}) or the parallel repetition theorem (\cite{Holenstein}, \cite{KRT}, \cite{CSUU}, \cite{KR}). The standard way to tackle these kinds of problems is to show that the entangled value (resp. non signally value) of the considered games can be approximated by some semidefinite programming (SDP) relaxation with some good properties. In this sense, the following relaxation has been shown to be very useful (see \cite{BHHRRS} and \cite{KRT} for details).

\

For every $M$ consider the following optimization problem (OP), which maximizes over complex vectors $\{u_x^a\}_{x,a=1}^n, \{v_y^b\}_{y,b=1}^n$ and $z$:

\begin{OP}\label{OP}

$$\omega_{op}(M):=\max  \Big\{\big|\sum_{x,y,a,b=1}^nM_{x,y}^{a,b}\langle u_x^a, v_y^b\rangle\big|\Big\}$$

subject to: $$\left\{\begin{array}{c}
                         \hspace{-5.5cm} \|z\|=1,\\
                         \hspace{-2.2cm}\forall x,y, \sum_au_x^a=\sum_bv_y^b=z, \\
                         \forall x,y, \forall a\neq b, \langle u_x^a, u_x^b\rangle=0,  \langle v_y^a, v_y^b\rangle=0. \\
                         \end{array}\right.$$

\end{OP}

Following the notation in \cite{KRT}, the relaxation we are considering here verifies $$\omega_{sdp3}(M)\geq \omega_{op}(M)\geq \omega_{sdp1}(M)\geq \sup\{|\langle M,Q\rangle|:Q\in \mathcal Q\}.$$ Indeed, it is easy to see that $SDP_3$ and $SDP_1$ can be stated equivalently for real or complex vectors. $SDP_3$ and $SDP_1$ have been shown to be very useful in the study of different problems (see \cite{KRT}, \cite{KR}, \cite{NPA2}). In particular, they can be used to approximate the entangled value of unique games.

Actually, $SDP_1$ is obtained when we consider the extra restriction of $\langle u_x^a, v_y^b\rangle\geq 0$ for every $x,y, a, b$ in our problem OP. Note that this $SDP_1$ was already considered in the context of Bell inequalities in \cite{NPA1, NPA2} (certificate of order 1). As far as we know, it is an open question whether the quantum value of a Bell inequality (in particular, the entangled value of a game) can be efficiently approximated up to a universal constant. $SDP1$ (and the certificates of higher order in \cite{NPA1, NPA2}) seems to be the best known candidate to approximate such a value by using semidefinite programming.

In this section, we will study the problem $OP$ \ref{OP} to compute the classical and the quantum value of Bell inequalities. Our main theorem states:

\begin{theorem}\label{SDP}

There exists a Bell inequality $M$ with $n$ inputs and $n+1$ outputs such that

\begin{enumerate}
\item[a)]$\sup_{P\in\mathcal{L}} |\langle M,P\rangle|\preceq 1,$
\item[b)]$\frac{\sqrt{n}}{\log n}\preceq\sup_{Q\in \mathcal{Q}}|\langle M,Q\rangle|\preceq\sqrt{n}$ and
\item[c)]$\frac{n}{\log n}\preceq\omega_{op}(M)\preceq \frac{n}{(\log n)^{\beta}}$ for certain universal constant $\beta$.
\end{enumerate}
\end{theorem}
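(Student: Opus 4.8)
The plan is to take for $M$ the Bell inequality already built in Sections~\ref{Main result} and \ref{The construction}. Concretely, let $M_0=(V\otimes V)(\id_{\ell_2^{\delta n}})\in\ell_1^n(\ell_\infty^n)\otimes\ell_1^n(\ell_\infty^n)$ with $V$, $V^*$ as in Theorem~\ref{complemented}, complete it with zeros in one extra output to obtain $\tilde M\in\ell_1^n(\ell_\infty^{n+1})\otimes\ell_1^n(\ell_\infty^{n+1})$, and set $M=\frac1{C\log n}\tilde M$. Part~(a) is then precisely the rescaled Step~1 of the proof of Theorem~\ref{Theorem-construction}: $\sup_{P\in\mathcal L}|\langle M,P\rangle|\le\|M\|_{\ell_1^n(\ell_\infty^{n+1})\otimes_\epsilon\ell_1^n(\ell_\infty^{n+1})}=\frac1{C\log n}\|M_0\|_\epsilon\preceq1$. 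For part~(b) the lower bound comes from Step~3 of the same theorem applied to the near-product state $|\varphi_\alpha\rangle$ with $\alpha_1=1/\sqrt2$ and $\alpha_i=1/\sqrt{2n}$ for $i\ge2$, which gives $\langle M,Q\rangle\ge\frac{2}{CK^2\log n}\,\alpha_1\sum_{i\ge2}\alpha_i\succeq\sqrt n/\log n$; the upper bound is immediate from Proposition~\ref{optimality}, since the quantum value is dominated by $\|M\|_{\min}=\frac1{C\log n}\|M_0\|_{\min}\preceq\sqrt n$ (completing with zeros being a complete isometry).

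All the new content lies in part~(c). For the lower bound $\omega_{op}(M)\succeq n/\log n$ I would exhibit explicit feasible data for OP~\ref{OP} that ``decodes'' the factorisation of $M$: starting from $u_x^a\propto V^*(e_x\otimes e_a)$ and $v_y^b\propto V^*(e_y\otimes e_b)$ in $H_n$, normalise them and then enforce the OP constraints $\sum_a u_x^a=z$ (independent of $x$) and $u_x^a\perp u_x^b$ for $a\ne b$ by enlarging the Hilbert space and adjoining a small orthonormal block --- a routine correction costing only a constant factor. Since $(M_0)_{xa,yb}=\langle V^*(e_x\otimes e_a),V^*(e_y\otimes e_b)\rangle$, the diagonal part of $\sum_{xa,yb}(M_0)_{xa,yb}\langle u_x^a,v_y^b\rangle$ equals $\sum_{x,a}\|V^*(e_x\otimes e_a)\|^2$, which is $\succeq n$ because $V$ is, up to its normalisation, an approximate Euclidean isometry of an $\approx\delta n$-dimensional space (Lemma~\ref{gaussian}), while the off-diagonal terms are manifestly nonnegative. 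Dividing by $C\log n$ gives $\omega_{op}(M)\succeq n/\log n$. Equivalently, once $\omega_{op}$ has been identified, up to universal constants, with the $\gamma_2^*$-tensor norm (as is done earlier in the section), the bound is immediate from $V^*V=\id_{H_n}$, $\|V^*\|\le1$ and the metric mapping property of $\gamma_2^*$, which give $\gamma_2^*(M_0)\ge\gamma_2^*(\id_{\ell_2^{\delta n}})=\delta n$. Note that this value already exceeds the quantum value by a factor $\sqrt n$: the relaxation is genuinely lossy for this $M$, which is the point of the theorem.

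The upper bound $\omega_{op}(M)\preceq n/(\log n)^\beta$ is the main obstacle. All the cheap estimates yield only $\beta=0$: the product bound $\gamma_2^*(M_0)\le\|V:\ell_2^{\delta n}\to\ell_1^n(\ell_\infty^n)\|^2\,\gamma_2^*(\id_{\ell_2^{\delta n}})\preceq(\log n)\,\delta n$; the Hilbert--Schmidt estimate $\sum_{xa,yb}(M_0)_{xa,yb}\langle u_x^a,v_y^b\rangle\le\|\mathcal U V\|_{HS}\,\|\mathcal V V\|_{HS}$, where $\mathcal U(e_x\otimes e_a)=u_x^a$ has $\|\mathcal U\|\le1$ thanks to the orthogonality constraint, again gives $\preceq(\log n)\,\delta n$; and even the $OH$-refinement $\|V:OH_{\delta n}\to\ell_1^n(\ell_\infty^n)\|_{cb}\preceq n^{1/4}\sqrt{\log n}$ coming from the proof of Proposition~\ref{optimality} only cancels back to $O(n)$ after normalisation. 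Extracting the extra $(\log n)^\beta$ therefore needs a genuinely non-product argument tailored to this randomly built $M$: the plan is to prove, by a chaining / $\epsilon$-net estimate over the structured set of OP-feasible vector families --- the same philosophy as the $\epsilon$-norm bound in Step~1, but over a larger index set, so that the union bound over the net loses only a logarithmic factor --- that with high probability over the signs $\epsilon_{x,a}^k$ one has $\omega_{op}(M)\preceq n/(\log n)^\beta$; this simultaneously makes a single $M$ witness (a)--(c). The value of $\beta$ is whatever this chaining estimate --- equivalently, the rank-dependent $\epsilon\to\gamma_2^*$ bound alluded to in Problem~\ref{problem1} --- delivers, and I expect that carrying it out against adversarially chosen feasible vectors is the hardest step of the whole proof.
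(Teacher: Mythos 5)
Your treatment of (a), (b) and of the lower bound in (c) is essentially the paper's: the paper takes the same explicit $M$ from Sections \ref{Main result}--\ref{The construction}, gets (a) and (b) from Step 1/Step 3 and Proposition \ref{optimality}, identifies $\omega_{op}$ with the $\gamma_2^*$-norm up to universal constants (Lemma \ref{SDP NSG} together with Corollary \ref{NSGTensor}), and obtains $\gamma_2^*(M)\succeq n/\log n$ by norming $M$ with the vectors $u_x^a=v_x^a=\sum_p\epsilon_{x,a}^pe_p$, which is exactly your ``use $V^*$ and $V^*V=\id_{H_n}$'' argument written out with the signs; so up to that point your proposal is sound and on the paper's route.

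The genuine gap is the upper bound $\omega_{op}(M)\preceq n/(\log n)^\beta$, which you correctly identify as the main obstacle but do not prove: the chaining/$\epsilon$-net plan over OP-feasible families is not carried out, and it is aimed at the wrong kind of statement. In the paper this bound is not probabilistic at all and does not use the random structure of $M$; it is the deterministic, rank-based Theorem \ref{optimality-SDP rankII}: if $z$ has rank $n$, $\|z\|_\epsilon\le 1$ and $\gamma_2^*(z)=C_nn$, then by trace duality one finds contractions $u,v$ with $|tr(v\circ T_z\circ u)|=C_nn$, a singular-value count gives $a_{[\delta n]}(v T_z u)\ge C_n/2$ with $\delta=C_n/2$, and inverting on the corresponding $\delta n$-dimensional block produces a $2/C_n$-complemented copy of $\ell_2^{[\delta n]}$ inside $\ell_1(c_0)$. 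The deep input is then the Bourgain--Casazza--Lindenstrauss--Tzafriri theorem (Theorem \ref{Bourgain}): any such complementation constant is at least $\alpha(\log n)^\beta$, forcing $C_n\preceq(\log n)^{-\beta}$. Your proposed substitute faces two concrete problems: the feasible OP data range over arbitrary-dimensional Hilbert spaces, so there is no obvious compact set to net and union-bound over, and more fundamentally the $(\log n)^\beta$ gain does not come from concentration over the signs $\epsilon_{x,a}^k$ (it holds for every rank-$n$ element with bounded $\epsilon$-norm) but from the Banach-space fact that $\ell_2^n$ embeds only badly complementedly into $\ell_1(c_0)$. Without importing that (or an equivalent rank-dependent $\epsilon\to\gamma_2^*$ estimate, which is precisely what Theorem \ref{optimality-SDP rankII} supplies), part (c) of the statement is not established.
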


Actually, the element $M$ in Theorem \ref{SDP} is exactly the same as the one considered in Theorem \ref{Theorem 1}. Furthermore, the key point to prove this result is again the complemented copy of $\ell_2^n$ provided in Theorem \ref{complemented}. Thus, we have a \emph{canonical object} $M$ which allows us to ``separate'' the three different models: classical, quantum and $OP$ (see explanation after Theorem \ref{complemented}).

In terms of distance, we immediately deduce:

\begin{corollary}\label{distances_SDP}
There exists a Bell inequality $M$ with $n$ inputs and $n+1$ outputs such that
\begin{enumerate}
\item[a)]$\frac{\omega_{op}(M)}{\sup_{P\in\mathcal{L}} |\langle M,P\rangle|}\succeq \frac{n}{\log n}$.
\item[b)]$\frac{\omega_{op}(M)}{\sup_{Q\in \mathcal{Q}}|\langle M,Q\rangle|}\succeq \frac{\sqrt{n}}{\log n}$.
\end{enumerate}

\end{corollary}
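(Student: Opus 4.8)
The final statement, Corollary \ref{distances_SDP}, will follow at once once the following three facts about the Bell inequality $\tilde{M}$ of Theorem \ref{Theorem-construction} (which already has $n$ inputs and $n+1$ outputs) are in hand: (i) $\sup_{P\in\mathcal{L}}|\langle\tilde{M},P\rangle|\preceq\log n$; (ii) $\sup_{Q\in\mathcal{Q}}|\langle\tilde{M},Q\rangle|\preceq\sqrt{n}\log n$; (iii) $\omega_{op}(\tilde{M})\succeq n$. Granting these, part (a) of the corollary is the ratio (iii)/(i), which is $\succeq n/\log n$, and part (b) is the ratio (iii)/(ii), which is $\succeq\sqrt{n}/\log n$. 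The same bounds hold, after the appropriate renormalization, for the rescaled $M=\tilde{M}/(C\log n)$ appearing in Theorem \ref{SDP}, but since the corollary only involves ratios the normalization is irrelevant to it.

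Facts (i) and (ii) are essentially recorded already. Recall that $\tilde{M}$ is, up to completing with a zero output, the element $M=(V\otimes V)(\id_{\ell_2^{\delta n}})\in\ell_1^n(\ell_\infty^n)\otimes\ell_1^n(\ell_\infty^n)$, where $V\colon H_n\to\ell_1^n(\ell_\infty^n)$ and $V^*\colon\ell_1^n(\ell_\infty^n)\to H_n$ are the complementation maps of Theorem \ref{complemented}. For (i), $\sup_{P\in\mathcal{L}}|\langle\tilde{M},P\rangle|\le\|\tilde{M}\|_\epsilon=\|M\|_\epsilon\le\|V\|^2\,\|\id_{\ell_2^{\delta n}}\|_{\ell_2^{\delta n}\otimes_\epsilon\ell_2^{\delta n}}\preceq\log n$, using Lemma \ref{first}; this is exactly Step \ref{step}. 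For (ii), $\sup_{Q\in\mathcal{Q}}|\langle\tilde{M},Q\rangle|\le\|\tilde{M}\|_{\min}=\|M\|_{\min}\preceq\sqrt{n}\log n$, which is precisely Proposition \ref{optimality} (together with the injectivity of the $\min$ norm and the standard fact that a quantum probability distribution is induced by a pair of completely contractive maps).

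The real content --- and the step I expect to be the main obstacle --- is (iii). The cheap estimate $\omega_{op}(\tilde{M})\ge\omega_{sdp1}(\tilde{M})\ge\sup_{Q\in\mathcal{Q}}|\langle\tilde{M},Q\rangle|$ only yields order $\sqrt{n}$, so one must genuinely exhibit a feasible point of the optimization problem \ref{OP} with value of order $n$. The guiding heuristic is that the one‑system relaxation $\omega_{op}$ "sees" the bare Hilbert‑space rank $\delta n$ of $\tilde{M}=(V\otimes V)(\id_{\ell_2^{\delta n}})$, not its true quantum value $\asymp\sqrt{n}$: the naive choice $u_x^a=v_x^a=V^*(e_x\otimes e_a)$ makes the OP objective equal to $\sum_{l}\langle V^*Ve_l,V^*Ve_l\rangle=\dim H_n\asymp\delta n$ on the nose, but it satisfies neither OP constraint (within a fixed $x$ the vectors $u_x^a$ are not pairwise orthogonal, and $\sum_a u_x^a$ is not a common unit vector $z$). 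The fix I would use is to replace $V^*$ by the sum of a row map and a column map --- legitimate because $\|V^*\|_{cb}\preceq 1$ as a map into $R_n\cap C_n$ by Lemma \ref{first} and Lemma \ref{Grothendieck}, so Wittstock's extension theorem applies exactly as in the proof of Lemma \ref{maxentanglement} --- and then realize the row and column pieces by honest pairwise‑orthogonal systems summing to a fixed $z$, in the spirit of the POVM construction of Step \ref{step2}, losing only a universal constant. Finally, for the companion upper bound $\omega_{op}(\tilde{M})\preceq n\,(\log n)^{1-\beta}$ (equivalently, the exponent $\beta$ in Theorem \ref{SDP}(c)) I would bound $\omega_{op}\le\omega_{sdp3}$ by the appropriate Hilbertian $\gamma_2^*$‑type tensor norm on $\ell_1^n(\ell_\infty^n)\otimes\ell_1^n(\ell_\infty^n)$ and insert the $OH$‑cb estimates for $V$ from Lemma \ref{lemma-optimality}, exactly as in Proposition \ref{optimality}; here $\beta<1$, and reaching $\beta=1$ is exactly the open Problem \ref{problem2}. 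With (i)--(iii) established, Corollary \ref{distances_SDP} is the pair of divisions noted at the outset.
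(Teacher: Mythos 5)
Your bookkeeping — reducing the corollary to (i) $\sup_{\mathcal L}|\langle \tilde M,P\rangle|\preceq\log n$, (ii) $\sup_{\mathcal Q}|\langle \tilde M,Q\rangle|\preceq\sqrt n\log n$, (iii) $\omega_{op}(\tilde M)\succeq n$, and then dividing — is exactly how the paper proceeds (the corollary is read off from Theorem \ref{SDP}, whose parts a) and b) come from Step \ref{step} and Proposition \ref{optimality} just as you say). The problem is (iii), which you correctly identify as the real content but do not actually prove. The vectors $u_x^a=v_x^a=V^*(e_x\otimes e_a)$ (equivalently $\frac1n\sum_p\epsilon_{x,a}^pe_p$) only witness a lower bound for the $\gamma_2^*$-norm, since in OP \ref{OP} the maps $e_x\otimes e_a\mapsto u_x^a$ are required to satisfy two structural constraints: for each fixed input the vectors must be pairwise orthogonal, and their sums over outputs must all equal one and the same unit vector $z$ for \emph{both} parties. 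Your proposed fix — apply Wittstock to $V^*$ because $\|V^*\|_{cb}\preceq1$ into $R_n\cap C_n$, split into a row and a column part, and then ``realize the row and column pieces by honest pairwise-orthogonal systems summing to a fixed $z$, in the spirit of Step \ref{step2}'' — does not deliver either constraint. Wittstock as used in Lemma \ref{maxentanglement} decomposes maps \emph{defined on} $R_n\cap C_n$; a map \emph{into} $R_n\cap C_n$ is trivially a pair of maps into $R_n$ and $C_n$, and neither observation produces orthogonal families. Step \ref{step2} produces POVMs, i.e.\ positive operators summing to at most the identity, not mutually orthogonal systems, and it contains nothing that yields a common unit vector shared by Alice and Bob; so ``losing only a universal constant'' is an assertion, not an argument.

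What the paper actually does at this point is Lemma \ref{SDP NSG}: it identifies $\omega_{op}(M)$, up to universal constants, with the $\gamma_2^*$-norm of $M$ on $NSG^*\otimes NSG^*$. The nontrivial direction is precisely the conversion you need: by the little Grothendieck theorem (Lemma \ref{Grothendieck}, via Theorem \ref{NSG isomorphism}) arbitrary Hilbertian contractions are upgraded to complete contractions into row/column, so that $\gamma_2^*\preceq\|\cdot\|_h$; the Haagerup norm on $NSG^*\otimes NSG^*$ is then computed through the completely isometric embedding $NSG^*\hookrightarrow *_{i=1}^N\ell_\infty^K$ (Remark \ref{FreeProduct} and \cite[Theorem 5.13]{Pisierbook}), which represents the data by \emph{spectral} families $\{E_x^a\}_a$ of mutually orthogonal positive operators summing to $1$; finally the polarization identity lets one take the same vector $\xi$ on both sides, and $u_x^a=E_x^a\xi$, $v_y^b=F_y^b\xi$, $z=\xi$ are OP-feasible. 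It is this free-product/Haagerup identification plus polarization — not a row/column splitting of $V^*$ — that supplies the orthogonality within each input and the single common $z$, and without it (or an equivalent substitute) your step (iii), and hence the corollary, is not established.
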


Thus, even when this $OP$ is very useful to approximate the entangled value of some kinds of games, we can not use it to approximate the value of a general Bell inequality. Moreover, we will show that part a) in Corollary \ref{distances_SDP} is essentially optimal. Specifically,

\begin{prop}\label{optimality_SDP_inputs}
For every Bell inequality $M$ with $n$ inputs and $n$ outputs we have
\begin{align*}
\frac{\omega_{op}(M)}{\sup_{P\in\mathcal{L}} |\langle M,P\rangle|}\preceq n.
\end{align*}
\end{prop}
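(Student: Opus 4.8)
The plan is to bound $\omega_{op}(M)$ by a quantity controlled by $\sup_{P\in\mathcal L}|\langle M,P\rangle|$ up to a factor $n$, using the tensor-norm description of the classical value and a crude estimate for the optimization problem OP in terms of the number of inputs. Recall from Step \ref{step} (and \cite{JPPVW}, Proposition 4) that $\sup_{P\in\mathcal L}|\langle M,P\rangle|$ is comparable to $\|M\|_{\ell_1^n(\ell_\infty^n)\otimes_\epsilon \ell_1^n(\ell_\infty^n)}$, so it suffices to show $\omega_{op}(M)\preceq n\,\|M\|_{\ell_1^n(\ell_\infty^n)\otimes_\epsilon \ell_1^n(\ell_\infty^n)}$. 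The first step is to reinterpret $\omega_{op}(M)$ as a tensor norm: the constraints in OP \ref{OP} (the vectors $\{u_x^a\}_a$ summing to a fixed unit vector $z$ and being pairwise orthogonal for fixed $x$) say precisely that the map $e_x\otimes e_a\mapsto \langle \cdot\,, u_x^a\rangle$ extends to a contraction on $NSG(n,n)^*$ — indeed $\{u_x^a(u_x^a)^*\}$ is, after adding the missing $(n+1)$-st POVM element, a genuine POVM on a Hilbert space, so $u_x^a$ arises from a complete contraction $\ell_1^n(\ell_\infty^n)\to \text{(column Hilbert space)}$ in the sense of \cite[Section 8]{JPPVW}. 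Hence $\omega_{op}(M)$ is, up to absolute constants, the norm of $M$ in $\ell_1^n(\ell_\infty^n)\otimes_{\gamma_2^*}\ell_1^n(\ell_\infty^n)$, where $\gamma_2^*$ is the tensor norm dual to the factorization-through-Hilbert-space norm $\gamma_2$.

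The second step is then purely a matter of comparing the $\gamma_2^*$ and $\epsilon$ tensor norms on $\ell_1^n(\ell_\infty^n)\otimes \ell_1^n(\ell_\infty^n)$. Since $\gamma_2^*\le \pi$ always, and since the identity $\id:\ell_\infty^n(\ell_\infty^n)\to \ell_1^n(\ell_\infty^n)$ has norm $\le n$ at the Banach space level (this is the $N$-dependence already used in the proof of Theorem \ref{upperbounds}, via \cite{Pisierbook2}), one factors
\begin{align*}
\id\otimes\id:\ell_1^n(\ell_\infty^n)\otimes_\epsilon \ell_1^n(\ell_\infty^n)\rightarrow \ell_\infty^n(\ell_\infty^n)\otimes_{\gamma_2^*}\ell_1^n(\ell_\infty^n)\rightarrow \ell_1^n(\ell_\infty^n)\otimes_{\gamma_2^*}\ell_1^n(\ell_\infty^n),
\end{align*}
where the first arrow is bounded because $\ell_\infty^n(\ell_\infty^n)=\ell_\infty^{n^2}$ is a commutative $C^*$-algebra (so on it the $\epsilon$, $\gamma_2^*$ and $\min$ norms all agree against any other space), and the second arrow costs the factor $n$ from the above estimate on $\id:\ell_\infty^n(\ell_\infty^n)\to\ell_1^n(\ell_\infty^n)$. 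Chaining these gives $\|M\|_{\gamma_2^*}\preceq n\,\|M\|_\epsilon$, and combined with Step \ref{step} this is exactly the claimed inequality $\omega_{op}(M)/\sup_{P\in\mathcal L}|\langle M,P\rangle|\preceq n$.

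The main obstacle I anticipate is the first step: making the identification of $\omega_{op}(M)$ with $\|M\|_{\gamma_2^*}$ precise, in particular checking that the orthogonality constraints $\langle u_x^a,u_x^b\rangle=0$ in OP \ref{OP} correspond exactly (not just up to an absolute constant) to the completely bounded structure of $NSG(n,n)^*$, and that the constraint $\sum_a u_x^a=z$ with $\|z\|=1$ matches the operator system normalization rather than a looser one. One should either invoke the $NSG$-space machinery of Section \ref{geometric} together with the free-product description in Remark \ref{FreeProduct}, or argue directly: given feasible $\{u_x^a\},\{v_y^b\},z$, set $E_x^a=u_x^a(u_x^a)^*$ (a completing POVM after adding $E_x^{n+1}$), observe $\langle u_x^a,v_y^b\rangle$ is a matrix entry of a product of contractions, and deduce that $(u,v)$ realize a $\gamma_2^*$-pairing; conversely a $\gamma_2^*$-pairing yields feasible vectors after polar decomposition. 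Once this dictionary is in place the rest is the routine factorization above, and the absolute constants can be absorbed since we only claim an $O(n)$ bound.
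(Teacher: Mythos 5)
Your reduction of the statement to a tensor--norm inequality is the same as the paper's: via Lemma \ref{SDP NSG} (and Remark \ref{FreeProduct}) one has $\omega_{op}(M)\simeq\|M\|_{\gamma_2^*}$, and via Lemma \ref{connectionNSG} together with Corollary \ref{NSGTensor} the classical value is comparable to the $\epsilon$-norm on $\ell_1^n(\ell_\infty^n)\otimes\ell_1^n(\ell_\infty^n)$; this is exactly how the paper passes to Proposition \ref{optimality_SDP_inputsII}. (A minor point: Step \ref{step} / \cite[Proposition 4]{JPPVW} only gives the one-sided bound $\sup_{P\in\mathcal L}|\langle M,P\rangle|\le\|M\|_\epsilon$; the two-sided comparability you invoke really comes from the $NSG$ identification, which you do gesture at.) The genuine gap is in your factorization step. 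The claim that on $\ell_\infty^n(\ell_\infty^n)=\ell_\infty^{n^2}$ ``the $\epsilon$, $\gamma_2^*$ and $\min$ norms all agree against any other space'' is false: while $\epsilon=\min$ does hold on $\ell_\infty^N\otimes Y$, the norm $\gamma_2^*$ is not dominated by $\epsilon$ there unless the second factor is also of $C(K)/L_1$ type. Concretely, for $w=\sum_{x=1}^n e_{x,1}\otimes(e_x\otimes e_1)$ one has $\|w\|_{\ell_\infty^{n^2}\otimes_\epsilon\ell_1^n(\ell_\infty^n)}=1$, but taking the contractions $u:\ell_\infty^{n^2}\to\ell_2^n$, $u(e_{x,a})=\delta_{a,1}f_x/\sqrt n$, and $v:\ell_1^n(\ell_\infty^n)\to\ell_2^n$, $v(e_y\otimes e_b)=\delta_{b,1}f_y$, gives $(u\otimes v)(w)=n^{-1/2}\sum_x f_x\otimes f_x$, whose projective norm is $\sqrt n$; hence $\gamma_2^*(w)\ge\sqrt n$ on that space. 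So your first arrow is not justified by commutativity of $\ell_\infty^{n^2}$; whether the composite inequality $\gamma_2^*\bigl(M;\ell_\infty^{n^2}\otimes\ell_1^n(\ell_\infty^n)\bigr)\preceq\|M\|_{\ell_1^n(\ell_\infty^n)\otimes_\epsilon\ell_1^n(\ell_\infty^n)}$ holds at all is not established by your argument, and it would itself be a Grothendieck-type theorem requiring proof, not a formal consequence of the target space being abelian.

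The paper's own proof avoids this entirely and gives a clean repair of your second step: after the same reduction, it uses $d(\ell_1^n,\ell_\infty^n)\le\sqrt n$ to pick $u:\ell_\infty^n\to\ell_1^n$ with $\|u\|\le\sqrt n$, $\|u^{-1}\|\le 1$, amplifies to $\tilde u=\id\otimes u:\ell_1^n(\ell_\infty^n)\to\ell_1^{n^2}$, and factors $\id\otimes\id$ through $\ell_1^{n^2}\otimes_\epsilon\ell_1^{n^2}\rightarrow\ell_1^{n^2}\otimes_{\gamma_2^*}\ell_1^{n^2}$, where the classical Grothendieck inequality applies with constant $K_G$; the metric mapping properties of $\epsilon$ and $\gamma_2^*$ then give the total cost $K_G\|\tilde u\|^2\|\tilde u^{-1}\|^2\le K_G\,n$. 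In other words, the factor $n$ should come from the Banach--Mazur distance between $\ell_\infty^n$ and $\ell_1^n$ (squared) combined with genuine Grothendieck on $\ell_1\otimes\ell_1$, rather than from $\|\id:\ell_\infty^{n^2}\to\ell_1^n(\ell_\infty^n)\|\le n$ combined with an unproved Grothendieck-type statement for $\ell_\infty^{n^2}\otimes\ell_1^n(\ell_\infty^n)$. If you substitute this factorization for yours, the rest of your write-up (the $\omega_{op}\simeq\gamma_2^*$ dictionary and the $\epsilon$/classical-value identification) goes through as in the paper.
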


Furthermore, as a consequence of a deep result in \cite{BCLT}, we can prove a much sharper result about the previous optimality when we consider the rank of the operator $M$. Indeed, we will show

\begin{theorem}\label{optimality-SDP rank}

Let's denote, for each $n\in \N$,

\begin{align*}
A_n=\sup\left\{A: \frac{\omega_{op}(M)}{\sup_{P\in\mathcal{L}} |\langle M,P\rangle|}\leq A n\right\},
\end{align*}where the $\sup$ runs over all Bell inequalities $M$ with rank $n$.

Then,

\begin{align*}
\frac{1}{\log n}\preceq A_n \preceq \frac{1}{(\log n)^\beta}
\end{align*}
for certain universal constant $\beta$.
\end{theorem}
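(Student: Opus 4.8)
The plan is to prove the two halves of the estimate separately. The lower bound amounts to exhibiting, for each $n$, a rank-$n$ Bell inequality $M$ with $\frac{\omega_{op}(M)}{\sup_{P\in\mathcal L}|\langle M,P\rangle|}\succeq \frac{n}{\log n}$; the upper bound amounts to showing $\frac{\omega_{op}(M)}{\sup_{P\in\mathcal L}|\langle M,P\rangle|}\preceq \frac{n}{(\log n)^\beta}$ for \emph{every} rank-$n$ Bell inequality $M$, and it is here that the deep result of \cite{BCLT} is used. Note that the upper bound contains the upper estimate in part (c) of Theorem \ref{SDP} as the special case of that particular $M$.

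For the lower bound: the Bell inequality $M$ of Theorem \ref{SDP} is the element $(V\otimes V)(\id_{\ell_2^{\delta n}})$ coming from the proof of Theorem \ref{Theorem min-epsilon}, so, since $V$ is injective on $H_n=\ell_2^{\delta n}$ (its range being a $C\sqrt{\log n}$-complemented copy of $\ell_2^{\delta n}$, by Theorem \ref{complemented}), $M$ has rank exactly $\delta n$, and adjoining zero outputs to place it in $\ell_1^n(\ell_\infty^{n+1})\otimes\ell_1^n(\ell_\infty^{n+1})$ leaves the rank unchanged. Parts (a) and (c) of Theorem \ref{SDP} give $\sup_{P\in\mathcal L}|\langle M,P\rangle|\preceq 1$ and $\omega_{op}(M)\succeq \frac{n}{\log n}$, whence $\frac{\omega_{op}(M)}{\sup_{P\in\mathcal L}|\langle M,P\rangle|}\succeq \frac{n}{\log n}$. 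Since $\delta$ is a universal constant, relabelling $m=\delta n$ produces for every $m$ a rank-$m$ Bell inequality whose ratio is $\succeq \frac{m}{\log m}$, i.e. $A_m\succeq \frac{1}{\log m}$.

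For the upper bound I would first translate the quotient into a purely Banach-space quantity on $\ell_\infty^N(\ell_1^{K-1})\otimes\ell_\infty^N(\ell_1^{K-1})$. By Lemma \ref{connectionNSG} and Corollary \ref{NSGTensor}, the denominator $\sup_{P\in\mathcal L}|\langle M,P\rangle|$ is, up to a universal constant, the $\epsilon$-tensor norm of $M$ (equivalently the Minkowski functional of $\tilde{\mathcal L}$), while the numerator $\omega_{op}(M)$ — being, by the definition of OP \ref{OP}, the supremum of $M$ tested against families of Hilbert-space vectors subject to the sum-to-$z$ and orthogonality constraints that are exactly the operator-system structure of $NSG(N,K)^*$ — is, again by Corollary \ref{NSGTensor}, up to a universal constant the $\gamma_2^*$-tensor norm of $M$ on the same space. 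The claim therefore reduces to: for every rank-$n$ element $M$ of $\ell_\infty^N(\ell_1^{K-1})\otimes\ell_\infty^N(\ell_1^{K-1})$ one has $\|M\|_{\gamma_2^*}\preceq \frac{n}{(\log n)^\beta}\,\|M\|_{\epsilon}$. The naive bound here, obtained by decomposing $M$ into $n$ well-conditioned rank-one pieces, is of order $n$ (cf. Proposition \ref{optimality_SDP_inputs}, and this is sharp in general); the improvement by a power of $\log n$ that becomes available once the rank is restricted to $n$ is exactly the deep factorization (Gordon--Lewis type) estimate of \cite{BCLT}. Feeding that estimate through the reduction produces the exponent $\beta$.

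The main obstacle is the upper bound, and within it the careful verification of the dictionary relating the SDP relaxation $\omega_{op}$ of OP \ref{OP}, the operator system $NSG(N,K)^*$, and the $\gamma_2^*$-tensor norm: one must check that the distinguished unit vector $z$ and the orthogonality relations in OP \ref{OP} are fully absorbed by the passage to $NSG$, so that no spurious polynomial constants creep in, and that the rank of $M$ as an array of probability-distribution coefficients agrees with the rank of the operator to which \cite{BCLT} is applied. Granting this dictionary and the cited estimate, the upper bound follows; the lower bound of the second paragraph then shows that the logarithmic loss is unavoidable, so that $\beta$ is the only remaining degree of freedom in the statement.
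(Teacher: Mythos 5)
Your lower bound is fine and is essentially the paper's: one exhibits the explicit random element of Theorem \ref{SDP} (equivalently, Theorem \ref{gamma}), notes it has rank at most $n$, and uses parts a) and c); the relabelling $m=\delta n$ is harmless. Your reduction of the whole statement to a tensor-norm inequality, $\|M\|_{\gamma_2^*}\preceq \frac{n}{(\log n)^{\beta}}\|M\|_{\epsilon}$ for rank-$n$ elements (via Lemma \ref{SDP NSG} and Corollary \ref{NSGTensor}; note the coefficients live in $\ell_1^N(\ell_\infty^K)\otimes\ell_1^N(\ell_\infty^K)$, not $\ell_\infty^N(\ell_1^{K-1})\otimes\ell_\infty^N(\ell_1^{K-1})$ as you wrote), is also the paper's route (Theorem \ref{optimality-SDP rankII}).

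The genuine gap is in the upper bound: you invoke \cite{BCLT} and assert that ``feeding that estimate through the reduction produces the exponent $\beta$,'' but the cited result (Theorem \ref{Bourgain}) says nothing directly about tensor norms of rank-$n$ elements; it says that any factorization $\id_{\ell_2^m}=u\circ v$ through $\ell_1(c_0)$ costs $\|u\|\|v\|\geq\alpha(\log m)^{\beta}$. The missing step --- which is the actual content of the paper's proof --- is the construction of such a factorization from a rank-$n$ element $M$ with $\|M\|_{\epsilon}\leq 1$ and $\|M\|_{\gamma_2^*}=C_n n$: because $\mathrm{rank}(M)=n$, the $\gamma_2^*$-norm is attained by contractions into $\ell_2^n$, so $C_n n=|\mathrm{tr}(v\circ T_M\circ u)|$ with $\|v\circ T_M\circ u\|\leq 1$; a singular-value count then shows $a_{[\delta n]}(v\circ T_M\circ u)\geq C_n/2$ with $\delta=C_n/2$, so $\id_{\ell_2^{[\delta n]}}$ factors through $\ell_1(c_0)$ with constant at most $2/C_n$, and only then does Theorem \ref{Bourgain} give $2/C_n\geq\alpha(\log(C_n n/2))^{\beta}$, whence $C_n\preceq(\log n)^{-\beta}$ (after the harmless reduction to $C_n\succeq 1/\log n$). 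Without this trace-duality and singular-value argument, restricting the rank buys you nothing beyond the trivial $O(n)$ bound of Proposition \ref{optimality_SDP_inputsII}; your proposal identifies the right external ingredient but omits the mechanism that connects it to the statement, and your description of \cite{BCLT} as a Gordon--Lewis type factorization estimate suggests you have a different (and not obviously sufficient) formulation in mind than the complementation lower bound the paper actually uses.
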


As we did in the previous sections (see Lemma \ref{connectionNSG}), we will regard Bell inequalities as elements in $(NSG(n,n)^*\otimes NSG(n,n)^*)_{\R}$. Actually, since we are working up to universal constants, we can deal with the complex linear space $NSG(n,n)^*\otimes NSG(n,n)^*$ and, via Corollary \ref{NSGTensor}, with $\ell_1^n(\ell_\infty^n)\otimes \ell_1^n(\ell_\infty^n)$. In the same way as we described the classical (resp. Quantum) value of a Bell inequality via the tensor norm $\epsilon$ (resp. $min$), we will be able to describe the value $\omega_{op}$ via another tensor norm. In this way, the results of this section emphasize again on the importance of the Banach space techniques to study this kind of problems.

We will need the following two definitions (see \cite{Pisierbook4}, \cite{Pisierbook} respectively):

\begin{definition}
Let $X$ and $Y$ be two Banach spaces. For every $z\in X\otimes Y$ we define

\begin{align*}
\gamma_2^*(z)=\sup \{\|(u\otimes v)(z)\|_{H\otimes_\pi H}\},
\end{align*}
where the $\sup$ is taken over all Hilbert spaces $H$ and all contractions $u:X \rightarrow H$, $v:Y \rightarrow H$.
\end{definition}




\begin{definition}
Given two operator spaces $X$, $Y$, for every $z=\sum_{i=1}^nx_i\otimes y_i\in X\otimes Y$ we define

\begin{align*}
\|z\|_h=\sup\{\|\sum_{i=1}^nu(x_i)v(y_i)\|_{B(H)}\},
\end{align*}where the $\sup$ is taken over all Hilbert spaces $H$ and all completely contractions $u:X \rightarrow B(H)$, $v:Y \rightarrow B(H)$.
\end{definition}

The following lemma will be very helpful in this section:

\begin{lemma}
For every element $z\in \ell_1^N(\ell_\infty^K)\otimes \ell_1^N(\ell_\infty^K)$ we have $\|z\|_h\leq \gamma_2^*(z)\leq K_G^2\|z\|_h$.
\end{lemma}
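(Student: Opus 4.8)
The plan is to prove the two-sided comparison $\|z\|_h\leq \gamma_2^*(z)\leq K_G^2\|z\|_h$ by interpolating between the operator-space definition of $\|\cdot\|_h$ and the Banach-space definition of $\gamma_2^*$, using Grothendieck's inequality to pass between complete contractions into $B(H)$ and ordinary contractions into Hilbert space. The left inequality is essentially formal. Given any Hilbert space $H$ and contractions $u,v:\ell_1^N(\ell_\infty^K)\to H$, I would view $H$ as a column Hilbert space $H=C$ so that $u,v$ become maps into $B(\ell_2)$; the key point is that on the space $\ell_1^N(\ell_\infty^K)$ — whose operator space structure is the natural predual structure — any bounded map into a Hilbert space is automatically completely bounded into $R\cap C$, hence in particular a bounded map into $C$ is completely bounded with the same cb-norm into $C\subset B(\ell_2)$. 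Thus $u,v$ are complete contractions $\ell_1^N(\ell_\infty^K)\to B(\ell_2)$, and since $\|\sum_i u(x_i)v(y_i)\|_{B(H)} = \|(u\otimes v)(z)\|_{H\otimes_\pi H}$ when $H$ is a Hilbert space (the product of a column and row vector has operator norm equal to the Hilbert-space inner-product pairing), we get $\gamma_2^*(z)\leq \|z\|_h$ — wait, that would be the wrong direction, so let me reconsider: actually I would argue $\|z\|_h\le \gamma_2^*(z)$ directly by noting that if $u,v:\ell_1^N(\ell_\infty^K)\to B(H)$ are complete contractions realizing $\|z\|_h$, I cannot immediately bound $\|\sum u(x_i)v(y_i)\|$ by a $\pi$-tensor norm on a Hilbert space. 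Instead the right approach is: by the Christensen–Sinclair–Haagerup factorization of completely bounded bilinear forms (the description of $\|\cdot\|_h$ as the Haagerup tensor norm), any complete contraction can be compressed, but the cleanest route is to use that $\ell_1^N(\ell_\infty^K)$ has a very constrained operator space structure.

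Concretely, here is the structure I would use. The inequality $\|z\|_h\le\gamma_2^*(z)$ follows because for maps on $\ell_1^N(\ell_\infty^K)$, complete contractions into $B(H)$ and the quantities in $\gamma_2^*$ are comparable \emph{in the easy direction}: if $u:\ell_1^N(\ell_\infty^K)\to B(H)$ is a complete contraction, then by Wittstock/Haagerup factorization $u$ factors through a Hilbert space with contractive factors up to the cb-norm, but more simply, for the Haagerup tensor product one has $\|z\|_h = \|z\|_{X\otimes_h Y}$ and for these particular spaces the Haagerup norm is dominated by $\gamma_2^*$ because $\gamma_2^*$ is (by its very definition, taking $u,v$ into a common Hilbert space $H$ and pairing) exactly the norm obtained by requiring the intermediate space to be Hilbertian with factorization $z=\alpha\beta$ through $H$. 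Since any Haagerup factorization $z = x \cdot y$ with $x\in M_{1,\infty}(X)$, $y\in M_{\infty,1}(Y)$ can be turned into a Hilbert-space factorization, $\|z\|_h\le\gamma_2^*(z)$. For the right inequality $\gamma_2^*(z)\le K_G^2\|z\|_h$: given contractions $u:\ell_1^N(\ell_\infty^K)\to H$, $v:\ell_1^N(\ell_\infty^K)\to H$ witnessing $\gamma_2^*(z)$, I use Grothendieck's theorem twice. The map $u$ restricted to each block $\ell_\infty^K$ (there are $N$ of them, and $\|u\| = \sup_x\|u_x\|$ since the domain is $\ell_1^N(\cdots)$) is a bounded map $\ell_\infty^K\to H$, hence $2$-summing with $\pi_2(u_x)\le K_G\|u_x\|$, so $u$ is completely bounded from $\ell_1^N(\ell_\infty^K)$ into $C$ (or $R\cap C$) with cb-norm $\le K_G\|u\|$ — this is precisely Lemma \ref{Grothendieck} / Remark \ref{Grothendieck2} in the excerpt. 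Likewise $v$ is cb into $R\cap C$ with cb-norm $\le K_G\|v\|$. Viewing $H=C\subset B(\ell_2)$, these become complete contractions (after rescaling) $\ell_1^N(\ell_\infty^K)\to B(\ell_2)$ with norms $\le K_G\|u\|$, $\le K_G\|v\|$, and for a Hilbert space the $B(H)$-product norm equals the $H\otimes_\pi H$-pairing, so $\|(u\otimes v)(z)\|_{H\otimes_\pi H} = \|\sum_i u(x_i)v(y_i)\|_{B(\ell_2)} \le K_G^2\|z\|_h$.

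Carrying this out in order: first state the two easy facts — (i) on $\ell_1^N(\ell_\infty^K)$ the norm of any map into a Hilbert space equals the supremum of the norms of the $N$ induced maps on $\ell_\infty^K$, and (ii) for a Hilbert space $H$ and any $w_i,w_i'\in B(H)$ arising as $w_i = u(x_i)$, $w_i' = v(y_i)$ with $u,v$ valued in $C$, resp. $R$, one has $\|\sum w_i w_i'\|_{B(H)}$ controlled by the $\pi$-norm pairing; second, invoke Lemma \ref{Grothendieck} (little Grothendieck, with constant $K_G$, using that $\ell_\infty^K$ is a commutative $C^*$-algebra) to upgrade boundedness of $u,v$ into $H$ to complete boundedness into $R\cap C$; third, assemble the chain of inequalities. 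The main obstacle is bookkeeping the difference between $C$, $R$, and $R\cap C$ and making sure the decomposition $u = u_c+u_r$ (as in the proof of Lemma \ref{maxentanglement}) does not introduce an extra factor of $2$ that would spoil the clean constant $K_G^2$ — one has to be slightly careful about whether the claimed constant should really be $K_G^2$ or $2K_G^2$, and choose the factorization (pure column for $u$, pure row for $v$) that avoids the splitting. A secondary subtlety is that the definition of $\|z\|_h$ uses \emph{completely} contractive maps into $B(H)$ whereas $\gamma_2^*$ uses merely contractive maps into a Hilbert space; the content of the lemma is exactly that Grothendieck's inequality reconciles these two for the specific space $\ell_1^N(\ell_\infty^K)$, and the left-hand inequality records that going the other way costs nothing.
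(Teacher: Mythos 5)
Your plan for the right--hand inequality follows the paper's route (upgrade the contractions $u,v:\ell_1^N(\ell_\infty^K)\to\ell_2$ to completely bounded maps into row/column spaces via Lemma \ref{Grothendieck}, paying a constant $K_G$ for each leg), but the step that is supposed to close the argument is false as stated. If you realize $u$ as column--valued and $v$ as row--valued inside $B(\ell_2)$ (the ``tautological'' inclusions), then each $u(x_i)v(y_i)$ is a rank--one operator and $\|\sum_i u(x_i)v(y_i)\|_{B(\ell_2)}$ is the \emph{spectral} norm of the matrix associated with $(u\otimes v)(z)$, i.e.\ its $\epsilon$-norm, not the projective (trace--class) norm: already for $z=\sum_i e_i\otimes e_i\in\ell_1^n\otimes\ell_1^n$ and $u=v$ the canonical contraction onto $\ell_2^n$ the two sides differ by a factor $n$. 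So the asserted identity $\|(u\otimes v)(z)\|_{H\otimes_\pi H}=\|\sum_i u(x_i)v(y_i)\|_{B(\ell_2)}$ fails, and with it your chain leading to $\gamma_2^*(z)\le K_G^2\|z\|_h$. What is needed, and what the paper actually uses, is the completely isometric identification of $\ell_2\otimes_\pi\ell_2$ with a Haagerup tensor product of column and row (trace class as $C\otimes_h R$), together with the functoriality of $\otimes_h$: since Lemma \ref{Grothendieck} makes $u$ completely bounded into the column leg and $v$ into the row leg with cb-norms at most $K_G$, one gets $\|(u\otimes v)(z)\|_{\pi}=\|(u\otimes v)(z)\|_{C\otimes_h R}\le K_G^2\|z\|_h$. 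In other words, passing from the $\pi$-norm to $\|\cdot\|_h$ requires the supremum over \emph{all} complete contractions of $C$ and $R$ into $B(H)$, not just the tautological embeddings. (Your worry about $R$ versus $C$ versus $R\cap C$ and a possible extra factor $2$ is harmless: Lemma \ref{Grothendieck} lands in $R\cap C$, so you may use the column part for $u$ and the row part for $v$ with no splitting.)

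The left--hand inequality is not actually proved in your sketch. Your description of $\gamma_2^*$ as ``the norm obtained by requiring the intermediate space to be Hilbertian with a factorization $z=\alpha\beta$'' describes an infimum--type factorization norm, whereas $\gamma_2^*$ is by definition a supremum (the dual of the $\gamma_2$ factorization norm), and the inference ``any Haagerup factorization can be turned into a Hilbert--space factorization, hence $\|z\|_h\le\gamma_2^*(z)$'' is a non sequitur; if anything, a comparison of factorization classes of that kind would point in the opposite direction. The paper settles this inequality by quoting the Blecher--Paulsen identification relating $\gamma_2^*$ to a Haagerup norm built on commutative quantizations \cite{BlPa}. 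Alternatively, a direct argument is available and would fit your ``formal'' ambition: for contractions $u,v$ into $B(H)$ and unit vectors $\xi,\eta$, the bilinear form $(x,y)\mapsto\langle\xi,u(x)v(y)\eta\rangle=\langle u(x)^*\xi,\,v(y)\eta\rangle$ factors through Hilbert space via the contractive maps $x\mapsto u(x)^*\xi$ and $y\mapsto v(y)\eta$, so every vector state of $\sum_i u(x_i)v(y_i)$ is dominated by $\gamma_2^*(z)$, which yields $\|z\|_h\le\gamma_2^*(z)$. As it stands, however, both halves of your proposal have genuine gaps at exactly the points where the operator--space identifications must be invoked.
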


\begin{proof}
The first inequality follows from the fact that for any pair of Banach spaces $X$ e $Y$, the map $\id:min(X)\otimes_h min(Y)\rightarrow X\otimes_{\gamma}Y$ defines an isometry (see for instance \cite{BlPa}). Here, given a Banach space $Z$ we denote by $\min(Z)$ the operator space structure endowed by any embedding $Z \hookrightarrow C(K)$. To see the second inequality, consider an element $z$ such that $\|z\|_h\leq 1$. Now, according to Lemma \ref{Grothendieck}, given any contraction $u:\ell_1^N(\ell_\infty^K)\rightarrow \ell_2^n$ (resp. $v:\ell_1^N(\ell_\infty^K)\rightarrow \ell_2^n$) it verifies $\|u:\ell_1^N(\ell_\infty^K)\rightarrow R_n\|_{cb}\leq K_G$ (resp. $\|v:\ell_1^N(\ell_\infty^K)\rightarrow C_n\|_{cb}\leq K_G$). Then, \begin{align*}
\|(u\otimes v)(z)\|_{\ell_2^n\otimes_\pi \ell_2^n}=\|(u\otimes v)(z)\|_{R_n\otimes_h C_n}\leq K_G^2.
\end{align*}
\end{proof}

\begin{remark}\label{NSGHaagerup}
According to Theorem \ref{NSG isomorphism}, we know that for any $z\in NSG(N,K)^*\otimes NSG(N,K)^*$ we have $\|z\|_h\leq \gamma_2^*(z)\leq C\|z\|_h$ for certain universal constant $C$.
\end{remark}

The following lemma shows that $\gamma_2^*(M)$ is, up to a universal constant, the same as $\omega_{op}(M)$.

\begin{lemma}\label{SDP NSG}
For every Bell inequality $M$ with $N$ inputs and $K$ outputs we have

\begin{align*}
\omega_{op}(M)\simeq \|M\|_{NSG(N,K)^*\otimes_{\gamma_2^*} NSG(N,K)^*}.
\end{align*}
\end{lemma}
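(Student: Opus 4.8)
The plan is to prove the two‑sided estimate $\frac{1}{C}\,\gamma_2^*(M)\le \omega_{op}(M)\le \gamma_2^*(M)$; by definition of the tensor norm this is exactly the assertion, since $\|M\|_{NSG(N,K)^*\otimes_{\gamma_2^*}NSG(N,K)^*}=\gamma_2^*(M)$. By Remark \ref{NSGHaagerup} one has $\|M\|_h\le\gamma_2^*(M)\le C\|M\|_h$, so it suffices to compare $\omega_{op}(M)$ with the Haagerup norm of $M\in NSG(N,K)^*\otimes NSG(N,K)^*$, and I will show
$$\omega_{op}(M)\,\le\,\|M\|_h\,\le\,2\,\omega_{op}(M).$$

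The first step is to rewrite both quantities as suprema over pairs of projection valued measures (PVMs) on a common Hilbert space. For $OP$\,\ref{OP}: if $(u_x^a),(v_y^b),z$ is feasible, then $\sum_au_x^a=z$ together with $\langle u_x^a,u_x^b\rangle=0$ for $a\ne b$ forces $\langle z,u_x^a\rangle=\|u_x^a\|^2$, hence $u_x^a=P_x^az$ for the orthogonal projection $P_x^a$ onto $\C u_x^a$; the $(P_x^a)_a$ are mutually orthogonal, complete to a PVM, and likewise $v_y^b=Q_y^bz$ for a PVM $(Q_y^b)_b$. Since each $P_x^a$ is self‑adjoint, $\langle u_x^a,v_y^b\rangle=\langle z,P_x^aQ_y^bz\rangle$, and the objective becomes $|\langle z,(\sum_{x,y,a,b}M_{x,y}^{a,b}P_x^aQ_y^b)z\rangle|$; conversely every pair of PVMs and every unit $z$ yields a feasible point. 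Thus
$$\omega_{op}(M)=\sup\Big\{\,w\big(\sum_{x,y,a,b}M_{x,y}^{a,b}P_x^aQ_y^b\big):(P_x^a)_a,(Q_y^b)_b\ \text{PVMs on a common}\ \mathcal H\,\Big\},$$
where $w(\cdot)$ is the numerical radius. I claim the analogous identity for $\|M\|_h$ with $w(\cdot)$ replaced by $\|\cdot\|_{B(\mathcal H)}$. The inequality ``$\ge$'' is easy: a pair of PVMs induces, by the universal property of the free product, unital $*$‑homomorphisms $\star_{x=1}^N\ell_\infty^K\to B(\mathcal H)$ sending $\pi_x(e_a)$ to the chosen projections; these are complete contractions, and restricting them along $NSG(N,K)^*\hookrightarrow\star_{x=1}^N\ell_\infty^K$ (Remark \ref{FreeProduct}) gives admissible pairs in the definition of $\|M\|_h$.

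For the reverse inequality, start from complete contractions $\phi,\psi:NSG(N,K)^*\to B(H)$; by the Wittstock extension theorem \cite{Paulsen} they extend to complete contractions on $\star_{x=1}^N\ell_\infty^K$, and by the Wittstock/Paulsen factorization of $cb$‑maps out of a $C^*$‑algebra, $\phi(e_{x,a})=S^*E_x^aT$ and $\psi(e_{y,b})={S'}^*F_y^bT'$ with $(E_x^a)_a,(F_y^b)_b$ PVMs on some $\widehat H$ and contractions $S,T,S',T'$. Writing $R=T{S'}^*$ (a contraction on $\widehat H$), $\|\sum_{x,y,a,b}M_{x,y}^{a,b}\phi(e_{x,a})\psi(e_{y,b})\|\le\|\sum_{x,y,a,b}M_{x,y}^{a,b}E_x^aRF_y^b\|$. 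Dilating $R$ to a unitary $J$ on $\widehat H\oplus\widehat H$, the operator $\sum_{x,y,a,b}M_{x,y}^{a,b}E_x^aRF_y^b$ is the compression to $\widehat H\oplus0$ of $\sum_{x,y,a,b}M_{x,y}^{a,b}(E_x^a\oplus E_x^a)J(F_y^b\oplus F_y^b)$, and absorbing $J$ into the second family by conjugation ($\mathcal E_x^a:=E_x^a\oplus E_x^a$, $\mathcal F_y^b:=J(F_y^b\oplus F_y^b)J^{-1}$, both PVMs) identifies this with $\sum_{x,y,a,b}M_{x,y}^{a,b}\mathcal E_x^a\mathcal F_y^b$. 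This proves $\|M\|_h=\sup\{\|\sum_{x,y,a,b}M_{x,y}^{a,b}\mathcal E_x^a\mathcal F_y^b\|_{B(\mathcal H)}\}$ over PVMs on a common $\mathcal H$. Combining the two PVM formulas with $w(T)\le\|T\|\le2w(T)$ gives $\omega_{op}(M)\le\|M\|_h\le2\omega_{op}(M)$, and then Remark \ref{NSGHaagerup} yields $\omega_{op}(M)\simeq\gamma_2^*(M)=\|M\|_{NSG(N,K)^*\otimes_{\gamma_2^*}NSG(N,K)^*}$.

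I expect the main obstacle to be the second PVM reformulation: one must handle with care the complete isometric embedding $NSG(N,K)^*\hookrightarrow\star_{x=1}^N\ell_\infty^K$, the Paulsen‑type factorization of complete contractions from a $C^*$‑algebra, and above all the dilation bookkeeping that removes the ``middle'' contraction $R$ without disturbing the PVM structure. It is precisely the common‑vector constraint $\sum_au_x^a=\sum_bv_y^b=z$ in $OP$\,\ref{OP} that forces this care, and it is the reason everything must be phrased through the numerical radius rather than the operator norm.
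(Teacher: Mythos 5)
Your proof is correct, and its skeleton coincides with the paper's: both arguments reduce the problem, via Remark \ref{NSGHaagerup}, to comparing $\omega_{op}(M)$ with the Haagerup norm of $M$ in $NSG(N,K)^*\otimes_h NSG(N,K)^*$, identify that norm with a supremum over pairs of PVMs of $\|\sum_{x,y,a,b}M_{x,y}^{a,b}E_x^aF_y^b\|_{B(\mathcal H)}$, and then pass to the vector-state (numerical radius) form by polarization, noting that $E_x^a\xi$, $F_y^b\xi$ are feasible for OP \ref{OP}. Where you genuinely diverge is in how the two key facts are established. For the PVM description of $\|M\|_h$ the paper simply cites the isometric embedding $NSG(N,K)^*\otimes_h NSG(N,K)^*\hookrightarrow \star_{i=1}^{2N}\ell_\infty^K$ via \cite[Theorem 5.13]{Pisierbook} together with Remark \ref{FreeProduct}, whereas you rederive it by hand: Wittstock extension to the free product, the Wittstock/Paulsen representation $\phi(\cdot)=S^*\pi(\cdot)T$ from \cite{Paulsen}, and a Halmos unitary dilation to absorb the middle contraction $R=T{S'}^*$ into one of the PVM families. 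This is more self-contained but longer, and it requires two routine points you gloss over: arranging both factorizations through a single representation (take the direct sum of the two representations and adjust $S,T,S',T'$ by the obvious inclusions), and completing your rank-one projections $(P_x^a)_a$ to a $K$-outcome PVM (add the defect projection $1-\sum_aP_x^a$ to one of them; this is harmless because it annihilates $z$). For the easy inequality the paper argues directly that OP-feasible vectors define contractions $u,v:NSG(N,K)^*\to H$, giving $\omega_{op}(M)\le\gamma_2^*(M)$ with constant one, while you instead prove the sharper structural fact that OP-feasible points are \emph{exactly} of the form $u_x^a=P_x^az$, $v_y^b=Q_y^bz$ for PVMs and a unit vector, which yields $\omega_{op}(M)=\sup_{\rm PVM}w\bigl(\sum_{x,y,a,b}M_{x,y}^{a,b}P_x^aQ_y^b\bigr)$; that exact reformulation is a nice addition not made explicit in the paper. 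Both routes give the claimed equivalence up to universal constants.
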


\begin{proof}

Suppose on the one hand that $\|M\|_{NSG(N,K)^*\otimes_{\gamma_2^*} NSG(N,K)^*}\leq 1$ and consider some vectors $\{u_x^a\}_{x,a}$ and $\{v_y^b\}_{y,b}$ in a Hilbert space $H$ verifying the restrictions in OP \ref{OP}. Then, it is easy to see that the map $u:NSG(N,K)^*\rightarrow H$ given by $u(e_{x,a})=u_x^a$ (resp. $v:NSG(N,K)^*\rightarrow H$ given by $v(e_{y,b})=v_y^b$) is well defined and verifies $\|u\|\leq 1$ (resp. $\|v\|\leq 1$). Thus, we have that
\begin{align*}
|\sum_{x,y,a,b=1}^nM_{x,y}^{a,b}\langle u_x^a, v_y^b\rangle|=|\sum_{x,y,a,b=1}^nM_{x,y}^{a,b}\langle u(e_{x,a}), v(e_{y,b})\rangle|\leq 1.
\end{align*}Therefore, $\omega_{op}(M)\leq \|M\|_{NSG(N,K)^*\otimes_{\gamma_2^*} NSG(N,K)^*}$.

The proof of the second inequality requires a more sophisticated argument. Note that, according to Remark \ref{NSGHaagerup}, we can consider $NSG(N,K)^*\otimes_h NSG(N,K)^*$. On the other hand, the application defined in Remark \ref{FreeProduct},

\begin{align*}
\iota:NSG(N,K)^*\rightarrow *_{i=1}^N\ell_\infty^K,
\end{align*}defines a completely isometric embedding. Then, we know (see \cite[Theorem 5.13]{Pisierbook}) that

\begin{align*}
\iota\otimes \iota:NSG(N,K)^*\otimes_h NSG(N,K)^*\rightarrow (*_{i=1}^N\ell_\infty^K) * (*_{i=1}^N\ell_\infty^K)= *_{i=1}^{2N}\ell_\infty^K
\end{align*}defines a (completely) isometric embedding.

Therefore, for any $M=\sum_{x,y,a,b}M_{x,y}^{a,b}e_{x,a}\otimes e_{y,b}\in NSG(N,K)^*\otimes NSG(N,K)^*$,

\begin{align*}
\|M\|_h=\|(\iota\otimes \iota)(M)\|_{*_{i=1}^{2N}\ell_\infty^K}=\|\sum_{x,y,a,b}M_{x,y}^{a,b}(\iota\otimes \iota)(e_{x,a}\otimes e_{y,b})\|_{*_{i=1}^{2N}\ell_\infty^K}.
\end{align*} Thus, it follows that $\|M\|_h= \sup\{\|\sum_{x,y,a,b}M_{x,y}^{a,b}E_x^aF_b^y\|_{B(H)}\}$, where the $\sup$ is taken over all families of positive operators $\{E_x^ a\}_{x,a}\subset B(H)$ (resp. $\{F_y^ b\}_{y,b}\subset B(H)$) verifying $\sum_a E_x^a=1$ for every $x$ (resp. $\sum_b F_y^b=1$ for every $y$) and $E_x^a\perp E_x^{a'}$ for every $a\neq a'$ (resp. $F_y^b\perp F_y^{b'}$ for every $b\neq b'$). We conclude the proof by using the polarization identity in order to take the supremum of $|\sum_{x,y,a,b}M_{x,y}^{a,b}\langle E_x^a\xi, F_b^y\xi\rangle|$ over all operators as before and $\xi$ in the unit sphere of $H$. Indeed, it is trivial to check that the elements $u_x^a= E_x^a\xi$ and $v_y^b=F_b^y\xi$ verify the restriction in OP \ref{OP}.
\end{proof}

\begin{remark}
In the case of correlation Bell inequalities, this norm exactly describes the set of quantum correlations. Specifically, given a matrix $(T_{i,j})_{i,j=1}^n$, we have (\cite{Tsirelson})

\begin{align*}
\sup\{ \sum_{i,j=1}^nT_{i,j}\gamma_{i,j}:(\gamma_{i,j})_{i,j=1}^n   \text{   is a quantum correlation matrix   } \}=\|\sum_{i,j=1}^nT_{i,j}e_i\otimes e_j\|_{\ell_1^n\otimes_{\gamma_2^*}\ell_1^n}.
\end{align*}

\end{remark}

\begin{remark}
The tensor norm $\gamma_2^*$ has been shown to be a very important tool in Communication Complexity (see \cite{LMSS} and \cite{LS} and the references therein).
\end{remark}

According to Lemma \ref{SDP NSG} and Corollary \ref{NSGTensor}, Theorem \ref{SDP} follows from the next result:

\begin{theorem}\label{gamma}

There exists an element $M\in \ell_1^n(\ell_\infty^n)\otimes \ell_1^n(\ell_\infty^n)$ such that

\begin{enumerate}
\item[a)]$\|M\|_\epsilon\preceq 1,$
\item[b)]$\frac{\sqrt{n}}{\log n}\preceq\|M\|_{min}\preceq\sqrt{n}$ and
\item[c)]$\frac{n}{\log n}\preceq\gamma_2^*(M)\preceq \frac{n}{(\log n)^{\beta}}$.
\end{enumerate}
\end{theorem}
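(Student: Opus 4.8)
The plan is to take exactly the element $M=(V\otimes V)(a)\in \ell_1^n(\ell_\infty^n)\otimes \ell_1^n(\ell_\infty^n)$ constructed in the proof of Theorem \ref{Theorem min-epsilon}, where $a=\id_{\ell_2^{\delta n}\otimes \ell_2^{\delta n}}$ and $V=Wv_n:H_n\to \ell_1^n(\ell_\infty^n)$ comes from Theorem \ref{complemented}. Parts a) and b) are then already proved: a) is the $\epsilon$-estimate from the proof of Theorem \ref{Theorem min-epsilon} (which gives $\|M\|_\epsilon\preceq \log n$, but after the normalization $M\mapsto M/\log n$ this becomes $\preceq 1$), the lower bound in b) is the $\min$-estimate there (using $\|V^*\|_{cb}\preceq 1$ via Lemma \ref{Grothendieck} and pushing through $R_n\otimes_{\min}R_n$), and the upper bound $\|M\|_{\min}\preceq \sqrt{n}\log n$ — which again becomes $\preceq\sqrt{n}$ after normalization — is Proposition \ref{optimality}. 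So the real content is part c).

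For the lower bound $\gamma_2^*(M)\succeq \frac{n}{\log n}$, I would use that $\gamma_2^*$ dominates the relevant Hilbertian tensor norm: since $V^*:\ell_1^n(\ell_\infty^n)\to \ell_2^{\delta n}$ is a bounded map with $\|V^*\|\preceq 1$ (Lemma \ref{second}), applying $V^*\otimes V^*$ to $M$ and using the metric mapping property of $\gamma_2^*$ gives $\gamma_2^*(M)\succeq \gamma_2^*\big((V^*V)\otimes(V^*V)(a)\big)=\gamma_2^*(a)$ evaluated on $\ell_2^{\delta n}\otimes\ell_2^{\delta n}$, and $\gamma_2^*(\id_{\ell_2^m\otimes\ell_2^m})=\|\id\|_{\ell_2^m\otimes_\pi\ell_2^m}=m$ (the trace duality / Hilbert–Schmidt computation), which is $\succeq n$. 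Dividing by the $\log n$ normalization factor yields $\succeq n/\log n$. For the upper bound $\gamma_2^*(M)\preceq n/(\log n)^\beta$, the idea is to exploit the extra structure of $V$ more carefully than in Lemma \ref{lemma-optimality}: by Lemma \ref{first} and the interpolation-type estimate, $V$ factors through $\ell_2^n(\ell_\infty^n)$ with a gain, and one should estimate $\|V:H\to \ell_1^n(\ell_\infty^n)\|$ as a map out of a Hilbert space into a space with good cotype, so that $\gamma_2^*(M)=\gamma_2^*\big((V\otimes V)(a)\big)\le \|V:\ell_2^{\delta n}\to\ell_1^n(\ell_\infty^n)\|^2_{(2)}\cdot\gamma_2^*(a)$ where the relevant norm of $V$ is a $2$-summing-type or Gaussian average norm; here one uses $\mathbb{E}\|\sum g_i e_i\|_{\ell_\infty}\preceq\sqrt{\log n}$ to extract a power of $\log n$ better than in the crude bound, and the $1/n$ normalization built into $W$ to produce the overall $n/(\log n)^\beta$. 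I expect the concrete value of $\beta$ to come out of combining the $\sqrt{n}/\log n$ lower bound for $\|M\|_{\min}$ with a little-Grothendieck-type comparison between $\gamma_2^*$ and $\min$ on $\ell_1^n(\ell_\infty^n)$.

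The main obstacle is the upper bound in c): Lemma \ref{SDP NSG} only tells us $\gamma_2^*(M)\simeq\|M\|_h$, and bounding $\|M\|_h$ from above requires controlling completely bounded maps $u,v:\ell_1^n(\ell_\infty^n)\to B(H)$, which a priori has no gain over $\|M\|_{\min}\cdot(\text{something})$. The trick will be to route through the operator space $OH$ (or $R_n\cap C_n$) as in the proof of Proposition \ref{optimality} and Lemma \ref{lemma-optimality}: write $M=(V\otimes V)(a)$ with $a=\id_{OH_{\delta n}}$, use $\|a\|_{OH\otimes_h OH}=\|a\|_{\ell_2\otimes_\epsilon\ell_2}\preceq 1$, and estimate $\|V:OH_{\delta n}\to \ell_1^n(\ell_\infty^n)\|_{cb}\preceq n^{1/4}\sqrt{\log n}$ (Lemma \ref{lemma-optimality} b) plus Lemma \ref{first}), which unfortunately only gives $\gamma_2^*(M)\preceq \sqrt n\log n$ — not good enough. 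So the delicate point is to find the sharper factorization: one should instead bound the $\gamma_2^*$-norm directly by factoring $V$ as a Hilbert-space-valued map using that $\ell_1^n(\ell_\infty^n)$ is cotype-$2$ up to a $\sqrt{\log n}$ loss, producing the genuine $n/(\log n)^\beta$ with some explicit $\beta>0$; pinning down this factorization and the exact exponent is where the work lies.
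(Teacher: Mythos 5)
Parts a), b) and the lower bound in c) of your plan are fine and essentially coincide with the paper's argument (the paper works with the explicit Bernoulli element of Section \ref{The construction}, normalized by $\log n$, and obtains $\gamma_2^*(M)\succeq n$ by testing against the concrete vectors $u_x^a=v_x^a=\sum_p\epsilon_{x,a}^p e_p$, i.e.\ against $G^*/n$ and the trace pairing in $\ell_2^n\otimes_\pi\ell_2^n$; that is exactly your ``compose with the contraction $V^*$ and use $\gamma_2^*(\id_{\ell_2^m\otimes\ell_2^m})=m$'' computation).

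The genuine gap is the upper bound $\gamma_2^*(M)\preceq n/(\log n)^{\beta}$, which you explicitly leave open, and none of the routes you sketch closes it. The paper gets it from the rank of $M$: since $M$ has rank at most $n$, Theorem \ref{optimality-SDP rankII} gives $\gamma_2^*(z)\preceq \|z\|_{\epsilon}\, n/(\log n)^{\beta}$ for every rank-$n$ element, and its proof is a trace/singular-value argument: if $\gamma_2^*(z)=C_n n$ with $\|z\|_\epsilon\le 1$, the optimizing contractions give $|tr(v\circ T_z\circ u)|=C_n n$ with $\|v\circ T_z\circ u\|\le 1$, hence $a_{[C_n n/2]}(v\circ T_z\circ u)\ge C_n/2$, which produces a $\frac{2}{C_n}$-complemented copy of $\ell_2^{[C_n n/2]}$ inside $\ell_1(c_0)$; the Bourgain--Casazza--Lindenstrauss--Tzafriri theorem (Theorem \ref{Bourgain}) then forces $C_n\preceq (\log n)^{-\beta}$. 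This deep input is precisely where the logarithmic gain comes from; it does not follow from a ``cotype $2$ up to $\sqrt{\log n}$'' factorization of $V$, nor from a little-Grothendieck comparison of $\gamma_2^*$ with $\min$, nor from sharper Gaussian estimates of the type $\mathbb{E}\|\sum_i g_i e_i\|_{\ell_\infty^n}\preceq\sqrt{\log n}$ — indeed the trivial Grothendieck bound already gives $\gamma_2^*(M)\preceq n\|M\|_\epsilon$ (Proposition \ref{optimality_SDP_inputsII}), and beating it by a power of $\log n$ is exactly the statement that $\ell_2^n$ is not well complemented in $\ell_1(c_0)$. A further slip in your discussion: the $OH$-route cannot give $\gamma_2^*(M)\preceq\sqrt{n}\log n$ (that would contradict your own lower bound $n/\log n$); the identity $\|a\|_{OH\otimes_h OH}=\|a\|_{\ell_2\otimes_\epsilon\ell_2}$ is false for the Haagerup norm — one has $\|\sum_i e_i\otimes e_i\|_{OH_m\otimes_h OH_m}\geq \sqrt{m}$ — it holds only for the $\min$-norm used in Proposition \ref{optimality}. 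So as written your proposal establishes a), b) and the lower half of c), but the upper half of c) is missing; it needs the rank bound of Theorem \ref{optimality-SDP rankII} (equivalently the BCLT theorem), which you never invoke.
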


Indeed, once we have an element $M$ as in Theorem \ref{gamma}, we can obtain $\hat{M}$ verifying Theorem \ref{SDP} just adding some extra zeros as it was explained in Theorem \ref{connection:min-epsilon}. Actually, it is easy to see that this is exactly the same as taking
\begin{align*}
\hat M=(T^*|_{\ell_1^n(\ell_\infty^n)}\otimes T^*|_{\ell_1^n(\ell_\infty^n)})(M)\in NSG(n,n+1)^*\otimes NSG(n,n+1)^*,
\end{align*}where $T:NSG(n,n+1)\rightarrow \ell_\infty^n(\ell_1^n)\oplus \C$ is the map defined in Theorem \ref{NSG isomorphism}.

Furthermore, Proposition \ref{optimality_SDP_inputs} and Theorem \ref{optimality-SDP rank} are equivalent, respectively, to

\begin{prop}\label{optimality_SDP_inputsII}
For every element $M\in \ell_1^n(\ell_\infty^n)\otimes \ell_1^n(\ell_\infty^n)$ we have
\begin{align*}
\frac{\|M\|_{\ell_1^n(\ell_\infty^n)\otimes_{\gamma_2^*} \ell_1^n(\ell_\infty^n)}}{\|M\|_{\ell_1^n(\ell_\infty^n)\otimes_\epsilon \ell_1^n(\ell_\infty^n)}}\preceq n.
\end{align*}
\end{prop}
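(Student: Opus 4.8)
The plan is to route the estimate through the Haagerup tensor norm on the operator space $\ell_1^n(\ell_\infty^n)$, exactly as in the comparison $\gamma_2^*\le K_G^2\|\cdot\|_h$ established above. First I would reduce the statement to the purely operator‑space inequality
\[
\|M\|_{\ell_1^n(\ell_\infty^n)\otimes_h\ell_1^n(\ell_\infty^n)}\ \preceq\ n\,\|M\|_\epsilon .
\]
Indeed, for any Hilbert space $H$ and contractions $u,v\colon\ell_1^n(\ell_\infty^n)\to H$, the Little Grothendieck Theorem in the form of Lemma \ref{Grothendieck} gives $\|u\|_{cb}\le K_L$ when $u$ is regarded as a map into $R\subset B(H)$ and $\|v\|_{cb}\le K_L$ when $v$ is regarded as a map into $C\subset B(H)$; since $R_n\otimes_h C_n$ is completely isometric to the trace class, $\|(u\otimes v)(M)\|_{H\otimes_\pi H}=\|(u\otimes v)(M)\|_{R\otimes_h C}\le K_L^2\|M\|_h$, and a supremum over $u,v$ yields $\gamma_2^*(M)\le K_L^2\|M\|_h$.

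The strategy for the Haagerup estimate is to use the \emph{complete} isomorphism $\ell_1^n\cong_{O(\sqrt n)}\ell_\infty^n$: by Grothendieck's inequality there is $w\colon\ell_\infty^n\to\ell_1^n$ with $\|w^{-1}\|_{cb}=\|w^{-1}\|\le1$ and $\|w\|_{cb}\le K_G\sqrt n$, the estimate already exploited in the proof of Theorem \ref{upperbounds}. Writing $\ell_1^n(\ell_\infty^n)=\ell_1^n\otimes_\wedge\ell_\infty^n$ and applying $w\otimes\mathrm{id}$ on the $\ell_1^n$‑leg of each of the two factors, one transports $M$ into a Haagerup tensor product of commutative $C^*$‑algebras, and (modulo the operator‑space point discussed below) obtains an element $M'$ with
\[
\|M\|_h\ \le\ \|w\|_{cb}^2\,\|M'\|_{\ell_\infty^{n^2}\otimes_h\ell_\infty^{n^2}}\ \le\ K_G^2\,n\,\|M'\|_{\ell_\infty^{n^2}\otimes_h\ell_\infty^{n^2}},
\]
while at the Banach level the $\epsilon$‑norm only shrinks, $\|M'\|_{\ell_\infty^{n^2}\otimes_\epsilon\ell_\infty^{n^2}}\le\|w^{-1}\|^2\,\|M\|_\epsilon\le\|M\|_\epsilon$. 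Finally, on a commutative $C^*$‑algebra the Haagerup norm is controlled by the $\epsilon$‑norm: identifying $z\in\ell_\infty^m\otimes\ell_\infty^m$ with the operator $\ell_1^m\to\ell_\infty^m$ it represents, one has $\|z\|_{\ell_\infty^m\otimes_h\ell_\infty^m}=\gamma_2(z)\le K_G\|z\|=K_G\|z\|_{\ell_\infty^m\otimes_\epsilon\ell_\infty^m}$ by Grothendieck's theorem. Chaining the three displays would give $\|M\|_h\le K_G^3\,n\,\|M\|_\epsilon$, hence $\gamma_2^*(M)\le K_L^2K_G^3\,n\,\|M\|_\epsilon$, which is the assertion.

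The main obstacle is precisely the transport step in the previous paragraph: one must check that replacing the two $\ell_1^n$‑legs by $\ell_\infty^n$ costs only the single factor $n=(\sqrt n)^2$, rather than $n^2$. The danger is that the untouched $\ell_\infty^n$‑legs carry the $\wedge$‑structure coming from $\ell_1^n(\ell_\infty^n)=\ell_1^n\otimes_\wedge\ell_\infty^n$, so that $\|w\otimes\mathrm{id}\colon\ell_\infty^n(\ell_\infty^n)\to\ell_1^n(\ell_\infty^n)\|_{cb}$ is a priori only controlled by a product of a Grothendieck factor $\sqrt n$ and a further Banach--Mazur factor coming from $\mathrm{id}\colon\ell_1^n\otimes_{\min}\ell_\infty^n\to\ell_1^n\otimes_\wedge\ell_\infty^n$. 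Making this step lose only $\sqrt n$ per leg is the crux; it should follow from the injectivity of the $\ell_\infty^n$‑factors together with the fact that we only ever compare against the small $\epsilon$‑norm upstairs (never the projective one), so that the offending $\wedge$ may be replaced by $\min$ at no cb‑cost in the relevant direction. An alternative, which I would try in parallel, is to bypass the transport entirely and estimate $\|M\|_h$ directly from the description of completely bounded maps out of $\ell_1^n(\ell_\infty^n)=\max(\ell_1^n)\otimes_\wedge\min(\ell_\infty^n)$, again invoking Grothendieck's inequality for the commutative $\ell_\infty^n$‑part.
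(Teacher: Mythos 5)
Your two reductions are fine: $\gamma_2^*(M)\le K_G^2\|M\|_h$ is exactly the paper's comparison (Lemma \ref{Grothendieck} applied to both legs, as in the lemma preceding Remark \ref{NSGHaagerup}), and on commutative $C^*$-algebras one does have $\|z\|_{\ell_\infty^m\otimes_h\ell_\infty^m}=\gamma_2(z)\le K_G\|z\|_\epsilon$. But the transport step that you yourself flag as the crux is not merely unverified, it is false -- and already at the Banach level, not only for cb-norms. For \emph{any} isomorphism $w:\ell_\infty^n\to\ell_1^n$ with $\|w^{-1}:\ell_1^n\to\ell_\infty^n\|\le 1$, consider $z=\sum_{t=1}^n w^{-1}(e_t)\otimes e_t\in\ell_\infty^n\otimes\ell_\infty^n=\ell_\infty^{n^2}$. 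Then $\|z\|_\epsilon=\max_t\|w^{-1}(e_t)\|_{\ell_\infty^n}\le 1$, while $(w\otimes\id_{\ell_\infty^n})(z)=\sum_{t=1}^n e_t\otimes e_t$, whose norm in $\ell_1^n(\ell_\infty^n)$ is $\sum_t\|e_t\|_\infty=n$. Hence $\|w\otimes\id:\ell_\infty^{n^2}\to\ell_1^n(\ell_\infty^n)\|\ge n$, a fortiori $\|w\otimes\id\|_{cb}\ge n$, no matter how cleverly $w$ is chosen; the "injectivity of the untouched $\ell_\infty^n$-legs" cannot repair this, since the obstruction is a purely scalar one. Consequently your chain only yields $\|M\|_h\preceq n^2\|M\|_\epsilon$, i.e. $\gamma_2^*(M)\preceq n^2\|M\|_\epsilon$, which is weaker than the statement. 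The structural problem is the direction of your transport: you move the isomorphism on the \emph{outer} $\ell_1^n$-leg, i.e. you pass from an $\ell_\infty$-type outer structure back into the $\ell_1$-type outer structure of $\ell_1^n(\ell_\infty^n)$, and that direction inevitably costs a factor $n$ per leg.

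The paper's proof avoids the Haagerup norm and all cb-considerations altogether: $\gamma_2^*$ is a Banach-space tensor norm, so it has the metric mapping property for merely \emph{bounded} maps, and the transport is done on the \emph{inner} $\ell_\infty^n$-legs and in the opposite direction. With $u:\ell_\infty^n\to\ell_1^n$, $\|u\|\le\sqrt n$, $\|u^{-1}\|\le 1$, and $\tilde u=\id\otimes u:\ell_1^n(\ell_\infty^n)\to\ell_1^n(\ell_1^n)=\ell_1^{n^2}$ (exactly as in the proof of Theorem \ref{upperbounds}), one factorizes
\begin{align*}
\id\otimes\id=(\tilde u^{-1}\otimes\tilde u^{-1})\circ(\id\otimes\id)\circ(\tilde u\otimes\tilde u):\ \ell_1^n(\ell_\infty^n)\otimes_\epsilon\ell_1^n(\ell_\infty^n)\longrightarrow\ell_1^n(\ell_\infty^n)\otimes_{\gamma_2^*}\ell_1^n(\ell_\infty^n),
\end{align*}
where the middle identity goes from $\ell_1^{n^2}\otimes_\epsilon\ell_1^{n^2}$ to $\ell_1^{n^2}\otimes_{\gamma_2^*}\ell_1^{n^2}$ and has norm at most $K_G$ by Grothendieck's theorem, $\|\tilde u\otimes\tilde u\|_{\epsilon\to\epsilon}\le\|\tilde u\|^2\le n$, and $\|\tilde u^{-1}\otimes\tilde u^{-1}\|_{\gamma_2^*\to\gamma_2^*}\le 1$. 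This gives $\gamma_2^*(M)\le K_G\,n\,\|M\|_\epsilon$ directly, and it also shows that your preliminary detour through $\|\cdot\|_h$ is unnecessary: once one works at the Banach level with $\gamma_2^*$ itself, there is nothing left to prove.
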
and

\begin{theorem}\label{optimality-SDP rankII}

Let's denote, for each $n\in \N$,

\begin{align*}
A_n=\sup\left\{A: \frac{\|M\|_{\ell_1^N(\ell_\infty^K)\otimes_{\gamma_2^*} \ell_1^N(\ell_\infty^K)}}{\|M\|_{\ell_1^N(\ell_\infty^K)\otimes_\epsilon \ell_1^N(\ell_\infty^K)}}\leq A n\right\},
\end{align*}where the $\sup$ runs over all $N,K\in \N$ and $M\in \ell_1^N(\ell_\infty^K)\otimes \ell_1^N(\ell_\infty^K)$ such that $rank(M)=n$.

Then,

\begin{align*}
\frac{1}{\log n}\preceq A_n \preceq \frac{1}{(\log n)^\beta}.
\end{align*}
Here $\beta$ is a universal constant.

\end{theorem}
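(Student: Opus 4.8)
The plan is to establish the two halves of the estimate separately, since they draw on unrelated ingredients: the lower bound is essentially immediate from Theorem \ref{complemented}, while the upper bound carries all the difficulty and rests on the cited result of \cite{BCLT}.

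For the lower bound $A_n\succeq 1/\log n$, I would exhibit the rank witness already used in Theorem \ref{gamma}. Choose $m$ so that the Hilbert space $H_m$ of Theorem \ref{complemented} has $\dim H_m=n$, and let $V\colon H_m\to\ell_1^m(\ell_\infty^m)$ and $V^*\colon\ell_1^m(\ell_\infty^m)\to H_m$ be the corresponding maps, so that $\|V\|\le C\sqrt{\log m}$, $\|V^*\|\le 1$ and $V^*V=\id_{H_m}$. Put $M=(V\otimes V)(\id_{H_m})$; since $V$ is injective this is a tensor of rank exactly $n$ in $\ell_1^m(\ell_\infty^m)\otimes\ell_1^m(\ell_\infty^m)$. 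The metric mapping property of $\epsilon$ gives $\|M\|_\epsilon\le\|V\|^2\le C^2\log m\preceq\log n$, while testing the definition of $\gamma_2^*$ against the contraction $V^*$ into the Hilbert space $H_m$ yields
\[
\gamma_2^*(M)\ \ge\ \bigl\|(V^*\otimes V^*)(M)\bigr\|_{H_m\otimes_\pi H_m}\ =\ \bigl\|\id_{H_m}\bigr\|_{H_m\otimes_\pi H_m}\ =\ \dim H_m\ =\ n,
\]
because $V^*V=\id_{H_m}$ and the projective norm of the identity of a $d$-dimensional Hilbert space equals $d$. Hence $\gamma_2^*(M)/\|M\|_\epsilon\succeq n/\log n$, i.e. $A_n\succeq 1/\log n$.

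For the upper bound $A_n\preceq 1/(\log n)^\beta$ I must show $\gamma_2^*(M)\preceq n(\log n)^{-\beta}\|M\|_\epsilon$ for \emph{every} rank-$n$ element $M\in\ell_1^N(\ell_\infty^K)\otimes\ell_1^N(\ell_\infty^K)$. After normalising $\|M\|_\epsilon\le 1$, I would use the trace duality of the dual pair $(\gamma_2,\gamma_2^*)$ to write $\gamma_2^*(M)=\sup\{|\tr(\hat M\,T)|\}$, the supremum over operators $T\colon\ell_1^N(\ell_\infty^K)\to\ell_\infty^N(\ell_1^K)$ with $\gamma_2(T)\le 1$, where $\hat M\colon\ell_\infty^N(\ell_1^K)\to\ell_1^N(\ell_\infty^K)$ is the operator associated with $M$, of rank $n$ and norm $\le 1$. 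Factoring a competitor as $T=BA$ through a Hilbert space with $A,B$ contractions, the scalar becomes $\tr(A\hat M B)$ with $A\hat M B\colon\ell_2\to\ell_2$ of rank $\le n$ and operator norm $\le 1$; the naive bound $|\tr|\le n\,\|A\hat M B\|$ merely gives the trivial estimate $A_n\le 1$ (cf. Proposition \ref{optimality_SDP_inputsII}). To gain the logarithmic factor I would first invoke the little Grothendieck theorem in the form of Lemma \ref{Grothendieck}, which brings $A$ and $B^*$ into $R\cap C$ at the cost of an absolute constant and thereby reduces the problem to controlling the trace of a rank-$n$ composition of a Hilbertian contraction with an $\ell_1$--$\ell_\infty$ contraction; and then apply the deep result of \cite{BCLT}, whose effect is to forbid such a rank-$n$ operator from behaving like a rescaled rank-$n$ partial isometry, forcing instead a definite power-of-$\log$ spread of the singular values, $\sum_{i\le n}s_i(A\hat M B)\preceq n(\log n)^{-\beta}$, uniformly over admissible $A,B$. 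This gives $\gamma_2^*(M)\preceq n(\log n)^{-\beta}$, and as $\beta\le 1$ the two bounds combine to $\tfrac1{\log n}\preceq A_n\preceq\tfrac1{(\log n)^\beta}$. Finally, Lemma \ref{SDP NSG} together with Corollary \ref{NSGTensor} transfers the statement to the $\omega_{op}$/$NSG$ language, yielding Theorem \ref{optimality-SDP rank} as well.

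The main obstacle lies entirely in the upper bound: everything apart from the use of \cite{BCLT} is routine bookkeeping with the tools of the preceding sections, and the real work is to arrange the reduction so that the deep estimate applies to the Hilbert-space operator $A\hat M B$ coming from an \emph{arbitrary} pair of contractions, and to verify that the successive reductions (trace duality, the little-Grothendieck/$R\cap C$ step, the passage to exact rank $n$) cost only universal constants, so that the exponent $\beta$ produced by \cite{BCLT} is genuinely inherited by $A_n$.
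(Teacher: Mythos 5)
Your lower bound is correct and is essentially the paper's own argument: the witness is the rank-$n$ element produced by the complemented copy of $\ell_2^n$ from Theorem \ref{complemented} (the paper uses its explicit Bernoulli form, norming $\gamma_2^*$ with the vectors $u_x^a=v_x^a=\sum_p\epsilon_{x,a}^pe_p$, while you test directly against the contraction $V^*$), together with $\|M\|_\epsilon\preceq\log n$; this is exactly the computation carried out in Section \ref{section-gamma}.

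The upper bound, however, has a genuine gap. Its entire content is compressed into the assertion that \cite{BCLT} ``forces'' $\sum_i s_i(A\hat MB)\preceq n(\log n)^{-\beta}$ uniformly over contractions $A,B$ — but Theorem \ref{Bourgain} is a statement about the complementation constant of $\ell_2^m$ inside $\ell_1(c_0)$, and the uniform singular-value bound you attribute to it is (by trace duality) equivalent to the theorem being proved, so asserting it is circular. The missing bridge is precisely what the paper's proof supplies: normalize $\|M\|_\epsilon\le 1$, suppose $\gamma_2^*(M)=C_nn$, use rank $n$ to realize this as $|\mathrm{tr}(v\circ T_M\circ u)|$ with contractions $u:\ell_2^n\to\ell_\infty(\ell_1)$ and $v:\ell_1(c_0)\to\ell_2^n$; since $\|v\circ T_M\circ u\|\le1$, a threshold argument with $\delta=C_n/2$ gives $s_{[\delta n]}(v\circ T_M\circ u)\ge C_n/2$, which produces operators $p,q$ with $\|p\|\|q\|\le 2/C_n$ and a factorization $\id_{\ell_2^{[\delta n]}}=(q\circ v)\circ(T_M\circ u\circ p)$ \emph{through} $\ell_1(c_0)$; only then does Theorem \ref{Bourgain} give $\alpha\bigl(\log(C_nn/2)\bigr)^\beta\le 2/C_n$, and after the harmless reduction to $C_n\succeq 1/\log n$ this yields $C_n\preceq(\log n)^{-\beta}$. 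Note also that it is essential that one leg of this factorization takes values in, and the other is defined on, $\ell_1(c_0)$ (which holds because $T_M$ maps into $\ell_1^N(\ell_\infty^K)$, a $1$-complemented subspace of $\ell_1(c_0)$) — this is part of ``arranging the reduction'' that you defer; and your intermediate step through Lemma \ref{Grothendieck} and $R\cap C$ is a red herring: the whole argument lives at the Banach-space level and no completely bounded information is used or needed.
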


Note that parts a) and b) in Theorem \ref{gamma} were already proved in Sections \ref{Main result} and \ref{The construction}. So it remains to show that our element $M=\frac{1}{n^2}\sum_{x,y,a,b,k=1}^n\epsilon_{x,a}^k\epsilon_{y,b}^ke_{x,a}\otimes e_{y,b}\in \ell_1^n(\ell_\infty^n)\otimes \ell_1^n(\ell_\infty^n)$ in Section \ref{The construction} verifies part c). By construction, our signs $(\epsilon_{x,a}^k)_{x,a,k=1}^n$ verify Lemma \ref{second} and Lemma \ref{second}. In particular, the map $G^*(e_i\otimes e_j)=\sum_{k=1}^n\epsilon_{i,j}^ke_k$ for every $i,j=1,\cdots ,n$ verifies
$ \|G^*:\ell_1^n(\ell_\infty^n)\rightarrow \ell_2^n\|\preceq n$. Then,

\begin{align*}
\gamma_2^*(M)\succeq \frac{1}{n^4}\|\sum_{x,y,a,b,k=1}^n\epsilon_{x,a}^k\epsilon_{y,b}^kG^*(e_{x,a})\otimes G^*(e_{y,b})\|_{\ell_2^n\otimes_\pi\ell_2^n} \geq \frac{1}{n^4}\sum_{p=1}^n\sum_{x,y,a,b,k=1}^n\epsilon_{x,a}^k\epsilon_{y,b}^k\epsilon_{x,a}^p\epsilon_{y,b}^p
\\=\frac{1}{n^4}\sum_{k,p=1}^n(\sum_{x,a=1}^n\epsilon_{x,a}^k\epsilon_{x,a}^p)^2\geq \frac{1}{n^4}\sum_{k=1}^n(\sum_{x,a=1}^n\epsilon_{x,a}^k\epsilon_{x,a}^k)^2=n.
\end{align*}Therefore, we have proved the first inequality in part c) of Theorem \ref{gamma}.

\begin{remark}
As in the case of $\|M\|_{min}$, the key point of the previous estimate is Theorem \ref{complemented}. However, as we already did in Section \ref{The construction}, we could construct explicitly the elements that we have to use in order to norm $M$. In this particular case we have shown that we can obtain $\gamma_2^*(M)\succeq n$ by using the vectors $u_x^a=v_x^a=\sum_{p=1}^n\epsilon_{x,a}^pe_p\in \ell_2^n$ in the definition of $\gamma_2^*$.
\end{remark}

On the other hand, since the element $M$ we are considering has rank $n$, the upper bound in (Theorem \ref{gamma}, part c)) follows by Theorem \ref{optimality-SDP rankII}. To prove this result, we will need the following theorem:

\begin{theorem}\label{Bourgain}(\cite{BCLT})
There exist universal constants $\alpha, \beta> 0$ such that for every $n\in \N$ and every pair of contractions $u:\ell_1(c_0)\rightarrow \ell_2^n$ and $v:\ell_2^n\rightarrow \ell_1(c_0)$ verifying $u\circ v=\id_{\ell_2^n}$, we have $\|u\|\|v\|\geq \alpha (\log n)^\beta$.
\end{theorem}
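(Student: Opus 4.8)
The plan is to quote Theorem~\ref{Bourgain} from \cite{BCLT} essentially verbatim and treat it as a black box; here I only sketch why it holds and where the power of $\log n$ comes from. The first step I would take is a finite-dimensional reduction: given $u\colon \ell_1(c_0)\to \ell_2^n$ and $v\colon \ell_2^n\to \ell_1(c_0)$ with $uv=\mathrm{id}_{\ell_2^n}$, a routine truncation and perturbation replaces them by maps factoring through $\ell_1^N(\ell_\infty^M)$ for suitable finite $N,M$, at the cost of a factor $1+\varepsilon$ in $\|u\|\,\|v\|$. So it suffices to show that if $u\colon \ell_1^N(\ell_\infty^M)\to \ell_2^n$ and $v\colon \ell_2^n\to \ell_1^N(\ell_\infty^M)$ satisfy $uv=\mathrm{id}_{\ell_2^n}$, then $\|u\|\,\|v\|\succeq (\log n)^{\beta}$ with constants independent of $N$ and $M$; equivalently, every rank-$n$ projection of $\ell_1^N(\ell_\infty^M)$ onto a subspace isomorphic to $\ell_2^n$ has norm $\succeq (\log n)^{\beta}$.

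The structural inputs are the two Grothendieck-type facts already used in this paper. Writing $u=(u^k)_{k=1}^N$ with $u^k\colon \ell_\infty^M\to \ell_2^n$, the little Grothendieck theorem (see Lemma~\ref{Grothendieck}) gives $\pi_2(u^k)\le K_L\|u\|$, so each $u^k$ factors through a diagonal operator carrying a probability density on the $M$ coordinates; dually, Grothendieck's theorem controls the components $v^k\colon \ell_2^n\to \ell_\infty^M$ of $v$ only through the $\ell_1$-sum $\sum_k$, not block by block. Feeding $uv=\mathrm{id}_{\ell_2^n}$ into trace duality, $n=\sum_k \tr(u^k v^k)$, and estimating with summing norms, one obtains an inequality between $n$, $\|u\|$, $\|v\|$, $N$ and the block densities. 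The reason this is not elementary is that a block-by-block use of either inequality is too lossy to see any growth in $n$: the true obstruction lives in the requirement that the $N$ diagonal densities attached to the $u^k$ be \emph{mutually consistent} for $\sum_k u^k v^k$ to be a projection, while the $\ell_1$-normalization of $v$ forbids spreading those densities out. In particular the standard tools for complementation lower bounds fail here: $\ell_1(c_0)$ has no nontrivial type or cotype, so type/cotype arguments are silent, and a volume-ratio estimate does not produce the correct rate either.

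The genuinely hard part, which I would not reprove and which is the content of \cite{BCLT}, is an iterative extraction: roughly, from a factorization of $\mathrm{id}_{\ell_2^n}$ with $\|u\|\,\|v\|=\lambda$ one extracts, by a pigeonhole-plus-random-selection step applied to the block densities, a factorization of $\mathrm{id}_{\ell_2^{m}}$ on a substantially smaller Hilbert space (say $m\asymp \sqrt n$) whose constant has dropped to $\lambda/C$ for an absolute $C>1$; since on bounded dimension the constant is trivially $\ge 1$, the recursion can be run only $\asymp \log\log n$ times, whence $\lambda\succeq C^{\log\log n}\asymp (\log n)^{\beta}$ with $\beta=\log C>0$. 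I expect the delicate points to be precisely this selection step and the bookkeeping of how the densities degrade at each stage. Finally, it is worth noting that the persistence of a logarithm here is the mirror image of the estimate $\mathbb{E}\,\bigl\|\sum_i g_i e_i\bigr\|_{\ell_\infty^n}\asymp \sqrt{\log n}$ underlying the matching \emph{upper} bound $C\sqrt{\log n}$ in Theorem~\ref{complemented} (via Lemma~\ref{first}); so, up to the value of $\beta$, the theorem is sharp.
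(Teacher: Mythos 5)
The paper gives no proof of this statement at all: it is imported verbatim as an external result from \cite{BCLT}, exactly as you do, so your proposal takes the same approach as the paper (cite and use as a black box). Your surrounding heuristic sketch of the reduction and the iterative extraction is not load-bearing and is presented as such, so nothing in it needs to be checked against the paper.
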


\begin{remark}
Note that Theorem \ref{Bourgain} says that if we have a $C$-complemented copy of $\ell_2^n$ into $\ell_1(c_0)$, then $C\geq \alpha (\log n)^\beta$. On the other hand, in Theorem \ref{complemented} we provided a $\sqrt{\log n}$-complemented copy of $\ell_2^n$ into $\ell_1^n(\ell_\infty^n)$.
\end{remark}

With this at hand, we can prove Theorem \ref{optimality-SDP rankII}.

\begin{proof}[Proof of Theorem \ref{optimality-SDP rankII}]
The first inequality is consequence of the comments above. Indeed, we have shown that our particular element $M$ verifies $\frac{\gamma_2^*(M)}{\|M\|_\epsilon}\succeq\frac{n}{\log n}$.

For the second inequality, let $z$ be a rank $n$ element in $\ell_1(c_0)\otimes \ell_1(c_0)$ such that $\|z\|_\epsilon\leq 1$. Assume that $\|z\|_{\gamma_2^*}= C_nn$, where $C_n$ is a constant which may depend on $n$. We want to show that
\begin{align*}
C_n\preceq\frac{1}{(\log n)^\beta}.
\end{align*}

If we denote $T_z\in \ell_\infty(\ell_1)\rightarrow \ell_1(c_0)$ the associated operator to $z$, we know by hypothesis that there exit contractions $u:\ell_2^n\rightarrow \ell_\infty(\ell_1)$ and $v:\ell_1(c_0)\rightarrow \ell_2^n$ such that

\begin{align*}
C_nn=|tr(v\circ T_z\circ u:\ell_2^n\rightarrow \ell_\infty(\ell_1)\rightarrow \ell_1(c_0)\rightarrow \ell_2^n)|.
\end{align*}
Indeed, this is an immediate consequence of the very definition of the norm $\gamma_2^*$ and the fact that the operator $T_z$ has rank $n$.
On the other hand, it is immediate that

\begin{align*}
C_nn=|tr(v\circ T_z\circ u)|\leq \sum_{i=1}^na_i(v\circ T_z\circ u)=\sum_{i=1}^{\delta n}a_i(v\circ T_z\circ u)+  n a_{\delta n}(v\circ T_z\circ u),
\end{align*}
where $a_i(v\circ T_z\circ u)$ denotes the $i^{th}$ singular value of the operator $v\circ T_z\circ u$ and $\delta=\frac{C_n}{2}$.

Now, it is very easy to see that $\sum_{i=1}^{\delta n}a_i(v\circ T_z\circ u)\leq \delta n \|v\circ T_z\circ u\|\leq \delta n$. It follows that

\begin{align*}
C_nn\leq \delta n+  n a_{\delta n}(v\circ T_z\circ u),
\end{align*}
so $a_{\delta n}(v\circ T_z\circ u)\geq \frac{C_n}{2}$.

This means that there exit operators $p:\ell_2^{\delta n}\rightarrow \ell_2^n$ and $q:\ell_2^n\rightarrow \ell_2^{\delta n}$, such that $\|p\|\|q\|\leq \frac{2}{C_n}$ and $q\circ (v\circ T_z\circ u)\circ p=\id_{\ell_2^{\delta n}}$. In particular, we have the following factorization:

$$ \xymatrix@R=1.5cm@C=1 cm {
{} & {\ell_1(c_0)}\ar[rd]^{q\circ v} & { }\\ {\ell_2^{\delta n}}\ar[rr]^{\id}\ar[ru]^{T_z\circ u\circ p} & {} &
{\ell_2^{\delta n}.}}$$

According to Theorem \ref{Bourgain}, we know that

\begin{align*}
\alpha (\log \frac{C_n n}{2})^\beta\leq \|q\circ v\|\|T_z\circ u\circ p\|\leq \|p\|\|q\|\leq \frac{2}{C_n}.
\end{align*}

That is, $(\frac{2}{C_n})^\frac{1}{\beta}\geq \alpha^\frac{1}{\beta} (\log \frac{C_n n}{2})$. On the other hand,

\begin{align*}
\alpha^\frac{1}{\beta} (\log \frac{C_n n}{2})=\alpha^\frac{1}{\beta} (\log n+\log (\frac{C_n}{2}))
\succeq \log n,
\end{align*}where we have used that $C_n\succeq \frac{1}{\log n}$ in the last inequality. We conclude that $C_n\preceq\frac{1}{(\log n)^\beta}.$

\end{proof}

We will finish this section with the proof of Proposition \ref{optimality_SDP_inputsII}.

\begin{proof}[Proof of Proposition \ref{optimality_SDP_inputsII}]

We have to show that
\begin{align*}
\|\id\otimes \id:\ell_1^n(\ell_\infty^n)\otimes_\epsilon \ell_1^n(\ell_\infty^n)\rightarrow\ell_1^n(\ell_\infty^n)\otimes_{\gamma_2^*} \ell_1^n(\ell_\infty^n)\|\preceq n.
\end{align*}To see this estimate, recall that Grothendieck's theorem can be stated (see \cite{Def} for details) as:

\begin{align*}
\|\id\otimes \id: \ell_1\otimes_\epsilon \ell_1\rightarrow \ell_1 \otimes_{\gamma_2^*} \ell_1\|\leq K_G.
\end{align*} Then, the result follows easily considering the same factorization as in the proof of Theorem \ref{upperbounds}:

\begin{align*}
\id\otimes \id=(\tilde{u}^{-1}\otimes\tilde{u}^ {-1})\circ (\id\otimes \id)\circ (\tilde{u}\otimes \tilde{u}):\ell_1^N(\ell_\infty^K)\otimes_\epsilon \ell_1^N(\ell_\infty^K)\rightarrow \ell_1^N(\ell_\infty^K)\otimes_{\min} \ell_1^N(\ell_\infty^K).
\end{align*}

\end{proof}


\centerline{\sc Acknowledgment}
We would like to thank Timur Oikhberg for drawing our attention to Theorem \ref{Bourgain}.


\end{document}